\theoremstyle{plain}
\newtheorem{theorem}{Theorem}[section]
\theoremstyle{remark}
\newtheorem{assumption}{Assumption}
\definecolor{darkgreen}{rgb}{0, 0.4, 0.06} 
\newtheorem{Lemma}{Lemma}
\newtheorem{Corollary}{Corollary}
\newtheorem{Proposition}{Proposition}
\newcommand{\bea}{\begin{eqnarray*}}
\newcommand{\eea}{\end{eqnarray*}}
\newcommand{\be}{\begin{eqnarray}}
\newcommand{\ee}{\end{eqnarray}}
\newcommand{\ed}{

\newcommand\blfootnote[1]{%
  \begingroup
  \renewcommand\thefootnote{}\footnote{#1}%
  \addtocounter{footnote}{-1}%
  \endgroup
}

\bibliographystyle{imsart-nameyear.bst}
\def\references{\bibliography{depth}}
\def\ed{end{document}}
\endlocaldefs

\begin{document}

\begin{frontmatter}
\title{Metric Statistics: Exploration and Inference for Random Objects With Distance Profiles\thanksref{t1}}
\runtitle{Metric Statistics}
\thankstext{T1}{This article is an expanded version of the Rietz Lecture delivered on June 27, 2022 by H.G.M.  at the IMS Meeting in London and is an updated version of the paper posted at arXiv:2202.06117.}

\begin{aug}
\author[A]{\fnms{Paromita}~\snm{Dubey}\ead[label=e1]{paromita@marshall.usc.edu}}\thanksref{t2},
\author[B]{\fnms{Yaqing}~\snm{Chen}\ead[label=e2]{yqchen@stat.rutgers.edu}}\thanksref{t2}
\and
\author[C]{\fnms{Hans-Georg}~\snm{M\"uller}\ead[label=e3]{hgmueller@ucdavis.edu}}
\thankstext{t2}{Contributed equally to the paper.}
\address[A]{Department of Data Sciences and Operations, Marshall School of Business, University of Southern California\printead[presep={,\ }]{e1}}
\address[B]{Department of Statistics, Rutgers University\printead[presep={,\ }]{e2}}
\address[C]{Department of Statistics, University of California, Davis\printead[presep={,\ }]{e3}}
\end{aug}

\begin{abstract}
This article provides an overview on the statistical modeling of complex data as increasingly encountered in modern data analysis. It is argued that such data can often be described as elements of a metric space that satisfies certain structural conditions and features a probability measure. We refer to the random elements of such spaces as random objects and to the emerging field that deals with their statistical analysis as metric statistics. Metric statistics provides methodology, theory and visualization tools for the statistical description, quantification of variation, centrality and quantiles, regression and inference for populations of random objects, inferring these quantities from available data and samples. In addition to a brief review of current concepts, we focus on distance profiles as a major tool for object data in conjunction with the pairwise Wasserstein transports of the underlying one-dimensional distance distributions. These pairwise transports lead to the definition of intuitive and interpretable notions of transport ranks and transport quantiles as well as two-sample inference. An associated profile metric complements the original metric of the object space and may reveal important features of the object data in data analysis.  We demonstrate these tools for the analysis of complex data through various examples and visualizations. 
\end{abstract}

\begin{keyword}
\kwd{Distributional Data} \kwd{Fr\'echet Mean} \kwd{Fr\'echet Regression}  \kwd{Functional Data Analysis} \kwd{Metric Variance} \kwd{Profile Metric} \kwd{Transport Rank} \kwd{Transport Quantile} \kwd{Visualization} \kwd{Wasserstein Metric}
 \end{keyword}

\end{frontmatter}

\section{Introduction}\label{sec:intro}
We delineate in this article an emerging field of statistics that provides models, methods and theory for complex data situated in metric spaces $(\Om,d)$ with a metric $d$. We refer to this field as {\it metric statistics}. Throughout it is assumed that the metric spaces where the data are situated are separable and endowed with a probability measure $P$. We refer to random variables taking values in such metric spaces as {\it random objects}, adopting the name from a previous review and perspective \cp{mull:16:9}. 

The motivation to address the challenges posed by non-Euclidean data and to study common features of such data and techniques that are applicable across many metric spaces comes from data analysis, where increasingly complex data objects are encountered. Statistical analysis means that the emphasis is on statistical methods that evolved from and have counterparts in classical Euclidean statistics, are interpretable rather than black-box approaches, and are amenable to uncertainty quantification and inference.  The need for such methodology has not gone unnoticed and over the last two decades various groups of statisticians have come up with interesting and important ideas about the  handling of  such data. This includes object-oriented data analysis with roots in statistics for manifold-valued data, shape analysis and geometric statistics and related  ideas for visualization and modeling \cp{wang:07:1,dryd:09,marr:21,huck:21}, 
and also symbolic data analysis, where various subproblems have been emphasized such as data that consist of intervals \cp{bill:03}.

A distinctive feature of metric statistics that differentiates it from classical as well as geometric statistics is the non-reliance on local or global Euclidean or manifold structure. While for some spaces local linearizations may exist, as exemplified by one-dimensional distributional data with the 2-Wasserstein metric,  where one can use Riemannian structure to define $L^2$ tangent spaces \cp{bigo:17,mull:21:9}, 
these are often only of limited utility; for example inverse maps from the linear spaces back to the metric space usually are not well defined on the entire linear approximation space.  The same holds for linear embeddings into a subset of a Hilbert space obtained through kernel maps \cp{scho:38,sejd:13}, although there exist specific invertible maps to a Hilbert space for special nonlinear spaces, which however induce  metric 
distortions \cp{mull:16:1}. The lack of Euclidean
structure in general metric spaces poses challenging problems for statistical theory, methodology and data analysis of random objects and essentially requires to rethink basic notions of mean, variation, regression, inference and other key statistical techniques. The overall goal is to arrive at a  principled, theory-supported and comprehensive toolkit for the analysis of samples of random objects.

After a brief review, we focus here on distance profiles, a basic tool that assigns a one-dimensional distribution to each element of the underlying metric space $(\Om,d)$. Distance profiles are the distributions of the distances of each element to a random object in the space $\Om$ and are determined by the underlying probability measure $P$ on $\Om$. As we will show, distance profiles not only reflect but indeed characterize $P$ under some regularity assumptions. In all of the following, we will assume that one has a sample of i.i.d. random objects drawn from $P$. Empirical estimates for the distance profiles are then simply obtained using the empirical distribution of the distances of any given element of $\Om$ to all other elements, either to all elements in the population or in the empirical version to the other sample elements. We will illustrate this idea also for the simple and familiar special case of  Euclidean data; in all scenarios, distance profiles always correspond to one-dimensional distributions. 
 
Distance profiles have multiple applications that we explore in this article. First of all, they aid the geometric exploration of random objects in $(\Om, d)$ under the measure $P$. Second, since distance profiles are always one-dimensional distributions, we can define a new dissimilarity measure on $\Om$ 
by adopting a metric on the space of one-dimensional distributions which is then applied to the distributional distance of the distance profiles of the two elements. This dissimilarity measure depends on both the original metric in the space $(\Om,d)$ as well on the metric adopted for one-dimensional distributions, where here we adopt the 2-Wasserstein metric as the metric in the distributional space of distance profiles. 

Third, the pairwise transports that result from adopting the 2-Wasserstein metric for the space of distance profiles make it possible to define  novel notions of transport 
centrality and associated transport ranks. These serve to quantify the centrality of objects and provide the basis for a 
partial ordering of random objects and resulting visualizations. 
Fourth, transport centrality and transport ranks can be harnessed to define transport quantiles as the set of elements of $\Om$ with transport ranks such that the elements with lower ranks have a probability mass bounded by the prespecified quantile level.

Fifth, we demonstrate how distance profiles across two samples can be used to 
test whether the probability measures that generate the samples are identical. This relies on the fact that distance profiles characterize the underlying probability measures; we note that related ideas on inference based on distance profiles  as those presented in Section~\ref{sec:test} below, resulting from seemingly independent work, were recently published \cp{wang:23:3}. 

Distance profiles thus emerge as a powerful tool to characterize random objects. As we show in the following Section \ref{sec:review}, they are natural extensions of some basic ideas of how to quantify the variation of random objects. Section \ref{sec:review} contains a brief review of some of the basic concepts of metric statistics, including \F and transport regression. Distance profiles and how they give rise to transport ranks and quantiles and notions such as most central points and the properties of these concepts will be the theme of Section~\ref{sec:method} and Section~\ref{sec:prop}. 
Further connections to applications in inferences, specifically distance profile based inference,  will be discussed in Section~\ref{sec:test}, followed by 
simulation studies and applications to age-at-death distributions of human mortality, U.S. energy generation data and functional connectivity networks based on fMRI data in Section~\ref{sec:data}. We conclude with a discussion on the choice of metrics and other topics in Section~\ref{sec:disc}.  Auxiliary results and proofs as well as additional simulations and data examples are provided in the Supplement. 

\section{Review of Basic Notions for Samples of Random Objects}\label{sec:review}

Random objects encompass the usual random variables that take values in spaces $\real^p$ as encountered in classical statistics and also random functions in Hilbert spaces $L^2$, which is the realm of functional data analysis and where one still has linear structures,  inner products and linear operators  \cp{hsin:15, mull:16:3}. Other well-studied classes of random objects are data on Riemannian
manifolds, notably  spheres, which also appear in shape analysis \cp{jung:12,dryd:16} and where surprising smeariness results were obtained in recent developments on  the limit theory for \F means 
\cp{eltz:19}. 

A recently emerging subarea of metric statistics is distributional data analysis, where the atoms of a sample are distributions. These may be directly observed or more commonly indirectly through the data that each distribution generates. In earlier approaches samples of distributions were treated 
as functional data \cp{knei:01}, but while density or distribution functions can be considered as elements of the function space $L^2$, this approach is suboptimal since distributional objects lie on a constrained submanifold, for example densities are non-negative and
integrate to 1. Taking these constraints fully into account for statistical analysis motivates  distributional data analysis 
\cp{mata:21,pete:22,ghos:23}. Linearization approaches for distributional data include the Bayes space transformation
\cp{hron:16}, which is based on the Aitchison geometry \cp{aitc:86}, however does not yield a 1:1 map,  and a class of 1:1 transformations to linear spaces that includes the log quantile density (lqd) and log hazard transformations \cp{mull:16:1}. More recent approaches have used local linearizations through the geometry of the Wasserstein manifold \cp{mull:21:9, pego:22} and fully intrinsic optimal transport models that do not rely on any ambient $L^2$ space \cp{mull:23:2,ghod:23}. 
In distributional data analysis, the metric space  $\Om$ is the space of distributions, which are often assumed to have a finite domain and to be continuous, and $d$ is an appropriate metric. For statistical analysis in the case of one-dimensional distributions the 2-Wasserstein metric has become popular, not least due to its practical appeal in data analysis \cp{bols:03}. 
For probability measures $\mu,\nu$ with distribution functions $F_{\mu}, F_{\nu}$, the 2-Wasserstein distance is given simply as the $L^2$ distance of the quantile functions,
\bal \label{eq:dwass}
\dwass(\mu,\nu) = \left(\int_0^1 \left[F_{\mu}\inv(u)-F_{\nu}\inv(u)\right]^2\diffop u\right)\half = \dltwo(F_{\mu}\inv,F_{\nu}\inv). \eal

For the case of multivariate distributions, quantile functions do not exist and the 2-Wasserstein metric is more directly tied to optimal transport in the Monge--Kantorovich transportation problem, where the Kantorovich version \cp{kant:06:1} is 
\be \label{m2w} 
\dwass^2(\mu,\nu) = \inf_{\wp \in \mathscr{P}(\mu,\nu)} \expect_{(X,Y)\sim \wp}\|X-Y\|^2.
\ee
Here $X,Y$ are random variables in $\real^d$, $\mu,\nu$ are probability measures supported on a set $D \subset \real^d$, and $\mathscr{P}(\mu,\nu)$ is the space of joint probability measures on $D\times D$ with marginals $\mu$ and $\nu$. If the probability distributions are absolutely continuous, this is equivalent to finding the optimal transport in Monge's version 
\be \label{monge}
T^{*}(\mu,\nu)= \arginf_{T:\, T_\# \mu = \nu}\expect_{X\sim \mu}\|X-T(X)\|^2.
\ee
Here the infimum is taken over all push-forward Borel maps $D \rightarrow D$ that map $\mu$ to $\nu$. The push-forward map $T$ applied to a probability measure $\nu_1$ on $D$ yields $\nu_2= T_{\#}\nu_1$, defined as the measure with 
$\nu_2(A) = \nu_1(T^{-1}(A))$ for any measurable set $A\subset D$. If it exists, this minimizer is the optimal transport map and the minimizing value is the Wasserstein metric $d_W$, 
which coincides with the definition in \eqref{eq:dwass} for the special case of univariate distributions. 

In the multivariate case, this minimization problem is 
computationally challenging and therefore often replaced by a relaxed version, e.g., the Sinkhorn minimization \cp{cutu:13}, 
but then depends critically on regularization parameters. The statistical motivation to use the Wasserstein metric for multivariate distributional data is also less compelling than for the one-dimensional case. Therefore it makes sense to use a simpler metric for this case. Options include the sliced Wasserstein metric \cp{kolo:19} or the
Fisher--Rao metric, which does not have quite the appeal of the Wasserstein metric with its connection to 
optimal mass transport, but is easy to compute in any dimension when densities exist, as it is the geodesic distance for the square roots of densities. These square roots are situated on the Hilbert sphere, whence for measures $\nu_1, \nu_2$ with densities $f_{\nu_1},f_{\nu_2}$ the metric is 
$$d_{FR}(\nu_1,\nu_2)=\text{arccos} \left\{ \int\sqrt{f_{\nu_1}(x)f_{\nu_2}(x)}\, \diffop x \right\}.$$ 
When adopting the Fisher--Rao metric, multivariate distributional data and spherical data as are commonly encountered in directional data analysis can be viewed in a unified framework of spherical data, e.g., in time series analysis \cp{mull:23}. The spherical framework also encompasses compositional data when using the square root transformation for proportions \cp{scea:14}. 

Other important classes of random objects include covariance matrices and surfaces \cp{pigo:14,zeme:19}, 
networks \cp{seve:22,mull:22:11}
and trees \cp{bard:18,garb:21,lueg:22}, where for the BHV metric \cp{bill:01} the requisite entropy conditions for the asymptotic analysis of M-estimators were recently established \cp{mull:21:5}. 
Analogous to functional analysis and linear operator theory being the basis of functional data analysis, so is metric geometry \cp{bura:01}  the basis for metric statistics; an abbreviated 
introduction for statisticians can be found in Section~2 and Appendix~B in \ci{mull:21:5}.

We aim to find commonalities across metric spaces, unifying theory and methodology, regardless of the specific geometry of the metric space, where entropy conditions that quantify the size of the space have emerged as a key tool.
The utility of entropy conditions and empirical process theory for random objects was 
recognized in recent work on \F means/barycenters and \F regression
\cp{mull:19:3, scho:19, ahid:20, scho:22}. 
A basic and classical notion is the measure of location provided by the \F mean 
\cp{frec:48} or barycenter, which is defined as  minimizer of the \F function $\Om \ra \real$, \be \la{Vf} V(\om)=\expect d^2(\om,X).\ee The population/sample minimizers \be \label{fm} \mp= \ao \expect d^2(\om,X), \quad 
\hmp= \ao \sn d^2(\om,X_i) \ee may not be unique and may correspond to a larger set. 
 
Uniqueness of \F means is guaranteed in Hadamard spaces \citep{stur:03}, and for positively curved spaces depends on both the geometry of the space $\Om$ and the probability measure $P$. 
Analogously to \F means one can also consider \F integrals for $\Om$-valued  functions $X(t)$ \cp{mull:16:2}. A functional scenario with $\Om$-valued stochastic processes widens the scope of functional data analysis \cp{mull:20:10}, where the previous standard has been that the underlying processes are Euclidean, either scalar-, vector- or 
$L^2$-valued \cp{mull:17:4}. The special case of distribution-valued stochastic processes is of particular interest and permits a more in-depth investigation
\cp{mull:23:3}. For general types of $\Om$-valued functions the \F integral provides a direct extension of the Riemann integral for $\real$-valued functions and in analogy to the \F mean is defined as  
$$\int_{\oplus} X(t)\, dt = \ao \int d^2(\om,X(t))\, dt. $$ 
This integral has proved useful in various investigations of object-valued processes \cp{mull:16:2,mull:21:5,mull:20:10}. 

Another important extension of \F means is the notion of a conditional \F mean $\expect_{\oplus}(X|Z)$, where $X \in \Om$ and  $Z \in \real^p$,  or more generally $Z\in \Om'$ for another metric space $(\Om',d')$. The statistical motivation is to model complex regression relationships that involve random objects.  A narrower specification is needed to make this notion useful for statistical modeling and data analysis,  targeting 
$$ m_\oplus(z) = \argmin_{\om \in \Om} \expect(d^2(X, \om) | Z = z).$$
Nadaraya--Watson kernel estimators for the case of manifold-to-manifold regression $\Om' \ra \Om$ that have been previously considered \cp{stei:09,stei:10} are subject to a severe version of the curse of dimensionality,  unless the predictor manifold is low-dimensional,  and they are also subject to substantial boundary effects. Special cases include  the smoothing of covariance matrices or data on Riemannian manifold indexed by time using local linear estimators  \cp{yuan:12, chen:13,corn:17},  including versions for functional and longitudinal data analysis, where functional principal components are a primary target after the smoothing step \cp{mull:20:5}. 
 
In addition to local linear and other desirable smoothers for the case of low-dimensional predictors it is also of interest to include global models that extend the classical linear multiple regression model when responses are random objects and predictors are Euclidean vectors. 
A general approach is \F regression \cp{mull:19:3} for the subproblem where $\Om'=\real^p$. Observing that smoothing or global linear regression methods are weighted averages with known or computable weights, one can take the weights that correspond to the respective regression method and form a weighted \F mean. The key problem is that when estimating at certain predictor levels some weights will be negative. Making use of entropy conditions for the metric space $\Om$ leads to asymptotic convergence across all spaces that satisfy these conditions, for both local and global regression models. 

Denoting the weights for a classical regression method with Euclidean predictors and scalar responses that are assigned to a predictor at level $Z\in \real^p$ when targeting  the estimate at a fixed predictor level 
$z \in \real^p$ by 
$w(Z,z)$, the \F regression estimator is 
 $$\hat{m}_\oplus(z) = \argmin_{\om \in \Om}\expect(d^2(X, \om)w(Z,z)).$$ 
 There are many open problems associated with this class of estimators and object regression is a subarea in rapid development. 
 Recent work includes 
 a novel perspective with extensions to other smoothing methods \cp{scho:22}, dimension reduction \cp{zhan:21:1,zhan:22:1,dong:22,virt:22} and consistent predictor selection \cp{tuck:23}.  
 A recent derivation of uniform convergence over the domain of the Euclidean predictor for local linear estimators made it possible to obtain consistent time warping identification for object-valued 
 functional data through pairwise warping comparisons and also to obtain consistent estimates for the location of extrema of functionals such as a specified eigenvalue for symmetric positive definite matrices as random objects \cp{mull:22:8}. This 
 result also facilitated the development of 
 a single-index version for \F regression, which enhances the flexibility of the global version of the model and includes 
 inference for the the predictors \cp{ghos:21:1,mull:23:5}. However,  much further work is needed on inference for object regression. Another class of object regression models that is of potential interest but not 
 sufficiently explored is transport regression, which was primarily developed for distributional data, where one can introduce  a transport algebra \cp{mull:23:2}. 
 
 Another relevant issue is the modeling of noise contamination in metric statistics. The  additive 
 noise model commonly employed  in Euclidean settings  is no longer feasible,  as there is no addition operation in metric spaces. However, noise can be modeled by random perturbation maps $\mathcal{P}: \Om \rightarrow \Om$ that satisfy \cp{mull:22:8} \bea
\om = \text{argmin}_{x \in \Om} \expect[d^2(\mathcal{P}(\om),x)] \ \text{for all } \om \in \Om.
\eea
 This is the equivalent of the postulate that an additive error $e$ in a Euclidean setting satisfies $\expect e=0$, while $\expect (e^2)=\sigma^2$
 corresponds to $\expect[d^2(\mathcal{P}(\om),\om)]=\sigma^2$ in the general case.

 In addition to location estimation another important thread in statistics is the estimation of spread, which is essential for uncertainty quantification. Plugging the \F mean into the \F variance function \eqref{Vf} gives the \F variance 
\be \la{Fvar} V_F=\E d^2(X,\mp) \quad \VF=\frac{1}{n} \sum_{i=1}^n d^2(X_i, \hmp),\ee
for which under suitable entropy conditions a Central Limit Theorem holds, $${n}^{1/2}(\VF-V_F) \wc N(0,\sigma_F^2).$$
This can be used to obtain an ANOVA-like test to compare populations of random objects as well as inference for 
change-points in a sequence of random objects \cp{mull:19:5,mull:20:2}. 

 It is easy to see that when $\Om=\real$ with the Euclidean metric,  the empirical \F variance \eqref{Fvar} equals the classical sample variance $\sigma^2$. For $X_i \in \real$, it is well known that 
 $\hat{\sigma}^2=\frac{1}{n-1}\snn (X_i-\bar{X})^2{=}\frac{1}{2n(n-1)} \sum_{i,j=1}^n (X_i-X_j)^2$
and this also holds in Hilbert spaces, as 
 $\expect \langle X- \expect X, X- \expect X \rangle
{=} \frac{1}{2} 
\E \langle X-X', X-X' \rangle$
where $X'$ is an independent copy of $X$.
However, the analogous equality does not hold in general metric spaces, and a second option for quantifying spread is then 
{metric variance} \cp{mull:20:10},
\be \la{Mvar} \Var_{\O}(X)=\frac{1}{2}\expect d^2(X,X'),\, \text{where } X'  \text{ is an independent copy of } \, X\in\Om. \ee
This notion can also be extended to {metric covariance} and {metric correlation}, 
	\begin{eqnarray*}
	\C(X,Y)&=&\frac{1}{4} \expect\left(d^2\left(X,Y'\right)+d^2\left(X',Y\right)-2d^2\left(X,Y\right)\right)\\
	\RR(X,Y)&=&\frac{\C(X,Y)}{\sqrt{\C(X,X)\C(Y,Y)}}.
	\end{eqnarray*}

	As already noted, the sample version of $\Var_{\O}(X)$ in the special case $\O=\mathbb{R}$ becomes 
	$\hat\sigma^2=\frac{1}{2n(n-1)} \sum_{i,j=1}^n (X_i-X_j)^2$. An advantage of metric variance/covariance is that these measures do not rely on the potentially arduous task of obtaining the \F mean in a first  step.   If $(\O,d)$ is such that $K(x,y)=d^2(x,y)$ is a kernel of negative type  \cp{kleb:05},  i.e.,  for all $n\ne 1$, $x_1,\dots,x_n\in\Om$, and $a_1,\dots,a_n\in\real$ with $\sum_{i=1}^{n} a_i = 0$ one has  $\sum_{i=1}^{n}\sum_{j=1}^{n} a_i a_j K(x_i,x_j) \le 0$,  results of 
 \cite{scho:37,scho:38} imply that  
metric correlation has the desirable property that 
	$-1 \leq \RR(X,Y) \leq 1$. The main distinction between metric correlation and distance correlation, another measure of dependence between paired metric space data \citep{lyon:13,szek:17}, is that the latter is tailored to measure probabilistic independence rather than to quantify  the strength of ‘positive’ or ‘negative’ association, which is the target of metric correlation. 	
The notion of metric covariance is based on pairwise distances between the random objects in a sample for the empirical version and on expected pairwise distances according to the probability measure $P$ in the population version. This
motivated us to consider these distances as a basic characteristic of the distributional properties of random objects that are otherwise hard to assess. 
To quantify this notion, for any fixed  $\om \in \Om$ the distances to the random objects as determined by the underlying measure $P$  are then of interest. 
They are captured by the distribution of the distances between $\om$ and any random element taking values in $\Om$, which then leads  to distance profiles indexed by $\om$.  
In the following sections, we explore the properties of distance profiles and how optimal transports between the corresponding distributions can be utilized to obtain transport ranks, transport quantiles and inference to compare populations of random objects. 

\section{Distance Profiles, Transport Ranks and Transport Quantiles} \label{sec:method} 
To introduce and motivate these key notions, we assume that data and random objects of interest are situated in a totally bounded separable metric space $(\O,d)$. 
Consider a probability space $(S,\mathcal{S},\prob)$, 
where 
$S$ is a sample space, $\mathcal{S}$ is a sigma algebra of subsets of $S$, and $\prob$ is a probability measure. 
A random object $\obj$ is an $\Om$-valued random variable, i.e., a measurable map $\obj\colon S\ra\O$ and $P$ is a Borel probability measure that governs the distribution  of $\obj$, $X \sim P$, i.e.,  $P(A) = \prob (\{s\in S: \obj(s) \in A\}) \eqqcolon \prob(\obj\in A) = \prob(\obj\inv(A)) \eqqcolon \prob \obj\inv (A)$, for any Borel measurable $A\subseteq\O$. 
For any $\o \in \O$, let $\fo$ denote the cumulative distribution function (cdf) of the distribution of the distance between $\o$ and a random element $\obj$ that is distributed according to $P$. In our notation, we suppress the dependence of $F_\o$ on $P$ and $d$. 

Formally, for any $t \ge 0$, we define the \emph{distance profile} at $\o$ as 
\begin{equation}
 \fo(t) = \prob \left( d(\o,\obj) \leq t \right), \label{dp} 
\end{equation}
so that $\fo$ is a one-dimensional distribution that captures the probability mass enclosed by a metric ball in $\O$ that has center $\o$ and radius $t$, for all $t \ge 0.$ 
Thus the distance profile at $\o$ is the distribution of the distances that need to be covered to reach other elements of $\O$ when starting out at $\om$, as dictated by the distribution $P$ of the random objects $X$. When $t \rightarrow 0$, the distance profile at $\o$, $F_\omega(t)$, has the form of a small ball probability around $\o$ \citep{dabo:02,vakh:12}. An element $\o$ that is centrally located, i.e., close to most other elements, will have a distance profile with more mass near 0, in contrast to a distantly located or outlying element whose distance profile will assign mass farther away from 0. 
If distance profiles have densities, for a centrally located $\o$ the density will have a mode near 0, while the density near 0 will be small for a distantly located $\o$. 
Thus $\{\fo:\om\in\Om\}$ is a family of one-dimensional distributions indexed by $\Om$ that inform about the location of $\om$ relative to $X \sim P$.

The collection of \deps $\{\fo: \o \in \O\}$ represents the one-dimensional marginals of the stochastic process $\{d(\o,\obj)\}_{\o \in \O}$, which is well-defined in the sense of the Kolmogorov existence theorem (see Proposition \ref{prop1} for details). These simple marginals uniquely characterize the underlying measure $P$, 
if $(\O,d)$ is a metric space such that the kernel $K\colon \O \times \O \rightarrow \mathbb{R}$ given by $K(\o,\o')=d^{\theta}(\o,\o')$ for a $\theta>0$ is of strong negative type \cp{kleb:05,lyon:13}. This means that for all Borel probability measures $P$ on $\O$  and all  measurable functions $h\colon \O \rightarrow\mathbb{R}$ it holds  that    $\int_\O\int_\O K(\o,\o') h(\o)h(\o')\diffop P(\o)\diffop P(\o')\le 0$ with equality if and only if $h=0$ $P$-a.e.
Equivalently, for all Borel probability measures  $P_1, P_2$ on $\O$ one has
\begin{align}  \label{eq:snt} &\int_{\O}\int_{\O} K(\o,\o') \diffop P_1(\o)\diffop P_1(\o')\\ \nonumber 
& \hspace{1cm} + \int_{\O}\int_{\O} K(\o,\o') \diffop P_2(\o)\diffop P_2(\o') -2\int_{\O}\int_{\O} K(\o,\o') \diffop P_1(\o)\diffop P_2(\o') \le 0,  \end{align}  where equality holds if and only if $P_1 = P_2$; for   
further discussion  see Section~\ref{sec:disc}.  This  characterization of the underlying measures motivates  the use of distance profiles to obtain information about the complex distribution of the random objects $\obj$. Empirical estimates of the distance profiles that will be used for statistical inference are introduced below. A basic result concerning  distance profiles is as follows.

\begin{Proposition}
\label{prop1}
The stochastic process $\{d(\o,\obj)\}_{\o \in \O}$, 
for which the distance profiles  $\{\fo: \o \in \O\}$ as defined in \eqref{dp} are the one-dimensional marginals, is well-defined. Suppose that for some $\theta>0$, $(\O,d^\theta)$ is of strong negative type \eqref{eq:snt} and that $P_1, P_2$ are two probability measures on this space.  Then $P_1 = P_2$ if and only if $F_{\o}^{P_1}(u) = F_{\o}^{P_2}(u)$ for all $\o\in\O$ and $u \ge 0$, where $F_{\o}^{P_1}$ and $F_{\o}^{P_2}$ are the \deps of $\o$ with respect to $P_1$ and $P_2$.
\end{Proposition}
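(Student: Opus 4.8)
The plan is to dispose of the ``well-defined'' claim and the easy implication quickly, and then concentrate on the identifiability statement. For the first claim, note that $\obj$ is already given as a measurable map $\obj\colon S\to\O$ on $(S,\mathcal S,\prob)$, so $\{d(\o,\obj)\}_{\o\in\O}$ is literally the family of real-valued random variables $s\mapsto d(\o,\obj(s))$ on that same probability space; its finite-dimensional distributions are exactly the measures $\dveck_{i_1,\dots,i_k}$, which trivially satisfy Kolmogorov's symmetry and marginalization consistency conditions, and its one-dimensional marginals are the $\fo$ of \eqref{dp} by construction. (Alternatively one may construct the process abstractly on path space via the Kolmogorov extension theorem, as sketched in the display preceding the statement; nothing is at stake here.) The forward implication of the characterization is immediate: if $P_1=P_2$ then $F_\o^{P_1}(u)=P_1(d(\o,\cdot)\le u)=P_2(d(\o,\cdot)\le u)=F_\o^{P_2}(u)$ for every $\o\in\O$ and $u\in\real$.

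The substance is the converse. First I would record that, since $(\O,d)$ is totally bounded, $d$ and hence $d^k$ are bounded on $\O\times\O$, so each distance profile is supported on the compact interval $[0,\diam(\O)]$ and has all moments finite. Equality of the distributions $F_\o^{P_1}$ and $F_\o^{P_2}$ then forces equality of their $k$-th moments, which by the change-of-variables (pushforward) formula reads
\[
 \int_\O d^k(\o,x)\,\diffop P_1(x)=\int_\O d^k(\o,x)\,\diffop P_2(x)=:g(\o)\qquad\text{for every }\o\in\O,
\]
where $g$ is bounded and Borel measurable (the latter by Fubini, since $d^k$ is jointly continuous and bounded).

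The key step is then to feed this identity into the strong negative type inequality \eqref{nt} with $d$ replaced by $d^k$. Integrating $\int_\O d^k(\o,x)\,\diffop P_1(x)=g(\o)=\int_\O d^k(\o,x)\,\diffop P_2(x)$ against $P_1$ and then against $P_2$, using Fubini (legitimate as $d^k$ is jointly measurable and bounded and the measures are finite) together with the symmetry of $d^k$, I obtain
\[
 \iint d^k\,\diffop P_1\,\diffop P_1=\int g\,\diffop P_1=\iint d^k\,\diffop P_1\,\diffop P_2=\int g\,\diffop P_2=\iint d^k\,\diffop P_2\,\diffop P_2 .
\]
Hence the left-hand side of \eqref{nt} for $d^k$ equals $\iint d^k\,\diffop P_1\,\diffop P_1+\iint d^k\,\diffop P_2\,\diffop P_2-2\iint d^k\,\diffop P_1\,\diffop P_2=0$, and since $(\O,d^k)$ is of strong negative type, the equality clause in \eqref{nt} yields $P_1=P_2$, which completes the proof.

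I do not anticipate a serious obstacle: the conceptual point is simply that matching all distance profiles already forces the $d^k$-energy distance between $P_1$ and $P_2$ to vanish, after which the strong negative type hypothesis does all the work, and one only uses the $k$-th moment of the profiles rather than the full distributions. The only items needing care are the integrability and Fubini justifications (handled cleanly by total boundedness of $(\O,d)$) and invoking the strong negative type assumption precisely in the form \eqref{nt} with $d^k$ in place of $d$, including its ``equality if and only if'' part, which is exactly what upgrades a numerical identity to the conclusion $P_1=P_2$.
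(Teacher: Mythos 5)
Your proof is correct and, for the identifiability half, takes essentially the same route as the paper: equate the $k$-th moments of the matching distance profiles to get $\int d^k(\o,\cdot)\,\diffop P_1=\int d^k(\o,\cdot)\,\diffop P_2$ for all $\o$, and feed this into the strong negative type inequality for $d^k$. The difference is one of exposition: the paper stops at the moment identity and defers to \cite{lyon:13}, whereas you make the Lyons argument self-contained by integrating against $P_1$ and $P_2$, invoking Fubini and the symmetry of $d^k$, and verifying directly that the three double integrals coincide so that the left side of \eqref{nt} vanishes; the ``equality iff'' clause of strong negative type then forces $P_1=P_2$. This is a genuine improvement in transparency. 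Where you diverge more visibly is the well-definedness claim: the paper constructs the process abstractly via the Kolmogorov extension theorem (and then goes on to a Hahn--Kolmogorov/Carath\'eodory argument), whereas you observe that since $\obj$ is a given measurable map on a fixed probability space, $\{d(\o,\obj)\}_{\o\in\O}$ is already a concrete family of real-valued random variables on $(S,\mathcal S,\prob)$, so there is nothing to extend. That is a cleaner and more economical argument, and it avoids the somewhat delicate point in the paper's Carath\'eodory step of passing from ball probabilities to the algebra they generate, which is not actually needed once the strong negative type hypothesis is in play.
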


Consider the \dep $F_X$ of a random object $X \in \O$, $F_X(u) = \prob_{X'}\left( d(X,X') \leq u\right)$ $=\int_{S} \ind[d(X,X'(s)) \leq u] \diffop \prob(s)$,  where $X'$ is an independent copy of $X$. For each $\o$, the push-forward map of $F_\o$ to $F_X$, given by $F_X^{-1}(F_\o(\cdot))$, determines the optimal transport from the \dep $F_\o$ to the \dep $F_X$. {Here and throughout $F\inv$ denotes the quantile function corresponding to a cdf $F$, $F\inv(u) = \inf\{x\in\real: F(x)\ge u\}$, for $u\in(0,1)$.} We utilize the optimal mass transport map 
\be \label{TM} H_{X,\o} (u) =F_X^{-1}(F_\o(u))-u,\quad u \ge 0,\ee 
see, e.g., \cite{ambr:08}, to assign a measure of centrality to an element $\o \in \O$ with respect to $P$. 
When $F_\o$ is continuous, by a change of variable, the integral 
{\be \label{intH} \int H_{X,\o} (u) \diffop F_\o(u)=\int_0^1 \left\lbrace\fobj\inv(u)-\fo\inv(u)\right\rbrace\diffop u \ee {provides a summary measure of the mass transfer when transporting $F_\o$ to $F_X$.}

The utility of this notion is that if $\o$ is more centrally located than $X$ with regard to the measure $P$, we expect 
the mass transfer to be predominantly from left to right and the magnitude of the integral in \eqref{intH} to reflect the outlyingness differential between $\om$ and a random object $X, \, X \sim P$. 
For example, for a distribution $P$ that is symmetric around a central point $\o_0$ and assigns less mass when moving away from $\o_0$, we expect that the integral \eqref{intH} with $\o=\o_0$ is relatively large and the magnitude of the integral \eqref{intH} is decreasing as the distance from $\o_0$ 
increases. 
This motivates to 
take the expected value of the integral in \eqref{intH}
to quantify the degree of centrality or outlyingness of an element $\o$. 

An illustration is in Figure~\ref{fig:ex_depthprfl} for the simple case where $X$ is a bivariate Gaussian random variable with mean zero and covariance $\diag(2,1)$. 
For the points $x\in\{(0,0),(2,0),(4,0),(6,0)\}$ and $\o = (2,2)$, their corresponding distance profiles are depicted as densities $f_{x}$ and $f_{\o}$ in the left panel, where the distances from $x$ to the rest of the data are seen to increase as $x$ moves away from the origin, which is exactly what one expects. 
For $x\in\{(0,0),(2,0),(4,0),(6,0)\}$, the transport maps $H_{x,\o}$ as per \eqref{TM} that move mass from $F_{\om}$ to $F_x$ for the fixed element $\o=(2,2)$ are in the right panel. 
For $\o=(2,2)$, mass moves to the left when transporting $\fo$ to $F_{x}$ for $x\in\{(0,0),(2,0)\}$, which are closer to the origin, and moves to the right for $x\in\{(4,0),(6,0)\}$, which are farther away from the origin.
Another example based on the U.S. electricity generation compositional data in Section~\ref{sec:energy} is shown in Figure~\ref{fig:energy2000_OT}. 
Transporting mass from the distance profile of $\o = \text{New Jersey (NJ)}$  to the profiles $x$ of Maryland (MD), Massachusetts (MA), Louisiana (LA) to Rhode Island (RI), one moves from the profile of a point in the middle of the ternary plot toward the profiles of points closer to the  boundary of the compositional space.  The mass transport  moves mass mostly to the left when transporting $\fo$ to $F_{x}$ for $x=\{\text{MD}, \text{MA}\}$ and unambiguously to the right for $x = \text{RI}$.

\begin{figure}[!htb]
 \centering
 \includegraphics[width=.75\linewidth]{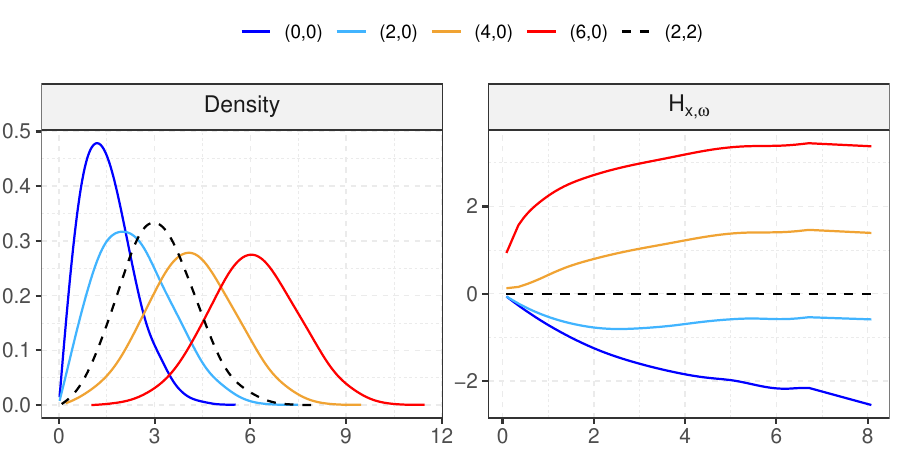}
 \caption{Left: Distance profiles, represented by the corresponding densities, at five points as indicated, with respect to a bivariate Gaussian distribution with mean zero and covariance $\diag(2,1)$. 
 Right: Transport maps subtracted by identity $H_{x,\o}$ as per \eqref{TM} for $x\in\{(0,0),(2,0),(4,0),(6,0)\}$ and $\o = (2,2)$, where negative (positive) values indicate transport to the right (left).}
 \label{fig:ex_depthprfl}
\end{figure}

This motivates the notion of \emph{transport ranks} to measure centrality 
of an element $\o\in \O$ with respect to $P$ as the $\expit$ of the expected integrated mass
transfer when transporting $F_\om$ to $F_X$, where $P = \prob\obj\inv$. We use the expit function $\expit(x)=e^x/(1+e^x)$ in the  definition of these ranks to ensure that the proposed ranks are scaled to lie in $(0,1)$; any strictly monotone invertible  function from $\real$ to $[0,1]$ can be used for this purpose.  Formally, 
\bal\label{eq:rank}
\rank_{\o} = \expit\left[ \expect\left\{\int_0^1[\fobj\inv(u)-\fo\inv(u)]\diffop u \right\}\right]. \eal 
The transport rank of $\o$ quantifies the aggregated preference of $\o$ with respect to the data cloud. The greater the transport rank of $\o$ is, the more centered $\o$ is relative to the sample elements. Equipped with an ordering of the elements of $\O$ by means of their transport ranks,  we define the \emph{transport median} set $\M$ of $P$ as the collection of points in the support $\po\subset\O$ of $P$ which have maximal transport rank and are therefore most central, 
\begin{equation} \label{R}
 \M = \argmax_{\o \in \po} \rank_\o.
\end{equation}

\begin{figure}[!hbt]
 \centering
 \includegraphics[width=.47\linewidth]{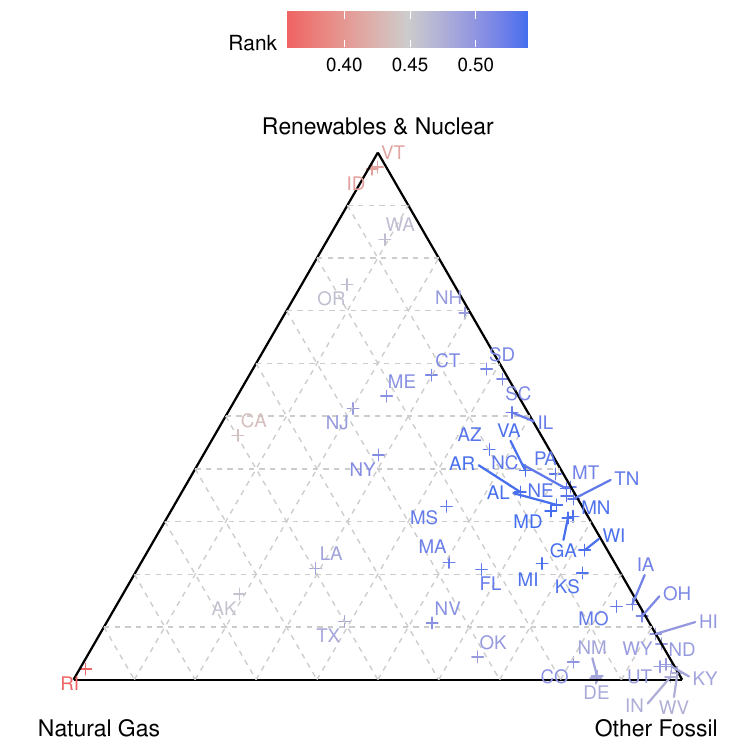}
 \includegraphics[width=.47\linewidth]{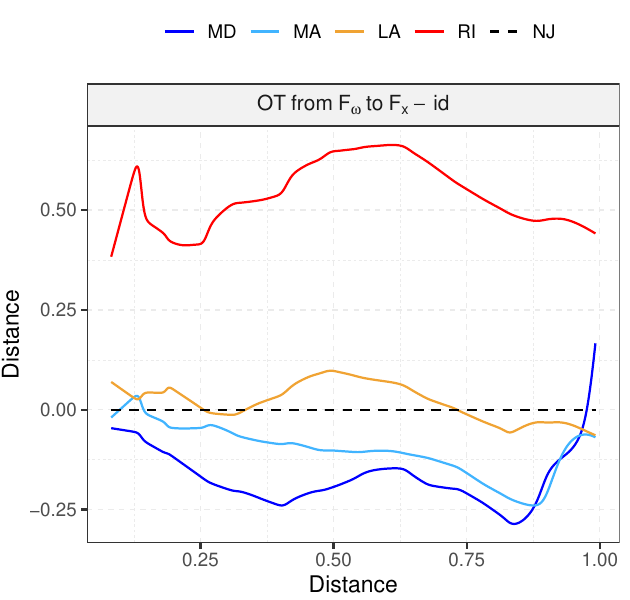}
 \caption{Left: Ternary plot of compositions of electricity generation in the year 2000 for the 50 states in the U.S., where the points are colored according to their transport ranks. Right: Mass transport maps 
 $H_{x,\o}$ as per \eqref{TM} for the transports from $\o = \text{NJ}$ to  $x\in\{\text{MD},\text{MA},\text{LA},\text{RI}\}$.}
 \label{fig:energy2000_OT}
\end{figure}

The \deps of the data objects together with the transport ranks and the transport median set are the key ingredients of the proposed toolkit to quantify centrality. 
These devices lend themselves to devise distance profile based methods for cluster analysis, classification and outlier detection, all of which are challenging when one deals with random objects. The set of maximizers of $\rank_\o$ in $\po$ constitutes the transport median set defined in (\ref{R}). Observing that the function $\rank_\o$ is uniformly continuous in $\o$ by Lemma~\ref{lma:LUC} in the Supplement, the transport median set is guaranteed to be non-empty whenever $\po$ is compact.
If $\O$ is a length space that is complete and locally compact, the Hopf--Rinow theorem \cp{chav:06} implies that  if  $\po$ is any bounded closed subset of $\O$, it is guaranteed to be compact. 

Once a center-outward ordering of the elements of $\O$ has been established through their transport ranks, these ranks can be utilized in numerous ways. One application 
is to define level sets of the form $L_\alpha = \{\o \in \O: \rank_\o = \alpha\}$ and nested superlevel sets $L^{+}_\alpha = \{\o \in \O: \rank_\o \geq \alpha\}$. 
By definition, $L^{+}_{\alpha_1} \subseteq L^{+}_{\alpha_2}$ whenever $\alpha_1 \geq \alpha_2$. Due to the continuity of $\rank_\o$ (see Lemma~\ref{lma:LUC} in the Supplement), the sets $L_\alpha$ and $L^{+}_\alpha$ are closed. Moreover when $(\O,d)$ is a bounded, complete and locally compact length space, again  by the Hopf--Rinow theorem $L_\alpha$ and $L^{+}_\alpha$ are compact as well. Superlevel sets $L^{+}_\alpha$ can be used to define \emph{transport quantile sets}. These can be viewed as a generalization of univariate quantiles to general random objects. Specifically, a $\zeta$-level transport quantile set can be defined as a level set $L_{\alpha}$ where $\alpha$ is such that $P(X \in L^{+}_{\alpha}) = \zeta$, for $\zeta\in(0,1)$.
Complements of superlevel sets can be used to identify potential outliers by highlighting observations with low transport ranks. Data trimming can be achieved by excluding points that have transport ranks lower than a suitably chosen threshold $\alpha_0$; one then might consider maximizers of transport ranks over trimmed versions of $\po$ to obtain trimmed analogues of the transport median set $\M$ and also trimmed \F means. 

\section{Properties of Distance Profiles and Transport Ranks} \label{sec:prop}

We discuss here some desirable properties of \deps, transport ranks and the transport median set that are
appropriately modified versions of analogous properties of classical ranks.

{\it Lipschitz Continuity of Transport Ranks.} By Lemma~\ref{lma:UC} in Section~\ref{sec:proof_est} in the Supplement, 
the distance profiles $F_\o(\cdot)$ and the associated quantile function representations $F_\o\inv(\cdot)$ are uniformly Lipschitz in $\o$ provided that the \deps have uniformly upper bounded densities with respect to the Lebesgue measure. 
This means that $F_{\o_1}$ and $F_{\o_2}$ are uniformly close to each other as long as $\o_1$ and $\o_2$ are close, and the distance between $F_{\o_1}$ and $F_{\o_2}$ is upper bounded by a constant factor of $d(\o_1,\o_2)$. 
Moreover 
transport ranks $\rank_\o$ are uniformly Lipschitz in $\o$, see Lemma~\ref{lma:LUC} in the Supplement.

{\it Invariance of Transport Ranks.} Let $(\tilde{\O},\tilde{d})$ be a metric space. A map $h\colon \O \rightarrow \tilde{\O}$ is isometric if $d(\o_1,\o_2)=\tilde{d}(h(\o_1),h(\o_2))$ for all $\o_1,\o_2 \in \O$. Theorem~\ref{prop:properties}(a) establishes the invariance of distance profiles, and thereby of transport ranks, under isometric transformations. In Euclidean spaces this ensures that the distance profiles are invariant under orthogonal transformations such as rotations. 
 
{\it Transport Modes and Center-Outward Decay of Transport Ranks.} Consider a situation where the distribution $P$ of $X$ concentrates around a point $\o_\oplus \in \O$. Specifically, if there exists an element $\op\in\O$ such that 
\begin{equation}
 \label{center}
 F_{\o_\oplus}(u) \geq F_{\o}(u)
\end{equation}
for any $\o \in \O$ and any $u \ge 0$,  we refer to $\op$ as an $\O$-valued \emph{transport mode} of $P$. 
Condition \eqref{center} states that a $d$-ball of radius $u$ around $\o_\oplus$ contains more mass under $P$ than a similar ball around any other point in $\O$.
According to Theorem~\ref{prop:properties}(b), if $P$ has a transport mode, then the transport rank of the transport mode cannot be smaller than that of any other $\omega \in \Omega$ and therefore, a transport mode is always contained in the transport median set. For distributions that concentrate around their unique \F mean \citep{frec:48}, the \F mean is the transport mode and hence is contained in the transport median set \citep{luna:20}. 
Theorem~\ref{prop:properties}(c) provides a characterization of the radial 
ordering induced by the transport rank for the special case where the data distribution on $\O$ has a transport mode $\o_\oplus$ by considering curves of the form $\gamma\colon [0,1] \rightarrow \O$ that originate from $\o_\oplus$, i.e. $\gamma(0)=\o_\oplus$. According to Theorem~\ref{prop:properties}(c), transport ranks are non-increasing along curves originating from a transport mode $\o_\oplus$, whenever $P$ is such that the \deps decay systematically along the curve $\gamma(t)$ as $t$ is increasing. 

{\it Characterization of the Probability Measure $P$ Through Transport Ranks.} 
Theorem~\ref{prop:properties}(d) shows that when $(\O,d)$ is of strong negative type, the comprehensive set of all transport ranks $\{\rank_\o\}_{\o \in \O}$ uniquely characterizes the underlying measure. 
\begin{theorem}
\label{prop:properties}
For a separable metric space $(\O,d)$ the distance profiles $F_\o$ and the transport ranks $\rank_\o$ satisfy the following properties: 
\begin{enumerate}[label=(\alph*)]
 \item\label{prop1a} Let $h\colon\O \rightarrow \tilde{\O}$ be a bijective isometric measurable map between $(\O,d)$ and $(\tilde{\O},\tilde{d})$ and $P_h(\cdot)=P(h^{-1}(\cdot))$ the push-forward measure on $\tilde{\O}$. 
 Then $F_{h(\o)}^{P_h}(u)=F_\o^P(u)$ for all $u \in \real$, hence $\rank_{h(\o)}^{P_h}=\rank_\o^P$, where $F_\o^P(u)=\prob(d(\o,X)\leq u)$ and $X$ is a $\O$-valued random element such that $P = \prob X\inv$,  $F_{h(\o)}^{P_h}(u)=\prob(\tilde{d}(h(\o),h(X))\leq u)$, 
 $\rank_\o^P$ is the transport rank of $\o$ with respect to $P$ and 
 $\rank_{h(\o)}^{P_h}$ is the transport rank of $h(\o)$ with respect to $P_h$.
 \item\label{prop1b} If $\o_\oplus$ is a transport mode of $P$ as per \eqref{center}, $\rank_{\op} \geq 1/2$. Moreover $\rank_{\op} \geq \rank_\o$ for any $\o \in \O$ and $\o_\oplus \in \M$.
 \item \label{curves} 
 Suppose $\op$ is a transport mode of $P$. Let $\gamma\colon [0,1]\rightarrow\O$ be curve in $(\O,d)$ such that $\gamma(0)=\op$ and $F_{\gamma(s)}(u) \geq F_{\gamma(t)}(u)$ for all $u \in \mathbb{R}$ and $0 \leq s < t \leq 1$. Then $\rank_{\gamma(s)}(u) \geq \rank_{\gamma(t)}(u)$ whenever $0 \leq s < t \leq 1$.
 \item \label{law_ranks} Suppose the metric space $(\O,d)$ is of strong negative type \eqref{eq:snt} and $P_1, P_2$ are two probability measures on the space. Then $P_1 = P_2$ if and only if $\rank_{\o}^{P_1} = \rank_{\o}^{P_2}$ for all $\o\in\O$, where $\rank_{\o}^{P_1}$ and $\rank_{\o}^{P_2}$ are the transport ranks of $\o$ with respect to $P_1$ and $P_2$.
\end{enumerate}
\end{theorem}

\section{Estimation and Large Sample Properties} \label{sec:est}
While so far we have introduced the notions of the distance profiles, transport ranks and transport median sets at the population level, in practice one needs to estimate these quantities from a data sample of random objects $\{\obj_{\subidx}\}_{\subidx=1}^{n}$ consisting of $n$ independent realizations of $\obj$. 
For obtain the distance profiles $\fo$, $\o \in \po$ from a sample, we use empirical estimates 
\begin{equation}
\label{eq:FhatO}
 \hfo(t) = \frac{1}{n} \sum_{\subidx=1}^{n} \ind[ d(\o, \obj_{\subidx}) \leq t ],\quad t \ge 0,
\end{equation} 
where $\ind[A]$ is the indicator function for an event $A$.

Replacing expectations with empirical means and using estimated distance profiles $\hfi$ given by $\hfi(t) = \frac{1}{n-1} \sum_{1\le j \le n,\, j\ne i} \ind[ d(\obj_{j}, \obj_{\subidx}) \leq t ]$ for $t\ge 0$ as surrogates of $\fobji$, we obtain estimates for the transport rank of $\o\in\O$ defined in \eqref{eq:rank} as
\bal\label{eq:hrank}
\hrank_{\o} = \expit\left[ \frac{1}{n}\sum_{i=1}^{n} \left\{\int_0^1[\hfi\inv(u)-\hfo\inv(u)]\diffop u\right\}\right]. \eal
The term 
$\int_0^1[\hfi\inv(u)-\hfo\inv(u)]\diffop u$ provides a comparison between the outlyingness of $\o$ and that of $\obj_{\subidx}$; mass movement predominantly to the right (left) indicates that $\o$ is more central (outlying) compared to $\obj_{\subidx}$, respectively. Finally we define the estimated transport median set 
\begin{equation}\label{eq:tmedian-est}
 \hM = \argmax_{\o \in \{X_1,X_2, \dots, X_n\}} \hrank_\o.
\end{equation}

To obtain asymptotic properties of these estimators and convergence towards their population targets, we require the following assumptions. 

\begin{assumption}
\label{ass:entropy} Let $N(\eps, \Omega, d)$ be the covering number of the space $\O$ with balls of radius $\eps$ and $\log N(\eps, \Omega, d)$ the corresponding metric entropy. Then 
 \begin{equation} \label{entropy} 
 \eps \log N(\eps, \Omega, d) \rightarrow 0\quad \text{as} \quad \eps \rightarrow 0.
 \end{equation}
\end{assumption}

\begin{assumption}
\label{ass:dpfctn} 
 For every $\o \in \O$,  $F_\o$ is absolutely continuous with continuous density $f_\o$. For   
 $\underline{\Delta}_{\o} = \inf_{t \in \mathrm{support}(f_\o)} f_\o(t) \ \text{and} \ \overline{\Delta}_{\o} = \sup_{t \in \real} f_\o(t),$ 
  $\underline{\Delta}_{\o}>0$ for each $\o \in \O$ and 
 there exists 
 $\overline{\Delta} > 0$ such that
 $\sup_{\o \in \O} \overline{\Delta}_\o \leq \overline{\Delta}.$
 \end{assumption}

Assumptions~\ref{ass:entropy} and \ref{ass:dpfctn} are necessary for Theorem~\ref{thm:fhat},  which provides the uniform convergence of $\hfo$ to $\fo$. The entropy condition  Assumption~\ref{ass:entropy} serves to overcome the dependence between the summands in the estimator of the transport rank and to establish uniform convergence to the population transport ranks. 
Assumption~\ref{ass:dpfctn} is a smoothness condition of the probability measure $P$, which is widely satisfied. Examples include  absolutely continuous distributions with compact support in a Euclidean space or  probability distributions on a Riemannian manifold, where  after applying Riemannian log maps 
the transformed distributions on  tangent spaces are absolutely continuous with compact support.

For any $t \ge 0$, $\fo (t)= \expect(\hfo (t))$. For functions $y_{\o,t}\colon \O \ra \real$ with $y_{\o,t}(x) = \mathbb{I}\{d(\o,x) \leq t\}$ and the function class $\mcF=\{ y_{\o,t} : \o \in \O, t \in \real \}$,  the following result establishes that under Assumptions \ref{ass:entropy} and \ref{ass:dpfctn} the function class $\mcF$ is $P$-Donsker. 

\begin{theorem}
\label{thm:fhat}
Under Assumptions \ref{ass:entropy} and \ref{ass:dpfctn}, $\{ \sqrt{n}(\hfo (t)-\fo (t)): \o \in \O,\, t \in \real \}$ converges weakly to a zero-mean Gaussian process $\mathbb{G}_P$ with covariance given by 
\begin{equation*}
 \mathcal{C}_{(\o_1,t_1),(\o_2,t_2)} = \Cov(y_{\o_1,t_1}(\obj), y_{\o_2,t_2}(\obj))
\end{equation*}
for $\o_1,\o_2\in\O$ and $t_1,t_2\in\real$.
\end{theorem}

Assumption~\ref{ass:entropy} is a restriction on the complexity of the metric space $(\O,d)$. It is satisfied for a broad class of spaces. 
In particular, any space $(\O,d)$ such that $\log N(\eps, \O, d) = O\left(\frac{1}{\eps^\alpha}\right)$ for some $\alpha < 1$ satisfies Assumption \ref{ass:entropy}. This is true for any $(\O,d)$ which can be represented as a subset of elements in a finite dimensional Euclidean space, for example the space of graph Laplacians or network adjacency matrices with fixed number of nodes \citep{kola:20,gine:17}, symmetric positive definite matrices of a fixed size \citep{dryd:09}, simplex valued objects in a fixed dimension \citep{jeon:20} and the space of phylogenetic trees with the same number of tips \citep{kim:20,bill:01}. 
It holds that $\log N(\eps, \O, d) = O\left(\eps^{-\alpha}\right)$ for any $\alpha < 1$ when $\O$ is a VC-class of sets or a VC-class of functions \citep[Theorems~2.6.4 and 2.6.7,][]{well:96}. 
Assumption \ref{ass:entropy} also holds for $p$-dimensional smooth function classes $C_1^\alpha(\mathcal{X})$ \citep[page 155,][]{well:96} on bounded convex sets $\mathcal{X}$ in $\real^p$ equipped with the $\|\cdot\|_\infty$-norm \citep[Theorem~2.7.1,][]{well:96} or the $\|\cdot\|_{r,Q}$-norm 
for any probability measure $Q$ on $\real^p$ \citep[Corollary~2.7.2,][]{well:96}, if $\alpha \geq p+1$. 

Of particular interest for many applications is the case when $\O$ is the space of one-dimensional distributions on some compact interval $I \subset \real$ with the 2-Wasserstein metric $d=d_W$ defined in \eqref{eq:dwass} \citep{mull:19:3}. If $\O$ is represented using the quantile function of the distributions then, without any further assumptions, $\log N(\eps, \O, d_W)$ is upper and lower bounded by a factor of $1/\eps$ \citep[Proposition~2.1,][]{blei:07} and does not meet the criterion in Assumption~\ref{ass:entropy}. However, if we assume that the distributions in $\O$ are absolutely continuous with respect to the Lebesgue measure on $I$ with smooth densities uniformly taking values in some interval $[l_0,u_0]$, $0 <l_0<u_0<\infty$, then $\O$ equipped with $d_W$ satisfies Assumption~\ref{ass:entropy}. To see this, observe that with the above characterization of $\O$ the quantile functions corresponding to the distributions in $\O$ have smooth derivatives that are uniformly bounded. 
With  $\mathcal{Q}_{deriv}$ denoting the space of the uniformly bounded derivatives of the quantile functions in $\O$,  $\log N(\eps, \mathcal{Q}_{deriv}, \|\cdot\|_1) = O\left(\eps\inv\right)$, where $\|\cdot\|_1$ is the $L_1$ norm under the Lebesgue measure on $I$ \citep[Corollary~2.7.2,][]{well:96}. 
Using Lemma~1 in \cite{gao:09}, with $\mathcal{F} \equiv \mathcal{Q}_{deriv}$, $\mathcal{G} \equiv \Omega$, $\alpha(x)=x$ and $\phi(\eps)=K/\eps$ for some constant $K$, $\log N(\eps, \O, d_W) = O\left(\eps^{-1/2}\right)$ which meets the requirement of Assumption~\ref{ass:entropy}. 
If $\O$ is the space of $p$-dimensional distributions on a compact convex set $I \subset \real^p$, represented using their distribution functions endowed with the {$L_{r}$} metric with respect to the Lebesgue measure on $I$, then Assumption~\ref{ass:entropy} is satisfied if $\O \subset C_1^\alpha(I)$ for $\alpha \geq p+1$. 

Next we discuss the asymptotic convergence of the estimates $\hrank_\o$ of transport ranks. 
Theorem~\ref{thm:Rhat} establishes a $\sqrt{n}$-rate of convergence uniformly in $\o$. 
\begin{theorem}
\label{thm:Rhat}
Under Assumptions \ref{ass:entropy} and \ref{ass:dpfctn},
\begin{equation*}
 \sqrt{n} \sup_{\o \in \O} |\hrank_\o - \rank_\o| = O_{\prob}(1).
\end{equation*}
\end{theorem}

To conclude this section we consider the convergence of the estimated transport median set $\hM$ to $\M$ in the Hausdorff metric 
\begin{equation}
 \rho_H\left(\hM,\M\right) = \max \left( \sup_{\o \in \hM} d(\o,\M), \sup_{\o \in \M} d(\o,\hM) \right), 
\end{equation}
where for any $\o \in \O$ and any subset $A \subset \O$, $d(\o,A) = \inf_{s \in A} d(\o,s)$. We derive uniform Lipschitz continuity of  transport ranks $\rank_\o$ in $\o$ (Lemma~\ref{lma:LUC} in the Supplement) and require the following  additional assumption.
\begin{assumption}
 \label{ass:separation2} For some $\eta' > 0$, for any $0 < \eps < \eta'$,  $$\alpha(\eps)= \inf_{\tilde\o \in \M}\inf_{d(\o,\tilde\o)>\eps} \left| \rank_{\o}-\rank_{\tilde\o}\right| > 0.$$
\end{assumption}
Assumption~\ref{ass:separation2} deals with the identifiability of transport medians and stipulates  that the transport median set is a union of single point sets which are separated from each other by a minimum fixed distance.  In particular,  unimodal probability measures satisfy Assumption~\ref{ass:separation2}. This assumption is needed to derive our next result 
on the consistency of the estimated transport median set for the true transport median set in the Hausdorff metric. 
\begin{theorem}
\label{thm:Mhat}
Assume that the distribution $P$ is such that $\M$ is non-empty. Under Assumptions \ref{ass:entropy}--\ref{ass:separation2},
\begin{equation*}
 \rho_H\left(\hM,\M\right) = o_{\prob}(1).
\end{equation*}
\end{theorem}

\section{Two-sample Inference With Distance Profiles} \label{sec:test}
\subsection{Construction of a Two-Sample Test} Assume that  $X_1, X_2, \dots, X_n$ is a sample of random objects taking values in $\O$, generated according to a Borel probability measure $P_1$ on $\O$, and that  $Y_1, Y_2, \dots, Y_m$ is another sample of $\O$-valued random objects generated analogously according to a Borel probability measure $P_2$. 
Two-sample testing in this setting concerns the  null \eqref{eq:null} and alternative \eqref{eq:alternative} hypotheses
\begin{align}
 & H_0: P_1 = P_2 \label{eq:null} \\ & H_1: P_1 \neq P_2 \label{eq:alternative}.
\end{align}

Nonparametric two-sample tests have been studied extensively in many settings. To extend this classical problem to object data poses new challenges.  Existing methods that are based on distances, such as the graph based tests \citep{chen:17:3} and the energy test \citep{szek:04}, either require tuning parameters for their practical implementation or lack theoretical guarantees on the power of the test, particularly when using permutation cut-offs for type I error control. We propose here a two-sample test based on the distance profiles of the observations. The proposed test is tuning parameter free, has rigorous asymptotic type I error control under $H_0$ \eqref{eq:null} and is guaranteed to be powerful against contiguous alternatives for sufficiently large sample sizes. 
While it was presented at the Rietz Lecture and derived independently, our test statistic is similar in spirit to a test proposed in \cite{wang:23:3}. We note that our results are derived under weaker assumptions and provide, in addition to consistency under the null hypothesis, power guarantees under contiguous alternatives, as well as theoretical guarantees for the corresponding permutation tests.

We require some notations. For $\omega \in \Omega$, the distance profile of $\o$ with respect to $X \sim P_1$ and $Y \sim P_2$ respectively are given by 
$F_{\o}^X(\cdot)$ and $F_{\o}^Y(\cdot)$,  where for $u \in \mathbb{R}$,
\begin{equation}
 \label{eq:depxy}
 F_{\o}^X(u) = \prob (d(x,X) \leq u) \quad \text{and} \quad F_{\o}^Y(u) = \prob (d(x,Y) \leq u).
\end{equation}
Let $\hf[XX1\cdot], \hf[XX2\cdot], \dots, \hf[XXn\cdot]$ be the estimated \emph{in-sample} distance profiles of $X_1, \dots, X_n$,
respectively, with respect to the observations from $P_1$, i.e.,
\begin{equation*}
\hf[XXiu]= \denom[n-1] \addneq[ji] \ind[d(X_i,X_j) \leq u].
\end{equation*}
Then we obtain the \emph{out-of-sample} distance profiles of $X_1, \dots, X_n$, respectively,  with respect to the observations from $P_2$, given by $\hf[YX1\cdot], \hf[YX2\cdot], \dots, \hf[YXn\cdot]$, where 
\begin{equation*}
\hf[YXiu]= \denom[m] \add[jm] \ind[d(X_i,Y_j) \leq u].
\end{equation*}

Similarly we estimate the in-sample and the out-of-sample distance profiles of \newline  $Y_1, \dots, Y_m$ with respect to the observations from $P_2$ and $P_1$, respectively, given by $\hf[YY1\cdot], \hf[YY2\cdot], \dots, \hf[YYm\cdot]$ and $\hf[XY1\cdot], \hf[XY2\cdot], \dots, \hf[XYm\cdot]$ respectively, where for $u \ge 0$ 
\begin{equation*}
\hf[YYju]= \denom[m-1] \addneq[ji] \ind[d(Y_j,Y_i) \leq u]
\end{equation*}
and
\begin{equation*}
\hf[XYju]= \denom[n] \add[in] \ind[d(Y_j,X_i) \leq u].
\end{equation*}
With $T^X_{nm}$ and $T^Y_{nm}$ defined as
\begin{equation*}
T^X_{nm}(X,Y)= \denom[n]\add[in] \int \left\lbrace \hf[XXiu] - \hf[YXiu] \right\rbrace^2 \df{u} 
\end{equation*}
and
\begin{equation*}
T^Y_{nm}(X,Y) = \denom[m]\add[im] \int \left\lbrace \hf[XYiu] - \hf[YYiu] \right\rbrace^2 \df{u},
\end{equation*}
the proposed test statistic $T_{nm}(X,Y)$ is 
\begin{equation}
\label{eq:teststat_unweighted}
 T_{nm}(X,Y) = \frac{nm}{n+m} \left \lbrace T^X_{nm} + T^Y_{nm} \right \rbrace.
\end{equation}

To enhance flexibility,  we also consider a generalized weighted version of the test statistic, where for each observation $X_i$ or $Y_i$, we allow for data adaptive weight profiles $\wght[Xi\cdot]$ and $\wght[Yi\cdot]$ that can be tuned appropriately to enhance the detection capacity of the test statistic. This leads to  
weighted versions of $T^X_{nm}$ and $T^Y_{nm}$, 
\begin{align*}
& T^{X,w}_{nm}(X,Y)= \denom[n]\add[in] \int \wght[Xiu] \left\lbrace \hf[XXiu] - \hf[YXiu] \right\rbrace^2 \df{u} \\ \text{and} \quad & T^{Y,w}_{nm}(X,Y) = \denom[m]\add[im] \int \wght[Yiu] \left\lbrace \hf[XYiu] - \hf[YYiu] \right\rbrace^2 \df{u}
\end{align*}
and the weighted test statistic 
\begin{equation}
\label{eq:teststat_weighted}
 T^w_{nm}(X,Y) = \frac{nm}{n+m} \left \lbrace T^{X,w}_{nm} + T^{Y,w}_{nm} \right \rbrace.
\end{equation}
Note that the test statistic in equation \eqref{eq:teststat_unweighted} is a version of the generalized test statistic in equation \eqref{eq:teststat_weighted} with $\wght[Xi\cdot]=\wght[Yi\cdot]\equiv1$. Hereafter  we suppress the dependence of $T^X_{nm}, T^Y_{nm}, T_{nm}, T^{X,w}_{nm}, T^{Y,w}_{nm}$ and $T^w_{nm}$ on $(X,Y)$ as this  will be clear from the context.

{Suppose that for each $x \in \Omega$, there exists a population limit of the estimated data adaptive weight profile $\hat{w}_x(\cdot)$ given by $w_x(\cdot)$ such that 
\begin{equation} \label{wt}
 \sup_{x \in \Omega} \sup_{u} \lvert \hat{w}_x(u)-w_x(u) \rvert = o_\prob(1).
\end{equation} 
The weight profile dependent quantities 
\begin{align} \label{dw} 
 D^w_{XY}(P_1,P_2)&= \E\left\lbrace \int w_{X'}(u) \left(F_{X'}^X(u)-F_{X'}^Y(u)\right)^2\df{u}\right\rbrace \\  \nonumber & \quad+ \E\left\lbrace \int w_{Y'}(u) \left(F_{Y'}^X(u)-F_{Y'}^Y(u)\right)^2\df{u}\right\rbrace,
\end{align} 
where $F_\o^X(\cdot)$ and $F_\o^Y(\cdot)$ are as defined in equation \eqref{eq:depxy} and $X' \sim P_1$ and $Y' \sim P_2$ capture the population version of the proposed test statistic \eqref{eq:teststat_weighted}. First observe that under $H_0$, for any $w_x(\cdot), \, x \in \Omega$, it holds that $D^w_{XY}(P_1,P_2)=0$. Next we show that under mild conditions on the weight profiles, $(\Omega,d)$, $P_1$ and $P_2$, $D^w_{XY}(P_1,P_2)=0$ if and only if $P_1 = P_2$.

Let $\mathcal{P}_{(\Omega,d)}$ denote the class of all Borel probability measures on $(\O,d)$ that are uniquely determined by the measure of all open balls or equivalently by the set of distance profiles, that is, for $Q_1, Q_2 \in \mathcal{P}_{(\Omega,d)}$, $Q_1 = Q_2$ if and only if $F^{Q_1}_\omega(u) = F^{Q_2}_\omega(u)$ for all $\omega \in \Omega$ and $u \ge 0$. In fact $\mathcal{P}_{(\Omega,d)}$ corresponds to the set of all Borel probability measures on $(\O,d)$ when the metric $d$ is such that $d^\theta$ is of strong negative type \eqref{eq:snt}  for some $\theta > 0$ 
(see Proposition \ref{prop1}). If  $w_\o(u) > 0$ for any $\o \in \O$ and for any $u \ge 0$, then $D^w_{XY}=0$ implies that $F_\o^X(u)=F_\o^Y(u)$ for almost any $u \ge 0$ and for any $\o$ in the union of the supports of $P_1$ and $P_2$. Hence,  if $\O$ is contained in the union of the supports of $P_1$ and $P_2$, then $D^w_{XY}(P_1, P_2)=0$ implies that $P_1=P_2$ whenever $P_1,P_2 \in \mathcal{P}_{(\Omega,d)}$. In the following, we  will suppress $(P_1,P_2)$ in the notation $D^w_{XY}(P_1, P_2)$. 

We will use $D^w_{XY}$ in Section~\ref{sec:test_theory} to evaluate the power performance of the test by constructing a sequence of contiguous alternatives that approach the null hypothesis \eqref{eq:null}. To obtain the asymptotic power of the test, we work with $q_\alpha$, the asymptotic critical value for rejecting $H_0$, where $q_\alpha = \inf\{t: \Gamma_L(t) \geq 1-\alpha \}$ and  $\Gamma_L(\cdot)$ is the cumulative distribution function of the asymptotic null distribution corresponding to the law of $L$ (see Theorem \ref{thm: null_dist}).  
Since $L$ is an infinite mixture of chi-squares with mixing weights depending on the data distribution under $H_0$, we estimate $q_\alpha$ using a random permutation scheme in practice, as follows.

\vspace{0.1 in}

Let $\Pi_{nm}$ denote the collection of all $(n+m)!$ permutations of $\{1,2,\dots,n+m\}$ and  $\Pi$  a random variable that follows a uniform distribution on $\Pi_{nm}$ and is independent of the observations $X_1, \dots, X_n$ and $Y_1, \dots Y_m$. Let $V_1, V_2, \dots , V_{n+m}$ denote the pooled sample where $V_i = X_i $ if $i \leq n$ and $V_i = Y_{i-n}$ if $i \geq n+1$. Let $\Pi_1, \Pi_2, \dots \Pi_K$ denote i.i.d. replicates of $\Pi$. Each $\Pi_j=(\Pi_j(1), \dots, \Pi_j(n+m))$ is a random permutation of $\{1,2,\dots,n+m\}$ and when applied to the data yields $V_{\Pi_j} = \{V_{\Pi_j(1)}, V_{\Pi_j(2)}, \dots , V_{\Pi_j(n+m)}\}$, $\,j=1,\dots,K,$ which constitute a collection of randomly permuted pooled data. For each $j=1, \dots, K$, split the data $V_{\Pi_j}$ into $X_{\Pi_j}=\{V_{\Pi_j(1)}, V_{\Pi_j(2)}, \dots , V_{\Pi_j(n)}\}$ and $Y_{\Pi_j}=\{V_{\Pi_j(n+1)}, V_{\Pi_j(2)}, \dots , V_{\Pi_j(n+m)}\}$. With $X_{\Pi_j}$ and $Y_{\Pi_j}$ being the proxies for the two samples of sizes $n$ and $m$,  respectively, evaluate the test statistic replicates $T_{\Pi_j}=T^w_{nm}(X_{\Pi_j},Y_{\Pi_j})$. Define $\hat{\Gamma}_{nm}(\cdot)$ as
 \begin{equation}
 \label{eq:perm_cdf}
 \hat{\Gamma}_{nm}(t) = \frac{1}{K} \sum_{j=1}^K \ind [T_{\Pi_j} \leq t],
 \end{equation}
which approximates the randomization distribution of $T^w_{nm}$ using the random permutations $\Pi_1, \Pi_2, \dots \Pi_K$. Then a natural  estimate of $q_\alpha$ is  \be \label{qest} \hat{q}_\alpha = \inf\{t : \hat{\Gamma}_{nm}(t) \geq 1-\alpha \}.\ee

\subsection{Theoretical Guarantees for Type I Error Control and the Asymptotic Power of the Test}
\label{sec:test_theory}
To establish  theoretical guarantees of the proposed test, in particular  the limiting distribution of the test statistic under $H_0$ \eqref{eq:null} and the consistency of the test under the alternative \eqref{eq:alternative}, we require additional assumptions,  
including a modified version of Assumption~\ref{ass:dpfctn}, listed below as Assumption~\ref{ass:assumption_lipschitz}. Assumption~\ref{ass:assumption_weights} requires regularity conditions on the data adaptive weight profiles to ensure that they have a well behaved asymptotic limit. Assumption~\ref{ass:samp_ratio} is needed so that none of the group sizes is  asymptotically negligible. 
\begin{assumption}
 \label{ass:assumption_weights}
For each $x \in \Omega$, there exists a population limit of the estimated weight profiles 
such that \eqref{wt} is satisfied;
 there exists $C_w > 0$ such that $\sup_{x \in \Omega} \sup_{u} \lvert w_x(u) \rvert \leq C_w$; for some $L_w > 0$ it holds that 
 $\sup_{x \in \Omega} |w_x(u)-w_x(v)| \leq L_w |u-v|$.
\end{assumption}\vspace{-.8cm}
\begin{assumption}
\label{ass:assumption_lipschitz}
For each $x \in \Omega$,  $X \sim P_1$ and $Y \sim P_2$,  $F_x^X(t)=\prob(d(x,X)\leq t)$ and $F_x^Y(t)=\prob(d(x,Y)\leq t)$ are absolutely continuous, with densities  $f_x^X(t)$ and $f_x^Y(t)$, respectively, that satisfy $\inf_{t \in \mathrm{supp}(f^X_x) }f^X_x(t) > 0$, $\inf_{t \in \mathrm{supp}(f^Y_x) }f^Y_x(t) > 0$.  There exist  $L_X, L_Y > 0$ such that
$\sup_{x \in \Omega} \sup_{t \in \mathbb{R}} |f_x^X(t)| \leq L_X$ and $\sup_{x \in \Omega} \sup_{t \in \mathbb{R}} |f_x^Y(t)| \leq L_Y$.
\end{assumption}\vspace{-.7cm}
\begin{assumption}
\label{ass:samp_ratio}
There exists $0 < c < 1$ such that  sample sizes $n$ and $m$ satisfy $n/(n+m) \rightarrow c$ as $n,m \rightarrow \infty$.
\end{assumption}

Theorem \ref{thm: null_dist} 
provides the framework for asymptotic type I error control of the test. The asymptotic distribution of the test statistic $T_{nm}$ \eqref{eq:teststat_unweighted}, which we will illustrate later in the simulations,  can be directly derived from Theorem \ref{thm: null_dist} by plugging in $\wt[Xi\cdot]=\wt[Yi\cdot]\equiv 1$ as this constant weight profile satisfies Assumption~\ref{ass:assumption_weights} trivially.

\begin{theorem}
\label{thm: null_dist}
 Under $H_0$ \eqref{eq:null} and Assumptions \ref{ass:entropy}, \ref{ass:assumption_weights}, \ref{ass:assumption_lipschitz} and \ref{ass:samp_ratio}, $T^w_{nm}$ converges in distribution to the law of a random variable $L=2\add[j\infty] Z_j^2 \E_{V}(\lambda_j^{V})$, where $Z_1, Z_2, \dots$ is a sequence of i.i.d. $N(0,1)$ random variables,  $V \sim P$ where $P=P_1=P_2$ under $H_0$ and for any $x \in \O$, $\lambda_1^x \geq \lambda_2^x \geq \dots$ are the eigenvalues of the covariance surface given by
\begin{equation*}
 C^x(u,v) = \sqrt{w_x(u)w_x(v)} \ \mathrm{Cov}(\ind[d(x,V')\leq u],\ind[d(x,V')\leq v])
\end{equation*}
with $ V' \sim P$.
\end{theorem}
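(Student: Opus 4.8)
The plan is to work under $H_0$, where $P_1=P_2=:P$, reduce the weighted statistic to a degenerate $V$-statistic, and identify its weak limit through the Karhunen--Loève decomposition of a limiting Gaussian process. First I would strip off the estimated weights: by Assumption~\ref{ass:assumption_weights}, $\sup_{x}\sup_{u}\lvert\hat w_x(u)-w_x(u)\rvert=o_\prob(1)$, while by Assumption~\ref{ass:assumption_lipschitz} together with Theorem~\ref{thm:fhat} the factors $\int\{\hat F^X_{X_i}(u)-\hat F^Y_{X_i}(u)\}^2\diffop u$ are, uniformly in $i$, of order $\min(n,m)^{-1}$, so replacing $\hat w$ by $w$ alters $T^w_{nm}$ by $o_\prob(1)$. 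Under $H_0$ every in-sample and out-of-sample profile estimates the same $F_{X_i}$, so I would center each difference as $\hat F^X_{X_i}(u)-\hat F^Y_{X_i}(u)=\{\hat F^X_{X_i}(u)-F_{X_i}(u)\}-\{\hat F^Y_{X_i}(u)-F_{X_i}(u)\}$, with each bracket an average of the kernels $\psi_u(x,z):=\mathbb{I}\{d(x,z)\le u\}-F_x(u)$, which satisfy $\expect_z[\psi_u(x,z)]=0$ for every fixed $x$.

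Conditioning on $X_i=x$, the process $\{(n-1)m/(n-1+m)\}^{1/2}(\hat F^X_{X_i}-\hat F^Y_{X_i})$ is a two-sample empirical process built from the observations other than $X_i$; since the class $\{y_{\o,t}\}$ is $P$-Donsker by Theorem~\ref{thm:fhat}, it converges weakly --- conditionally on $X_i=x$ --- to a centered Gaussian process with covariance $\Cov(\mathbb{I}\{d(x,V)\le u\},\mathbb{I}\{d(x,V)\le v\})$, so by the continuous mapping theorem and the Karhunen--Loève expansion of the $\sqrt{w_x}$-weighted process (whose covariance surface is exactly $C^x$) the $i$-th summand, scaled by $nm/(n+m)$, behaves like $\sum_j\lambda_j^{X_i}\xi_{ij}^2$ with the $\xi_{ij}$ conditionally i.i.d.\ $N(0,1)$; since $nm/\{n(n+m)\}=m/(n+m)\to 1-c$ and symmetrically $n/(n+m)\to c$ for the $Y$-block, $T^w_{nm}$ equals, up to $o_\prob(1)$, an average over $i$ of such quantities. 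The obstacle is that the $\xi_{ij}$ are \emph{not} independent across $i$ --- the summands share the same observations --- so a plain law of large numbers does not produce the limit. I would instead re-expand in the empirical sums and Hoeffding-decompose each $\int w_{X_i}(u)\{\hat F^X_{X_i}(u)-\hat F^Y_{X_i}(u)\}^2\diffop u$ into ``diagonal'' contributions $\int w_{X_i}(u)\psi_u(X_i,X_l)^2\diffop u$ and $\int w_{X_i}(u)\psi_u(X_i,Y_k)^2\diffop u$, whose grand average converges by the LLN to the deterministic constant $2\,\expect_V[\mathrm{trace}(C^V)]=2\sum_j\expect_V(\lambda_j^V)$, and ``off-diagonal'' pieces --- two within-sample and one cross-sample degenerate $U$-statistic, all governed by the symmetric \emph{degenerate} kernel $\kappa(z,z'):=\expect_V\!\int w_V(u)\psi_u(V,z)\psi_u(V,z')\diffop u$.

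Summing the base-point index $i$ first replaces $\sum_i\int w_{X_i}(u)\psi_u(X_i,z)\psi_u(X_i,z')\diffop u$ by $n\,\kappa(z,z')$ up to a fully degenerate --- hence $o_\prob$ --- remainder, so each off-diagonal piece reduces to a classical degenerate $U$-statistic in $\kappa$; using Mercer's expansion $\kappa(z,z')=\sum_r\theta_r\phi_r(z)\phi_r(z')$ and the component CLTs $n^{-1/2}\sum_l\phi_r(X_l)\rightsquigarrow W_r$, $m^{-1/2}\sum_k\phi_r(Y_k)\rightsquigarrow W_r'$ with $(W_r,W_r')$ independent standard normals, the within-sample terms contribute $(1-c)\sum_r\theta_r(W_r^2-1)$ and $c\sum_r\theta_r(W_r'^2-1)$, the cross term $-2\sqrt{c(1-c)}\sum_r\theta_r W_rW_r'$, and these sum to $\sum_r\theta_r\{(\sqrt{1-c}\,W_r-\sqrt{c}\,W_r')^2-1\}$; since $Z_r:=\sqrt{1-c}\,W_r-\sqrt{c}\,W_r'$ are i.i.d.\ $N(0,1)$ and the $Y$-block contributes an identical term, the off-diagonal total is $2\sum_r\theta_r(Z_r^2-1)$. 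Adding the diagonal constant and identifying the spectrum $\{\theta_r\}$ of $\kappa$ with the averaged spectral data $\{\expect_V(\lambda_j^V)\}$ of the surfaces $\{C^x\}$ --- the trace identity $\sum_r\theta_r=\int\mathrm{trace}(C^v)\diffop P(v)=\sum_j\expect_V(\lambda_j^V)$ being a first consistency check, which in particular makes the ``$-1$'' corrections cancel against the constant --- yields $T^w_{nm}\rightsquigarrow 2\sum_j Z_j^2\,\expect_V(\lambda_j^V)=L$. I expect the combinatorial bookkeeping of these degenerate $U$-statistics, and the spectral identification step, to be where the real work lies; Theorem~\ref{thm:fhat} (hence Assumption~\ref{ass:entropy}) supplies the tightness needed to pass to Gaussian limits, Assumption~\ref{ass:assumption_lipschitz} the $L^2$-regularity of the profile functionals, Assumption~\ref{ass:assumption_weights} the boundedness and Lipschitz continuity of the limiting weights, and Assumption~\ref{ass:samp_ratio} keeps $1-c\in(0,1)$ so the scaling remains nondegenerate.
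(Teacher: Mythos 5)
Your route via Hoeffding/degenerate $U$-statistic decomposition is a genuinely different organization of the calculation from the paper's. The paper also begins with a Hoeffding-style decomposition (its terms $\IA,\IB,\IIA,\IIIA$ are shown negligible via Euclidean-class and bracketing arguments from \citealp{sher:94} and \citealp{arco:93}), but it then recombines the surviving pieces $\IC+\IIB-2\IIIB$ into the representation $\frac{nm}{n+m}T_1(u)=\E_{X'}\{(W^{n,m}_{X'}(u))^2\}$, passes through the weak convergence of the joint process $(x,u)\mapsto W^{n,m}_x(u)$ to a Gaussian field $G_x(u)$ (via Theorem~\ref{thm:fhat}), applies the continuous mapping theorem to $\E_{X'}\{\int(G_{X'}(u))^2\,\diffop u\}$, and finishes with a Karhunen--Lo\`eve expansion of $G_x(\cdot)$ at each fixed $x$. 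You instead split the $V$-statistic directly into a ``diagonal'' piece and degenerate $U$-statistics in the base-point-averaged kernel $\kappa(z,z')=\E_V\int w_V(u)\psi_u(V,z)\psi_u(V,z')\,\diffop u$. Your bookkeeping of the diagonal constant $2\E_V[\mathrm{trace}(C^V)]$, the two within-sample degenerate $U$-statistics and the cross-sample one, the scaling factors $1-c$, $c$, $-2\sqrt{c(1-c)}$, and the resulting $2\sum_r\theta_r(Z_r^2-1)$ with $Z_r=\sqrt{1-c}\,W_r-\sqrt{c}\,W_r'$ i.i.d.\ $N(0,1)$, is all correct.

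The genuine gap is precisely the ``spectral identification step'' that you flag. There is no justification for the claim that the spectrum $\{\theta_r\}$ of $\kappa$ on $L^2(\Omega,P)$ coincides with the multiset $\{\E_V(\lambda_j^V)\}$ of averaged eigenvalues of the one-dimensional kernels $C^x$ on $L^2([0,M])$; these are compact operators on entirely different Hilbert spaces, and in general only their traces agree, which is exactly the ``consistency check'' you cite — it is not a proof. Writing $\kappa=T^*T$ for the operator $(Tg)(x,u)=\sqrt{w_x(u)}\,\E_Z[\psi_u(x,Z)g(Z)]$ shows that $\{\theta_r\}$ equals the nonzero spectrum of $TT^*$, i.e.\ of the full field covariance $\tilde C((x,u),(x',u'))=\sqrt{w_x(u)w_{x'}(u')}\,\Cov(\ind[d(x,V)\le u],\ind[d(x',V)\le u'])$ on $L^2(P\otimes\mathrm{Leb})$. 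This is not the same object as the averaged spectra of the diagonal restrictions $C^x=\tilde C((x,\cdot),(x,\cdot))$ — the cross-$x$ off-diagonal structure of $\tilde C$ changes the eigenvalues even when the trace is unchanged, and a two- or three-point $\Omega$ with a single distance level already produces a counterexample. So what your argument actually proves is that the limit is $2\sum_r\theta_r Z_r^2$ with $\{\theta_r\}=\mathrm{spec}(\kappa)=\mathrm{spec}(\tilde C)$; to land on the form stated in Theorem~\ref{thm: null_dist} you would need to prove the identity you are assuming, and this requires additional structure on $\tilde C$ beyond what the hypotheses supply. Note, for your own comparison, that the paper's Karhunen--Lo\`eve step writes $G_x(u)=\sum_j Z_j\sqrt{\lambda_j^x}\,\phi_j^x(u)$ with scores $Z_j$ that do not depend on $x$ — this is exactly the point where the cross-$x$ structure of the field is being pinned down, so you should decide whether such a representation of the limiting field is actually available before trusting the stated spectral form as the target.
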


To study the asymptotic power of the proposed test, we consider a
sequence of alternatives \be \label{hnm} H_{nm}=&&\{(P_1,P_2):\, X \sim P_1, Y \sim P_2,\\ \nonumber &&\text{with} \ D^w_{XY}=a_{nm},  \ a_{nm} \rightarrow 0, \ nm/(n+m) a_{nm} \rightarrow \infty, \ n,m \rightarrow \infty \},\ee  with $D^w_{XY}$ as in \eqref{dw}.  
The  $\{H_{nm}\}$ form a sequence of contiguous alternatives shrinking towards $H_0$. 
The  power of the test under this sequence is  
\begin{equation} \label{pow}
 \beta^w_{nm} = \prob_{H_{nm}} \left( T^w_{nm} > q_\alpha \right), 
\end{equation}
where $q_\alpha = \inf\{t: \Gamma_L(t) \geq 1-\alpha \}$ and $\Gamma_L(\cdot)$ is the cumulative distribution function of the asymptotic null distribution corresponding to the law of $L$ in Theorem \ref{thm: null_dist}.  Our next result  shows that the proposed test is consistent against the contiguous alternatives   $\{H_{nm}\}$ \eqref{hnm}. 
\begin{theorem}
\label{thm: power}
Under Assumptions \ref{ass:entropy}, \ref{ass:assumption_weights}, \ref{ass:assumption_lipschitz} and \ref{ass:samp_ratio}, for a sequence of alternatives $H_{nm}$,  
the power of the level $\alpha$ test \eqref{pow} satisfies 
 $\beta^w_{nm} \rightarrow 1.$
\end{theorem}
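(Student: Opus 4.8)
The plan is to show that, along the contiguous sequence $H_{nm}$ \eqref{hnm}, the statistic $T^w_{nm}$ diverges to $+\infty$ in probability while the critical value $q_\alpha$ stays a fixed finite number; since $\beta^w_{nm}=\prob_{H_{nm}}(T^w_{nm}>q_\alpha)$, the conclusion is then immediate. The first step is a signal/noise decomposition. For each $X_i$ write $\hf[XXiu]-\hf[YXiu]=B_i(u)+A_i(u)$, where the ``signal'' is $B_i(u)=F_{X_i}^X(u)-F_{X_i}^Y(u)$ and the centered empirical ``noise'' is $A_i(u)=\{\hf[XXiu]-F_{X_i}^X(u)\}-\{\hf[YXiu]-F_{X_i}^Y(u)\}$, with analogous $\tilde B_j,\tilde A_j$ for $Y_j$. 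Using the pointwise inequality $(a+b)^2\ge\tfrac12 b^2-a^2$, the nonnegativity of the weight profiles, and then replacing $\wght[Xi\cdot]$ by its limit $w_{X_i}(\cdot)$ via \eqref{wt}, one obtains
\begin{equation*}
 T^w_{nm}\ \ge\ \tfrac12 S_{nm}\ -\ N_{nm}\ -\ r_{nm},
\end{equation*}
where $S_{nm}=\tfrac{nm}{n+m}\bigl\{\tfrac1n\add[in]\int w_{X_i}(u)B_i(u)^2\,\df{u}+\tfrac1m\add[jm]\int w_{Y_j}(u)\tilde B_j(u)^2\,\df{u}\bigr\}$ is the signal term, $N_{nm}$ is the same expression with $(B_i,\tilde B_j)$ replaced by $(A_i,\tilde A_j)$, and $r_{nm}$ is the remainder from the weight substitution.

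\emph{Noise term.} Here Assumption~\ref{ass:entropy} does the work. By the argument proving Theorem~\ref{thm:fhat}, applied to the leave-one-out in-sample and the cross-sample distance profiles, the class $\{\mathbb{I}\{d(\o,\cdot)\le u\}:\o\in\O,\,u\in\real\}$ is Donsker, and the resulting rates are uniform over the moving pair $(P_1,P_2)$ because Assumption~\ref{ass:entropy} restricts only $(\O,d)$ while Assumption~\ref{ass:assumption_lipschitz} supplies density bounds $L_X,L_Y$ independent of $(n,m)$. Hence $\sup_i\sup_u|\hf[XXiu]-F_{X_i}^X(u)|=O_{\prob}(n^{-1/2})$ and $\sup_i\sup_u|\hf[YXiu]-F_{X_i}^Y(u)|=O_{\prob}(m^{-1/2})$ (and likewise for the $Y$-profiles), so $\int A_i(u)^2\,\df{u}=O_{\prob}(n^{-1}+m^{-1})$ uniformly in $i$; combined with $\sup_{x,u}|w_x(u)|\le C_w$ and Assumption~\ref{ass:samp_ratio} this gives $N_{nm}=\tfrac{nm}{n+m}O_{\prob}(n^{-1}+m^{-1})=O_{\prob}(1)$. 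The same uniform rates, together with \eqref{wt} and the bound $|B_i|\le1$ on the bounded range of the distances, can be used to control the weight-substitution remainder $r_{nm}$.

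\emph{Signal term.} The bracketed average $\bar S_{nm}:=\tfrac1n\add[in]\int w_{X_i}(u)B_i(u)^2\,\df{u}+\tfrac1m\add[jm]\int w_{Y_j}(u)\tilde B_j(u)^2\,\df{u}$ is an unbiased estimator of $D^w_{XY}=a_{nm}$ (cf.\ \eqref{dw}). Each of its summands is nonnegative, bounded by $C_w\,\diam(\O)$, and has mean of order $a_{nm}\to0$, so its second moment is $O(a_{nm})$ and $\Var(\bar S_{nm})=O(a_{nm}/n+a_{nm}/m)$. Since $\tfrac{nm}{n+m}a_{nm}\to\infty$ and $\tfrac{nm}{n+m}\asymp n\asymp m$ by Assumption~\ref{ass:samp_ratio}, we have $n\,a_{nm}\to\infty$, hence $\bar S_{nm}=a_{nm}(1+o_{\prob}(1))$ and $S_{nm}=\tfrac{nm}{n+m}a_{nm}(1+o_{\prob}(1))$, which tends to $+\infty$ in probability.

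\emph{Conclusion and main obstacle.} Combining the three bounds, $T^w_{nm}\ge\tfrac12\tfrac{nm}{n+m}a_{nm}(1+o_{\prob}(1))-O_{\prob}(1)$, which tends to $+\infty$ in probability. On the other hand $q_\alpha$ is a fixed finite constant: by Theorem~\ref{thm: null_dist} the null limit $L=2\add[j\infty]Z_j^2\,\E_V(\lambda_j^V)$ satisfies $\E L=2\,\E_V\sum_j\lambda_j^V=2\,\E_V\!\int C^V(u,u)\,\df{u}\le\tfrac{C_w}{2}\diam(\O)<\infty$ (using $\sup_{x,u}|w_x(u)|\le C_w$ and $\Var(\mathbb{I}\{\cdot\})\le\tfrac14$), so $\Gamma_L$ is a proper distribution function and $q_\alpha=\inf\{t:\Gamma_L(t)\ge1-\alpha\}<\infty$; therefore $\prob_{H_{nm}}(T^w_{nm}>q_\alpha)\to1$. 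The step I expect to be the main obstacle is establishing the two uniformity claims behind the noise bound — that the Donsker/uniform-rate statements for the in-sample and out-of-sample distance profiles, and the approximation \eqref{wt}, hold with constants not depending on the moving distributions $(P_1,P_2)$ along $H_{nm}$. This is precisely what the space-only entropy condition (Assumption~\ref{ass:entropy}) together with the $(n,m)$-free density bounds (Assumption~\ref{ass:assumption_lipschitz}) is designed to deliver, but tracking it carefully — in particular bounding $r_{nm}$ against the shrinking signal $a_{nm}$ when the weight profiles are not bounded away from zero — is the delicate part of the argument.
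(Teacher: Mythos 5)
Your plan is correct in outline but routes through a genuinely different decomposition than the paper. The paper expands the squared difference exactly, giving $T^w_{nm}=\frac{nm}{n+m}(U_1+U_2+U_3)$ where $U_1$ is pure noise, $U_3$ is pure signal (with empirical weights), and $U_2$ is the cross term; it then controls $U_2$ via a Cauchy--Schwarz bound $\Delta_{nm}\le\sqrt{2\mathcal{M}U_3}$, so the cross term is $O_\prob\bigl(\sqrt{\frac{nm}{n+m}a_{nm}}\bigr)$, which is dominated by $\frac{nm}{n+m}a_{nm}$. You instead use the pointwise inequality $(a+b)^2\ge\tfrac12 b^2-a^2$ to eliminate the cross term from the start, arriving at $T^w_{nm}\ge\tfrac12 S_{nm}-N_{nm}-r_{nm}$. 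Both approaches are valid, and yours is arguably more elementary. One place where your argument is actually \emph{sharper} than the paper's: you establish $\bar S_{nm}=a_{nm}(1+o_\prob(1))$ by a Chebyshev bound using the mean-proportional second-moment estimate $\Var(\bar S_{nm})=O(a_{nm}/n)$, which delivers the needed \emph{relative} convergence $|\bar S_{nm}-a_{nm}|=o_\prob(a_{nm})$. The paper only invokes the WLLN to get $|\tilde U_3-D^w_{XY}|=o_\prob(1)$ and then asserts $\prob(A^{(1)}_{nm})\to1$; since $a_{nm}\to0$, that step implicitly needs the same relative-convergence argument you supply.

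The one genuine gap, which you flag but do not close, is the $B^2$-part of the weight-substitution remainder $r_{nm}$. You can bound $\bigl|\frac{nm}{n+m}\frac{1}{n}\sum_i\int(\hat w_{X_i}-w_{X_i})B_i^2\,\df{u}\bigr|$ by $\frac{nm}{n+m}\sup_{x,u}|\hat w_x(u)-w_x(u)|\cdot\frac{1}{n}\sum_i\int B_i^2\,\df{u}$. Under Assumption~\ref{ass:assumption_weights} the supremum is only $o_\prob(1)$, while $\frac{nm}{n+m}\asymp n$, so this factor alone is not bounded. The remainder is $o_\prob\bigl(\frac{nm}{n+m}a_{nm}\bigr)$ only if $\frac{1}{n}\sum_i\int B_i^2\,\df{u}=O_\prob(a_{nm})$, which in general requires the population weight profiles $w_x(\cdot)$ to be bounded away from zero (otherwise $D^w_{XY}=a_{nm}$ small does not force the unweighted $B$-energy to be small). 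Note that the paper's own passage from $U_3$ (with $\hat w$) to $\tilde U_3$ (with $w$) faces the same issue when establishing $A^{(1)}_{nm}$: $|U_3-\tilde U_3|\le 2\mathcal{M}\sup|\hat w-w|$ is $o_\prob(1)$ but not automatically $o_\prob(a_{nm})$. For the unweighted statistic $\hat w\equiv w\equiv1$ in \eqref{eq:teststat_unweighted} this concern vanishes; for genuinely data-adaptive weights, both arguments would be tightened by either assuming a rate in \eqref{wt} of order $o_\prob(a_{nm})$, or assuming $\inf_{x,u}w_x(u)>0$ so that $\frac{1}{n}\sum_i\int B_i^2\,\df{u}$ inherits the $a_{nm}$-scale.
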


Theorem~\ref{thm:perm} below provides theoretical guarantees for the permutation version of the test based on the empirical cut-offs 
for the  randomization approximation of $\Gamma_L(\cdot)$ given by $\hat{\Gamma}_{nm}(\cdot)$ \eqref{eq:perm_cdf}. The  consistency of the estimated critical value $\hat{q}_\alpha$ \eqref{qest} 
under $H_0$ is given by \eqref{eq:perm_crit} and under alternatives $P_1 \neq P_2$ we consider  a mixture distribution $\bar{P}=cP_1+(1-c)P_2$ with $0 \le c \le 1$.  Assume $\bar{X}=\{\bar{X}_1, \dots, \bar{X}_{n}\}$ and $\bar{Y}=\{\bar{Y}_1, \dots, \bar{Y}_m\}$ are i.i.d. samples from $\bar{P}$ and $T^w_{nm}(\bar{X},\bar{Y})$ is the test statistic obtained using the samples $\bar{X}$ and $\bar{Y}$. We show in the proof of Theorem~\ref{thm:perm} in the Supplement that under Assumptions \ref{ass:entropy}, \ref{ass:assumption_weights}, \ref{ass:assumption_lipschitz} and \ref{ass:samp_ratio}, Theorem~\ref{thm: null_dist} can be utilized  to obtain the asymptotic distribution of $T^w_{nm}(\bar{X},\bar{Y})$ with  cumulative distribution $\bar{\Gamma}_L(\cdot)$ and $\bar{q}_\alpha = \inf\{ t \geq 0: \bar{\Gamma}_L(t) \geq 1 - \alpha \}$. Suppose that $\bar{\Gamma}_L(\cdot)$ is continuous and strictly increasing at $\bar{q}_\alpha$ and $n,m \rightarrow \infty$ such that $\frac{n}{n+m}-c=O\left((n+m)^{-1/2}\right)$ and $\frac{m}{n+m}-(1-c)=O\left((n+m)^{-1/2}\right)$. Then under Assumptions \ref{ass:entropy}, \ref{ass:assumption_weights}, \ref{ass:assumption_lipschitz} and \ref{ass:samp_ratio}, $\lvert \hat{q}_\alpha-\bar{q}_\alpha \rvert = o_{\prob}(1)$, i.e.,  $\hat{q}_\alpha$ converges to a deterministic limit also for the case where 
 $P_1 \neq P_2$. {This implies convergence of the power function $\tilde{\beta}^w_{nm} \rightarrow 1$ as $n,m \rightarrow \infty$, where 
\begin{equation} \label{tpower}
 \tilde{\beta}^w_{nm} = \prob_{H_{nm}} \left( T^w_{nm} > \hat{q}_\alpha \right)
\end{equation}
is the power function of the test} under the sequence of the contiguous alternatives $H_{nm}$ when  using the permutation-derived critical value $\hat{q}_\alpha$ instead of $q_\alpha$.

\begin{theorem}
\label{thm:perm}
 Under $H_0$ \eqref{eq:null} and Assumptions \ref{ass:entropy}, \ref{ass:assumption_weights}, \ref{ass:assumption_lipschitz} and \ref{ass:samp_ratio}, as $n,m \rightarrow \infty$ and $K \rightarrow \infty$ it holds that 
 $\left|\hat{\Gamma}_{m,n}(t)-\Gamma_L(t) \right| = o_{\prob}(1)$
for every $t$ which is a continuity point of $\Gamma_L(\cdot)$. Suppose that $\Gamma_L(\cdot)$ is continuous and strictly increasing at $q_\alpha$. Then under $H_0$ \eqref{eq:null} and Assumptions \ref{ass:entropy}, \ref{ass:assumption_weights}, \ref{ass:assumption_lipschitz} and \ref{ass:samp_ratio}, as $n,m \rightarrow \infty$ and $K \rightarrow \infty$,
\begin{equation}
\label{eq:perm_crit}
 \left|\hat{q}_\alpha-q_\alpha \right| = o_{\prob}(1). 
\end{equation}
Assume further that $\bar{\Gamma}_L(\cdot)$ is continuous and strictly increasing at $\bar{q}_\alpha$ and $\frac{n}{n+m}-c=O\left((n+m)^{-1/2}\right)$ and $\frac{m}{n+m}-(1-c)=O\left((n+m)^{-1/2}\right)$ as $n,m \rightarrow \infty$. Then under Assumptions \ref{ass:entropy}, \ref{ass:assumption_weights}, \ref{ass:assumption_lipschitz} and \ref{ass:samp_ratio} for the sequence of alternatives $H_{nm}$,  the power \eqref{tpower} of the permutation test satisfies  
$\tilde{\beta}^w_{nm} \rightarrow 1$ as $n,m \rightarrow \infty.$
\end{theorem}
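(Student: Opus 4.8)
The argument proceeds in three stages matching the three assertions: (i) pointwise consistency of the permutation distribution $\hat\Gamma_{nm}$ for $\Gamma_L$; (ii) consistency of the permutation critical value $\hat q_\alpha$ for $q_\alpha$; and (iii) divergence of $T^w_{nm}$ under $H_{nm}$ against a bounded permutation critical value. For (i), I would first dispose of the $K\to\infty$ limit: conditionally on the pooled data $V_1,\dots,V_{n+m}$, the replicates $T_{\Pi_1},\dots,T_{\Pi_K}$ are i.i.d., so by the (conditional) Glivenko--Cantelli lemma $\sup_t|\hat\Gamma_{nm}(t)-\Gamma^\ast_{nm}(t)|\to0$ a.s.\ as $K\to\infty$, where $\Gamma^\ast_{nm}(t)=\prob(T_\Pi\le t\mid V_1,\dots,V_{n+m})$ is the exact permutation distribution; it then suffices to show $\Gamma^\ast_{nm}(t)\to\Gamma_L(t)$ in probability. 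The core is a \emph{conditional} version of Theorem~\ref{thm: null_dist}: under $H_0$ \eqref{eq:null} the pooled sample is i.i.d.\ $P$, and applying a uniform random permutation and re-splitting into blocks of sizes $n$ and $m$ produces two groups that, conditionally on the data, behave asymptotically like independent samples from $P$. I would establish that the permuted, re-centred in- and out-of-sample distance profile processes, e.g.\ $\sqrt{n-1}(\hf[XXiu]-F^P_{X_i}(u))$ and $\sqrt m(\hf[YXiu]-F^P_{X_i}(u))$ indexed by base point and threshold $u$, converge conditionally in probability to the same Gaussian limit $\mathbb{G}_P$ appearing in Theorem~\ref{thm:fhat}; this is a permutation (exchangeable-weight) empirical-process central limit theorem, for which Assumption~\ref{ass:entropy}, via the $P$-Donsker property of $\mcF=\{y_{\o,t}\}$ already used for Theorem~\ref{thm:fhat}, is exactly the needed hypothesis.

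Since $T^w_{nm}$ depends on the data only through cumulative distribution functions (the quantile functions of the transport-rank construction do not enter the test statistic), the map from these processes to $T_\Pi$ is the quadratic functional $g\mapsto\int w_{x}(u)\,g(u)^2\df u$ averaged over base points, composed with the scaling $\tfrac{nm}{n+m}$. Using the uniform consistency $\sup_{x,u}|\hat w_x(u)-w_x(u)|=o_\prob(1)$ together with the boundedness and Lipschitz bounds on $w$ from Assumption~\ref{ass:assumption_weights}, the density conditions of Assumption~\ref{ass:assumption_lipschitz} (which keep the profiles continuous with bounded densities so the $L^2(\df u)$ integrals are well-behaved), and $n/(n+m)\to c$ from Assumption~\ref{ass:samp_ratio} for the scaling, the continuous mapping theorem yields $T_\Pi\mid\text{data}\wc L$ in probability. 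By the portmanteau theorem this gives $\Gamma^\ast_{nm}(t)\to\Gamma_L(t)$ in probability at every continuity point $t$ of $\Gamma_L$, proving (i). (An equivalent route is to verify $\E[\Gamma^\ast_{nm}(t)]\to\Gamma_L(t)$ and $\E[\Gamma^\ast_{nm}(t)^2]\to\Gamma_L(t)^2$ via asymptotic independence of the statistic under two independent permutations, but the conditional-CLT route is cleaner.)

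Assertion (ii) is then the standard quantile-continuity argument: $\hat\Gamma_{nm}$ and $\Gamma_L$ are nondecreasing, $\hat\Gamma_{nm}(t)\to\Gamma_L(t)$ in probability for each fixed $t$, and $\Gamma_L$ is continuous and strictly increasing at $q_\alpha$, so for any $\eps>0$ one has $\Gamma_L(q_\alpha-\eps)<1-\alpha<\Gamma_L(q_\alpha+\eps)$ and pointwise convergence at $q_\alpha\pm\eps$ sandwiches $\hat q_\alpha$, giving \eqref{eq:perm_crit}. For (iii), I would extend the argument of (i) to the mixture model: permuting the pooled sample makes each re-split block look asymptotically like an i.i.d.\ sample from $\bar P=cP_1+(1-c)P_2$, so the same conditional-CLT machinery with $\bar P$ in place of $P$ gives the asymptotic law $\bar\Gamma_L$ and hence, by the quantile argument of (ii) applied at $\bar q_\alpha$ (using continuity and strict monotonicity of $\bar\Gamma_L$ there and the rate conditions $\tfrac{n}{n+m}-c=O((n+m)^{-1/2})$), $\hat q_\alpha\to\bar q_\alpha$, a finite deterministic limit; along the contiguous sequence $H_{nm}$ \eqref{hnm}, $P_1,P_2$ converge to a common limit since $D^w_{XY}=a_{nm}\to0$, so $\bar P$ and $\bar q_\alpha$ are well defined in the limit. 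On the other hand, by the decomposition underlying Theorem~\ref{thm: power}, under $H_{nm}$ the statistic is, up to lower-order fluctuations, $\tfrac{nm}{n+m}D^w_{XY}=\tfrac{nm}{n+m}a_{nm}$, which diverges in probability by \eqref{hnm}; combining $T^w_{nm}\to\infty$ with $\hat q_\alpha\to\bar q_\alpha<\infty$ gives $\tilde\beta^w_{nm}=\prob_{H_{nm}}(T^w_{nm}>\hat q_\alpha)\to1$.

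\textbf{Main obstacle.} The crux is the conditional permutation empirical-process CLT for the distance profiles in stage (i): one must control, uniformly over the infinite index set of base points and thresholds, the permutation empirical process of the class $\mcF$, show its conditional-in-probability convergence to $\mathbb{G}_P$, and push it through the averaged quadratic functional while the weights $\hat w$ are themselves estimated. The entropy bound of Assumption~\ref{ass:entropy} tames the uniformity exactly as in the proof of Theorem~\ref{thm:fhat}, while Assumptions~\ref{ass:assumption_weights} and~\ref{ass:assumption_lipschitz} are needed to linearize the functional and to replace $\hat w$ and the empirical profiles by their limits with negligible error. The remaining ingredients---the $K\to\infty$ reduction, the quantile-convergence lemma, and the divergence of $T^w_{nm}$ under $H_{nm}$---are routine given the earlier theorems.
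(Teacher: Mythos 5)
Your overall three-stage structure mirrors the paper's, and stages (i) and (ii) are essentially correct. For (i), you propose a conditional permutation empirical-process CLT showing $T_\Pi\mid\text{data}\wc L$ in probability, then pass to the marginal; the paper instead verifies the joint convergence $(T^w_{nm}(X_\pi,Y_\pi),T^w_{nm}(X_{\pi'},Y_{\pi'}))\xrightarrow{D}(T,T')$ with $T,T'$ independent, using Cram\'er--Wold and Lindeberg's CLT for linear combinations of $W^{n,m}_{x,\pi}(u)$, and then invokes Theorem~5.1 of \cite{chun:13} together with Theorem~11.2.18 of \cite{lehm:86} for the $K\to\infty$ step. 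These are Hoeffding-equivalent routes, and you explicitly acknowledge the two-independent-permutations alternative, so this difference is a matter of taste. Your stage (ii) sandwich argument is exactly what the paper does.

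The genuine gap is in stage (iii). You claim that ``permuting the pooled sample makes each re-split block look asymptotically like an i.i.d.\ sample from $\bar P$'' and that one can ``extend the argument of (i) to the mixture model.'' This understates a real obstacle: under the alternative, the pooled observations $V_1,\dots,V_{n+m}$ are \emph{not} i.i.d.\ from $\bar P=cP_1+(1-c)P_2$; they form a deterministic block of $n$ draws from $P_1$ followed by $m$ draws from $P_2$, and a permutation of them is sampling without replacement from a finite, non-exchangeable-in-distribution array. The argument of (i), which requires the pooled sample to be i.i.d.\ from a single law so that the permutation CLT machinery applies, does not carry over directly. The paper resolves this with the coupling construction of Section~5.3 of \cite{chun:13}: one builds $\bar V=(\bar V_1,\dots,\bar V_{n+m})$ i.i.d.\ from $\bar P$ such that after some permutation $\pi_0$, $\bar V_{\pi_0(i)}=V_i$ for all but $D$ indices with $\E(D/(n+m))\le(n+m)^{-1/2}$, and then shows $|T^w_{nm}(\bar X_{\pi\pi_0},\bar Y_{\pi\pi_0})-T^w_{nm}(X_\pi,Y_\pi)|=o_\prob(1)$, which occupies the bulk of Step~II of the proof. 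Without something of this form, the claim $\hat q_\alpha\to\bar q_\alpha$ under $P_1\ne P_2$ is unjustified. Your side remark that $P_1,P_2$ ``converge to a common limit'' along $H_{nm}$ so $\bar q_\alpha$ is well defined is also not how the argument goes and is not needed: the contiguity of $H_{nm}$ is used solely to make $T^w_{nm}$ diverge (via the $U_3$ term from the proof of Theorem~\ref{thm: power}), while the critical value is controlled at the fixed mixture $\bar P$ through the coupling. Note also that the rate conditions $\tfrac{n}{n+m}-c=O((n+m)^{-1/2})$ (and symmetrically for $m$) in the hypothesis are exactly what make the coupling bound $\E(D/(n+m))\le(n+m)^{-1/2}$ hold, which is why the theorem imposes them; your proposal invokes them only loosely ``for the quantile argument.''
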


\subsection{Empirical Experiments}\label{sec:test_simu}
To  illustrate the finite-sample performance of the proposed test 
we performed simulation studies for various scenarios. 
Specifically, random objects included samples of random vectors with the Euclidean metric, samples of 2-dimensional distributions with  the $L^2$ metric between corresponding cumulative distribution functions (cdfs), and samples of random networks from the preferential attachment model \citep{bara:99} with the Frobenius metric between the adjacency matrices. 
In each scenario, we generated two samples of random objects of equal size $n = m = 100$ unless otherwise noted and performed $500$ Monte Carlo runs to construct empirical power functions 
as the distance of the distributions of the first and second sample varies.  The empirical power was assessed as the proportion of rejections of the test for the significance level $0.05$ among the $500$ Monte Carlo runs. 
We used the permutation version of the test and assessed $p$-values from $K=1000$ permutations through the proportion of permutations yielding test statistics greater than the test statistics computed from the original sample. This proportion is  $(K+1)^{-1} \sum_{j=0}^{K}\ind[T_{\Pi_j} \geq T^w_{nm}]$, with $T^w_{nm}$ in \eqref{eq:teststat_weighted} and  $T_{\Pi_j}$ for $j=1,\dots,K$ defined in the paragraph just before equation \eqref{eq:perm_cdf}, where the case  $j=0$ corresponds to the original sample without permutation. 

We compared the performance of the proposed test with the energy test \citep{szek:04} and the graph based test \citep{chen:17:3}. 
For the energy test, we obtained $p$-values based on $1000$ permutations. 
For the graph based test, similarity graphs of all the observations pooling the two samples together were constructed as $5$-MSTs, as suggested by \cite{chen:17:3}. Here, MST stands for minimum spanning tree, and a $k$-MST is the union of the $1\text{st},\dots,k\text{th}$ MST(s), where a $k$th MST is a spanning tree connecting all observations while minimizing the sum of distances between connected observations subject to the constraint that all the edges are not included in the $1\text{st},\dots,(k-1)\text{th}$ MST(s). 
In the scenarios with samples of multivariate data, we also included comparisons of the proposed test with the two-sample Hotelling's $T^2$ test. 

In the following figures illustrating power comparisons,  ``energy'' stands for the energy test  \citep{szek:04}; ``graph'' for the graph based test \citep{chen:17:3}; ``Hotelling''  for the two-sample Hotelling's $T^2$ test; and  ``DP'' for the proposed distance profile-based test \eqref{eq:teststat_unweighted}. 
For samples of multivariate data endowed with the Euclidean metric,  we generated the data in four scenarios. 
In the first two scenarios, we generated two samples of $\rdim$-dimensional random vectors $\{X_{i}\}_{i=1}^{n}$ and $\{Y_{i}\}_{i=1}^{m}$ from a Gaussian distribution $N(\mu,\Sigma)$ for dimensions $\rdim=30$, $90$, and $180$, respectively. 
In the first scenario, the population distributions of random vectors in the two samples differ only in the mean $\mu$ while the population covariance matrix $\Sigma = U\Lambda U\tps$ is the same for both samples, where $\Lambda$ is a diagonal matrix with $k$th diagonal entry $\cos(k\pi/\rdim) + 1.5$ for $k=1,\dots,\rdim$ and $U$ is an orthogonal matrix with first column $\rdim^{-1/2}(1,1,\dots,1)\tps$. 
Specifically, $\mu = \mathbf{0}_{\rdim}= (0,0,\dots,0)\tps$ for the first samples $\{X_{i}\}_{i=1}^{n}$, and $\mu =\pone\mathbf{1}_{\rdim} = \pone(1,1,\dots,1)\tps$ for the second samples $\{Y_{i}\}_{i=1}^{m}$, where $\pone$ ranges from $0$ to $1$. The results are shown in Figure~\ref{fig:mvnorm-mean-shift}. In addition,  we considered another location shift scenario for lower dimensional data; see Section~\ref{sec:simu_twosam_supp} in the Supplement for details.

In the second scenario, the population distributions of the two samples differ only in scale, while sharing the same mean $\mu = \mathbf{0}_{\rdim}$. For the first samples $\{X_{i}\}_{i=1}^{n}$, $\Sigma = 0.8 I_{\rdim}$ and for the second samples $\{Y_{i}\}_{i=1}^{m}$, $\Sigma = (0.8-\ptwo) I_{\rdim}$ with $\ptwo$ ranging from $0$ to $0.4$. The results are shown in Figure~\ref{fig:mvnorm-scale-change}. 
In the first scenario with location shifts, the proposed test outperforms the graph based test when the dimension is relatively low ($\rdim=30$); the graph based test catches up when the dimension is high.  Meanwhile, the energy test is always the winner in this scenario. The performance of Hotelling's $T^2$ test is the second best when the dimension $p$ is less than the sample size $n=m=100$ but drops to the bottom when $\rdim=180$ and $\pone>0.1$. 
In the second scenario with scale changes, we find that the proposed test always outperforms all the other tests. 

\begin{figure}[!hbt]
 \centering
 \includegraphics[width=\textwidth]{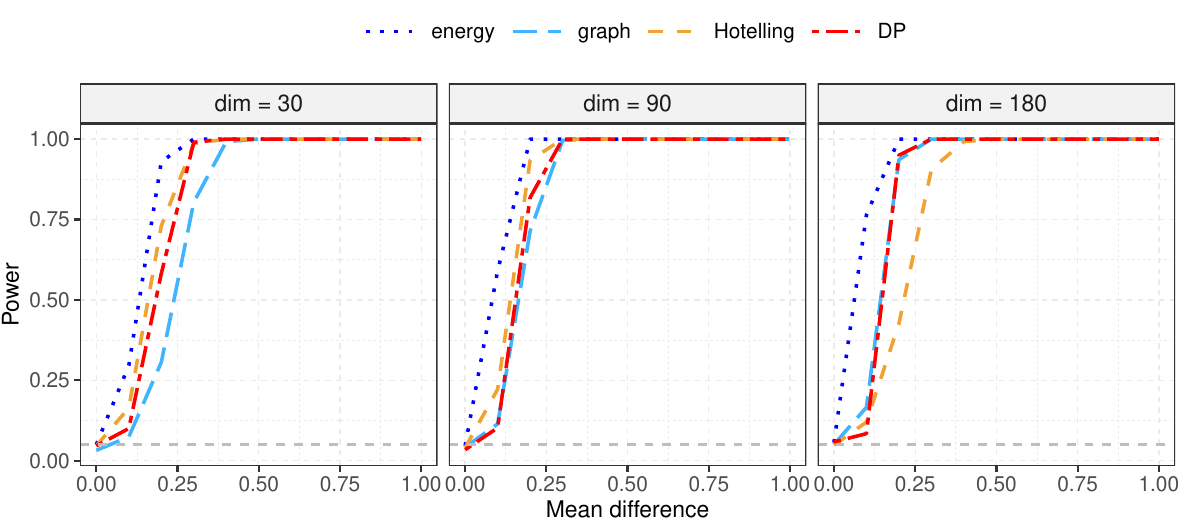}
 \caption{Power comparison for increasing values of mean difference $\pone$ for two samples of $\rdim$-dimensional random vectors sampled from $N(\mu,\Sigma)$. Here, $\mu =\mathbf{0}_{\rdim}= (0,0,\dots,0)\tps$ for the first samples and $\mu =\pone\mathbf{1}_{\rdim} = \pone(1,1,\dots,1)\tps$ for the second samples; $\Sigma = U\Lambda U\tps$ for both samples, where $\Lambda$ is a diagonal matrix with $k$th diagonal entry being $\cos(k\pi/\rdim) + 1.5$ for $k=1,\dots,\rdim$ and $U$ is an orthogonal matrix with the first column being $\rdim^{-1/2}(1,1,\dots,1)\tps$. The dashed grey line denotes the significance level $0.05$.}
 \label{fig:mvnorm-mean-shift}
\end{figure}

\begin{figure}[!hbt]
 \centering
 \includegraphics[width=\textwidth]{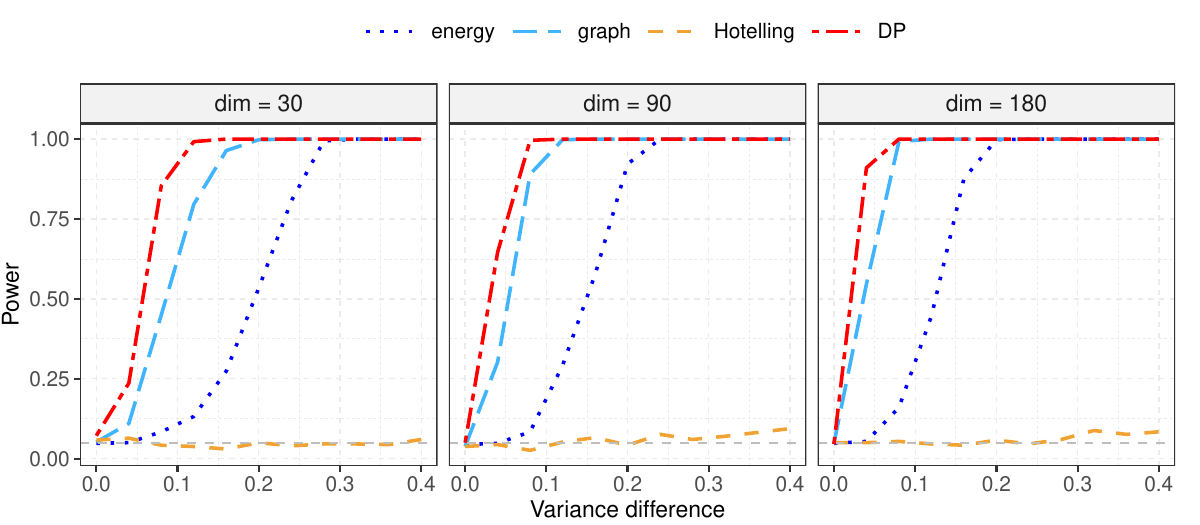}
 \caption{Power comparison for increasing values of variance difference $\ptwo$ for two samples of $\rdim$-dimensional random vectors sampled from $N(\mu,\Sigma)$. Here, $\mu = \mathbf{0}_{\rdim}$ for both samples; $\Sigma = 0.8 I_{\rdim}$ for the first samples and $\Sigma = (0.8 - \ptwo) I_{\rdim}$ for the second samples. The dashed grey line denotes the significance level $0.05$.}
 \label{fig:mvnorm-scale-change}
\end{figure}

In the third scenario, the first samples of random vectors $\{X_{i}\}_{i=1}^{n}$ are generated from the Gaussian distribution $N(\mathbf{0}_{\rdim}, I_{\rdim})$, and the second samples of random vectors $\{Y_{i}\}_{i=1}^{m}$ are generated from a mixture of two Gaussian distributions with the overall population mean equaling that of the first samples. 
Specifically, the second samples consist of independent copies of $AZ_1 + (1-A)Z_2$, where $A\sim\mathrm{Bernoulli}(0.5)$, $Z_1\sim N(-\mu, I_{\rdim})$, $Z_2\sim N(\mu, I_{\rdim})$, $\mu = (\pthr\mathbf{1}_{0.1\rdim}\tps, \mathbf{0}_{0.9\rdim}\tps)\tps$, and $A$, $Z_1$, and $Z_2$ are independent. Here, $\pthr$ ranges from $0$ to $1$. 
The results are shown in Figure~\ref{fig:mvnorm-vs-mixture}; again,  the proposed test outperforms all the other tests.

\begin{figure}[!hbt]
 \centering
 \includegraphics[width=\textwidth]{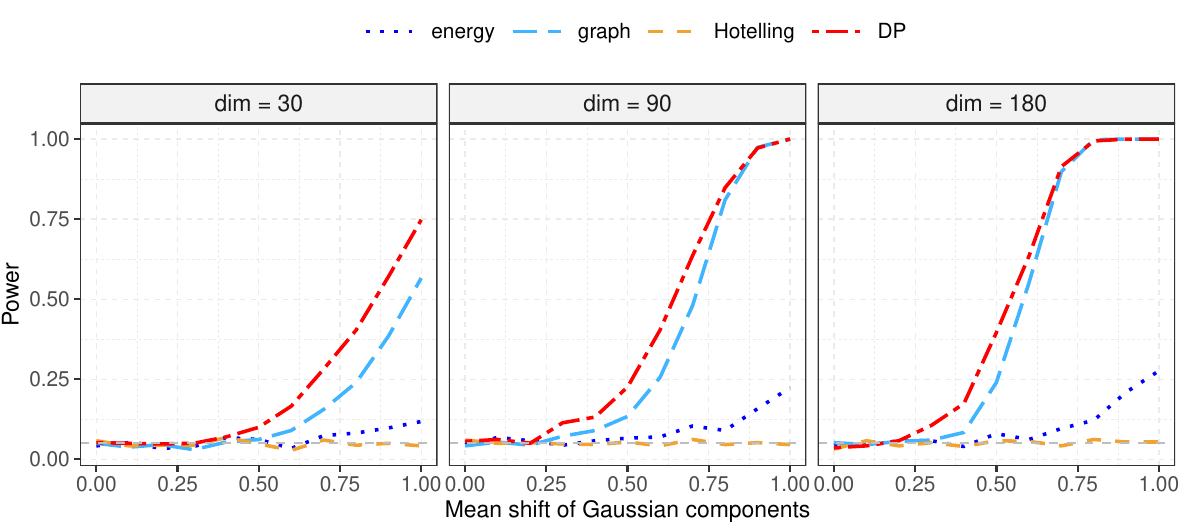}
 \caption{Power comparison for increasing values of mean shift $\pthr$ of Gaussian components for two samples of $\rdim$-dimensional random vectors. Here, the first samples are sampled from $N(\mathbf{0}_{\rdim},I_{\rdim})$; the second samples consist of $AZ_1 + (1-A)Z_2$, where $A\sim\mathrm{Bernoulli}(0.5)$, $Z_1\sim N(-\mu, I_{\rdim})$, $Z_2\sim N(\mu, I_{\rdim})$, $\mu = (\pthr\mathbf{1}_{0.1\rdim}\tps, \mathbf{0}_{0.9\rdim}\tps)\tps$, and $A$, $Z_1$, and $Z_2$ are independent. The dashed grey line denotes the significance level $0.05$.}
 \label{fig:mvnorm-vs-mixture}
\end{figure}

The fourth scenario compares Gaussian distributions with heavy-tailed distributions, where the first samples $\{X_{i}\}_{i=1}^{n}$ are generated from $N(\mathbf{0}_{\rdim},I_{\rdim})$, and the second samples $\{Y_{i}\}_{i=1}^{m}$ consist of random vectors with  components that are independent and identically distributed following a $t$ distribution with degrees of freedom ranging from $2$ to $22$. 
The results for $\rdim\in\{5,15,60\}$ are in Figure~\ref{fig:mvnorm-vs-mvt};  the proposed test outperforms all the other tests.

\begin{figure}[!hbt]
 \centering
 \includegraphics[width=\textwidth]{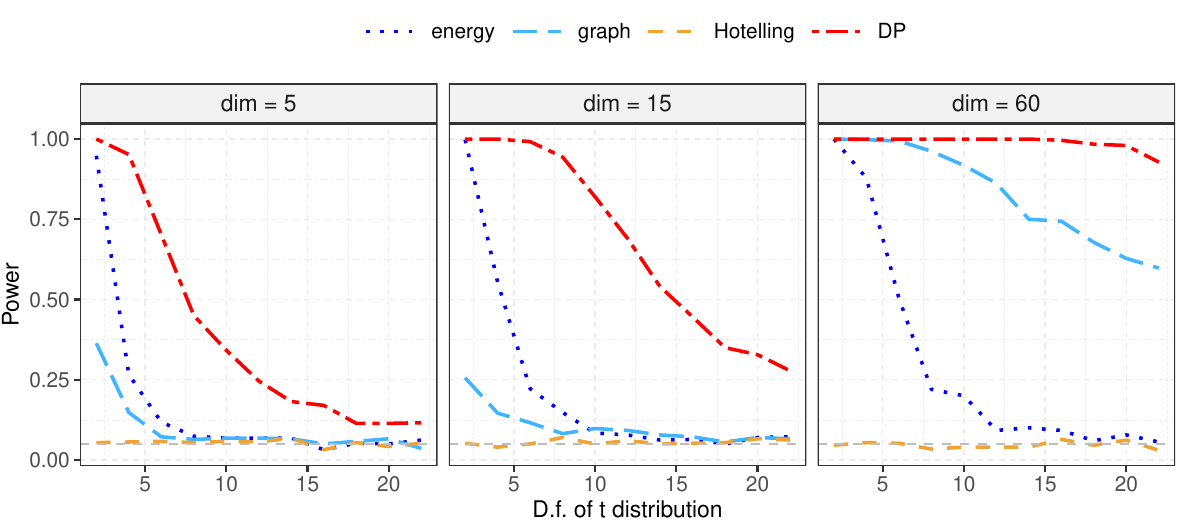}
 \caption{Power comparison for increasing values of $\pfou$ for two samples of $\rdim$-dimensional random vectors. Here, the first samples are sampled from $N(\mathbf{0}_{\rdim},I_{\rdim})$; the second samples consist of random vectors with independent components, where each component follows a $t$ distribution with $\pfou$ degrees of freedom (d.f.). The dashed grey line denotes the significance level $0.05$.}
 \label{fig:mvnorm-vs-mvt} \vspace{-.8cm} 
\end{figure}

Next, we considered  bivariate probability distributions as random objects, where  we use the $L^2$ distance between corresponding cdfs as the metric between two probability distributions. 
Each observation $X_{i}$ or $Y_{i}$ is a random $2$-dimensional Gaussian distribution $N(Z,0.25 I_{2})$, where $Z$ is a $2$-dimensional random vector, with two scenarios: 
In the first scenario, $Z\sim N(\mathbf{0}_{2}, 0.25 I_{2})$ for the first samples and $Z\sim N((\pfiv,0)\tps, 0.25 I_{2})$ for the second samples. 
In the second scenario, $Z\sim N(\mathbf{0}_{2}, 0.4^2 I_{2})$ for the first samples and $Z\sim N(\mathbf{0}_{2}, \diag((0.4+\psix)^2 I_{2}))$ for the second samples. 
The results  are presented in Figures~\ref{fig:2Ddistn-mean-shift-in-mean} and \ref{fig:2Ddistn-scale-change-in-mean}, respectively. 
The first scenario showcases location shifts of $Z$; the proposed test outperforms the graph based test but is outperformed by the energy test. 
The second scenario showcases scale changes of $Z$, where  the proposed test outperforms all the other tests. 

\begin{figure}[!hbt]
 \centering
 \includegraphics[width=.5\textwidth]{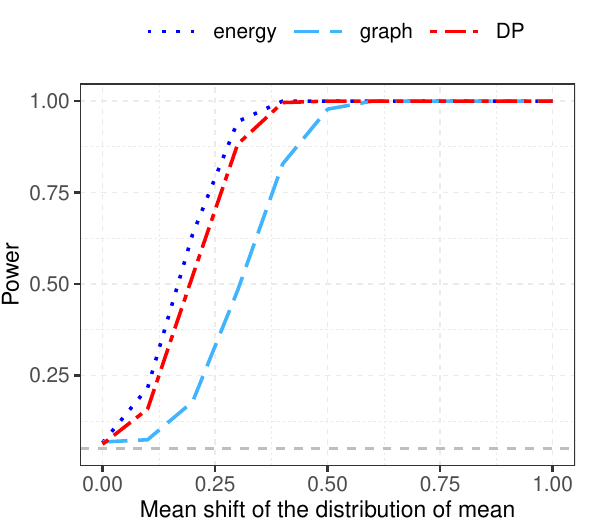}
 \caption{Power comparison for increasing values of mean shift $\pfiv$ of the distribution of the mean for two samples of random bivariate Gaussian distributions $N(Z,0.25 I_{2})$, where $Z\sim N(\mathbf{0}_{2}, 0.25 I_{2})$ for the first samples and $Z\sim N((\pfiv,0)\tps, 0.25 I_{2})$ for the second samples. The dashed grey line denotes the significance level $0.05$.}
 \label{fig:2Ddistn-mean-shift-in-mean}
\end{figure}

\begin{figure}[!hbt]
 \centering
 \includegraphics[width=.5\textwidth]{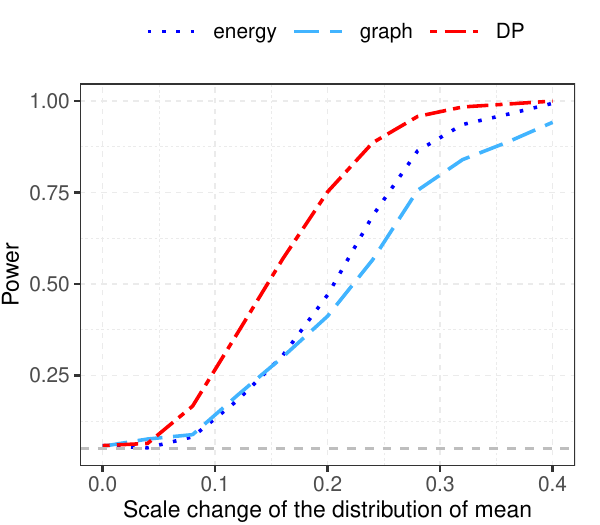}
 \caption{Power comparison for increasing values of scale change $\psix$ of the distribution of the mean for two samples of random bivariate Gaussian distributions $N(Z,0.25 I_{2})$, where $Z\sim N(\mathbf{0}_{2}, 0.4^2 I_{2})$ for the first samples and $Z\sim N(\mathbf{0}_{2}, \diag((0.4+\psix)^2 I_{2})$ for the second samples. The dashed grey line denotes the significance level $0.05$.}
 \label{fig:2Ddistn-scale-change-in-mean}
\end{figure}

We also studied the power of the proposed test for random networks endowed with the  Frobenius metric between adjacency matrices as random objects. Each datum $X_{i}$ or $Y_{i}$  is a random network with $200$ nodes generated from the preferential attachment model \citep{bara:99} with the attachment function proportional to $k^{\psev}$, where $\psev=0$ for the first samples $\{X_{i}\}_{i=1}^{n}$, 
and $\psev$ increases from $0$ to $0.5$ for the second samples $\{Y_{i}\}_{i=1}^{m}$. 
As shown in Figure~\ref{fig:network}, the proposed test 
outperforms both the energy test and the graph based test. 
In addition to the case with $n=m=100$, we performed simulations with larger samples of sizes $n=m=200$. The corresponding results are shown in Figure~\ref{fig:network200} in Section~\ref{sec:simu_twosam_supp} in the Supplement and they more or less match those for $n=m=100$. 

\section{Extensions and Data Illustrations}\label{sec:data}

\subsection{Profile Metric and Object Data Visualization}\label{sec:one_sample_app}

Distance profiles induce a new similarity measure in $\O$  that we refer to as  {\it profile metric} $d_P$. It complements the original metric $d$ and depends on $d$, the underlying probability measure $P$ and also on the distributional metric in the space of distance profiles, for which we select the Wasserstein metric.  
The profile metric quantifies the distance of the profile densities of elements of $\Om$,  
\be \label{tm} d_{P}(\om_1,\om_2)=d_W(F_{\om_1},F_{\om_2}), \ee
where $d_W$ is the Wasserstein metric \eqref{eq:dwass} and $F_{\om_1}$, $F_{\om_2}$ are the distance profiles of $\om_1,\om_2$, as defined in \eqref{dp}. It 
is not a genuine metric on $\Om$ but rather a measure of dissimilarity of the distance profiles of elements of $\Om.$ 

The profile metric $d_{P}$ generally may differ substantially from the original metric $d$. For example two outlying elements of $\Om$  may be far away from each other in terms of the original metric $d$ but if they have similar centrality and distance profiles  they will have small profile dissimilarity $d_{P}$ which could be 0 if their distance profiles coincide.
It turns out that the profile metric is very useful for data analysis, as we will demonstrate in the following. Its implementation  depends on distance profiles which must be estimated from the available data, and thus the profile metric itself  is only available in the form of an estimate.

To visualize random objects, low-dimensional projections of similarities as afforded by MDS are a prime tool and any  MDS version \citep{mard:78} can be based on either the original distance $d$, in the following  referred to as {\it object MDS} or alternatively on the profile metric,  in the following referred to as {\it profile MDS}. 
In profile MDS, we use the estimated distance profiles $\hfi$ and the Wasserstein metric $\dwass$ \eqref{eq:dwass}, while we use the distance $d$ in $\Om$ for object MDS. 
In the following, MDS is implemented with \texttt{cmdscale()} in the R built-in package \texttt{stats} \citep{R}. 

\begin{figure}[!hbt]
 \centering
 \includegraphics[width=.5\textwidth]{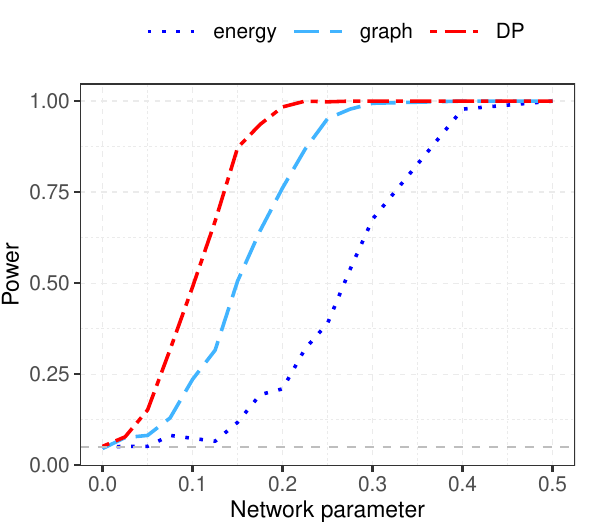}
 \caption{Power comparison for increasing values of $\psev$ for two samples of random networks with 200 nodes  from the preferential attachment model. The attachment function is   proportional to $k^{\psev}$ with  $\psev=0$ for the first samples,
 and $\psev$ increasing from $0$ to $0.5$ for the second samples. The dashed grey line denotes the significance level $0.05$.}
 \label{fig:network}
\end{figure}

To enhance the graphical illustration of the proposed transport ranks in \eqref{eq:rank}, for implementations, data applications and simulations we found that partitioning  the observed random objects into a not too large number of $k$  groups according to their transport ranks is advantageous for visualization and communicating results.  Specifically, the range of the transport ranks of observations within a sample $\{\obj_{\subidx}\}_{i=1}^{n}$ is partitioned into $k$ bins, $\clust_{k}=[0,\qt_{1/k}],\, \clust_{k-1}=(\qt_{1/k},\qt_{2/k}],\dots, \clust_{1}=(\qt_{(k-1)/k},1]$, where $\qt_{\alpha}$ is the $\alpha$-quantile of $\{\hrank_{\obj_{\subidx}}\}_{i=1}^{n}$ for $\alpha\in(0,1)$; then the $j$-th group consists of observations with transport ranks falling in $\clust_{j}$ for $j=1,\dots,k$. Arranging the bins in descending order of transport ranks, 
these groups are ordered from the innermost to the outermost, providing a center-outward description of the data; we found that the choice $k=10$ worked well, as illustrated in Figures~\ref{fig:2dGauss}--\ref{fig:energy2000_ternary_rankAndATD} below. The function \texttt{CreateDensity()} in the R package \texttt{frechet} \citep{frechet} was used to obtain Wasserstein barycenters of distance profiles for each group. 

The code for obtaining distance profiles, transport ranks, object MDS plots, and profile MDS plots is available on GitHub \citep{ODP}. 

\subsection{Illustrations With Simulated Data}\label{sec:rank_simu}
We start with a simple special case of a Euclidean vector space, where 
we sampled $n=500$ observations $\{\obj_{\subidx}\}_{\subidx=1}^n$ independently from a $p$-dimensional Gaussian distribution $N(\bm\mu,\bm\Sigma)$ for $p=2$ and $p=50$, with $\bm\mu=\bm{0}$ and $\bm\Sigma = \diag(p,p-1,\dots,1)$. The distance profiles $\hfi$ \eqref{eq:FhatO} and transport ranks $\hrank_{\obj_{\subidx}}$ \eqref{eq:hrank} were computed for each observation, adopting the Euclidean metric in $\real^p$.
Irrespective of the type of random objects $X_i$, the distance profiles $\hfi$ are situated in the space $\wsp$ of one-dimensional distributions with finite second moments  with the Wasserstein metric  \eqref{eq:dwass}.

For $p=2$, the transport ranks \eqref{eq:hrank} based on distance profiles capture the center-outward ordering of the $2$-dimensional Gaussian data 
and the Wasserstein barycenters of the distance profiles within each group shift to the right from group 1 to group 10, where the grouping is as described in Section 7.1, reflecting increased distances from the bulk of data (Figure~\ref{fig:2dGauss}). 
Figure~\ref{fig:50dGauss} demonstrates profile MDS  for a simulated sample of $n=500$ observations from a $50$-dimensional Gaussian distribution $N(\bm\mu,\bm\Sigma)$ with $\bm\mu=\bm{0}$ and $\bm\Sigma = \diag(50,49,\dots,1)$ and shows that profile MDS provides a simple representation by sorting these high-dimensional Euclidean data along dimension 1. 

Additional simulation results can be found in the Supplement for random objects corresponding to   $2$-dimensional random vectors generated from multi-modal distributions in 
Section~\ref{sec:simu_mv};
and for  distributional data in Section~\ref{sec:simu_distn}. 

\begin{figure}[!hbt]
 \centering
 \begin{subfigure}[b]{0.32\linewidth}
 \centering
 \includegraphics[width=\linewidth]{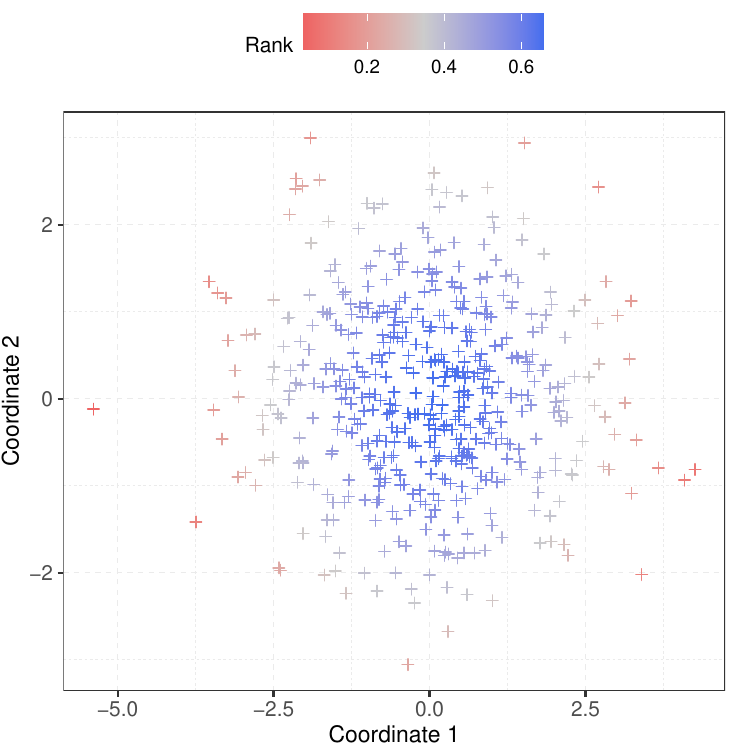}
 \end{subfigure}
  \begin{subfigure}[b]{0.32\linewidth}
 \centering
 \includegraphics[width=\linewidth]{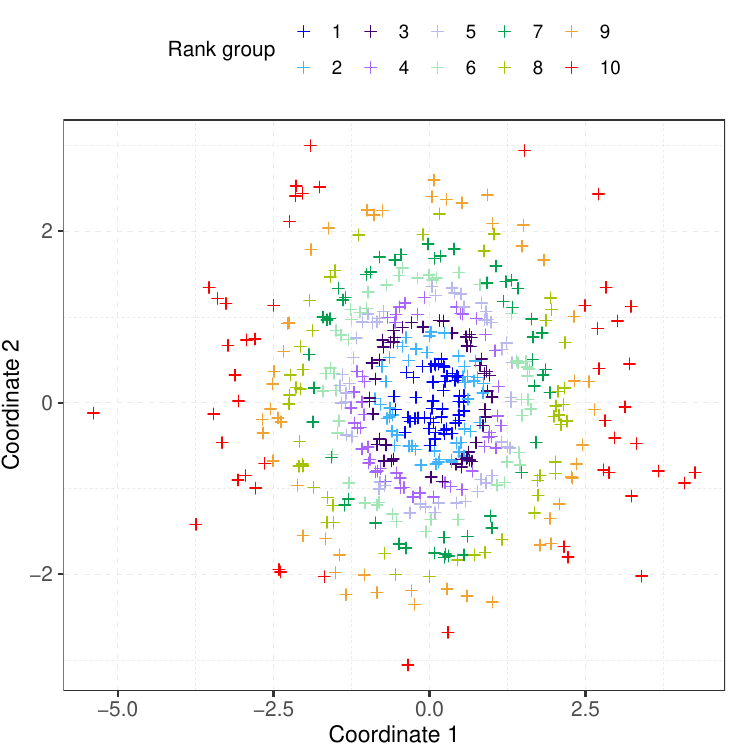}
 \end{subfigure}
 \begin{subfigure}[b]{0.32\linewidth}
 \centering
 \includegraphics[width=\linewidth]{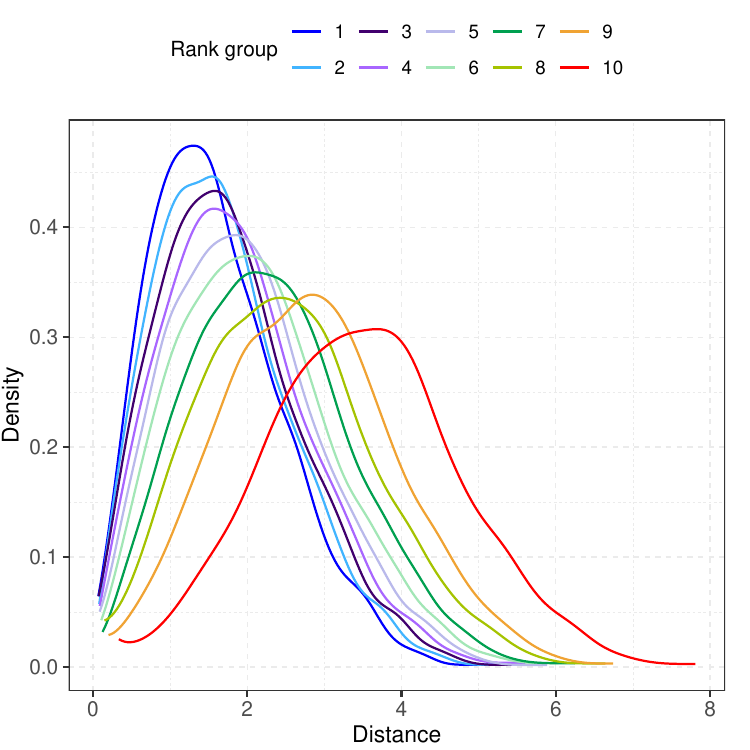}
 \end{subfigure}
 \caption{Scatterplots of a sample of $n=500$ observations generated from a $2$-dimensional Gaussian distribution $N(\bm\mu,\bm\Sigma)$ with $\bm\mu=\bm{0}$ and $\bm\Sigma = \diag(2,1)$, where the points are colored according to their transport ranks \eqref{eq:hrank} (left) and grouped into 10 groups according to the quantiles of transport ranks (middle); %
 Wasserstein barycenters of the distance profiles within each group represented by density functions (right).} \label{fig:2dGauss}
\end{figure}

\begin{figure}[!hbt]
 \centering
 \begin{subfigure}[b]{0.32\linewidth}
 \centering
 \includegraphics[width=\linewidth]{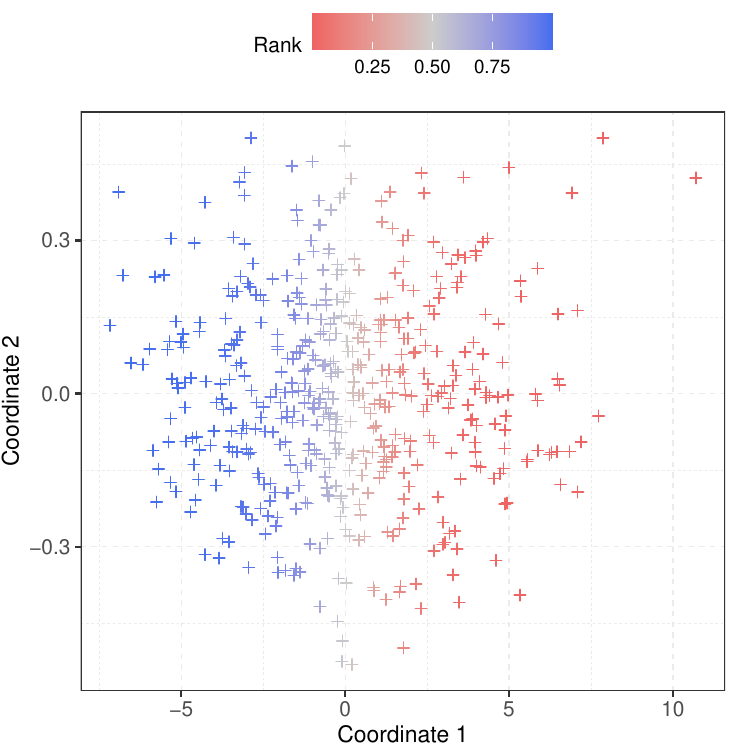}
 \end{subfigure}
 \begin{subfigure}[b]{0.32\linewidth}
 \centering
 \includegraphics[width=\linewidth]{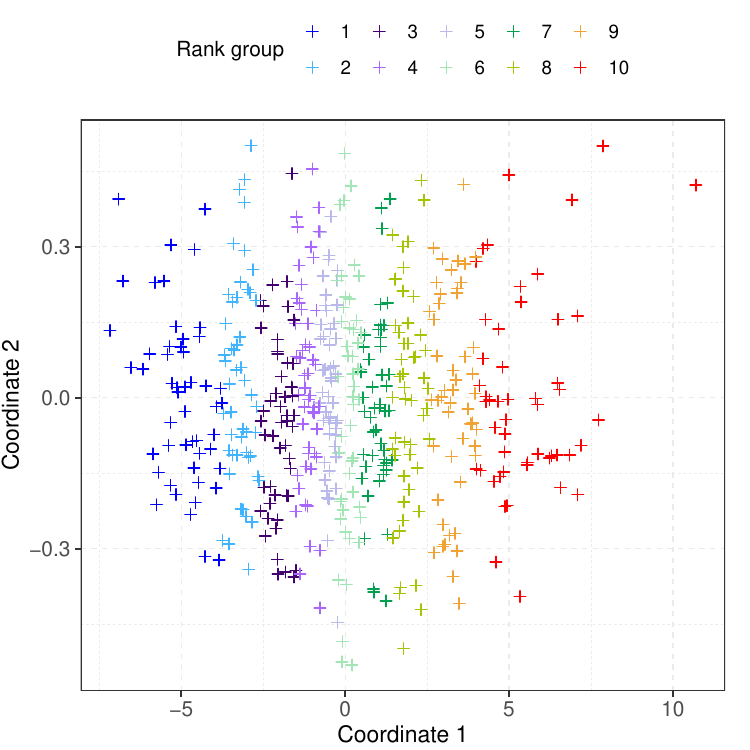}
 \end{subfigure}
 \begin{subfigure}[b]{0.32\linewidth}
 \centering
 \includegraphics[width=\linewidth]{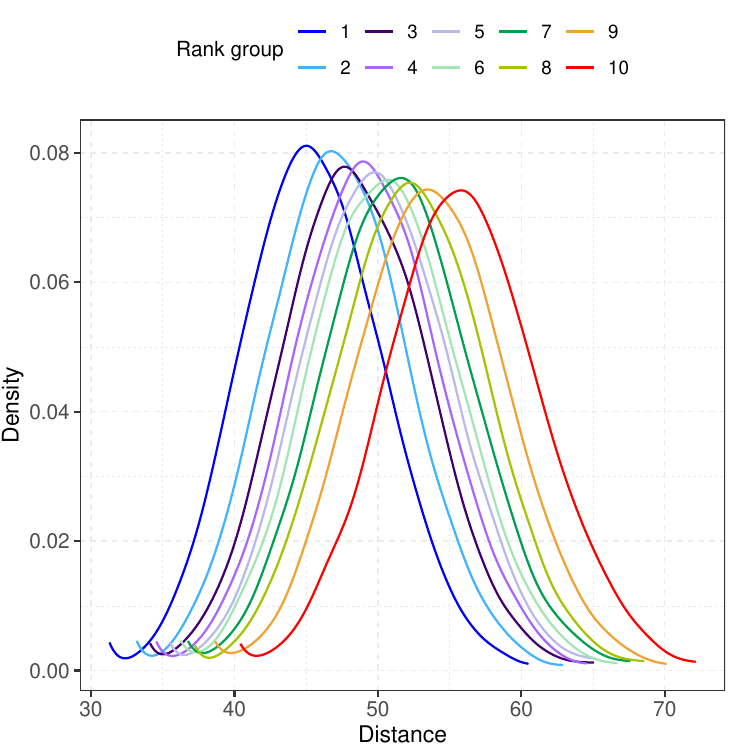}
 \end{subfigure}
 \caption{Two-dimensional (profile) MDS with respect to the Wasserstein metric $\dwass$ in \eqref{eq:dwass} of the distance profiles $\hfi$ \eqref{eq:FhatO} with $\o=\obj_{\subidx}$ of a sample of $n=500$ observations generated from a $50$-dimensional Gaussian distribution $N(\bm\mu,\bm\Sigma)$ with $\bm\mu=\bm{0}$ and $\bm\Sigma = \diag(50,49,\dots,1)$, where the points are colored according to their transport ranks \eqref{eq:hrank} (left) and grouped into 10 groups according to the quantiles of transport ranks (middle); %
 Wasserstein barycenters of the distance profiles within each group (right). }
 \label{fig:50dGauss}
\end{figure}

\subsection{Illustration With Human Mortality Data}\label{sec:mort_rank}
Understanding human longevity has been of long-standing interest and age-at-death distributions are relevant random objects for this endeavor.  We consider age-at-death distributions for different countries, obtained from the Human Mortality Database (\url{http://www.mortality.org}) for the year 2000 for $n=34$ countries,  
separately for males and females. 
The age-at-death distributions are shown in the form of density functions in Figure~\ref{fig:mort2000_data} in Section~\ref{sec:mort_supp} in the Supplement.
To analyze the data geometry of this sample of random distributions $\{\obj_{\subidx}\}_{\subidx=1}^{34}$, we assume that they are situated in a space $(\Om,d)$ of distributions equipped  with the Wasserstein metric $d=d_W$    \eqref{eq:dwass} and then obtained distance profiles $\hfi$ \eqref{eq:FhatO} for  $\o=\obj_{\subidx}$ for each country.  

\begin{figure}[hbt!]
 \centering
 \begin{subfigure}[t]{0.45\linewidth}
 \centering
 \includegraphics[width=\linewidth]{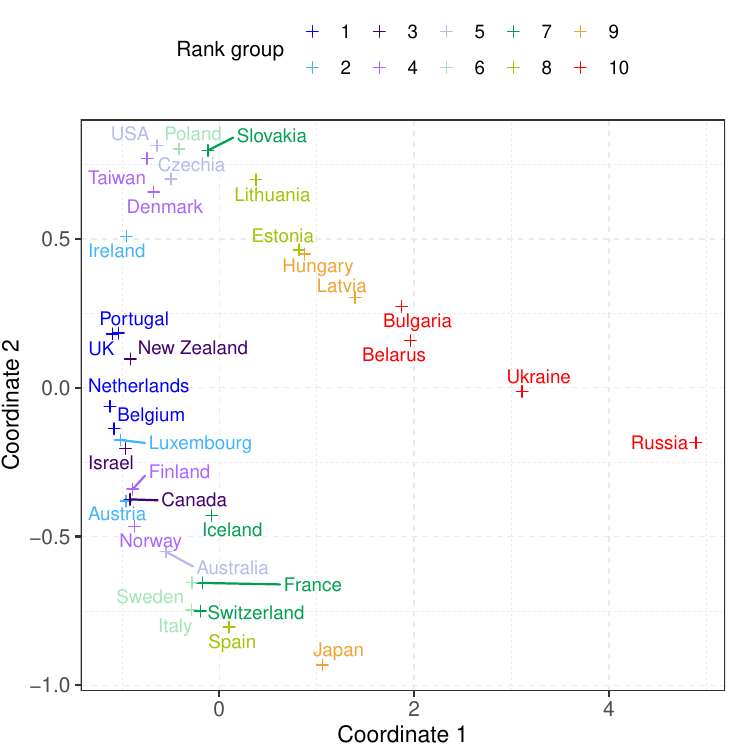}
 \end{subfigure}
 \begin{subfigure}[t]{0.45\linewidth}
 \centering
 \includegraphics[width=\linewidth]{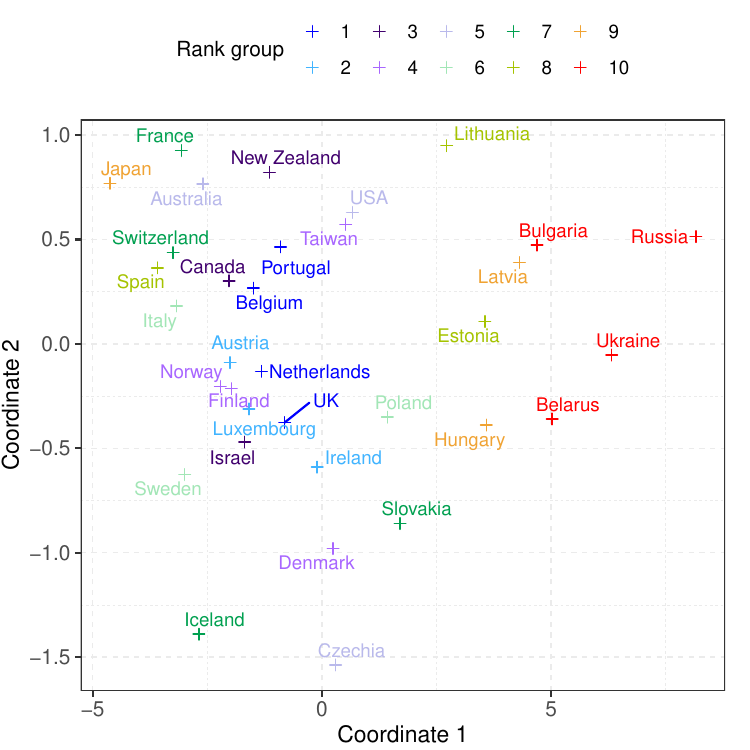}
 \end{subfigure}
 \begin{subfigure}[t]{0.45\linewidth}
 \centering
 \includegraphics[width=\linewidth]{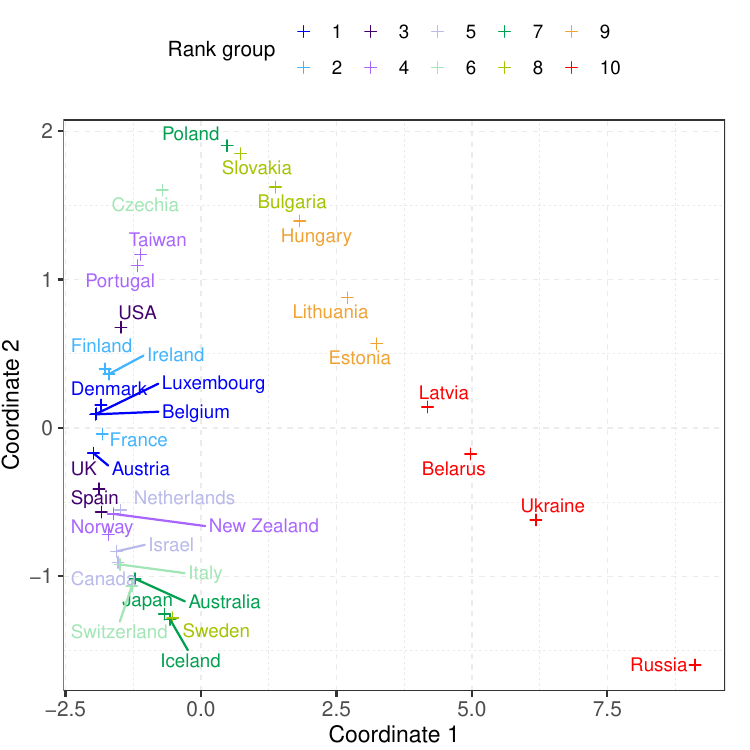}
 \end{subfigure}
 \begin{subfigure}[t]{0.45\linewidth}
 \centering
 \includegraphics[width=\linewidth]{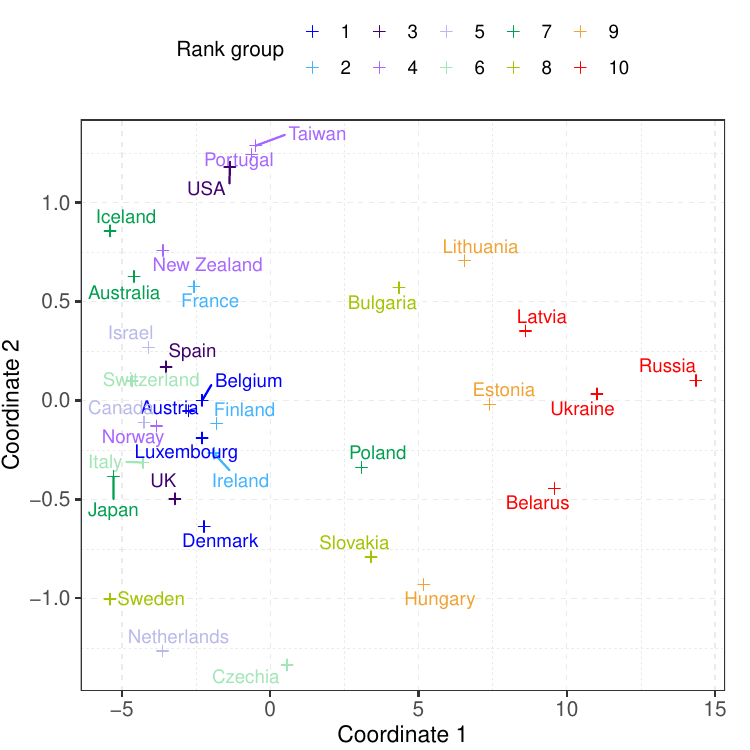}
 \end{subfigure}
 \caption{Two-dimensional profile MDS (left) and object MDS (right) 
 of the age-at-death distributions of females (top) and males (bottom) in 2000 for 34 countries, where the objects are grouped into 10 groups and colored according to the quantiles of their transport ranks.}  
 \label{fig:mort2000_mds}
\end{figure}

In Figure~\ref{fig:mort2000_mds} we compare  profile MDS (based on the distance of profiles, where the Wasserstein metric is applied for the distributional space where the profiles are situated) in the left panels and 
object MDS (based on the original metric $d$ in the object space of 
distributions); mortality for females is shown in the top panels and for males in the bottom panels. 
We find that profile MDS leads to a clearly interpretable one-dimensional manifold representation for both females and males, where extremes appear at each end, at the red colored end corresponding to age-at-death distributions indicating reduced  and at the green colored end enhanced longevity.  The groups of countries  that form the extreme ends are Japan at the enhanced  and  Eastern European countries, such as Russia, Ukraine, Belarus, Latvia, and Estonia at the reduced longevity end.
Luxembourg and Belgium belong to the most central group for both females and males. Spain is among the more outlying countries for females only, with longevity increase for females but not for males. One can observe many other interesting features in terms of the similarity and contrast between females' and males' longevity for specific countries, e.g., for Denmark and Netherlands. We find  that the one-dimensional ordering provided by profile MDS facilitates the interpretation and communication of the main data features, while object MDS  is less informative.

Another finding of interest that emerges from profile MDS  is that  the age-at-death distributions for males for the outlying countries are more outlying than the corresponding age-at-death distributions  for  females. 
In particular, the empirical \F variance of the distance profiles, $n\inv\sum_{i=1}^n\dwass^2(\hfi,\hfmean)$, of age-at-death distributions for females and males of different countries is 2.08 and 8.22, respectively, where $\hfmean = \argmin_{\o\in\wsp}\sum_{i=1}^n\dwass^2(\hfi,\o)$ is the empirical \F mean of the distance profiles, indicating that male age-at-death is especially sensitive to unfavorable country-specific factors such as the lingering effects of societal upheaval in  Eastern Europe.

\subsection{Illustration With U.S. Electricity Generation Data}
\label{sec:energy}

Compositional data comprise another type of data that do not lie in a vector space. Such data are commonly encountered and consist of vectors of non-negative elements that sum up to 1. Examples include geochemical compositions and microbiome data. Various approaches to handle the nonlinearity that is inherent in such data have been developed \citep{aitc:86,scea:14,filz:18}. We consider here the U.S. electricity generation data which are publicly available on the website of the U.S. Energy Information Administration (\url{http://www.eia.gov/electricity}). 
The data consist of net generation of electricity from different sources for each state. We considered the data for the year 2000. In preprocessing, we excluded the ``pumped storage'' category due to errors in these data and then 
merged the other energy sources into three categories: Natural Gas, consisting of ``natural gas'' alone; Other Fossil, consisting of ``coal'', ``petroleum'' and ``other gases''; Renewables and Nuclear, combining the remaining sources ``hydroelectric conventional'', ``solar thermal and photovoltaic'', ``geothermal'', ``wind'', ``wood and wood derived fuels'', ``other biomass'', ``nuclear'' and ``other''. 
Hence, we have a sample of $n=50$ observations $\{\obj_{\subidx}\}_{\subidx=1}^{n}$, each of which takes values in a 2-simplex $\splx =\{ \bm{x}\in\real^3: \bm{x}^\top\mathbf{1}_3 = 1 \}$, where $\mathbf{1}_3 = (1,1,1)^\top$. 
Since the component-wise square root $\sqrt{\bm{x}} = (\sqrt{x_1},\sqrt{x_2},\sqrt{x_3})^\top$ of an element $\bm{x}\in\splx$ lies in the unit sphere $\sphe$, 
we adopted the geodesic metric on this sphere 
\bal\label{eq:dsphe}
\dsphe(\bm{x},\bm{y}) = \arccos(\sqrt{\bm{x}}^\top \sqrt{\bm{y}}),\text{ for } \bm{x},\bm{y}\in\splx. \eal

We then compared the proposed transport ranks \eqref{eq:hrank} for each state with the angular Tukey depths \citep[ATDs,][]{liu:92:2} of $\{\sqrt{\obj_{\subidx}}\}_{\subidx=1}^{n}$. 
At first glance, the proposed transport ranks and ATDs yield similar center-outward ordering of the 50 states for these data (Figure~\ref{fig:energy2000_ternary_rankAndATD}). 
Maryland emerges as the transport median and is also at the median in terms of ATDs. 
On closer inspection, one finds some interesting discrepancies between transport ranks and the ATDs, especially for the states that are either close to or far away from the center Maryland in terms of their outlyingness. 
The states near Maryland, as shown in orange and light violet in the bottom panels of Figure~\ref{fig:energy2000_ternary_rankAndATD}, all have high transport ranks, while their ATDs vary widely. 
In particular, Montana, with an electricity generation pattern very similar to that of Maryland, has the lowest ATD level while it has a high transport rank.
A subset of states that are colored in turquoise and light violet in the bottom panels of Figure~\ref{fig:energy2000_ternary_rankAndATD} have the lowest ATDs among all states but have a much wider range of transport ranks. 
For example, 
Hawaii and Delaware for which energy sources are similar to those of Maryland have high transport ranks and low ATD levels. The overall conclusion is that transport ranks are better suited than ATDs for studying the geometry of this data set and for quantifying outlyingness. 

\begin{figure}[htbp]
 \centering
  \begin{subfigure}[t]{0.45\linewidth}
 \centering
 \includegraphics[width=\linewidth]{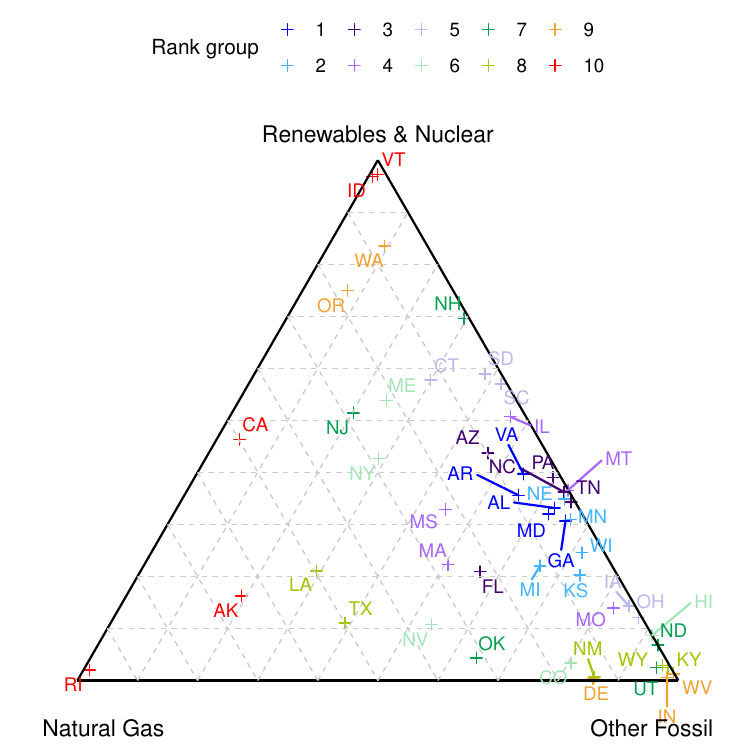}
 \end{subfigure}
 \begin{subfigure}[t]{0.45\linewidth}
 \centering
 \includegraphics[width=\linewidth]{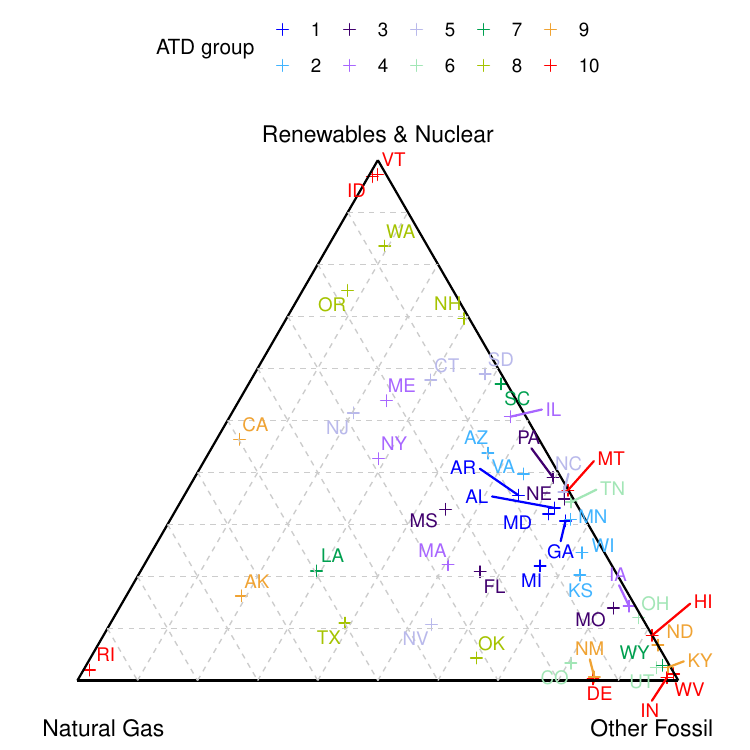}
 \end{subfigure}\\\vspace{1em}
 \begin{subfigure}[t]{0.45\linewidth}
 \centering
 \includegraphics[width=\linewidth]{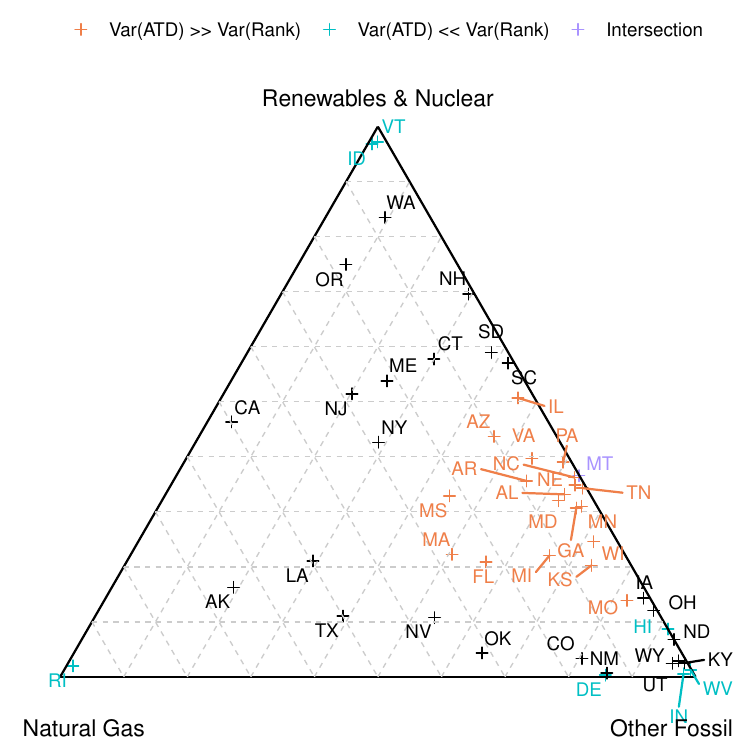}
 \end{subfigure}
 \begin{subfigure}[t]{0.45\linewidth}
 \centering
 \includegraphics[width=\linewidth]{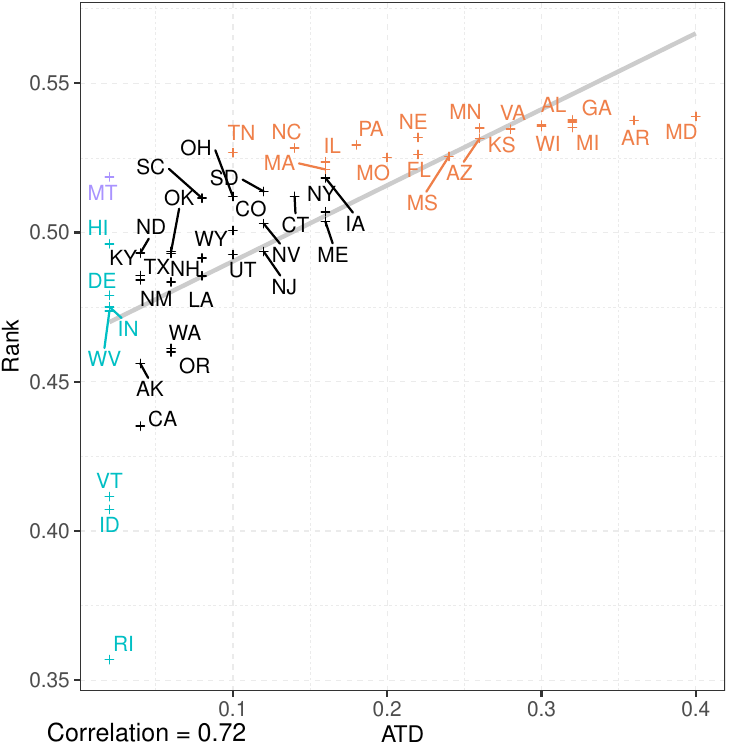}
 \end{subfigure}
 \caption{Ternary plot of compositions of electricity generation in the year 2000 for the 50 states in the U.S., where the points are colored as per their grouping into 10 groups according to quantiles of transport ranks 
 (top left); the corresponding grouping based on angular Tukey depths (ATDs, top right);
 highlighted subsets that show differences between transport ranks and ATDs (bottom left); and a scatterplot of transport ranks \eqref{eq:hrank} versus ATDs (bottom right), where the straight line shows the least squares fit (to provide perspective). 
 In the bottom two panels, a subset of states with similarly small ATDs but varying transport ranks is highlighted in orange, and another subset with similarly high transport ranks but varying ATDs is highlighted in turquoise, where the intersection of these two subsets is colored in light violet.}
 \label{fig:energy2000_ternary_rankAndATD}
\end{figure}

Networks as random objects are illustrated in another data application for New York City taxi trips; details can be found in Section~\ref{sec:taxi} of the Supplement.

\subsection{Illustrations of the Two-Sample Test}\label{sec:two_sample_app}
\subsubsection{Human Mortality Data}
We illustrate the proposed two-sample test with the age-at-death distributions from the Human Mortality Database as described in Section~\ref{sec:mort_rank}. 
The countries we considered are Belarus, Bulgaria, Czechia, Estonia, Hungary, Latvia, Lithuania, Poland, Russia, Slovakia, and Ukraine, which are all Eastern European countries at the lowest longevity levels. 
One question of interest is whether the age-at-death distributions of these Eastern European countries changed after the dissolution of the Soviet Union. 

To this end, we compared the age-at-death distributions in 1990 and the distributions in 1993 for these countries separately for females and males,  utilizing the proposed test, as well as the energy test \citep{szek:04} and the graph based test \citep{chen:17:3}. 
The densities of these distributions are shown in Figure~\ref{fig:mort_eastEuro-dens} and the test results are summarized in Table~\ref{tab:mort_eastEuro}, where the tests are implemented and referred to in the same way as in the simulations in Section~\ref{sec:test_simu} and $p$-values less than $0.05$ are highlighted in bold. 

In Figure~\ref{fig:mort_eastEuro-dens}, it can be seen that the age-at-death densities of males in 1993 vary more across the Eastern European countries than in 1990 while the variation of those for females is more similar. 
While the proposed test does not find a significant difference between age-at-death distributions for females in 1990 and 1993, the $p$-value of the proposed test for males is below $0.05$, which provides evidence that  a systematic change occurred in the age-at-death distributions for males in these Eastern European countries between 1990 and 1993. 
In contrast, the energy test and the graph based test do not find  significant differences at the $\alpha=0.05$ significance level for either females or males. 

\begin{figure}[!hbt]
 \centering
 \includegraphics[width=.8\linewidth]{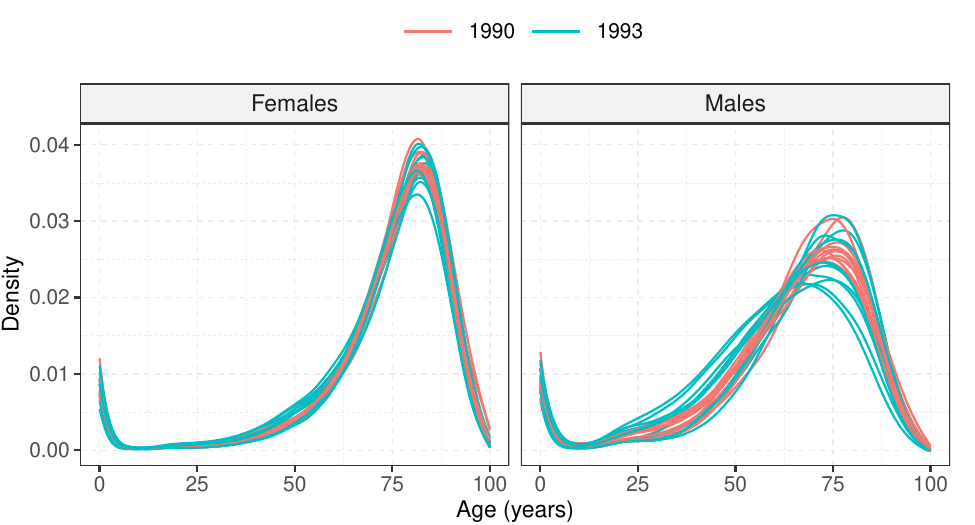}
 \caption{Age-at-death distributions for females and males in the eleven Eastern European countries in 1990 (red) and 1993 (blue), all shown as density functions.}
 \label{fig:mort_eastEuro-dens}
\end{figure}

\begin{table}[!hbt]
 \centering
 \caption{$p$-values for testing whether the age-at-death distributions in 1990 and the distributions in 1993 have the same distribution, for females and males, respectively.}
 \label{tab:mort_eastEuro}
 \begin{tabular}{l r r}
 \toprule
 Test & Females & Males \\
 \midrule
 energy & 0.326 & 0.089 \\
 graph & 0.107 & 0.055 \\
 \DP & 0.159 & \textbf{0.044} \\
 \bottomrule
 \end{tabular}
\end{table}

\subsubsection{Functional Connectivity Networks Based on fMRI Data}

Functional connectivity in neuroimaging refers to temporal association of a neurophysiological measure obtained from different regions in the brain \citep{fris:93}. Functional magnetic resonance imaging (fMRI) techniques record time courses of blood oxygenation level dependent (BOLD) signals, which are a proxy for neural activity in the brain \citep{lind:08:2}. 
Specifically, resting state fMRI (rs-fMRI) records signals when subjects are resting and not performing an explicit task. 
Functional connectivity networks can be constructed across various brain  regions of interest (ROIs)  by applying a threshold to certain measures of temporal association for each pair of ROIs that in an initial step are represented as symmetric correlation matrices. 

The rs-fMRI data in our analysis were obtained from the Alzheimer's Disease Neuroimaging Initiative (ADNI) database (\url{http://adni.loni.usc.edu}), including $400$ clinically normal (CN) subjects and $85$ mild Alzheimer’s disease dementia (AD) subjects. 
For each subject, we took only their first scan. 
Preprocessing of the BOLD signals was implemented following the standard procedures of head motion correction, slice-timing correction, co-registration, normalization, and spatial smoothing. Average signals of voxels within spheres of diameter $8$~mm centered at the seed voxels of each ROI were extracted, with linear detrending and band-pass filtering  to account for signal drift and global cerebral spinal fluid and white matter signals, including only frequencies between $0.01$ and $0.1$~Hz.  These steps were performed in MATLAB using the Statistical Parametric Mapping (SPM12, \url{http://www.fil.ion.ucl.ac.uk/spm}) and Resting-State fMRI Data Analysis Toolkit V1.8 (REST1.8, \url{http://restfmri.net/forum/?q=rest}). 

We considered the $264$ ROIs of a brain-wide graph identified by \cite{powe:11} and use temporal Pearson correlations (PCs) \citep{bisw:95} as the measure of temporal correlation between pairs of ROIs, a common approach in fMRI studies. 
Functional connectivity networks were then obtained as  adjacency matrices by imposing  an absolute threshold $0.25$ on the  $264\times 264$ matrices of temporal PCs, where  entries less than $0.25$ are replaced with zeros and  diagonal entries are set to zero. 
As distance between two functional connectivity networks we chose the Frobenius metric between the adjacency matrices.

We then employed the proposed test, 
the energy test 
and the graph based test  
to compare the functional connectivity networks of CN subjects and AD subjects. 
Prior to performing the tests, we subsampled the CN and AD subjects such that the age distributions of these two groups of subjects are similar. 
The results are presented in Table~\ref{tab:adni-CN-vs-AD}. The proposed test and the energy test have $p$-values  below $0.05$, providing evidence for a significant differences between the distributions of functional connectivity networks of CN subjects and AD subjects, while the $p$-value of the graph based test is close to $1$. 

\begin{table}[!hbt]
	\centering
	\caption{$p$-values for testing whether the functional connectivity networks of CN subjects and AD subjects have the same distribution.}
	\label{tab:adni-CN-vs-AD}
	\begin{tabular}{lr}
		\toprule
		Test & $p$-value \\
		\midrule
		energy & \textbf{0.003} \\
		graph & 0.930 \\
		\DP & \textbf{0.034} \\
		\bottomrule
	\end{tabular}
\end{table}

In a second analysis, we compared the functional connectivity networks of CN subjects for (first) scans taken  at various age groups, with their distribution across age groups  summarized in Table~\ref{tab:adni_age-distn}. 
Empirical power was obtained as the proportion of rejections at significance level $\alpha=0.05$ based on 100 Monte Carlo runs, for each of a sequence of tests.  For all tests, the first sample consisted of functional connectivity networks of 80 subjects randomly sampled from the 159 CN subjects with scans taken in the age interval  $[55,70)$.  The second samples were drawn from the remaining 320 CN subjects and  consisted of functional connectivity networks of subjects with scans taken in defined age intervals. For the first test, this age interval was  $[55,70)$; for the third test it was  $[60,75),\dots,$ for the second-to-last test it was $[75,90)$, and for the last test it was $[80,96)$.  Since it is known that these networks change with age, this sequence of tests provides an empirical  power function for detecting the age-related change. The empirical power results are presented in Figure~\ref{fig:adni_power_abs25}, indicating that  the proposed test outperforms both the energy test and the graph based test.

\begin{table}[!hbt]
 \centering
 \caption{Age distribution at first scans for the CN subjects.}
 \label{tab:adni_age-distn}
 \begin{tabular}{l r}
 \toprule
 Age interval & \# CN subjects \\
 \midrule
 $[55,60)$ & 14 \\
 $[60,65)$ & 20 \\
 $[65,70)$ & 125 \\
 $[70,75)$ & 84 \\
 $[75,80)$ & 76 \\
 $[80,85)$ & 49 \\
 $[85,90)$ & 21 \\
 $[90,95)$ & 10 \\
 $[95,96)$ & 1 \\
 \bottomrule
 \end{tabular}
\end{table}

\begin{figure}[!hbt]
 \centering
 \includegraphics[width=.5\textwidth]{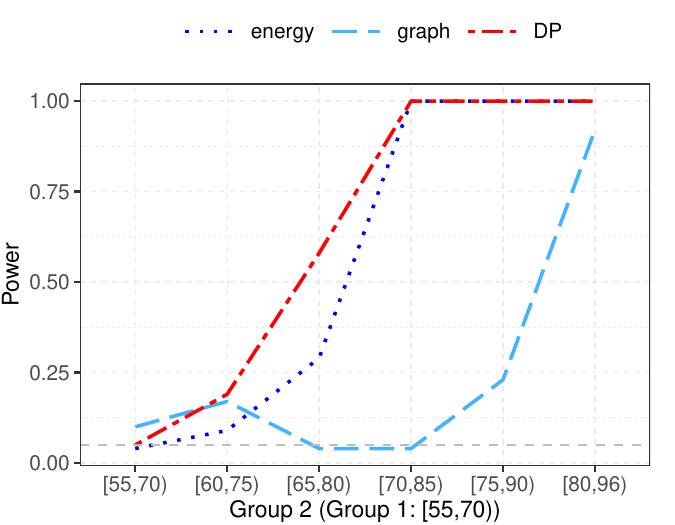}
 \caption{Power comparison for increasing age difference for two samples of functional connectivity networks of CN subjects, where the first samples consist of CN subjects with scans taken in the age interval $[55,70)$ years and the second samples consist of CN subjects with scans taken in the age intervals $[55,70), [60,75),\dots,[75,90),[80,96)$, respectively. For each Monte Carlo run, the first sample consists of 80 subjects which are randomly sampled from the 159 CN subjects with ages in $[55,70)$, and from among  the remaining 320 CN subjects, the second sample consists of the subjects with scans taken in the corresponding age interval. The dashed grey line denotes the significance level $0.05$.}
 \label{fig:adni_power_abs25}
\end{figure}

\vspace{2cm}

\section{Discussion and Outlook} \label{sec:disc}

\subsection{Metric Selection}\label{sec:ms} 

To deploy the tools of metric statistics for a given space of data objects, the choice of a metric is essential. For some data types such as Euclidean data the metric is usually preordained to be the geodesic, i.e., the usual Euclidean metric, but even in this simple special case there are still various choices;  one could consider weighted metrics that de-emphasize or emphasize specific vector components. Similarly, for data on Riemannian manifolds such as spheres,  the geodesic metric is an inherent feature of the geometry and therefore is the canonical choice. This applies also to compositional data if they are represented on the positive orthant of a unit sphere \cp{scea:11}, as described in Section~\ref{sec:energy}; an alternative selection for compositional data is the Aitchison metric \cp{aitc:86}. Both choices have specific advantages and disadvantages \cp{scea:14}, notably the Aitchison metric requires the compositional components to be positive (otherwise requiring a numerical fudge), which is not satisfied for the U.S. energy generation data that we study in Section~\ref{sec:energy}.

The metric selection problem is more complex for other data objects such as distributions, where a large number of metrics have been proposed 
and popular choices in the context of random objects  include the 2-Wasserstein  metric \eqref{eq:dwass}. The Wasserstein metric  has been shown to work well for one-dimensional distributions in distributional data analysis \cp{pana:20,pete:22} and various applied scenarios \cp{bols:03},  but poses thorny theoretical and computational problems for multivariate distributions. This 
incentivizes the  study of alternative metrics  such as the sliced Wasserstein metric \cp{kolo:16, kolo:19,mull:23:4}
and  the Fisher--Rao metric \cp{dai:22}. 
For the space of symmetric positive matrices that play an important role for applications such as fMRI and DTI brain imaging, 
one can choose among a large class of metrics, ranging from the Frobenius metric to power metrics \cp{pigo:14}, the recently proposed Cholesky metric \cp{lin:19:1}
 and metrics that reflect the geometry of eigenvectors \cp{jung:15}.

While it is clearly important, the metric selection problem in a statistical framework has been largely neglected.  If one has a class of metrics that 
is indexed by a parameter, a data-based selection criterion to find the best metric within the class may lead to consistent data-based metric selection for a specific target criterion; an example is metric selection for the family of power metrics for symmetric positive matrices \cp{mull:16:2}.
Absent a statistical framework for metric selection, a basic criterion is that the metric to some power should be of strong negative type \eqref{eq:snt}, which then implies that distance profiles characterize the underlying distribution. This property is satisfied  for the metrics that we have discussed  in examples, simulations and data analysis.  We refer to Section~3 of \cite{lyon:13} for a detailed discussion of examples and counterexamples of metric spaces that are of (strong) negative type. 
Specifically, the space of one-dimensional distributions with finite second moments endowed with the 2-Wasserstein metric in eq.~\eqref{eq:dwass}, the space of multivariate distributions with $L^2$ metric between corresponding cdfs, the space of network adjacency matrices with the Frobenius metric, and spheres with the geodesic metric are of strong negative type, while Grassmannian manifolds and cylinders with their geodesic metrics are not \citep{fera:15,vene:19}.

Other  criteria  for metric selection include feasibility and ease of implementation  (e.g., the Fisher--Rao metric can be easily deployed irrespective of the  dimensionality of distributions) and  matching of metric proximity with perceived or known similarity. A good  metric should also facilitate meaningful interpretation of the results obtained when deploying the tools of metric statistics and entail   sensible inference, so that detected differences between groups are indeed relevant. In some spaces  feature preservation when  transitioning from one data object to another along connecting geodesics that are determined by the metric is often desirable. This may include  preservation of shape features, e.g., unimodality  in the case of distributions, which is a forte of the Wasserstein metric,  or avoidance of the swelling effect in the case of symmetric positive definite matrices, as provided for example by the Cholesky metric \citep{dryd:09}. 
A metric in the space of distributions that complements the given metric in the object space is utilized in  the space of distance profiles, which correspond to one-dimensional distributions. In our approach we use the 2-Wasserstein metric and exploit its connection with optimal transports; other metrics could be explored as well.  
An important property of the proposed distance profiles is that in metric spaces of (strong) negative type they characterize the underlying probability measure $P$, which guarantees that the proposed test attains asymptotic power against any alternative $Q\ne P$, and any alternative metric would need to match this property. Another obvious extension to consider is to use weighted transports in the definition of the distance profiles, where  one could give more weight to the transported mass situated closer to $u=0$.

\subsection{Outlook and Future Research}\label{sec:of}
Distance profiles and their metric lead to a new type of MDS for random objects, providing a  representation of data objects that complements the more standard MDS representations based on the original metric in the object space, as exemplified in Section~\ref{sec:mort_rank}. The resulting visualization proved to be useful and interpretable in the examples we studied. But this is only a small start and visualization of random objects remains a widely open topic. 

As we demonstrate in the numerical experiments in Section~\ref{sec:test_simu}, 
in various scenarios 
the distance profile based test outperforms the energy test 
in terms of power. This is likely due to the fact that profile distances provide  a more fine tuned assessment of the underlying measure $P$  than 
means do, where energy tests are based on the latter. In our experiments, the profile based tests are less powerful than the energy test when the alternatives are based on mean shift expectations but more powerful for alternatives based on scale changes and where heavy tails are involved.  
Further investigation for this phenomenon as well as potential improvements and modifications of the proposed test, for example by
judicious choice of the weights in \eqref{eq:teststat_weighted},
will be left for future research.

The proposed transport ranks can also be utilized to arrive at a new measure of object depth, complementing recent developments that  extend classical notions of depth from Euclidean data to random objects \cp{dai:21:2,geen:23,chol:23}. Exploring these connections 
is a topic for future research. The properties of the proposed transport quantiles also deserve  further study in view of the 
importance and challenges of 
defining quantiles in metric spaces. Specifically, rates and especially optimal rates of convergence for transport medians, transport quantiles and other tools of metric statistics  
will require future research efforts.

While we have utilized optimal transports of distance profiles to obtain transport ranks and transport quantiles, the concept of transports can be extended to objects in uniquely geodesic spaces \cp{mull:23:6} and can then be used 
as a general modeling tool.
 Other recent developments include more sophisticated representations of random objects in reproducing kernel Hilbert spaces \cp{bhat:23}. 
These and similar developments are expected to provide valuable new tools for the nascent field of metric statistics. For random objects, essentially all relevant statistical methods for Euclidean data  need to be redesigned  with new theoretical justifications.
Examples include  deep learning that could be applied for Fr\'echet regression when the predictors are high-dimensional and   principal component analysis for random objects, where at this point there is no general theoretically supported method.   
These are just a few of the  many challenging open problems for future exploration.

\begin{funding}
This research was supported in part by NSF grants DMS-2311034 (PD), DMS-2311035 (YC), DMS-2014626 and DMS-2310450 (HGM).
\end{funding}


\begin{supplement}
The Supplement contains proofs and auxiliary results, additional simulations for the two-sample test, additional simulations for distance profiles and transport ranks for multimodal multivariate data and distributional data 
as well as additional details for human mortality data and applications of distance profiles for Manhattan Yellow Taxi networks.
\end{supplement}

\clearpage

\appendix

\renewcommand{\thesection}{S.\arabic{section}}
\renewcommand{\theequation}{S.\arabic{equation}}
\renewcommand{\thetable}{S.\arabic{table}}
\renewcommand{\thefigure}{S.\arabic{figure}}
\renewcommand{\theLemma}{S.\arabic{Lemma}}
\renewcommand{\theCorollary}{S.\arabic{Corollary}}

\centerline{\bf SUPPLEMENT}

\section{Proofs for Section~\ref{sec:method}}
\begin{proof}[Proof of Proposition~\ref{prop1}] Consider the {$\O$}-indexed stochastic process $\{d(\o,\obj)\}_{\o \in \O}$ and for each $k \in \mathbb{N}$ and any collection of indices $i_1,i_2,\dots,i_k$ the $\real^k$ valued random variables $(d(\o_{i_1},X),\dots, d(\o_{i_k},X))$ that define a 
	probability measure $\dveck_{i_1,i_2,\dots,i_k}$ by 
	\begin{equation*}
	\dveck_{i_1,i_2,\dots,i_k} \left(A_1 \times A_2 \times \dots \times A_k\right) \coloneqq \prob\left(d(\o_{i_1},X) \in A_1, d(\o_{i_2},X) \in A_2,\dots, d(\o_{i_k},X) \in A_k \right) 
	\end{equation*}
	for any Borel sets $A_1, A_2, \dots, A_k \subseteq \real$. 
	Note that $\dveck_{i_1,i_2,\dots,i_k}$ satisfies the following conditions:
	\begin{enumerate}[label = (\roman*)]
		\item for any permutation $\pi=(\pi(1),\dots,\pi(k))$ of $\{1,\dots,k\}$ and measurable sets $A_j \subseteq \real$, 
		\begin{equation*}
		\dveck_{\pi(i_1),\pi(i_2),\dots,\pi(i_k)} \left(A_{\pi(1)} \times A_{\pi(2)} \times \dots \times A_{\pi(k)}\right)=\dveck_{i_1,i_2,\dots,i_k} \left(A_1 \times A_2 \times\dots\times A_k\right).
		\end{equation*}
		\item for all measurable sets $A_j \subseteq \real$ and for any $m \in \mathbb{N}$
		\begin{align*}
		& \dveck_{i_1,i_2,\dots,i_k} \left(A_1 \times A_2 \times \dots \times A_k\right)
		\\  =  & \dveck_{i_1,i_2,\dots,i_k,i_{k+1},\dots,i_{k+m}} \left(A_1 \times A_2 \times \cdots \times A_k \times \real \times\dots\times \real\right).
		\end{align*}
	\end{enumerate}
	Then by the Kolmogorov extension theorem, there exists a unique probability measure $\nu$ on $\real^\O \coloneqq \{\o\mapsto g(\o): \o\in\Omega,\, g(\o)\in\real\}$, the underlying law of the stochastic process $\{d(\o,\obj)\}_{\o \in \O}$, whose finite-dimensional marginals are given by $\dveck_{i_1,i_2,\dots,i_k}$, whence the stochastic process $\{d(\o,\obj)\}_{\o \in \O}$ is well defined.
	
	For $\o \in \O$ and $r > 0$, define the open ball $O_{\o,r}=\{x \in \O: d(\o,x) < r \}$  with radius $r$ and centered at $\om$. Starting with the open balls $\{O_{\o,r}\}_{\o \in \O, r>0}$, we form an algebra  $\mathcal{B}_0$ of subsets of $\O$, which includes the empty set and open balls and is closed under complements, finite unions and finite intersections. 
	On $\mathcal{B}_0$, we define a pre-measure $P_0$, given by the marginals of the law of $\{d(\o,\obj)\}_{\o \in \O}$ such that $P_0(B)=P(B)=\prob(X^{-1}(B))$ for all $B \in \mathcal{B}_0$. When $\O$ is separable, $\mathcal{B}_0$ generates the Borel sigma algebra on $\O$ since it is an algebra containing the open balls. 
	Hence by the Hahn--Kolmogorov theorem, a version of the Carath\'eodory's extension theorem, there exists a unique extension of $P_0$ to the Borel sigma algebra of $\O$ whose restriction to $\mathcal{B}_0$ coincides with $P_0$. By uniqueness, the extension of $P_0$ is $P$. Hence the marginals of the law of $\{d(\o,\obj)\}_{\o \in \O}$ uniquely characterize the underlying Borel probability measure of $X$ on separable metric spaces. 
	
	The \deps are the one-dimensional marginals of $\{d(\o,\obj)\}_{\o \in \O}$. Suppose that for some $\theta > 0$, the space $(\O,d^\theta)$ is of strong negative type, i.e., $\int_{\O}\int_{\O} d^\theta(\o,x) \diffop P_1(\o)\diffop P_1(x)$   +  $\int_{\O}\int_{\O} d^\theta(\o,x) \diffop P_2(\o)\diffop P_2(x) -2\int_{\O}\int_{\O} d^\theta(\o,x) \diffop P_1(\o)\diffop P_2(x) \le 0$ for all probability measures $P_1,P_2$ on $\O$, where  equality holds if and only if $P_1 = P_2$ \citep{lyon:13}. Then $F_{\o}^{P_1}(u) = F_{\o}^{P_2}(u)$ for all $\o\in\O$ and $u \in \mathbb{R}$, where $F_{\o}^{P_1}(\cdot)$ and $F_{\o}^{P_2}(\cdot)$ are the \deps of $\o$ with respect to $P_1$ and $P_2$, implies that $P_1 = P_2$. This is because $F_{\o}^{P_1}(u) = F_{\o}^{P_2}(u)$ implies that $\expect_{P_1}(d^\theta(\o,X)) = \expect_{P_2} (d^\theta(\o,X))$ for all $\o \in \O$ and all $\theta > 0$. Hence it follows from \cite{lyon:13} that as long as $d^\theta$ is of strong negative type for some $\theta > 0$, the \deps with respect to $P$ uniquely characterize $P$.

\end{proof} 

\section{Proofs for Section~\ref{sec:prop}}

\begin{proof}[Proof of Theorem~\ref{prop:properties}]
	
	Part (a) follows immediately from the isometry of $h$. 
	
	For part (b), given any $X \sim P$, i.e., $P = \prob X\inv$, by the definition of a mode, $F_{\o_\oplus}(u) \geq F_{X}(u)$ and therefore $F_{\o_\oplus}\inv(u) \leq F_{X}\inv(u)$ for all $u \in \mathbb{R}$ almost surely. This implies that \newline $\int_0^1[\fobj\inv(u)-F_{\o_\oplus}\inv(u)]\diffop u \geq 0$ and therefore $\rank_{\o_\oplus} \geq \expit(0)=1/2$. Moreover observe that
	\begin{align}
	\label{mode_monotone}
	\int_0^1[\fobj\inv(u)-F_{\o_\oplus}\inv(u)]\diffop u - \int_0^1[\fobj\inv(u)-\fo\inv(u)]\diffop u = \int_0^1[\fo\inv(u)-F_{\o_\oplus}\inv(u)]\diffop u \geq 0
	\end{align}
	where the inequality in \eqref{mode_monotone} follows from the definition of $\o_\oplus$ that implies that $F_{\o_\oplus}\inv(u) \leq \fo\inv(u)$ for all $u \in \mathbb{R}$. Since $\expit(\cdot)$ is non-decreasing one has $$\expit\left\{\int_0^1[\fobj\inv(u)-F_{\o_\oplus}\inv(u)]\diffop u\right\} \geq \expit\left\{\int_0^1[\fobj\inv(u)-\fo\inv(u)]\diffop u\right\},$$ which completes the proof of part (b).
	
	For part (c), the proof is straightforward and is similar to that of part (b), 
	observing that along $\gamma(\cdot)$ for $s < t$, $F\inv_{\gamma(s)}(u) \leq F\inv_{\gamma(t)}(u)$ for all $u \in \mathbb{R}$. This leads to $\rank_{\gamma(s)} \geq \rank_{\gamma(t)}$ 
	as \newline $\expit\left\{\int_0^1[\fobj\inv(u)-F_{\gamma(s)}\inv(u)]\diffop u\right\} \geq \expit\left\{\int_0^1[\fobj\inv(u)-F_{\gamma(s)}\inv(u)]\diffop u\right\}$.
	
	For part (d), note that $\rank_{\o}^{P_1} = \rank_{\o}^{P_2}$ for all $\o\in\O$ implies 
	\begin{align*}
	\expect\{d(X',X)\} - \expect\{d(\o,X)\} = \expect\{d(Y',Y)\} - \expect\{d(\o,Y)\}
	\end{align*}
	for all $\o\in\O$, where $X,X'\sim P_1$, $Y,Y'\sim P_2$, and $(X,X',Y,Y')$ are independent. 
	Let $\delta = \expect\{d(X',X)\} - \expect\{d(Y',Y)\}$. Then $\expect\{d(\o,X)\} - \expect\{d(\o,Y)\} = -\delta$, for all $\o\in\Omega$. 
	Hence, 
	\begin{align*}
	\expect\{d(X,Y)\mid X\} - \expect\{d(X,X')\mid X\} &= \delta; \\
	\expect\{d(Y,X)\mid Y\} - \expect\{d(Y,Y')\mid Y\} &= -\delta.
	\end{align*}
	For $(\O,d)$ of strong negative type, the energy metric between $P_1$ and $P_2$ is given by  
	\begin{align*}
	& \expect\{d(X,Y)\} - \expect\{d(X',X)\} + \expect\{d(Y,X)\} - \expect\{d(Y',Y)\}\\
	=\ & \expect\left(\expect\{d(X,Y)\mid X\} - \expect\{d(X',X)\mid X\}\right) + \expect\left(\expect\{d(Y,X)\mid Y\} - \expect\{d(Y',Y)\mid Y\}\right)\\
	=\ &\delta -\delta = 0.
	\end{align*} 
\end{proof}

\section{Proofs and Auxiliary Results for Section~\ref{sec:est}}\label{sec:proof_est}

\begin{Lemma}
	\label{lma:donsker_auxiliary}
	Let $\mcF$ be a class of measurable functions such that $\mcF$ satisfies 
	\begin{equation}
	\label{eq:unif_entropy}
	\int_{0}^{\infty} \sqrt{\log N_{[]}(\eps, \mcF, L_{1}(P))} d\eps < \infty. 
	\end{equation}
	Then $\mcF$ is pre-Gaussian.
\end{Lemma}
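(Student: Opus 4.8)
The plan is to obtain the pre-Gaussianity of $\mcF$ from the classical bracketing-entropy criterion for Gaussian (and empirical) processes, after converting the $L_1(P)$ bracketing hypothesis \eqref{eq:unif_entropy} into one phrased in the $L_2(P)$ seminorm. Recall that $\mcF$ is $P$-pre-Gaussian exactly when the $P$-Brownian bridge $\mathbb{G}_P$ — the centered Gaussian process indexed by $\mcF$ with covariance $\Cov(f(\obj),g(\obj))$ — admits a version that is a tight, Borel-measurable random element of $\ell^\infty(\mcF)$; equivalently, $(\mcF,\rho_P)$ is totally bounded and $\mathbb{G}_P$ has a version with bounded, $\rho_P$-uniformly continuous sample paths, where $\rho_P(f,g)^2=\Var\{f(\obj)-g(\obj)\}$. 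Since $\rho_P(f,g)\le\|f-g\|_{L_2(P)}$, it suffices to control both total boundedness and the Gaussian chaining in the $L_2(P)$ seminorm, and by the bracketing central limit theorem \citep{well:96} — every $P$-Donsker class being $P$-pre-Gaussian — it is enough to verify $\int_0^\infty\sqrt{\log N_{[]}(\eps,\mcF,L_2(P))}\,\diffop\eps<\infty$.

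The bridge from the hypothesis to that $L_2(P)$ statement rests on an elementary estimate. For any bracket $[l,u]$ we may intersect it with $[-M,M]$ without discarding any member of $\mcF$, where $M$ is the uniform bound on the functions in $\mcF$ (in the applications $\mcF$ is a class of indicators, so $M=1$; for an unbounded class one truncates at the level of an $L_2(P)$-envelope and treats the excised part separately). For such a bracket $0\le u-l\le 2M$ pointwise, hence $P\{(u-l)^2\}\le 2M\,P(u-l)$, so an $L_1(P)$-bracket of size $\delta$ becomes an $L_2(P)$-bracket of size at most $(2M\delta)^{1/2}$ and therefore $N_{[]}(\eps,\mcF,L_2(P))\le N_{[]}(\eps^2/(2M),\mcF,L_1(P))$. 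The same inequality $\|f-g\|_{L_2(P)}^2\le 2M\|f-g\|_{L_1(P)}$ shows that the $L_1(P)$-total boundedness of $\mcF$ implied by \eqref{eq:unif_entropy} upgrades to $L_2(P)$-total boundedness, which settles the index-set side of pre-Gaussianity. Substituting the bracketing bound into the entropy integral and changing variables $\delta=\eps^2/(2M)$ then reduces everything to the finiteness of $\int_0 \eps^{-1/2}\sqrt{\log N_{[]}(\eps,\mcF,L_1(P))}\,\diffop\eps$ near the origin.

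That last integral is where I expect the real work to lie and is the main obstacle: the hypothesis \eqref{eq:unif_entropy} only controls $\int_0\sqrt{\log N_{[]}(\eps,\mcF,L_1(P))}\,\diffop\eps$, whereas the change of variables has introduced the extra weight $\eps^{-1/2}$. The clean way to close the remaining gap is to run the Gaussian chaining directly rather than through the $L_2$ black box: along a dyadic chain of $L_1(P)$-bracketing partitions $\{[l_i^k,u_i^k]\}$ with $P(u_i^k-l_i^k)<\eps_k=2^{-k}$, the chaining increments $\mathbb{G}_P(\pi_k f-\pi_{k-1}f)$ are Gaussian with variance at most $2M\,P(u_i^k-l_i^k)\le 2M\eps_k$, so the effective chaining distance at level $k$ is $(2M\eps_k)^{1/2}$; Borell's inequality together with a union bound over the $N_{[]}(\eps_k,\mcF,L_1(P))$ bracket differences yields a maximal inequality whose bound is, up to absolute constants, $\sum_k (2M\eps_k)^{1/2}\sqrt{\log N_{[]}(\eps_k,\mcF,L_1(P))}$, i.e. again the weighted integral above. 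This series converges in the regime relevant to the paper — namely when the bracketing numbers grow sub-polynomially, as they do for $\mcF=\{y_{\o,t}:\o\in\O,t\in\real\}$ once Assumptions~\ref{ass:entropy} and \ref{ass:dpfctn} are in force — and it delivers a bounded, uniformly $\rho_P$-continuous version of $\mathbb{G}_P$ on the totally bounded $(\mcF,\rho_P)$; tightness of that version in $\ell^\infty(\mcF)$, hence pre-Gaussianity of $\mcF$, then follows from the standard separability-and-continuity argument.
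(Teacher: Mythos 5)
Your diagnosis of the obstacle is exactly right, but your proposal does not overcome it, so the lemma as stated is not proved. Both of your routes --- the $L_1\to L_2$ bracketing conversion and the ``direct'' chaining along $L_1(P)$-bracketing partitions --- terminate at the same place: because the Gaussian increments over a bracket of $L_1(P)$-size $\eps_k$ are only controlled in the standard-deviation scale $(2M\eps_k)^{1/2}$, the chaining bound is $\sum_k\eps_k^{1/2}\sqrt{\log N_{[]}(\eps_k,\mcF,L_1(P))}$, i.e.\ the weighted integral $\int_0\eps^{-1/2}\sqrt{\log N_{[]}(\eps,\mcF,L_1(P))}\,\diffop\eps$. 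That condition is strictly stronger than \eqref{eq:unif_entropy}: if $\log N_{[]}(\eps,\mcF,L_1(P))\asymp\eps^{-\alpha}$ with $1\le\alpha<2$, then \eqref{eq:unif_entropy} holds while your weighted integral diverges. Retreating to ``the regime relevant to the paper'' therefore changes the statement being proved rather than proving the lemma, and it is not even safe for the application, since Assumption~\ref{ass:entropy} only forces $\log N(\eps,\O,d)=o(1/\eps)$ and thus admits entropies (e.g.\ of order $\eps^{-1}(\log(1/\eps))^{-2}$) for which the weighted integral is still infinite.

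The paper closes this gap by never chaining the $P$-Brownian bridge in the $\rho_P$ scale at all. It disjointifies the $L_1(P)$-brackets at each dyadic level $2^{-q}$ into partitions with $N_q\le N_{[]}(2^{-q},\mcF,L_1(P))$ cells, defines the hierarchical semimetric $\rho(f,g)=2^{-q_0+1}$, where $q_0$ is the first level at which $f$ and $g$ fall into different cells, and introduces the auxiliary Gaussian process $G(f)=\sum_q 2^{-q}X_{q,\pi_q f}$ built from i.i.d.\ standard normals attached to the cells. Its increments satisfy $\var\{G(f)-G(g)\}=\tfrac{2}{3}\rho^2(f,g)$, and the covering numbers obey $N(2^{-q+1},\mcF,\rho)=N_q$, so Dudley's entropy integral for $(G,\rho)$ is exactly $\sum_q 2^{-q}\sqrt{\log N_q}$ --- the unweighted quantity that \eqref{eq:unif_entropy} makes finite. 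Total boundedness of $(\mcF,\rho)$ together with a uniformly $\rho$-continuous version of $G$ is then what the paper invokes to conclude pre-Gaussianity. The idea you are missing is precisely this change of process and of semimetric: the hierarchy metric charges distance $2^{-q}$, not $2^{-q/2}$, to a pair that separates at level $q$, which is what converts the weighted entropy integral into the unweighted one appearing in the hypothesis.
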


\begin{proof}[Proof of Lemma~\ref{lma:donsker_auxiliary}]
	It is possible to use the $L_{1}(P)$ brackets of size $2^{-q}$ of $\mcF$ and disjointify them so as to obtain a partition of $\mcF$. Let $N_q$ be the number of sets in this partition. Then $N_q \leq N_{[]}(2^{-q}, \mcF, L_{1}(P))$. By the finiteness of the integral in \eqref{eq:unif_entropy} one has
	\begin{equation}
	\label{eq:summability}
	\sum_{q} 2^{-q} \sqrt{\log N_q} < \infty.
	\end{equation}
	For each $q$, denote the partitioning cover by $\{\mcF_{qi}\}_{i=1}^{N_q}$, i.e., $\mcF = \cup_{i=1}^{N_q} \mcF_{qi} $. Choose a fixed element $f_{qi} \in \mcF_{qi}$ and let
	\begin{equation*}
	\pi_qf = f_{qi} \ \text{if} \ f \in \mcF_{qi},
	\end{equation*}
	and
	\begin{equation*}
	\mcF_{q}f = \mcF_{qi} \ \text{if} \ f \in \mcF_{qi},
	\end{equation*}
	for $f\in\mcF$. 
	Now for the construction of the Gaussian semimetric $\rho$, for $f,g \in \mcF$ define $\rho(f,g)= 2^{-q_0+1}$, where $q_0$ is the first value such that $f$ and $g$ do not belong to the same partitioning set at level $q_0$. Then $\rho$ defines a semimetric on $\mcF$ such that the $\rho$-ball centered around $f$ of size $ 2^{-q+1}$, denoted by $B(f,2^{-q+1})$, is the set $\mcF_{q}f$ for every $q$, i.e., $B(f,2^{-q+1})= \mcF_{q}f$.
	Next define a Gaussian process $G$ indexed by $\mcF$ as
	\begin{equation*}
	G(f) = \sum_{q} 2^{-q} X_{q,\pi_qf},
	\end{equation*}
	{where $X_{q,\pi_qf}$ are i.i.d standard normal random variables}. For $f,g \in \mcF$ with $\rho(f,g)=2^{-q_0+1}$ observe that
	\begin{align*}
	\var\{G(f)-G(g)\} 
	&= \var\left\{ \sum_{q \geq q_0} 2^{-q} (X_{q,\pi_q f}-X_{q,\pi_q g}) \right\} 
	= 2 \sum_{q \geq q_0} 2^{-2q} = \frac{2 \rho^2(f,g)}{3}.
	\end{align*}
	Next \eqref{eq:summability} together with the fact that $N(2^{-q+1},\mcF, \rho) = N_q$ implies that the entropy for the semimetric $\rho$ satisfies the integrability condition 
	\begin{equation}
	\label{eq:rho_integral}
	\int_{0}^{\infty} \sqrt{\log N(\eps, \mcF, \rho) } d\eps < \infty.
	\end{equation}
	{By Corollary~2.2.8 in \cite{well:96}, there exists a constant $C$ such that} for any $\delta > 0$,
	\begin{equation*}
	\expect \sup_{\rho(f,g) < \delta} |G(f)-G(g)| \leq C \int_{0}^\delta \sqrt{\log N(\eps, \mcF, \rho) } d\eps.
	\end{equation*}
	{In conjunction with \eqref{eq:rho_integral}, this implies}
	\begin{equation*}
	\sup_{\rho(f,g) < \delta} |G(f)-G(g)| \overset{\prob}{\to} 0,\quad \text{as }\delta \rightarrow 0.
	\end{equation*}
	Therefore for any sequence of numbers $a_n \rightarrow 0$, there exists a sequence $b_n \rightarrow 0$ such that $\prob \left( \sup_{\rho(f,g) < b_n} |G(f)-G(g)| > a_n \right) < 2^{-n}$ for every $n$. 
	Applying the Borel--Cantelli Lemma, $|G(f)-G(g)|\leq a_n$ whenever $\rho(f,g) \leq b_n$, for all sufficiently large $n$ almost surely. Hence $G$ has a version with uniformly $\rho$-continuous sample paths, which implies that $\rho$ is a Gaussian semimetric. 
	Following Example~1.5.10 and the definition of a pre-Gaussian function class on page 89 in \cite{well:96}, we conclude that the class $\mcF$ is pre-Gaussian as it is totally bounded with respect to the semimetric $\rho$ and there exists a version of $G$ with uniformly $\rho$-continuous sample paths {$f \mapsto G(f)$}.
\end{proof}

\begin{proof}[Proof of Theorem~\ref{thm:fhat}]
	Observe that 
	\begin{equation*}
	\sqrt{n}(\hfo (t)-\fo (t)) = n^{-1/2} \sum_{i=1}^n \{y_{\o,t}(X_i)-\expect(y_{\o,t}(X_i))\}
	\end{equation*}
	and therefore we need to show that $\mcF$ is {$P$}-Donsker. Note that $\mcF$ is a measurable class of indicator functions, uniformly bounded above by the fixed envelope $G(x) \equiv 1$. 
	Theorem~5.7 in \cite{gine:84} provides necessary and sufficient conditions for establishing the Donsker property of classes of sets and equivalently the classes of indicator functions of the corresponding sets, however condition (ii) of this theorem 
	is troublesome and we therefore use a slightly stronger result (given in Corollary~6.5), which is established by bounding the metric entropy with inclusion of the classes of sets. 
	
	First we establish that $\int_{0}^\infty \sqrt{\log N_{[]}(\eps, \mcF, L_{1}(P))} d\eps < \infty$. By Lemma~\ref{lma:donsker_auxiliary} this implies that $\mcF$ is pre-Gaussian,  which verifies condition (i) of Corollary~6.5 of \cite{gine:84}. Let $\mathcal{C}=\{\mathcal{C}_{(\o,t)}: \, \o \in \O,\, t \in \real\}$,  where $\mathcal{C}_{(\o,t)}=\{x: \, d(\o,x)\leq t\}$ is the collection of sets generating the function class $\mcF$,  and define $\mathcal{C}'_{\eps} = \{(A\cap B^c) \cup (A^c\cap B): A,B\in\mathcal{C},\, P((A\cap B^c) \cup (A^c\cap B)) \le\eps\}$. Note that each set in $\mathcal{C}$ and also in $\mathcal{C}'_\eps$ belongs to the Borel sigma algebra of $\O$, which we denote by $\mathcal{B}$. 
	For $\mathcal{C}'_\eps$, define $M_0(\mathcal{C}'_{\eps},\eps, P)$ by
	\begin{align*}
	M_0(\mathcal{C}'_\eps,\eps, P) = \inf\{&r : r \geq 1,\, \exists A_1, A_2, \dots, A_r \in \mathcal{B} \ \text{such that for all} \ A \in \mathcal{C}'_\eps \\ & \text{and for some} \ j,\, A \subset A_j \ \text{and} \ P(A_j \backslash A) \leq \eps/2 \}.
	\end{align*}
	
	We establish condition (ii) of Corollary~6.5 of \cite{gine:84} by showing that \newline $n^{-1/2} \log M_0(\mathcal{C}'_{\eps/\sqrt{n}},\eps/\sqrt{n}, P) \rightarrow 0$ as $n \rightarrow \infty$ for all $\eps > 0$, which together with condition (i) implies that $\mcF$ is $P$-Donsker. 
	To achieve this, we use another quantity $N_I(\eps,\mathcal{C},P)$, the metric entropy with inclusion of $\mathcal{C}$ \citep{dudl:78}, which is defined as
	\begin{align*}
	N_I(\eps,\mathcal{C}, P) = \inf\{&r : r \geq 1,\, \exists B_1, B_2, \dots, B_r \in \mathcal{B} \ \text{such that for all} \ A \in \mathcal{C} \\ & \text{and for some} \ i,j,\, B_i \subset A \subset B_j \ \text{and} \ P(B_j \backslash B_i) < \eps \}.
	\end{align*}
	Let $A,B \in \mathcal{C}$ 
	and $A_\star, A^\star, B_\star, B^\star \in \mathcal{B}$ be such that $A_\star \subset A \subset A^\star$, $B_\star \subset B \subset B ^\star$, $P(A^\star \backslash A_\star) < \eps/4$ and $P(B^\star \backslash B_\star) < \eps/4$. Then, $(A\cap B^c) \cup (A^c\cap B) \subseteq (A^\star \cup B^\star) \backslash (A_\star \cap B_\star)$ and $[(A^\star \cup B^\star) \backslash (A_\star \cap B_\star)] \backslash [(A\cap B^c) \cup (A^c\cap B)] \subseteq (A^\star \backslash A_\star) \cup (B^\star \backslash B_\star)$. Hence, 
	\begin{align*}
	& P\left([(A^\star \cup B^\star) \backslash (A_\star \cap B_\star)] \backslash [(A\cap B^c) \cup (A^c\cap B)]\right) \\ 
	&\leq P\left(A^\star \backslash A_\star \right)+P\left(B^\star \backslash B_\star \right) 
	\leq \eps/2.
	\end{align*}
	This implies that $M_0(\mathcal{C}'_{\eps},\eps, P) \leq N_I(\eps/4,\mathcal{C},P)^4$. Therefore to obtain the result, it is enough to show that $n^{-1/2} \log N_I(\eps/\sqrt{n},\mathcal{C}, P) \rightarrow 0$ as $n \rightarrow \infty$ for all $\eps > 0$.
	
	Let $\eps'=\eps/(4\overline{\Delta})$ and $\{\o_1, \o_2, \dots, \o_{N_{\eps'}}\}$ be an $\eps'$-net of $\O$ where $ N_{\eps'} = N(\eps',\O,d)$, i.e., $\O \subset \cup_{j=1}^{N_{\eps'}} {B_{\eps'}}(\o_j) $ where ${B_{\eps'}}(\o_j)$ is a ball of radius $\eps'$ centered at $\o_j$. Let ${\{t_{i,j}\}}_{i=1}^{m_j}$, $j=1, \dots, N_{\eps'}$ be such that for each fixed $j$, $[\I\{d(\o_j,\cdot) \leq t_{i-1,j}\}, \I\{d(\o_j,\cdot) \leq t_{i,j}\}]$, $i=1, \dots, m_j,$ form $\eps/2$-brackets in $L_1(P)$ for the function class $\I\{d(\o_j,\cdot) \leq t\}$, $t \in \real$. These brackets are obtained by taking ${\{t_{i,j}\}}_{i=1}^{m_j}$ such that
	\begin{equation*}
	\lim_{t\uparrow t_{i,j}}F_{\o_j}(t) - F_{\o_j}(t_{i-1,j}) < \eps/2. 
	\end{equation*}
	The total number $m_j$ of $\eps/2$-brackets in $L_1(P)$ can be upper bounded by $4/\eps$ for any $j$. 
	
	Let $\o \in \O$ and $t \in R$. First we locate $\o_j \in \{\o_1, \o_2, \dots, \o_{N_{\eps'}}\}$ such that $d(\o,\o_j) < \eps'$. The triangle inequality implies $\I\{d(\o_j,\cdot) \leq t-\eps'\} \leq \I\{d(\o,\cdot) \leq t\} \leq \I\{d(\o_j,\cdot) \leq t+\eps' \}$. By definition, $\I\{d(\o_j,\cdot) \leq t+\eps'\} \leq \I\{d(\o_j,\cdot) \leq t_{i,j}+\eps'\} $ and $\I\{d(\o_j,\cdot) \leq t-\eps'\} \geq \I\{d(\o_j,\cdot) \leq t_{i-1,j}-\eps'\} $ whenever $t_{i-1,j} \leq t \leq t_{i,j}$. Therefore the brackets $[\I\{d(\o_j,\cdot) \leq t_{i-1,j}-\eps'\},\I\{d(\o_j,\cdot) \leq t_{i,j}+\eps']$, $i=1,\dots,m_j$, $j=1, \dots, N_{\eps'}$ cover $\mcF$. The $L_1(P)$ width of the brackets are upper bounded by $\eps$; specifically,
	\begin{align}
	& |F_{\o_j}(t_{i,j}+\eps')-F_{\o_j}(t_{i-1,j}-\eps')| \nn \\ 
	& \leq |f_{\o_j}(t_u) \eps' + F_{\o_j}(t_{i,j}) - F_{\o_j}(t_{i-1,j}) - f_{\o_j}(t_l) \eps'| \label{ineq1} \\ & \leq 2 \overline{\Delta} \eps' + \eps/2 = \eps, \label{ineq2}
	\end{align}
	for some $t_u \in [t_{i,j}, t_{i,j}+\eps']$ and $t_l \in [t_{i-1,j}-\eps', t_{i-1,j}]$. Here, \eqref{ineq1} follows from the mean value theorem and \eqref{ineq2} follows from Assumption \ref{ass:dpfctn}. 
	Therefore the $L_1(P)$ bracketing number $N_{[]}(\eps, \mcF, L_1(P))$ is bounded above by $4N_{\eps'}/\eps$ for any $0 < \eps \leq 1$. 
	For any $\eps>1$, any function in $\mcF$ can be uniformly bracketed by the bracket $[0,1]$ whose $L_1(P)$ width is 1, hence less than $\eps$. Therefore,
	\begin{equation*}
	N_{[]}(\eps, \mcF, L_1(P)) \leq 4\eps\inv N(\eps/(4\overline{\Delta}), \O, d) \ \I\{0<\eps\leq1\}+ \I\{\eps>1\}.
	\end{equation*}
	In conjunction with Assumption \ref{ass:entropy}, this implies 
	\begin{align*}
	\int_{0}^{\infty} \sqrt{\log N_{[]}(\eps, \mcF, L_{1}(P))} \diffop\eps \leq \int_{0}^{1} \sqrt{\log(4\eps\inv)} \diffop\eps + \int_{0}^{1} \sqrt{\log N(\eps/(4\overline{\Delta}), \O, d)} \diffop\eps < \infty.
	\end{align*}
	
	Finally note that for any $C_{(\o,t)} \in \mathcal{C}$, $C_{(\o_j,t_{i-1,j}-\eps')} \subset C_{(\o,t)} \subset C_{(\o_j,t_{ij}+\eps')}$ whenever $d(\o,\o_j)<\eps'$ and $t_{i-1,j} \leq t \leq t_{i,j}$ as described above. Note that
	\begin{align*}
	& P(C_{(\o_j,t_{ij}+\eps')} \backslash C_{(\o_j,t_{i-1,j}-\eps')}) 
	= F_{\o_j}(t_{i,j}+\eps')-F_{\o_j}(t_{i-1,j}-\eps') < \eps,
	\end{align*}
	by \eqref{ineq2}. Hence $N_I(\eps,\mathcal{C},P) \leq 4 N(\eps/(4\overline{\Delta}), \O, d)/\eps,$ which implies that $$N_I(\eps/\sqrt{n},\mathcal{C},P) \leq 4 \sqrt{n} N(\eps/(4\overline{\Delta} \sqrt{n}), \O, d)/\eps.$$
	Therefore
	\begin{equation}
	n^{-1/2} \log N_I(\eps/\sqrt{n},\mathcal{C}, P) \leq n^{-1/2} \log(4\sqrt{n}/\eps)+ n^{-1/2} \log N(\eps/(4\overline{\Delta} \sqrt{n}), \O, d). \label{inclusion}
	\end{equation}
	For any $\eps>0$, the right hand side in \eqref{inclusion} converges to zero as $n \rightarrow \infty$ by Assumption \ref{ass:entropy}, completing the proof.
\end{proof}

\begin{Corollary}
	\label{thm:qhat}
	Under Assumptions \ref{ass:entropy} and \ref{ass:dpfctn}, 
	\begin{equation*}
	{\sqrt{n}} \sup_{\o \in \O} \sup_{u \in [0,1]} \left| \hfo^{-1} (u)-\fo^{-1} (u) \right|  = O_{\prob}(1).
	\end{equation*}
\end{Corollary}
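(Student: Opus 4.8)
The plan is to deduce the quantile bound from the cumulative distribution function bound that Theorem~\ref{thm:fhat} already delivers, and then to invert it using the lower density bound in Assumption~\ref{ass:dpfctn}.

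First I would record that Theorem~\ref{thm:fhat} asserts exactly that $\mcF$ is $P$-Donsker, i.e. $\{\sqrt{n}(\hfo(t)-\fo(t)):\o\in\Om,\,t\in\real\}$ converges weakly in $\ell^{\infty}(\Om\times\real)$ to the tight, zero-mean Gaussian process $\mathbb{G}_P$. Since the map $z\mapsto\sup_{\o,t}|z(\o,t)|$ is continuous on $\ell^{\infty}(\Om\times\real)$, the continuous mapping theorem gives $\sqrt{n}\,\sup_{\o\in\Om}\sup_{t\in\real}|\hfo(t)-\fo(t)|\wc\sup_{\o,t}|\mathbb{G}_P(\o,t)|$, and the latter is almost surely finite by tightness of $\mathbb{G}_P$. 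Hence $\eps_n:=\sup_{\o\in\Om}\sup_{t\in\real}|\hfo(t)-\fo(t)|=O_{\prob}(n^{-1/2})$.

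Next comes the deterministic inversion, carried out for each fixed $\o$. From $\fo(t)-\eps_n\le\hfo(t)\le\fo(t)+\eps_n$ for all $t$ and the definition $\hfo^{-1}(u)=\inf\{x:\hfo(x)\ge u\}$, one obtains
\[
\fo^{-1}\big((u-\eps_n)\vee 0\big)\ \le\ \hfo^{-1}(u)\ \le\ \fo^{-1}\big((u+\eps_n)\wedge 1\big),\qquad u\in(0,1).
\]
Because $\fo$ is absolutely continuous with density at least $\underline{\Delta}_\o$ on its (bounded, since $\Om$ is totally bounded) support, the quantile function $\fo^{-1}$ is Lipschitz with constant $\underline{\Delta}_\o^{-1}$ on the interior of its range, so the two inequalities above combine to $\sup_{u\in[0,1]}|\hfo^{-1}(u)-\fo^{-1}(u)|\le\eps_n/\underline{\Delta}_\o$. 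If $\underline{\Delta}:=\inf_{\o\in\Om}\underline{\Delta}_\o>0$, then taking $\sup_{\o}$ and multiplying by $\sqrt{n}$ closes the argument via the first step.

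The hard part will be Step~3, namely upgrading the pointwise positivity $\underline{\Delta}_\o>0$ furnished by Assumption~\ref{ass:dpfctn} to the uniform positivity $\underline{\Delta}>0$: the quantity $\eps_n/\underline{\Delta}_\o$ is controlled uniformly in $\o$ only through $(\inf_\o\underline{\Delta}_\o)^{-1}$, and a small sup-norm perturbation of a cdf can create a nearly flat stretch, so $\o\mapsto\underline{\Delta}_\o$ is \emph{not} governed by the (Lipschitz) continuity of $\o\mapsto\fo$ in the sup-norm alone; indeed $\underline{\Delta}_\o^{-1}=\|(\fo^{-1})'\|_{\infty}$, so what is needed is precisely a uniform Lipschitz bound for $\fo^{-1}$ in its argument $u$, uniformly in $\o$. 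I would therefore take $\underline{\Delta}:=\inf_{\o\in\Om}\underline{\Delta}_\o>0$ as the operative reading of Assumption~\ref{ass:dpfctn} (the natural uniform strengthening of its pointwise lower bound, which on a compact $\Om$ follows from joint continuity of $(\o,t)\mapsto f_\o(t)$). With $\underline{\Delta}>0$ in hand the proof is complete. Everything else is routine: Step~1 is a one-line consequence of Theorem~\ref{thm:fhat}, and Step~2 is the textbook cdf-to-quantile inversion.
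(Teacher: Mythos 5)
Your argument is correct and takes a genuinely different route from the paper. The paper proves this as a functional-delta-method corollary: it cites Lemma~3.9.23 of van der Vaart and Wellner for the Hadamard differentiability of the inversion $G\mapsto G^{-1}$ at a cdf with continuous, strictly positive derivative, then applies the delta method (Theorem~3.9.4 there) to the Donsker statement of Theorem~\ref{thm:fhat}, obtaining weak convergence of the quantile process $\{\sqrt n(\hfo^{-1}(u)-\fo^{-1}(u))\}$ to a zero-mean Gaussian process whose covariance involves $1/f_\o(\fo^{-1}(u))$. Your proof replaces this machinery with the elementary cdf-to-quantile inversion: you pass through the sup-norm bound $\eps_n = O_\prob(n^{-1/2})$ from the Donsker theorem, sandwich $\hfo^{-1}$ between $\fo^{-1}((u\pm\eps_n)\vee 0 \wedge 1)$, and use Lipschitz continuity of $\fo^{-1}$ with constant $\underline\Delta_\o^{-1}$. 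The trade-off is the usual one: the delta method yields the finer statement of weak convergence to an explicit Gaussian limit, whereas your argument delivers only the tightness rate --- but the tightness rate is precisely what the corollary asserts, so your proof is tighter in scope and does not rely on the (nontrivial) Hadamard-differentiability lemma.

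Your Step~3 observation is well taken and applies equally to the paper's argument, not just yours. Assumption~\ref{ass:dpfctn} as written gives $\underline\Delta_\o>0$ pointwise but bounds only $\overline\Delta_\o$ uniformly. Both proofs need $\inf_\o\underline\Delta_\o>0$: in your approach it controls $\sup_\o(\eps_n/\underline\Delta_\o)$; in the paper's approach, the delta method is applied to the map $(F_\o)_{\o\in\O}\mapsto(F_\o^{-1})_{\o\in\O}$ from $\ell^\infty(\O\times\real)$ to $\ell^\infty(\O\times(0,1))$, and a uniformly bounded Hadamard derivative (hence tightness of the limiting Gaussian process with covariance involving $1/f_\o(\fo^{-1}(u))$) again requires a uniform lower bound on the densities. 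You are more explicit than the paper about where this uniformity enters, which is a feature rather than a bug; the reading $\inf_\o\underline\Delta_\o>0$ is the one that makes the corollary (and several downstream results, e.g.\ Lemma~\ref{lma:UC}) go through.
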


\begin{proof}[Proof of Corollary~\ref{thm:qhat}]
	Let $a=0$ and $b=\sup_{\o,\o' \in \O}d(\o,\o')$ where $b<\infty$ as $\O$ is bounded. Let $D[a,b]$ be the Banach space of all cadlag functions $z\colon [a,b]\rightarrow \real$ equipped with the uniform norm and $\mathbb{D}_2$ be the set of distribution functions of measures that concentrate on $(a,b]$. For any distribution function $\tilde{G}$ that is continuously differentiable on $[a,b]$ with a strictly positive derivative $f$, Lemma~3.9.23 in \cite{well:96} states that the inversion $G \mapsto G\inv$ as a map from $\mathbb{D}_2 \subset D[a,b] \rightarrow l^{\infty}(0,1)$ is Hadamard-differentiable at $\tilde{G}$ tangentially to $C[a,b]$, the space of continuous functions on $[a,b]$. 
	The derivative is the map $\alpha \mapsto -(\alpha/f) \circ \tilde{G}\inv$. 
	Applying the delta method \citep[Theorem~3.9.4,][]{well:96} for the asymptotic result obtained in Theorem~\ref{thm:fhat}, $\left \lbrace \sqrt{n} (\hfo\inv(u)-F_\o\inv(u)): \o \in \O, u \in (0,1) \right \rbrace$ converges weakly to a zero mean Gaussian process with covariance $$D_{(\o_1,u_1),(\o_2,u_2)}=\Cov \left(\frac{y_{\o_1,F_{\o_1}\inv(u_1)}(X)}{f_{\o_1}(F_{\o_1}\inv(u_1))},\frac{y_{\o_2,F_{\o_2}\inv(u_2)}(X)}{f_{\o_2}(F_{\o_2}\inv(u_2))}\right),$$ where the covariance function $D_{(\o_1,u_1),(\o_2,u_2)}$ is finite as we assume that $f_\o$ is strictly positive for any $\o \in \O$ as described in Assumption \ref{ass:dpfctn}. This completes the proof.
	
\end{proof}

\begin{Lemma}
	\label{lma:UC}
	For any $\o_1,\o_2 \in \O$ under Assumption \ref{ass:dpfctn},
	\bea
	&&\sup_{u \in [0,1]} |F^{-1}_{\o_1}(u)-F^{-1}_{\o_2}(u)| \leq d(\o_1,\o_2),\\
	&&\sup_{u \in [0,1]} |\widehat{F}^{-1}_{\o_1}(u)-\widehat{F}^{-1}_{\o_2}(u)| \leq d(\o_1,\o_2) \ \text{(almost surely)} \ \text{and}\\
	&&\sup_{u \in [0,1]} |F_{\o_1}(u)-F_{\o_2}(u)| \leq \overline{\Delta} d(\o_1,\o_2).
	\eea
\end{Lemma}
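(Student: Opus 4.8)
The plan is to derive all three bounds from the fact that $\o\mapsto d(\o,x)$ is $1$-Lipschitz uniformly in $x$: the triangle inequality gives $|d(\o_1,x)-d(\o_2,x)|\le d(\o_1,\o_2)$ for every $x\in\O$. Writing $\delta=d(\o_1,\o_2)$, this yields the pointwise inclusions $\{x:d(\o_1,x)\le t\}\subseteq\{x:d(\o_2,x)\le t+\delta\}$, together with the symmetric inclusion obtained by interchanging $\o_1$ and $\o_2$, valid for every $t\in\real$. Taking $P$-measure gives the horizontal-shift comparisons $F_{\o_1}(t)\le F_{\o_2}(t+\delta)$ and $F_{\o_2}(t)\le F_{\o_1}(t+\delta)$ for all $t$; applying the same inclusions to each observation $\obj_i$ and averaging over $i=1,\dots,n$ gives, for every realization of the sample, the identical pair of inequalities with $F_{\o_1},F_{\o_2}$ replaced by $\widehat{F}_{\o_1},\widehat{F}_{\o_2}$.

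Next I would transfer these to the quantile functions using the elementary duality between a shift inequality for cdfs and the corresponding inequality for their generalized inverses $F\inv(u)=\inf\{x:F(x)\ge u\}$: if $G,H$ are cdfs with $G(t)\le H(t+\delta)$ for all $t$, then $H\inv(u)\le G\inv(u)+\delta$ for all $u\in(0,1)$. Indeed, with $t_0=G\inv(u)$ one has $G(t_0)\ge u$ (since any cdf is right-continuous), hence $H(t_0+\delta)\ge G(t_0)\ge u$ and therefore $H\inv(u)=\inf\{x:H(x)\ge u\}\le t_0+\delta=G\inv(u)+\delta$. Applying this twice, with $(G,H)=(F_{\o_1},F_{\o_2})$ and $(G,H)=(F_{\o_2},F_{\o_1})$, and combining gives $|F_{\o_1}\inv(u)-F_{\o_2}\inv(u)|\le\delta$ for every $u\in(0,1)$; taking the supremum over $u\in[0,1]$ proves the first assertion. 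The second assertion follows by the verbatim argument applied to $\widehat{F}_{\o_1},\widehat{F}_{\o_2}$, using that $\widehat{F}(\widehat{F}\inv(u))\ge u$ as well, and holds for every sample realization (hence in particular almost surely).

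For the third assertion I would invoke Assumption~\ref{ass:dpfctn}, under which each $F_\o$ is absolutely continuous with density bounded above by $\overline{\Delta}$. From $F_{\o_1}(t)\le F_{\o_2}(t+\delta)$,
\[
F_{\o_1}(t)-F_{\o_2}(t)\le F_{\o_2}(t+\delta)-F_{\o_2}(t)=\int_t^{t+\delta}f_{\o_2}(s)\,\diffop s\le\overline{\Delta}\,\delta,
\]
and symmetrically $F_{\o_2}(t)-F_{\o_1}(t)\le\int_t^{t+\delta}f_{\o_1}(s)\,\diffop s\le\overline{\Delta}\,\delta$; taking the supremum over $t\in\real$, which dominates the supremum in the stated bound, gives $\sup_{t}|F_{\o_1}(t)-F_{\o_2}(t)|\le\overline{\Delta}\,d(\o_1,\o_2)$.

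The argument has no real obstacle; the one conceptual point to isolate is that the Lipschitz property of $\o\mapsto d(\o,\cdot)$ produces a \emph{horizontal} translation bound between distance profiles rather than a vertical one, after which the first two bounds are pure cdf/quantile bookkeeping (needing only right-continuity of cdfs, not Assumption~\ref{ass:dpfctn}) and the third follows by integrating the uniform density bound over an interval of length $d(\o_1,\o_2)$. The only technical care required concerns the one-sided conventions for the generalized inverse, i.e., keeping the inequality $F(F\inv(u))\ge u$ oriented correctly.
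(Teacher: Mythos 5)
Your proof is correct and follows essentially the same route as the paper: the triangle inequality $|d(\o_1,x)-d(\o_2,x)|\le d(\o_1,\o_2)$ gives the horizontal comparison $F_{\o_1}(t)\le F_{\o_2}(t+\delta)$ and its mirror image, the quantile bound follows by transferring this to generalized inverses, and the cdf bound follows from the uniform density bound. Your handling of the quantile step is, however, a bit sharper than the paper's. The paper sets $y=F_{\o_1}\inv(u)$, writes $u=F_{\o_1}(y)$ (legitimate in the population case under Assumption~\ref{ass:dpfctn}, which makes $F_\o$ a continuous strictly increasing bijection on its support), and then for the empirical version reuses the same two-sided chain $\widehat F_{\o_2}(y-\delta)\le u\le\widehat F_{\o_2}(y+\delta)$; but averaging the indicator inclusions only yields $\widehat F_{\o_2}(y-\delta)\le\widehat F_{\o_1}(y)\le\widehat F_{\o_2}(y+\delta)$ together with $\widehat F_{\o_1}(y)\ge u$, which delivers the right-hand inequality but not the left, since a step-function cdf can have $\widehat F_{\o_1}(y)>u$. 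Your one-sided duality $G(t)\le H(t+\delta)\Rightarrow H\inv(u)\le G\inv(u)+\delta$, valid for any right-continuous cdf, applied once in each direction with $\o_1,\o_2$ interchanged, closes this small gap and handles the population and empirical cases in a single stroke. Your remark that the first two bounds need only right-continuity of cdfs, with Assumption~\ref{ass:dpfctn} entering only through the density bound in the third inequality, is also accurate; the paper's use of the mean value theorem there is equivalent to your direct integration of $f_{\o}$ over $[t,t+\delta]$.
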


\begin{proof}[Proof of Lemma~\ref{lma:UC}]
	Let $y=F^{-1}_{\o_1}(u)$. Then under Assumption~\ref{ass:dpfctn}, $u=F_{\o_1}(y)=\prob(d(\o_1,X)\leq y)$. For any $\o_1,\o_2\in\O$ {such that $d(\o_1,\o_2)>0$}, one has
	\begin{equation} \label{lipschitz}
	\prob(d(\o_2,X)\leq y-d(\o_1,\o_2)) \leq \prob(d(\o_1,X)\leq y) \leq \prob(d(\o_2,X)\leq y+d(\o_1,\o_2)),
	\end{equation}
	which is equivalent to $F_{\o_2}(y-d(\o_1,\o_2)) \leq u \leq F_{\o_2}(y+d(\o_1,\o_2))$. This implies that $y - d(\o_1,\o_2) \leq F^{-1}_{\o_2}(u) \leq y+d(\o_1,\o_2)$ or equivalently, $F^{-1}_{\o_1}(u) - d(\o_1,\o_2) \leq F^{-1}_{\o_2}(u) \leq F^{-1}_{\o_1}(u)+d(\o_1,\o_2)$. Therefore one has that $\sup_{u \in [0,1]} |F^{-1}_{\o_1}(u) - F^{-1}_{\o_2}(u)| < d(\o_1,\o_2)$. 
	
	For the sample version let $y=\widehat{F}^{-1}_{\o_1}(u)$. One has almost surely for any $i$,
	\begin{equation*}
	\I(d(\o_2,X_i)\leq y-d(\o_1,\o_2)) \leq \I(d(\o_1,X_i)\leq y) \leq \I(d(\o_2,X_i)\leq y+d(\o_1,\o_2)),
	\end{equation*}
	which implies that $\widehat{F}_{\o_2}(y-d(\o_1,\o_2)) \leq u \leq \widehat{F}_{\o_2}(y+d(\o_1,\o_2))$ almost surely. The rest follows using previous arguments.
	
	Finally observe that using the mean value theorem, \eqref{lipschitz} implies that
	\begin{equation*}
	F_{\o_2}(y)-f_{\o_2}(\tilde{y}_1) d(\o_1,\o_2) \leq F_{\o_1}(y) \leq F_{\o_2}(y)+f_{\o_2}(\tilde{y}_2) d(\o_1,\o_2) 
	\end{equation*}
	for some $\tilde{y}_1 \in (y-d(\o_1,\o_2),y)$ and $\tilde{y}_2 \in (y,y+d(\o_1,\o_2))$. This concludes the proof as $\sup_{y} f_{\o_2}(y) \leq \overline{\Delta}$ using Assumption~\ref{ass:dpfctn}. 
\end{proof}

\begin{proof}[Proof of Theorem~\ref{thm:Rhat}]
	To establish the convergence of $\hrank_\o$, we introduce an oracle version of $\hrank_\o$  
	that is composed of independent summands and is given by
	\begin{equation}\label{eq:tildeR}
	\trank_{\o} = \expit\left[\frac{1}{n}\sum_{\subidx=1}^{n} \left\{\int_0^1\left[{F}_{X_i}\inv(u)-{F}_{\o}\inv(u)\right]\diffop u\right\}\right].
	\end{equation}
	First we establish that $\sqrt{n}\sup_{\o \in \O}|\hrank_\o-\trank_\o| = O_{\prob}(1)$ and then show that $\sqrt{n}\sup_{\o \in \O}|\trank_\o-\rank_\o| = O_\prob(1)$. The triangle inequality then leads to $\sqrt{n}\sup_{\o \in \O}|\hrank_\o-\rank_\o| = O_\prob(1)$ and hence the result.\vspace{.3cm}

	\noindent \textit{\underline{Step 1:}} $\sqrt{n}\sup_{\o \in \O}|\hrank_\o-\trank_\o| = O_\prob(1)$.
	\\Note that the $\expit(\cdot)$ function is uniformly Lipschitz with the Lipschitz constant upper bounded by $1$. Hence
	\begin{align*}
	\left| \hrank_\o - \trank_\o \right| \leq & \left| \frac{1}{n}\sum_{\subidx=1}^{n} \int_0^1\left[F_{X_i}\inv(u)-\hat{F}_{X_i}\inv(u)-F_{\o}\inv(u)+\hat{F}_{\o}\inv(u)\right]\diffop u \right| \\ \leq & 2 \sup_{\o \in \O} \sup_{u \in [0,1]} \left|\hfo\inv(u)-{F}_{\o}\inv(u)\right|.
	\end{align*}
	By Corollary~\ref{thm:qhat} the proof of Step 1 is complete.\vspace{.3cm}
	
	\noindent \textit{\underline{Step 2:}} $\sqrt{n}\sup_{\o \in \O}|\trank_\o-\rank_\o| = O_\prob(1).$
	\\
	For Step 2, note that by the Lipschitz property of the $\expit(\cdot)$ function,
	\begin{equation}
	\label{rank_upp_bound}
	\left| \trank_\o-\rank_\o \right| \leq \left| \frac{1}{n} \sum_{i=1}^n \{h_\o(X_i)-\expect(h_\o(X_i))\} \right|,
	\end{equation}
	where $h_\o(x)=\int_0^1\left[F_{x}\inv(u)-F_{\omega}\inv(u)\right]\diffop u$. 
	The process \newline  $\{\sqrt{n}\left(\frac{1}{n} \sum_{i=1}^n \{h_\o(X_i)-\expect(h_\o(X_i))\} \right): \o \in \O \}$ is an empirical process indexed by the function class $\mathcal{H}=\{h_\o: \o \in \O\}$,
	where the functions in $\mathcal{H}$ are uniformly bounded by $2\mathrm{diam}(\O)$. 
	By Theorem~2.14.2 in \cite{well:96} and Markov's inequality, it is enough to show that
	\begin{equation} \label{ent} 
	\int_0^1 \sqrt{1+\log N_{[]}(2\mathrm{diam}(\O)\eps,\mathcal{H},L_2(P))} \diffop{\eps} < \infty
	\end{equation}
	to establish that $\{\sqrt{n}\left|\frac{1}{n} \sum_{i=1}^n \{h_\o(X_i)-\expect(h_\o(X_i))\} \right|=O_{\prob}(1)$, which in conjunction with \eqref{rank_upp_bound} completes the proof of Step 2. Observe that by Lemma~\ref{lma:UC},
	\begin{align}
	\label{hbound}
	|h_\o(x)-h_{\o^\star}(x)| = \left| \int_0^1\left[F_{\o^\star}\inv(u)-F_{\omega}\inv(u)\right]\diffop u \right| \leq d(\o^\star,\o).
	\end{align}
	This leads to $h_\o(x) \in [h_{\o^\star}(x) -\mathrm{diam}(\O)\eps, h_{\o^\star}(x) + \mathrm{diam}(\O)\eps]$ whenever $\o^\star$ is such that $d(\o,\o^\star) < \mathrm{diam}(\O) \eps$. Therefore one has
	\begin{equation*}
	\log N_{[]}(2\mathrm{diam}(\O)\eps,\mathcal{H},L_2(P)) \leq \log N\left(\mathrm{diam}(\O)\eps,\O,d\right).
	\end{equation*}
	By Assumption \ref{ass:entropy}, $$\int_0^1 \sqrt{ \log N\left(\mathrm{diam}(\O)\eps,\O,d\right)} \diffop\eps < \infty,$$ which concludes the proof. 
\end{proof}

\begin{Lemma}
	\label{lma:LUC}
	Under Assumption \ref{ass:dpfctn}, for any $\o_1,\o_2 \in \O$ 
	\begin{equation*}
	|\rank_{\o_1}-\rank_{\o_2}| \leq d(\o_1,\o_2).
	\end{equation*}
\end{Lemma}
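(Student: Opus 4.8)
The plan is to reduce the claim to the uniform Lipschitz bound on quantile functions already recorded in Lemma~\ref{lma:UC}, using only the elementary fact that $\expit$ is $1$-Lipschitz on $\real$ (indeed $\expit'(x)=\expit(x)(1-\expit(x))\le 1/4$). First I would recall the definition $\rank_\omega = \expit\big[\expect\{\int_0^1[\fobj\inv(u)-F_\omega\inv(u)]\diffop u\}\big]$, where $X\sim P$, so that the Lipschitz property of $\expit$ gives
\begin{equation*}
 |\rank_{\o_1}-\rank_{\o_2}| \le \left| \expect\left\{\int_0^1[\fobj\inv(u)-F_{\o_1}\inv(u)]\diffop u\right\} - \expect\left\{\int_0^1[\fobj\inv(u)-F_{\o_2}\inv(u)]\diffop u\right\}\right|.
\end{equation*}

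Next I would observe that the $\fobj\inv(u)$ terms cancel, leaving $\big|\expect\int_0^1[F_{\o_2}\inv(u)-F_{\o_1}\inv(u)]\diffop u\big|$, and then bring the absolute value inside the expectation and the integral by the triangle inequality (Jensen), obtaining the upper bound $\expect\int_0^1 |F_{\o_1}\inv(u)-F_{\o_2}\inv(u)|\diffop u$. Finally I would invoke the first inequality of Lemma~\ref{lma:UC}, namely $\sup_{u\in[0,1]}|F_{\o_1}\inv(u)-F_{\o_2}\inv(u)|\le d(\o_1,\o_2)$, which holds under Assumption~\ref{ass:dpfctn}; this makes the integrand bounded by the constant $d(\o_1,\o_2)$, so the integral is at most $d(\o_1,\o_2)$, and since the bound is deterministic the expectation is also at most $d(\o_1,\o_2)$. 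Chaining these inequalities yields $|\rank_{\o_1}-\rank_{\o_2}|\le d(\o_1,\o_2)$.

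There is essentially no real obstacle here: the substantive work — the triangle-inequality sandwiching $F_{\o_2}(y-d(\o_1,\o_2))\le F_{\o_1}(y)\le F_{\o_2}(y+d(\o_1,\o_2))$ that converts closeness of $\o_1,\o_2$ into closeness of the quantile functions — has already been carried out in Lemma~\ref{lma:UC}. The only points requiring a line of care are the cancellation of the common $\fobj\inv$ term (valid because $\expect$ and $\int_0^1$ are linear and the relevant integrals are finite under boundedness of $\Omega$) and the justification that the outer expectation does not inflate the deterministic bound. Both are routine.
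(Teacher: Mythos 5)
Your proof is correct and follows essentially the same route as the paper's: both hinge on the $1$-Lipschitz property of $\expit$ together with the quantile-function bound $\sup_{u}|F_{\o_1}\inv(u)-F_{\o_2}\inv(u)|\le d(\o_1,\o_2)$ from Lemma~\ref{lma:UC}, after noting that the common $\fobj\inv$ term cancels. The paper merely packages this via the auxiliary function $h_\o(x)=\int_0^1[F_x\inv(u)-F_\o\inv(u)]\diffop u$ and its bound in \eqref{hbound}, so the argument is identical in substance.
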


\begin{proof}[Proof of Lemma~\ref{lma:LUC}]
	We borrow notations from Step 2 in the proof of Theorem \ref{thm:Rhat}. Observe that by the Lipschitz property of the $\expit(\cdot)$ function and by \eqref{hbound} one has
	\begin{align*}
	\lvert \rank_{\o_1}-\rank_{\o_2} \rvert
	\leq  \left\lvert \expect \left( h_{\omega_1}(X)-h_{\omega_2}(X)  \right) \right\rvert \leq d(\o_1,\o_2),
	\end{align*}
	which completes the proof.
\end{proof}

\begin{proof}[Proof of Theorem~\ref{thm:Mhat}]
	
	First we show that 
	\begin{gather}
	\sup_{\hop \in \hM,\op \in \M} \left|\rank_{\hop}-\rank_{\op}\right|=o_\prob(1). \label{rank_hop-op}
	\end{gather}
	Applying the triangle inequality,
	\begin{equation*}
	\left|\rank_{\hop}-\rank_{\op}\right| \leq \left|\rank_{\hop}-\hrank_{\hop}\right|+ \left|\hrank_{\hop}-\hrank_{\tiop}\right|+ \left|\hrank_{\tiop}-\rank_{\op}\right|,
	\end{equation*}
	where $\tiop \in \tM \coloneqq \argsup_{\o \in \po} \hrank_\o$, i.e., $\tiop$ belongs to the set of maximizers of the empirical transport rank in $\po$.
	Observe that $$\sup_{\hop \in \hM} \left|\rank_{\hop}-\hrank_{\hop}\right| \leq \sup_{\o \in \O} \left|\rank_{\o}-\hrank_{\o}\right|=o_\prob(1)$$ by Theorem~\ref{thm:Rhat}, which also implies that for any $\tiop \in \tM$, $\sup_{\op \in \M}\left|\hrank_{\tiop}-\rank_{\op}\right|=o_\prob(1)$, since $\left|\hrank_{\tiop}-\rank_{\op}\right|$ can be upper bounded as
	\begin{align*}
	\left|\hrank_{\tiop}-\rank_{\op}\right| 
	&= \left|\sup_{\o \in \po}\hrank_{\o}-\sup_{\o \in \po} \rank_{\o}\right| 
	\leq \sup_{\o \in \po} \left|\hrank_{\o}-\rank_{\o}\right|.
	\end{align*}
	
	Hence it remains to show that $\sup_{\hop \in \hM} \left|\hrank_{\hop}-\hrank_{\tiop}\right|=o_\prob(1)$ for any $\tiop \in \tM$ almost surely. 
	We show that for any $\eta > 0$, $\prob\left(\sup_{\hop \in \hM}\left|\hrank_{\hop}-\hrank_{\tiop}\right|>\eta\right) \rightarrow 0$ {as} $n \rightarrow \infty$. 
	For any $r > 0$, 
	we define events $A_r = \{\tiop \in B^{(r)}\}$ where $B^{(r)}=\cup_{i=1}^n B_r(X_i)$ with $B_r(X_i)=\{\o \in \O: d(\o,X_i) \leq r\}$. 
	Note that $\prob(A_r^C)=(1-F_{\tiop}(r))^n$. 
	By the mean value theorem and Assumption \ref{ass:dpfctn}, $F_{\tiop}(r) > 0$ almost surely, which implies that 
	\begin{gather}\label{ArC}
	\prob(A_r^C) \rightarrow 0,\text{ as }n \rightarrow \infty.
	\end{gather}
	
	When $\tiop \in B^{(r)}$, there exists $j \equiv j(\tiop)$ such that $d(X_{j},\tiop)<r$ almost surely, which by Lemma~\ref{lma:LUC} leads to $|\rank_{X_j}-\rank_{\tiop}| \leq r$ almost surely. 
	Note that $\hrank_{\tiop} \geq \hrank_{\hop} \geq \hrank_{X_j}$ almost surely, where the first inequality holds because $\tiop$ maximizes $\hrank_\o$ in a larger set and the second inequality is implied by the fact that $\hop$ is a within-sample maximizer of $\hrank_\o$. Hence it follows that
	\begin{align*}
	\left|\hrank_{\hop}-\hrank_{\tiop}\right| 
	& \leq \left|\hrank_{\tiop}-\hrank_{X_j}\right| \\ 
	& \leq \left|\hrank_{X_j}-\rank_{X_j}\right|+\left|\rank_{X_j}-\rank_{\tiop}\right|+\left|\rank_{\tiop}-\hrank_{\tiop}\right| \\ 
	& \leq 2 \sup_{\o \in \O} \left|\hrank_\o-\rank_\o\right|+ r \quad \quad  \text{almost surely}.
	\end{align*}
	This leads to the  upper bound 
	\begin{align*}
	& \prob\left(\sup_{\hop \in \hM}\left|\hrank_{\hop}-\hrank_{\tiop}\right|>\eta \right) \\ 
	&\leq \prob\left(\sup_{\hop \in \hM}\left|\hrank_{\hop}-\hrank_{\tiop}\right|>\eta , A_r\right)+\prob(A_r^C) \\ 
	&\leq \prob\left(\sup_{\hop \in \hM}\left|\hrank_{\hop}-\hrank_{\tiop}\right|>\eta, A_r, \sup_{\o \in \O} \left|\hrank_\o-\rank_\o\right| \leq M/\sqrt{n} \right)+\prob(A_r^C)\\
	&\quad+\prob\left(\sup_{\o \in \O} \left|\hrank_\o-\rank_\o\right|>M/\sqrt{n}\right) .
	\end{align*}
	The first term can be made arbitrarily small by choosing $r$ sufficiently small and $n$ sufficiently large such that $r+2M/\sqrt{n} \leq \eta$. 
	In conjunction with \eqref{ArC} and Theorem~\ref{thm:Rhat}, \eqref{rank_hop-op} follows. 
	
	Next we show that for any $\eta > 0$, \begin{gather}\label{MhatConv}
	\prob\left(\rho_H(\hM,\M) > \eta \right) \rightarrow 0,\text{ as }n \rightarrow \infty.
	\end{gather}
	Let $\alpha(\cdot)$ be as defined in Assumption \ref{ass:separation2}. Observe that
	\begin{align*}
	& \prob\left(\rho_H(\hM,\M) > \eta \right) \\ 
	&\leq \prob\left( \text{there exists} \ \hop \in \hM \ \text{and}\ \op \in \M \ \text{such that} \ d(\hop,\op) > \eta \right) \\ 
	&\leq \prob\left( \sup_{\hop \in \hM,\,{\op\in\M}}\left|\rank_{\hop}-\rank_{\op}\right|>\alpha(\eta) \right),
	\end{align*}
	where the second inequality follows from Assumption~\ref{ass:separation2}. Noting that $\alpha(\eta)>0$, \eqref{MhatConv} follows from \eqref{rank_hop-op}, which concludes the proof. 
\end{proof}

\section{Proofs for Section~\ref{sec:test_theory}}
In what follows, for random elements $U_1$ and $U_2$ taking values in $\Omega^k$ and $\Omega^l$ respectively and a functional $g\colon \Omega^k\times \Omega^l \rightarrow \mathbb{R}$, we use notations $\E_{U_1}$ and $\prob_{U_1}$ as 
\begin{align*}
\prob_{U_1}\{ g(U_1,U_2) \le t\} &= G_1(U_2)\quad \text{and} \quad
\E_{U_1}\{ g(U_1,U_2) \} = g_1(U_2)
\end{align*}
where $G_1(u_2)=\prob\{g(U_1,u_2)\le t\}$ and $g_1(u_2) = \E\{ g(U_1,u_2) \}$ for $u_2\in\Omega$.

\subsection{Proof of Theorem \ref{thm: null_dist}} Let $\mcM=[0,M]$ where $M=\diam(\Omega)$ is the diameter of $\Omega$. Define the function 
\begin{equation*}
G^{XY}(u)= \denom[n]\add[in] \wght[Xiu] \left(\hf[XXiu]-\hf[YXiu]\right)^2 
\end{equation*}
and consider the process given by $u\mapsto \frac{nm}{n+m} G^{XY}(u)$ with $u \in \mcM$. Similarly define $G^{YX}(u)$ given by $G^{YX}(u)= \denom[m]\add[im] \wght[Yiu] \left(\hf[YYiu]-\hf[XYiu]\right)^2$ and the corresponding process $u\mapsto \frac{nm}{n+m} G^{YX}(u)$ with $u \in \mcM$. We will establish the asymptotic behaviour of these processes and then use the continuous mapping theorem to derive the limiting distribution of $T^w_{n,m}$ where
\begin{equation*}
T^w_{n,m}= \int \frac{nm}{n+m} \left( G^{XY}(u) + G^{YX}(u) \right) \df{u}.
\end{equation*}
We will focus on deriving the weak limit for $u\mapsto \frac{nm}{n+m} G^{XY}(u)$ as the arguments for $u\mapsto \frac{nm}{n+m} G^{YX}(u)$ follow analogously. 

\vspace{0.1in}

\noindent Define the quantities $\f[XXiu]=\prob_{X'} \left( d(X_i,X') \leq u \right)$ and $\f[YXiu]=\prob_{Y'} \left( d(X_i,Y') \leq u \right)$ where $X'$ is an independent copy of $X_1, \dots, X_n$ generated according to $P_1$ and $Y'$ is an independent copy of $Y_1, \dots, Y_n$ generated according to $P_2$. 
Observe that 
\begin{align*}
G^{XY}(u) & =  \denom[n]\add[in] \wght[Xiu] \left \lbrace \left( \hf[XXiu]-\f[XXiu] \right) - \left( \hf[YXiu] -\f[YXiu] \right)\right \rbrace^2 \\  +  & \frac{2}{n}\add[in] \wght[Xiu] \left \lbrace \left( \hf[XXiu]-\f[XXiu] \right) - \left( \hf[YXiu] -\f[YXiu] \right)\right \rbrace \left \lbrace \f[XXiu]-\f[YXiu] \right \rbrace \\ + &  \denom[n]\add[in] \wght[Xiu] \left \lbrace \f[XXiu] -\f[YXiu] \right \rbrace^2.
\end{align*}
Since under $H_0$: $P_1=P_2$, $\f[XXi\cdot]=\f[YXi\cdot]$ almost surely for each $i=1,\dots,n $, the weak limit of the process $u\mapsto\frac{nm}{n+m} G^{XY}(u)$ with $u \in \mcM$ is identical to the weak limit of the centered process $u\mapsto \frac{nm}{n+m} {G}_c^{XY}(u)$ with $u \in \mcM$,  where 
\begin{equation*}
{G}_c^{XY}(u)=\denom[n]\add[in] \wght[Xiu] \left \lbrace \left( \hf[XXiu]-\f[XXiu] \right) - \left( \hf[YXiu] -\f[YXiu] \right)\right \rbrace^2.
\end{equation*}
In fact one may replace the data adaptive weights $\wght[Xiu]$ with the oracle weights $\wt[Xiu]$ as defined in Assumption~\ref{ass:assumption_weights}. The process with oracle weights given by $u\mapsto \frac{nm}{n+m} \tilde{G}_c^{XY}(u)$ with $u \in \mcM$, where
\begin{equation*}
\tilde{G}_c^{XY}(u)=\denom[n]\add[in] \wt[Xiu] \left \lbrace \left( \hf[XXiu]-\f[XXiu] \right) - \left( \hf[YXiu] -\f[YXiu] \right)\right \rbrace^2,
\end{equation*}
has an identical weak limit as $u\mapsto \frac{nm}{n+m} {G}_c^{XY}(u)$ with $u \in \mcM$. 
To see this, defining $\hat{F}_x^X(u)=\denom[n]\add[jn]\ind[d(x,X_j)\leq u]$ , $\hat{F}_x^Y(u)=\denom[m]\add[km]\ind[d(x,Y_k)\leq u]$, $F_x^X(u)=\prob\left(d(x,X)\leq u\right)$ and $F_x^Y(u)=\prob\left(d(x,Y)\leq u\right)$, one has
\begin{align}
\label{eq:pop_weights}
\frac{nm}{n+m} \sup_{u \in \mcM} | {G}_c^{XY}(u) - \tilde{G}_c^{XY}(u) | \leq  \frac{nm}{n+m} \sup_{x \in \Omega, u \in \mcM} |\hat{w}_x(u)-w_x(u)| \Delta_{XY},
\end{align}
where 
\begin{equation*}
\Delta_{XY}=\left\lbrace \frac{n}{n-1} \sup_{x \in \Omega, u \in \mcM} |\hat{F}_x^X(u)-F_x^X(u)| + \frac{1}{n-1} + \sup_{x \in \Omega, u \in \mcM} |\hat{F}_x^Y(u)-F_x^Y(u)| \right \rbrace^2.
\end{equation*}
The reason is that one may write $\hf[XXiu]= \denom[n-1]\add[jn] \ind[d(X_i,X_j)\leq u]$,  which gives $\left|\hf[XXiu] -\f[XXiu]\right| \leq \frac{n}{n-1}\sup_{x \in \Omega, u \in \mcM} |\hat{F}_x^X(u)-F_x^X(u)|+\denom[n-1]$, and by definition one has $\left|\hf[YXiu] -\f[YXiu]\right| \leq \sup_{x \in \Omega, u \in \mcM} |\hat{F}_x^Y(u)-F_x^Y(u)|$. Combining these two arguments implies equation \eqref{eq:pop_weights}. By Assumption \ref{ass:assumption_weights}, 
$\sup_{x \in \Omega,u \in \mcM} |\hat{w}_x(u)-w_x(u)|=o_{\prob}(1)$. Under Assumptions \ref{ass:assumption_lipschitz} and \ref{ass:entropy}, Theorem~\ref{thm:fhat} holds. Since $m/(n+m) \leq 1$, $\frac{nm}{n+m} \sup_{x \in \Omega, u \in \mcM} |\hat{F}_x^X(u)-F_x^X(u)|^2 = O_{\prob}(1)$ and $\frac{nm}{n+m} \sup_{x \in \Omega, u \in \mcM} |\hat{F}_x^Y(u)-F_x^Y(u)|^2 = O_{\prob}(1)$, leading to  \newline  $\frac{nm}{n+m} \Delta_{XY}=O_{\prob}(1)$ and therefore $\frac{nm}{n+m} \sup_{u \in \mcM} | {G}_c^{XY}(u) - \tilde{G}_c^{XY}(u) | = o_{\prob}(1)$.

\vspace{0.1cm}

\noindent Next observe that $\hf[XXiu]-\f[XXiu] = \denom[n-1]\addneq[ji] \{\ind[d(X_i,X_j) \leq u)]-\f[XXiu] \}$ and $\hf[YXiu]-\f[YXiu] = \denom[m]\add[jm] \{\ind[d(X_i,Y_j) \leq u)]-\f[YXiu] \}$. Hence we can decompose $\tilde{G}_c^{XY}(u)$ into three components as $\tilde{G}_c^{XY}(u)= \I + \II - \III$, where 
\begin{align*}
\I =  \hspace{13cm} \\   \denom[n]\add[in] \frac{\wt[Xiu]}{(n-1)^2} \addneq[ji]\addneq[ki] \left \lbrace \ind[d(X_i,X_j) \leq u)]-\f[XXiu]\right \rbrace \left \lbrace \ind[d(X_i,X_k) \leq u)]-\f[XXiu]\right \rbrace,\\ 
\II = \hspace{13cm}  \\  \denom[n]\add[in] \frac{\wt[Xiu]}{m^2} \add[jm]\add[km] \left \lbrace \ind[d(X_i,Y_j) \leq u)]-\f[YXiu]\right \rbrace \left \lbrace \ind[d(X_i,Y_k) \leq u)]-\f[YXiu]\right \rbrace,\\ 
\III =  \hspace{13cm}  \\  \frac{2}{n}\add[in] \frac{\wt[Xiu]}{m(n-1)} \addneq[ji]\add[km] \left \lbrace \ind[d(X_i,X_j) \leq u)]-\f[XXiu]\right \rbrace \left \lbrace \ind[d(X_i,Y_k) \leq u)]-\f[YXiu]\right \rbrace.
\end{align*}

\noindent We introduce functions $U^{X,X}_{x,y,z}(u)$, $U^{X,Y}_{x,y,z}(u)$ and $U^{Y,Y}_{x,y,z}(u)$, 
\begin{equation}
\label{eq:u_function1}
U^{X,X}_{x,y,z}(u) = w_x(u) \left \lbrace \ind[d(x,y) \leq u]-F^X_x(u)\right \rbrace \left \lbrace \ind[d(x,z) \leq u]-F^X_x(u) \right \rbrace,
\end{equation}
\begin{equation}
\label{eq:u_function2}
{U^{X,Y}_{x,y,z}(u)} = w_x(u) \left \lbrace \ind[d(x,y) \leq u]-F^X_x(u)\right \rbrace \left \lbrace \ind[d(x,z) \leq u]-F^Y_x(u) \right \rbrace
\end{equation}
and
\begin{equation}
\label{eq:u_function3}
{U^{Y,Y}_{x,y,z}(u)} = w_x(u) \left \lbrace \ind[d(x,y) \leq u]-F^Y_x(u)\right \rbrace \left \lbrace \ind[d(x,z) \leq u]-F^Y_x(u) \right \rbrace.
\end{equation}

\vspace{0.2in}

\noindent \underline{\textit{Term $\I$}}: 
We decompose 
\begin{align*}
\I = & \denom[n (n-1)^2 ]\add[in]\addneq[ji]\addneq[ki] \uijk \\ = & \denom[n (n-1)^2 ]\add[in]\addneq[ji]\addneq[ki] \left \lbrace \uijk - \E_{X_i}\left( \uijk \right) \right \rbrace \\ & + \denom[n (n-1)^2 ]\add[in]\addneq[ji]\addneq[ki] \E_{X_i}\left( \uijk \right) \\ = & \IA + \IB + \IC,
\end{align*}
where 
\begin{equation*}
\IA=\denom[n (n-1)^2 ] \sum_{i=1}^{n}\sum_{j\ne i}\sum_{k\notin \{i,j\}}
\left \lbrace \uijk - \E_{X_i}\left( \uijk \right) \right \rbrace,
\end{equation*}
\begin{equation*}
\IB=\denom[n (n-1)^2 ]\add[in]\addneq[ji] \left \lbrace \uijj - \E_{X_i}\left( \uijj \right) \right \rbrace
\end{equation*}
and
\begin{equation*}
\IC=\denom[n (n-1) ]\add[jn]\add[kn] \E_{X'}\left( U^{X,X}_{X',X_j,X_k}(u)\right).
\end{equation*}
We will show that $\frac{nm}{n+m} \sup_{u \in \mcM} |\IA|$ and $\frac{nm}{n+m} \sup_{u \in \mcM} |\IB|$ are asymptotically negligible, while $\frac{nm}{n+m} \IC$ features in the limit  distribution. 

\noindent For term $\IA$, note that for each 
$u \in \mcM$, $\E_{X_i}\left \lbrace U^{X,X}_{X_i,y,z}(u) - \E_{X_i}\left(U^{X,X}_{X_i,y,z}(u)\right) \right \rbrace = 0$ for all $y,z\in\Omega$,   
$\E_{X_j}\left \lbrace U^{X,X}_{x,X_j,z}(u) - \E_{X_i}\left(U^{X,X}_{X_i,X_j,z}(u)\right) \right \rbrace = 0$ for all $x,z\in\Omega$ and \newline  $\E_{X_k}\left \lbrace U^{X,X}_{x,y,X_k}(u) - \E_{X_i}\left(U^{X,X}_{X_i,y,X_k}(u)\right) \right \rbrace = 0$ for all $y,z\in\Omega$. 
Therefore $\frac{n-1}{n-2}\IA$ is asymptotically a degenerate $U$-process of order 3 \citep{sher:94} indexed by the function class 
{$\mathcal{F}=\{f_u\colon \Omega\times\Omega\times\Omega\rightarrow \mathbb{R}\mid u \in \mcM\}$ with $f_u(x,y,z)=U^{X,X}_{x,y,z}(u)-\E_{X'}\left( U^{X,X}_{X',y,z}(u)\right)$ for $x,y,z\in\Omega$,} where $X'$ follows $P_1$ and $U^{X,X}_{x,y,z}(u)$ is as in \eqref{eq:u_function1}. Note that \newline  $\sup_{u \in \mcM} \sup_{x,y,z \in \Omega} |f_u(x,y,z)| \leq 2 C_w$ where $C_w$ is as  in Assumption \ref{ass:assumption_weights}. 
Hence the constant function $2C_w$ is an envelope function for the function class $\mathcal{F}$. For $u,v \in \mcM$ one has
\begin{align*}
& |f_u(x,y,z)-f_v(x,y,z)| \\ \leq & \lvert U^{X,X}_{x,y,z}(u) -U^{X,X}_{x,y,z}(v)\rvert + \lvert \E_{X'}(U^{X,X}_{X',y,z}(u)) - \E_{X'}(U^{X,X}_{X',y,z}(v))\rvert.
\end{align*}
Observe that there exists a constant $C_U > 0$,
\begin{align*}
& \lvert U^{X,X}_{x,y,z}(u) -U^{X,X}_{x,y,z}(v)\rvert \\ = & |w_x(u) \{\ind[d(x,y)\leq u]-F^X_x(u)\}\{\ind[d(x,z)\leq u]-F^X_x(u)\} \\ & - w_x(v) \{\ind[d(x,y)\leq v]-F^X_x(v)\}\{\ind[d(x,z)\leq v]-F^X_x(v)\}| \\ \leq & C_U \left \lbrace |w_x(u)- w_x(v)| + \left|\ind[d(x,y)\leq u]-\ind[d(x,y)\leq v]\right| \right. \\ & + \left. \left|\ind[d(x,z)\leq u]-\ind[d(x,z)\leq v]\right| + \left|F^X_x(u)-F^X_x(v)\right| \right \rbrace.
\end{align*}
Let $0 < \epsilon < 1$ and 
{$C_0=\max\{L_X,L_Y,L_w\}$} 
{with $L_w$ as in Assumption~\ref{ass:assumption_weights} and $L_X$ and $L_Y$ as in Assumption~\ref{ass:assumption_lipschitz}.} Provided that $|u-v|<\epsilon$, by Assumption~\ref{ass:assumption_weights}, $\|w_{X'}(u)- w_{X'}(v)\|_{L_2(P_1)} \leq L_w \epsilon \leq C_0 \epsilon $. Using the mean value theorem and Assumption \ref{ass:assumption_lipschitz}, one has for any $x \in \Omega$, $|F_x^X(u)-F_x^X(v)|\leq L_X \epsilon$, which implies that $\|F_{X'}^X(u)- F_{X'}^X(v)\|_{L_2(P_1)} \leq L_X \epsilon \leq C_0 \epsilon$. Observe that by the property of indicator functions,
\begin{align*}
& \left \lVert \ind[d(X,X')\leq u]-\ind[d(X,X')\leq v] \right \rVert_{L_2(P_1 \times P_1)} \\ = & \sqrt{\left|\prob(d(X,X')\leq u)-\prob(d(X,X')\leq v) \right|} \\ = & \sqrt{\left|\E_{X'}\left\lbrace \prob(d(X',X)\leq u\mid X') \right\rbrace-\E_{X'}\left\lbrace \prob(d(X',X)\leq v\mid X')\right\rbrace \right|} \\ \leq & \sqrt{\E_{X'} \left\lvert \prob(d(X',X)\leq u\mid X')-\prob(d(X',X)\leq v\mid X') \right \rvert},
\end{align*}
where the last step follows using Jensen's inequality. By the independence of $X$ and $X'$ and Assumption \ref{ass:assumption_lipschitz}, one has $\left|\prob(d(X',X)\leq u\mid X'=x)-\prob(d(X',X)\leq v\mid X'=x) \right| = \left|F_{x}^X(u)-F_{x}^X(v) \right| \leq L_X \epsilon$, which gives $ \left|\prob(d(X',X)\leq u\mid X')-\prob(d(X',X)\leq v\mid X') \right| \leq C_0 \epsilon$ almost surely. Hence 
\begin{equation}
\label{eq:indicators}
\left \lVert \ind[d(X,X')\leq u]-\ind[d(X,X')\leq v] \right \rVert_{L_2(P_1 \times P_1)} \leq \sqrt{C_0 \epsilon}. 
\end{equation}
Therefore for any given $u \in \mcM$ by picking $v \in \mcM$ such that $|u-v|<\epsilon$ 
it turns out that 
$\lVert U^{X,X}_{X',X'',X'''}(u) -U^{X,X}_{X',X'',X'''}(v)\rVert_{L_2(P_1 \times P_1 \times P_1)} < C \sqrt{\epsilon}$, where $C = 2 C_U \{C_0+\sqrt{C_0}\}$, since $\epsilon < \sqrt{\epsilon}$ for $0 < \epsilon \leq 1$.
By Jensen's inequality, 
\begin{align*}
& \lVert \E_{X'}(U^{X,X}_{X',X'',X'''}(u)) -\E_{X'}(U^{X,X}_{X',X'',X'''}(v)) \rVert_{L_2(P_1 \times P_1 )} \\ \leq & \lVert U^{X,X}_{X',X',X''}(u) -U^{X,X}_{X',X'',X'''}(v) \rVert_{L_2(P_1 \times P_1 \times P_1)} 
\\ < & C \sqrt{\epsilon}.
\end{align*}
This implies that 
$\lVert f_u(X',X'',X''') -f_v(X',X'',X''') \rVert_{L_2(P_1 \times P_1 \times P_1)} < 2 C \sqrt{\epsilon}$. Therefore, the packing number $D(\epsilon,\mathcal{F},L_2(P_1 \times P_1 \times P_1))$, i.e., the maximum number of $\epsilon$-separate elements in $\mathcal{F}$ endowed with $L_2(P_1 \times P_1 \times P_1)$ metric, is upper bounded by the packing number $D(\tfrac{\epsilon^2}{4C^2}, \mcM, d_E)$, i.e., the maximum number of $\tfrac{\epsilon^2}{4C^2}$-separate elements in $\mcM$ endowed with the Euclidean metric $d_E$. Note that $D(\tfrac{\epsilon^2}{4C^2}, \mcM, d_E) = O\left(\epsilon^{-2}\right)$, whence 
$\mathcal{F}$ is a Euclidean class as per Definition 3 of \cite{sher:94}. By Corollary 4 in \cite{sher:94}, $n^{3/2} \frac{n-1}{n-2} \sup_{u \in \mcM} |\IA| = O_{\prob}(1)$, which implies that $\frac{nm}{n+m}\sup_{u \in \mcM} |\IA| = o_{\prob}(1)$ since $\frac{m}{n+m} \leq 1$. 

\vspace{0.1in}
\noindent For term $\IB$, observe that $(n-1)\IB$ is a non-degenerate $U$-process of order 2 indexed by the function class $\mathcal{G}=\{g_u\colon \Omega\times\Omega\rightarrow\mathbb{R}\mid u \in \mcM\}$ with $g_u(x,y)=f_u(x,y,y)$ for $x,y\in\Omega$, where $f_u$ is as in the analysis of term $\IA$. It turns out that whenever $|u-v|<\epsilon$ using Assumptions \ref{ass:assumption_weights} and \ref{ass:assumption_lipschitz},
\begin{align*}
|g_u(x,y)-g_v(x,y)| \leq G_{v}(x,y),
\end{align*}
where 
\begin{align*}
G_{v}(x,y)= & C_{G} \left \lbrace |\ind[d(x,y)\leq v+\epsilon]-\ind[d(x,y)\leq v-\epsilon]| + 2\epsilon \right \rbrace
\end{align*}
for some constant $C_G>0$. By the property of indicator functions and Assumption \ref{ass:assumption_lipschitz} and using similar arguments as in $\IA$,
\begin{align*}
& \lVert \ind[d(X,X')\leq v+\epsilon]-\ind[d(X,X')\leq v-\epsilon] \rVert_{L_1(P_1 \times P_1)} \\ = & \left|\prob(d(X,X')\leq v+\epsilon)-\prob(d(X,X')\leq v-\epsilon) \right| \\ \leq & 2C_0 \epsilon,
\end{align*}
which leads to 
\begin{align*}
\|G_{v}(X,X') \|_{L_1(P_1 \times P_1)} < 2 C_G (C_0+1) \epsilon.
\end{align*}
Let $v_1, \dots, v_{N_\epsilon}$ denote an $\frac{\epsilon}{2 C_G (C_0+1)}$-net of $\mcM$ where $N_\epsilon$ is the covering number of $\mcM$ with $\frac{\epsilon}{2 C_G (C_0+1)}$-radius balls. Then $\mathcal{I} = \{v_1,\dots,v_{N_\epsilon}\}$ 
forms a collection such that for any $u \in \mcM$, there exists $v_k \in \mathcal{I}$ such that $|g_u(x,y)-g_{v_k}(x,y)| \leq G_{v_k}(x,y)$ and $ \|G_{v_k}(X,X') \|_{L_1(P_1 \times P_1)} < \epsilon$. As per the definition of brackets in page 1512 of \cite{arco:93}, the bracketing number $N^{(1)}_{[]}(\epsilon, \mathcal{G}, P_1 \times P_1) $ is upper bounded by the cardinality of $\mathcal{I}$, i.e., by $N_\epsilon$, and therefore finite. Since $\E_{X_i,X_j}(\uijj-\E_{X_i}(\uijj))=0$, by Corollary 3.5 in \cite{arco:93}, $(n-1)\sup_{u \in \mcM} |\IB| =o_{\prob}(1)$. This together with the fact that $\frac{m}{m+n} \leq 1$ 
implies that $\frac{nm}{n+m}\sup_{u \in \mcM} |\IB| =o_{\prob}(1)$ as $m,n \rightarrow \infty$. Therefore the term $\IB$ can be ignored in the asymptotic distribution.

\vspace{0.1in}

\noindent Observe that $\IC$ is asymptotically equivalent to $\frac{1}{n^2}\add[jn]\add[kn] \E_{X'} \left( U^{X,X}_{X',X_j,X_k}(u) \right)$. We postpone the discussion of the term $\IC$ to the end of the proof.

\vspace{0.1in}

\noindent \underline{\textit{Term $\II$}}: 
Observe that 
\begin{align*}
\II = \IIA + \IIB,
\end{align*}
where 
\begin{equation*}
\IIA = \denom[m^2] \add[jm]\add[km] \zeta^X(Y_j,Y_k,u) 
\end{equation*}
and
\begin{equation*}
\IIB = \denom[m^2] \add[jm]\add[km] \E_{X'}\left(U^{Y,Y}_{X',Y_j,Y_k}(u) \right),
\end{equation*}
with $\zeta^X(Y_j,Y_k,u)= \denom[n] \add[in] \left \lbrace \uxyy -\E_{X_i} \left(\uxyy\right) \right\rbrace$. Consider the process $\left\{\zeta^X(\omega_1,\omega_2,u)= \denom[n] \add[in] \left \lbrace \uxww -\E_{X_i} \left(\uxww \right) \right\rbrace: (\omega_1,\omega_2)\in\Omega\times\Omega,u \in\mcM\right\}$. 
First we establish that $\sup_{\omega_1,\omega_2 \in \Omega} \sup_{u \in \mcM} \left| \zeta^X(\omega_1,\omega_2,u) \right| = o_{\prob}(1)$. To this end, consider  the function class $\mathcal{L}=\{l_{\omega_1,\omega_2,u}(x): \omega_1,\omega_2 \in \Omega, u \in \mcM \}$ where $l_{\omega_1,\omega_2,u}(x)=U^{Y,Y}_{x,\omega_1,\omega_2}(u)$. 
With some constant $C_L > 0$ one has
\begin{align*}
& |l_{\omega_1,\omega_2,u}(x)-l_{\omega'_1,\omega'_2,u'}(x)| \\ \leq & C_L \left \lbrace |w_x(u)- w_x(u')| + |\ind[d(x,\omega_1)\leq u]-\ind[d(x,\omega_1)\leq u']| \right. \\ & + \left. |\ind[d(x,\omega_2)\leq u]-\ind[d(x,\omega_2)\leq u']| + |F^Y_x(u)-F^Y_x(u')| \right.  \\ & + \left. |\ind[d(x,\omega_1)\leq u']-\ind[d(x,\omega'_1)\leq u']| + |\ind[d(x,\omega_2)\leq u']-\ind[d(x,\omega_2)\leq u']| \right\rbrace.
\end{align*}
\noindent 
Let $|u-u'|<\epsilon$, $d(\omega_1,\omega_1')<\epsilon$ and $d(\omega_2,\omega_2')<\epsilon$. By Assumption \ref{ass:assumption_weights}, $\left|w_x(u)-w_x(u')\right| \leq L_w |u-u'| < C_0 \epsilon$ and by Assumption \ref{ass:assumption_lipschitz} and the mean value theorem, $\left|F^Y_x(u)-F^Y_x(u')\right| \leq L_Y |u-u'| < C_0 \epsilon$. Observe that if $d(x,\omega_1) \leq u'$ and $|u-u'|<\epsilon$, then $|\ind[d(x,\omega_1)\leq u]-\ind[d(x,\omega_1)\leq u']| \leq |\ind[d(x,\omega_1)\leq u'+\epsilon]-\ind[d(x,\omega_1)\leq u'-\epsilon]]|$ and similarly one has $|\ind[d(x,\omega_2)\leq u]-\ind[d(x,\omega_2)\leq u']| \leq |\ind[d(x,\omega_2)\leq u'+\epsilon]-\ind[d(x,\omega_2)\leq u'-\epsilon]]|$. On top of this if $d(\omega_1,\omega_1')<\epsilon$ and $d(\omega_2,\omega_2')<\epsilon$ then
$|\ind[d(x,\omega_1)\leq u]-\ind[d(x,\omega_1)\leq u']| \leq |\ind[d(x,\omega_1')\leq u'+2\epsilon]-\ind[d(x,\omega_1')\leq u'-2\epsilon]]|$ and $|\ind[d(x,\omega_2)\leq u]-\ind[d(x,\omega_2)\leq u']| \leq |\ind[d(x,\omega_2')\leq u'+2\epsilon]-\ind[d(x,\omega_2')\leq u'-2\epsilon]]|$. Moreover one has  $|\ind[d(x,\omega_1)\leq u']-\ind[d(x,\omega'_1)\leq u']|  \leq |\ind[d(x,\omega'_1)\leq u'+\epsilon]-\ind[d(x,\omega'_1)\leq u'-\epsilon]| $ and $|\ind[d(x,\omega_2)\leq u']-\ind[d(x,\omega'_2)\leq u']|  \leq |\ind[d(x,\omega'_2)\leq u'+\epsilon]-\ind[d(x,\omega'_2)\leq u'-\epsilon]| $ whenever $d(\omega_1,\omega_1')<\epsilon$ and $d(\omega_2,\omega_2')<\epsilon$ by the  triangle inequality. Hence if $|u-u'|<\epsilon$, $d(\omega_1,\omega_1')<\epsilon$ and $d(\omega_2,\omega_2')<\epsilon$, one has 
\begin{align*}
& |l_{\omega_1,\omega_2,u}(x)-l_{\omega'_1,\omega'_2,u'}(x)| \leq b_{\omega_1',\omega_2',u'}(x) 
\end{align*}
with
\begin{align*}
b_{\omega_1',\omega_2',u'}(x) =\ & C'_L \left \lbrace \epsilon + |\ind[d(x,\omega_1')\leq u'+2\epsilon]-\ind[d(x,\omega'_1)\leq u'-2\epsilon]| \right. \\ & + \left. |\ind[d(x,\omega_2')\leq u'+2\epsilon]-\ind[d(x,\omega'_2)\leq u'-2\epsilon]|\right \rbrace
\end{align*}
for some constant $C_L > 0$. 
Therefore if $|u-u'|<\epsilon$, $d(\omega_1,\omega_1')<\epsilon$ and $d(\omega_2,\omega_2')<\epsilon$, one has
\begin{align*}
& \|b_{\omega_1',\omega_2',u'}(X)\|_{L_1(P_1)} \\ \leq & C'_L \left \lbrace \epsilon + |F^X_{\omega_1'}(u'+2\epsilon)-F^X_{\omega'_1}(u'-2\epsilon)| + |F^X_{\omega_2'}(u'+2\epsilon)-F^X_{\omega'_2}(u'-2\epsilon)| \right \rbrace,
\end{align*}
which in conjunction with Assumption \ref{ass:assumption_lipschitz} implies that $\|b_{\omega_1',\omega_2',u'}(X)\|_{L_1(P_1)} \leq C^{''}_L \epsilon$ for some $C^{''}_L > 0$. Let $u_1, u_2, \dots, u_{N'_\epsilon}$ be an $\frac{\epsilon}{2C^{''}_L}$-net of $\mcM$ and $\omega_1, \dots, \omega_{M'_\epsilon}$ be an $\frac{\epsilon}{2C^{''}_L}$-net of $\Omega$, where $N'_{\epsilon}$ and $M'_{\epsilon}$ are the covering numbers of $\mcM$ and $\Omega$ with balls of radius $\frac{\epsilon}{2C^{''}_L}$,  respectively. Then the brackets $\left\lbrace l_{\omega_j,\omega_k,u_l} \pm b_{\omega_j,\omega_k,u_l} \right\rbrace_{j,k\in[M'_{\epsilon}],\,l\in[N'_{\epsilon}]}$ cover $\mathcal{L}$ and the $L_1(P_1)$ width of each bracket, that is $2 \|b_{\omega_1',\omega_2',u'}(X)\|_{L_1(P_1)}$, is upper bounded by $\epsilon$. Hence for any $\epsilon>0$ the $L_1(P_1)$-bracketing entropy of $\mathcal{L}$, $N_{[]}(\epsilon,\mathcal{L},L_1(P_1))$, is upper bounded by $N'_\epsilon {M'_\epsilon}^2$ and therefore $N_{[]}(\epsilon,\mathcal{L},L_1(P_1)) < \infty$ for any $\epsilon > 0$. 
By Theorem 2.4.1 in \cite{well:96}, $\mathcal{L}$ is a Glivenko--Cantelli class of functions, which implies that 
\begin{equation}
\label{eq:glivenko_cantelli}
\sup_{\omega_1,\omega_2 \in \Omega} \sup_{u \in \mcM} \left|\zeta^X(\omega_1,\omega_2,u)\right| = o_{\prob}(1).
\end{equation}

Next observe that
\begin{align*}
\denom[m^2] \add[jm]\add[km] \zeta^X(Y_j,Y_k,u) = \denom[m^2] \add[jm] \zeta^X(Y_j,Y_j,u) + \denom[m^2] \addneq[jk] \zeta^X(Y_j,Y_k,u)
\end{align*}
and that $\sup_{u \in \mcM} \left| \frac{nm}{n+m} \denom[m^2] \add[jm] \zeta^X(Y_j,Y_j,u) \right| \le \sup_{\omega_1,\omega_2 \in \Omega} \sup_{u \in \mcM} \left|\zeta^X(\omega_1,\omega_2,u)\right|$. 
In conjunction with \eqref{eq:glivenko_cantelli}, $\sup_{u \in \mcM} \left| \frac{nm}{n+m} \denom[m^2] \add[jm] \zeta^X(Y_j,Y_j,u) \right|=o_{\prob}(1)$. 
Consider the event $A_t=\left \lbrace \sup_{\omega_1,\omega_2 \in \Omega} \sup_{u \in \mcM} \zeta^X(\omega_1,\omega_2,u) > t \right \rbrace$ for $t>0$ and observe that for any $\eta > 0$ and $t>0$,
\begin{align*}
& \prob \left( \sup_{u \in \mcM} \left| \frac{nm}{n+m} \denom[m^2] \addneq[jk] \zeta^X(Y_j,Y_k,u) \right| > \eta \right) \\ \leq & \prob \left( \sup_{u \in \mcM} \left| \frac{nm}{n+m} \denom[m^2] \addneq[jk] \zeta^X(Y_j,Y_k,u) \right| > \eta, A_t^C \right) + \prob(A_t).
\end{align*}
By \eqref{eq:glivenko_cantelli}, $\prob(A_t) \rightarrow 0$ as $n \rightarrow \infty$ for any $t>0$. 
Observe that 
$\E_{Y_j}\left\{ \zeta^X(Y_j,\cdot,\cdot)\right\} = \E\left\{ \zeta^X(Y_j,\cdot,\cdot)\mid X_1,\dots,X_n\right\} \equiv 0$ and $\E_{Y_k}\left\{ \zeta^X(\cdot,Y_k,\cdot)\right\} = \E\left\{ \zeta^X(\cdot,Y_k,\cdot)\mid X_1,\dots,X_n\right\} \equiv 0$ 
almost surely, 
which implies that $ \denom[m(m-1)] \addneq[jk] \zeta^X(Y_j,Y_k,u) $ is a degenerate $U$-process of order 2 indexed by the function class $\mathcal{V}=\{\zeta^X(\cdot,\cdot,u): u \in \mcM\}$ conditional on $X_1, \dots, X_n$. In addition, note that
\begin{align*}
& \left|\zeta^X(y,z,u)-\zeta^X(y,z,v)\right| \\ = & \left| \denom[n] \add[in] \left \lbrace U^{Y,Y}_{X_i,y,z}(u) -\E_{X_i} \left(U^{Y,Y}_{X_i,y,z}(u)\right) \right\rbrace-\denom[n] \add[in] \left \lbrace U^{Y,Y}_{X_i,y,z}(v) -\E_{X_i} \left(U^{Y,Y}_{X_i,y,z}(v)\right) \right\rbrace \right| \\ \leq & \denom[n]\add[in] \left\lbrace \left|U^{Y,Y}_{X_i,y,z}(u) -U^{Y,Y}_{X_i,y,z}(v) \right|+\left| \E_{X_i} \left(U^{Y,Y}_{X_i,y,z}(u)\right)-\E_{X_i} \left(U^{Y,Y}_{X_i,y,z}(v)\right)\right| \right\rbrace.
\end{align*}
Using the bound on the functions $|U^{Y,Y}_{x,y,z}(u)-U^{Y,Y}_{x,y,z}(v)|$ similar to what was derived for the term $\IA$ one has 
\begin{align*}
& \denom[n]\add[in] \left|U^{Y,Y}_{X_i,y,z}(u) -U^{Y,Y}_{X_i,y,z}(v) \right| \\ \leq & C_U \denom[n]\add[in] \left \lbrace |w_{X_i}(u)- w_{X_i}(v)| + |\ind[d(X_i,y)\leq u]-\ind[d(X_i,y)\leq v]| \right. \\ & + \left. |\ind[d(X_i,z)\leq u]-\ind[d(X_i,z)\leq v]| + |F^Y_{X_i}(u)-F^Y_{X_i}(v)| \right \rbrace.
\end{align*}
Let $u,v$ be such that $|u-v|< \epsilon$. By Assumption \ref{ass:assumption_weights} and Assumption \ref{ass:assumption_lipschitz} together with the mean value theorem, conditional on $X_1, \dots, X_n$, $\lVert w_{X_i}(u)- w_{X_i}(v) \rVert_{L_2(P_2 \times P_2)} \leq C_0 \epsilon$ and $\lVert F^X_{X_i}(u)-F^X_{X_i}(v) \rVert_{L_2(P_2 \times P_2)} \leq C_0 \epsilon$. Note that by the independence of $\{X_i\}_{i=1}^{n}\cup\{Y_i\}_{i=1}^{m}$, the conditional distribution of $Y_1,\dots,Y_m$ on $X_1,\dots X_n$ is same as the unconditional distribution. Therefore conditional on $X_1, \dots, X_n$, $\lVert \left|\ind[d(X_i,Y)\leq u]-\ind[d(X_i,Y)\leq v]\right| \rVert_{L_2(P_2)}=\sqrt{|F^Y_{X_i}(u)-F^Y_{X_i}(v)|} \leq \sqrt{C_0 \epsilon}$. Hence conditional on $X_1,\dots X_n$ whenever $|u-v| < \epsilon$ and $0 < \epsilon \leq 1$,
\begin{align*}
\left\| U^{Y,Y}_{X_i,Y',Y''}(u) -U^{Y,Y}_{X_i,Y',Y''}(v) \right\|_{L_2(P_2 \times P_2)} \leq 2 C_U (C_0+\sqrt{C_0}) \sqrt{\epsilon},
\end{align*}
which implies that 
\begin{equation*}
\left\|\zeta^X(Y',Y'',u)-\zeta^X(Y',Y'',v) \right\|_{L_2(P_2 \times P_2)} \leq 4 C_U (C_0+\sqrt{C_0}) \sqrt{\epsilon}. 
\end{equation*}
Conditional on $X_1, \dots, X_n$, 
the packing number $D(\epsilon,\mathcal{V},L_2(P_2 \times P_2))$, i.e., the maximum number of $\epsilon$-separate elements in $\mathcal{V}$ is upper bounded by the maximum number of $\tfrac{\epsilon^2}{16C_U^2(C_0+\sqrt{C_0})^2}$-separate elements in $\mcM$, which equals  
a constant times $\epsilon^{-2}$. This implies that $\mathcal{V}$ is a Euclidean class as per Definition 3 of \cite{sher:94}. 
On $A^C_t$ an envelope function for $\mathcal{V}$ is the constant function $t$. By following the proof of Corollary 4 in \cite{sher:94} and using Markov's inequality, we find  that for some $\alpha \in(0,1)$,
\begin{align*}
& \prob \left( \sup_{u \in \mcM} \left| \frac{nm}{n+m} \denom[m^2] \addneq[jk] \zeta^X(Y_j,Y_k,u) \right| > \eta \mid X_1, \dots, X_n, A^C_t \right) \\ 
=\ & \prob \left( \sup_{u \in \mcM} \left| m \denom[m^2] \addneq[jk] \zeta^X(Y_j,Y_k,u) \right| > \frac{n+m}{n}\eta \mid X_1, \dots, X_n, A^C_t\right) \\ 
\leq\ & \frac{\E\left(\sup_{u \in \mcM} \left| m \denom[m^2] \addneq[jk] \zeta^X(Y_j,Y_k,u) \right| \mid X_1,\dots,X_n, A^C_t\right)}{\frac{n+m}{n}\eta} 
\\ \leq\ & \mathrm{const.} \frac{t^\alpha}{\frac{n+m}{n}\eta}, 
\end{align*}
almost surely. Hence,
\begin{align*}
& \prob \left( \sup_{u \in \mcM} \left| \frac{nm}{n+m} \denom[m^2] \addneq[jk] \zeta^X(Y_j,Y_k,u) \right| > \eta, A_t^C \right) \\ 
=\ & \E_{X_1,\dots,X_n} \left\lbrace \prob \left( \sup_{u \in \mcM} \left| \frac{nm}{n+m} \denom[m^2] \addneq[jk] \zeta^X(Y_j,Y_k,u) \right| > \eta \mid X_1, \dots, X_n, A^C_t \right) \right\rbrace \prob(A_t^C) \\ \leq\ & \mathrm{const.} \frac{t^\alpha}{\frac{n+m}{n}\eta} \prob(A_t^C).
\end{align*}
Let $t \rightarrow 0$ and $n \rightarrow \infty$ so that $\frac{nm}{n+m} \IIA = o_{\prob}(1)$. We postpone the discussion of the term $\IIB$ until the end of the proof.

\vspace{0.1in}

\noindent \underline{\textit{Term $\III$}}: 
Note that 
\begin{equation*}
\III = 2 \{ \IIIA + \IIIB \}
\end{equation*}
where $\IIIA= \denom[nm(n-1)]\add[in]\addneq[ji]\add[km] \left \lbrace \uxxy- \E_{X_i}\left( \uxxy \right) \right \rbrace$ and \newline  $\IIIB= \denom[nm] \add[jn] \add[km] \E_{X'}\left( U^{X,Y}_{X',X_j,Y_k}(u) \right)$. The asymptotic limit of $\IIIB$ will be investigated in the next step. 

\noindent Observe that $\IIIA =\denom[n(n-1)] \addneq[ji] \zeta^Y_{X_i,X_j}(u)$, where
\begin{equation*}
\zeta^Y_{x,y}(u)=\denom[m] \add[km] \left\lbrace U^{X,Y}_{x,y,Y_k}(u)-\E_{X'} \left( U^{X,Y}_{X',y,Y_k}(u)\right) \right \rbrace.
\end{equation*}
For the next few threads of the argument, we will condition on $Y_1, \dots, Y_m$. Note that 
$\E_{X_i}\left\lbrace \zeta^Y_{X_i,\cdot}(\cdot) \right \rbrace \equiv 0$ and $ \E_{X_j}\left\lbrace \zeta^Y_{\cdot,X_j}(\cdot) \right \rbrace \equiv 0$ 
almost surely. 
Hence conditioning on $Y_1, \dots, Y_m$, $\IIIA$ is a degenerate $U$-process of order 2 indexed by the function class $\mathcal{H}=\{h_{u}\colon\Omega\times\Omega\rightarrow\mathbb{R}\mid u \in \mcM \}$, where $h_{u}(x,y)= \zeta^Y_{x,y}(u)$ for $x,y\in\Omega$ and $u\in\mcM$. One has
\begin{align*}
& \lVert h_{u}(X',X'')-h_{v}(X',X'') \rVert_{L_2(P_1 \times P_1)} \\ \leq\ & \sup_{\omega \in \Omega} \|U^{X,Y}_{X',X'',\omega}(u) -U^{X,Y}_{X',X'',\omega}(v)\|_{L_2(P_1 \times P_1)} \\ & + \sup_{\omega \in \Omega} \|\E_{X'}(U^{X,Y}_{X',X'',\omega}(u)) - \E_{X'}(U^{X,Y}_{X',X'',\omega}(v))\|_{L_2(P_1 \times P_1)},
\end{align*}
where $X''\sim P_1$ and is independent of $X',X_1,\dots,X_n$.
Similarly to the arguments in the analysis of the term $\IA$,  one has for some constant $C'_U > 0$ that 
\begin{align*}
& \left \lvert U^{X,Y}_{x,y,\omega}(u) -U^{X,Y}_{x,y,\omega}(v) \right \rvert \\ \leq\ & C'_U \left \lbrace |w_x(u)- w_x(v)| + |\ind[d(x,y)\leq u]-\ind[d(x,y)\leq v]| \right. \\ & + \left. |\ind[d(x,\omega)\leq u]-\ind[d(x,\omega)\leq v]| + |F^X_x(u)-F^X_x(v)| + |F^Y_x(u)-F^Y_x(v)| \right \rbrace. 
\end{align*}
By Assumption \ref{ass:assumption_weights} and Assumption \ref{ass:assumption_lipschitz} together with the mean value theorem and $C_0$ as defined in term $\I$, for any $x \in \Omega$ one has $\|w_{X}(u) - w_{X}(v)\|_{L_2(P_1)} \leq C_0 |u-v| $, $\sup_{x\in\Omega}|F^X_{x}(u)-F^X_{x}(v)| \leq C_0 |u-v|$ and $\sup_{x\in\Omega}|F^Y_{x}(u)-F^Y_{x}(v)| \leq C_0 |u-v|$. In analogy to the term $\IA$, $\E_{X,X'}\left(\ind[d(X,X')\leq u]-\ind[d(X,X')\leq v]\right)^2 \leq C_0|u-v|$. Finally by Assumption \ref{ass:assumption_lipschitz} and the mean value theorem, one also has that \newline  $\sup_{\omega\in\Omega} \E_{X}\left( \ind[d(X,\omega)\leq u]-\ind[d(X,\omega)\leq v] \right)^2 = \sup_{\omega\in\Omega} |F^X_{\omega}(u)-F^X_{\omega}(v)| \leq C_0 |u-v|$. Whenever $0 < \epsilon \leq 1$ and $|u-v|<\epsilon^2$, it holds that for some constant
\begin{equation*}
\sup_{\omega\in\Omega}\lVert U^{X,Y}_{X',X'',\omega}(u) -U^{X,Y}_{X',X'',\omega}(v) \rVert_{L_2(P_1 \times P_1)} \leq C'_U \left \lbrace 3C_0 + 2 \sqrt{C_0} \right \rbrace \epsilon,
\end{equation*}
and by Jensen's inequality,
\begin{equation*}
\lVert h_{u}(X',X'')-h_{v}(X',X'') \rVert_{L_2(P_1 \times P_1)} \leq 2 C'_U \left \lbrace 3C_0 + 2 \sqrt{C_0} \right \rbrace \epsilon. 
\end{equation*}
Therefore, conditioning on $Y_1, \dots, Y_m$, the packing number $D(\epsilon, \mathcal{H},L_2(P_1 \times P_1))$, i.e., the maximum number of $\epsilon$-separate elements in $\mathcal{H}$ is upper bounded by the packing number $D(\tfrac{\epsilon^2}{4(C'_U)^2\left \lbrace 3C_0 + 2 \sqrt{C_0} \right \rbrace^2},\mcM,d_E)$, i.e., the maximum number of $\tfrac{\epsilon^2}{4(C'_U)^2\left \lbrace 3C_0 + 2 \sqrt{C_0} \right \rbrace^2}$-separate elements in $\mcM$, equal to a constant times $\epsilon^{-2}$, 
which implies that the class $\mathcal{H}$ is Euclidean when conditioning on $Y_1, \dots, Y_m$. 
Note that an envelope function for $\mathcal{H}$ is given by the function $F = 2C_w \sup_{x \in \Omega} \sup_{u \in \mcM} |\hat{F}^Y_x(u)-F^Y_x(u)|$. 
Following the proof of Corollary 4 in \cite{sher:94} for some $0 < \alpha < 1$,
\begin{equation*}
\E \left( {\sup_{u\in\mcM}} m | \IIIA | \mid Y_1, \dots, Y_m \right) \leq {\mathrm{const.}} \left\lbrace 2C_w \sup_{x \in \Omega} \sup_{u \in \mcM} |\hat{F}^Y_x(u)-F^Y_x(u)| \right \rbrace^{\alpha}.
\end{equation*}
Hence $$\E\left( {\sup_{u\in\mcM}} \frac{nm}{n+m} | \IIIA | \right) \leq {\mathrm{const.}} \{2C_w\}^\alpha \E\left\lbrace \sup_{x \in \Omega} \sup_{u \in \mcM} |\hat{F}^Y_x(u)-F^Y_x(u)| \right \rbrace^{\alpha} = o(1)$$ by Theorem~\ref{thm:fhat} and the continuous mapping theorem in conjunction with the fact that $\sup_{m}\left\lbrace \sup_{x \in \Omega} \sup_{u \in \mcM} |\hat{F}^Y_x(u)-F^Y_x(u)| \right \rbrace^{\alpha}\le 1$, which implies the uniform integrability of $\left\lbrace \sup_{x \in \Omega} \sup_{u \in \mcM} |\hat{F}^Y_x(u)-F^Y_x(u)| \right \rbrace^{\alpha}$. By Markov's inequality, ${\sup_{u\in\mcM}}\frac{nm}{n+m} | \IIIA | = o_{\prob}(1)$.
\vspace{0.1in}

\noindent \underline{\textit{Asymptotic Limit}}: 
To arrive at the asympotic distribution, we  combine the terms $\IC$, $\IIB$ and $2\IIIB$ as derived in the previous steps. Observe that $\IC + \IIB -2\IIIB$ is asymptotically equivalent to 
\begin{align*}
& T_1(u)  = \\ & \E_{X'} \left\lbrace w_{X'}(u) \left( \denom[n]\add[jn]\ind[d(X',X_j)\leq u]-F_{X'}^X(u) - \denom[m]\add[km]\ind[d(X',Y_k)\leq u] + F_{X'}^Y(u) \right)^2 \right\rbrace
\end{align*}
with $X' \sim P_1$. By repeating the same arguments for $ \left \lbrace \frac{nm}{n+m} G^{YX}(u): u \in \mcM \right \rbrace$, one has a similar term $T_2(u)$ featuring in the asymptotic distribution, 
\begin{align*}
& T_2(u)  = \\ & \E_{Y'} \left\lbrace w_{Y'}(u) \left( \denom[n]\add[jn]\ind[d(Y',X_j)\leq u]-F_{Y'}^X(u) - \denom[m]\add[km]\ind[d(Y',Y_k)\leq u] + F_{Y'}^Y(u) \right)^2 \right\rbrace,
\end{align*}
where $Y' \sim P_2$. Define 
\begin{align*}
& W^{n,m}_x(u)  = \\ & \sqrt{\frac{nm w_x(u)}{n+m}} \left\lbrace \denom[n]\add[jn]\ind[d(x,X_j)\leq u]-F_{x}^X(u) - \denom[m]\add[km]\ind[d(x,Y_k)\leq u]+F_{x}^Y(u) \right \rbrace.
\end{align*}
Following similar arguments to the proof of Theorem~\ref{thm:fhat}, for any fixed $x \in \Omega$, one has that
$u\mapsto\sqrt{w_x(u)\frac{nm}{n+m}} \left\lbrace \denom[n]\add[jn]\ind[d(x,X_j)\leq u]-F_{x}^X(u) \right \rbrace $ converges weakly to a zero mean Gaussian process with covariance given by $C^X_x(u,v)$, where 
\begin{equation*}
C^X_x(u,v) = (1-c) \ \sqrt{w_x(u)w_x(v)} \ \mathrm{Cov}(\ind[d(x,X)\leq u],\ind[d(x,X)\leq v]), 
\end{equation*}
since $\frac{n}{n+m} \rightarrow c$ as $n,m \rightarrow \infty$ by Assumption \ref{ass:samp_ratio}. 
Similarly for any fixed $x \in \Omega$,  the term 
$u\mapsto\sqrt{w_x(u)\frac{nm}{n+m}} \left\lbrace \denom[m]\add[km]\ind[d(x,Y_k)\leq u]-F_{x}^Y(u) \right \rbrace $ converges weakly to a zero mean Gaussian process with  covariance  $C^Y_x(u,v)$, where 
\begin{equation*}
C^Y_x(u,v) = c \ \sqrt{w_x(u)w_x(v)} \ \mathrm{Cov}(\ind[d(x,Y)\leq u],\ind[d(x,Y)\leq v]).
\end{equation*}
Since $\left\lbrace \denom[n]\add[jn]\ind[d(x,X_j)\leq u]-F_{x}^X(u) \right \rbrace $ and $ \left\lbrace \denom[m]\add[km]\ind[d(x,Y_k)\leq u]-F_{x}^Y(u) \right \rbrace $ are independent,
by combining the above arguments,  $W^{n,m}_x(\cdot)$ converges weakly to a zero mean Gaussian process $G_x(\cdot)$ with covariance given by $C_x(u,v)= C^X_x(u,v)+ C^Y_x(u,v)$. Since $C_x(u,v)$ is symmetric, non-negative definite and continuous, by Mercer's Theorem 
\begin{equation*}
C_x(u,v) = \add[j\infty] \lambda_j^x \phi_j^x(u) \phi_j^x(v),
\end{equation*}
where $\lambda_1^x \geq \lambda_2^x \geq\dots$ are the eigenvalues of $C_x(u,v)$ and $\phi_1^x(\cdot), \phi_2^x(\cdot), \dots$ are the corresponding eigenfunctions,  which form an orthonormal basis of $L^2(\mcM)$. By the Karhunen--Lo\`eve expansion,
\begin{equation*}
G_x(u) = \add[j\infty] Z_j \sqrt{\lambda^x_j} \phi^x_j(u),
\end{equation*}
where $Z_1, Z_2, \dots$ are independent  $N(0,1)$ random variables. Hence
\begin{equation*}
(G_x(u))^2 = \add[j\infty] \add[k\infty]Z_j Z_k \sqrt{\lambda^x_j}\sqrt{\lambda^x_k} \phi^x_j(u)\phi^x_k(u).
\end{equation*}
Observe that 
\begin{equation*}
\frac{nm}{n+m}  T_1(u)  = \E_{X'}\left \lbrace \left( W^{n,m}_{X'}(u) \right)^2 \right \rbrace 
\end{equation*}
and that $(x,u)\mapsto W^{n,m}_{x}(u)$ converges weakly to $(x,u)\mapsto G_{x}(u)$ by Theorem~\ref{thm:fhat}.
In conjunction with the continuous mapping theorem, $\E_{X'}\left \lbrace \int \left( W^{n,m}_{X'}(u) \right)^2 \df u \right \rbrace$ converges weakly to $\E_{X'}\left \lbrace \int (G_{X'}(u))^2 \df u \right \rbrace$, which is given by
\begin{align*}
\E_{X'}\left[\int (G_{X'}(u))^2 \df u\right] = & \add[j\infty] \add[k\infty]Z_j Z_k \E_{X'} \left\lbrace \sqrt{\lambda^{X'}_j}\sqrt{\lambda^{X'}_k} \int \phi^{X'}_j(u)\phi^{X'}_k(u) \df u \right\rbrace \\ = & \add[j\infty] Z_j^2 \E_{X'} \left\lbrace \lambda_j^{X'} \right \rbrace
\end{align*}
as by orthogonality $\int \phi^{X'}_j(u)\phi^{X'}_k(u) \df u = \delta_{jk}$, where $\delta_{jk}=1$ when $j = k$ and $\delta_{jk}=0$ otherwise. Similarly $\int\frac{nm}{n+m}  T_2(u)\diffop u = \E_{Y'}\left \lbrace \int \left( W^{n,m}_{Y'}(u) \right)^2 \diffop u \right \rbrace$ converges weakly to the law of $\add[j\infty] Z_j^2 \E_{Y'} \left\lbrace \lambda_j^{Y'} \right \rbrace$. 

Under $H_0: P_1 = P_2$, $\sup_{u \in \mcM}|T_1(u)-T_2(u)|=0$ almost surely. Therefore the final asymptotic distribution of $T^w_{n,m}$ in \eqref{eq:teststat_weighted} is derived using the limiting distribution of $\frac{2nm}{n+m} T_1(u)$, which concludes the proof. 

\subsection{Proof of Theorem \ref{thm: power}} 
Observe that
\begin{align*}
T^w_{n,m} = & \int \frac{nm}{n+m} \left( G^{XY}(u) + G^{YX}(u) \right) \df{u} \\ =& \frac{nm}{n+m} \left( U_1 + U_2 + U_3 \right),
\end{align*}
where 
\begin{align*}
& U_1 =  \int \left[ \denom[n]\add[in] \wght[Xiu] \left \lbrace \left( \hf[XXiu]-\f[XXiu] \right) - \left( \hf[YXiu] -\f[YXiu] \right)\right \rbrace^2 \right. \\ + & \left. \denom[m]\add[km] \wght[Yku] \left \lbrace \left( \hf[YYku]-\f[YYku] \right) - \left( \hf[XYku] -\f[XYku] \right)\right \rbrace^2 \right] \df{u},
\end{align*}
\begin{align*}
& U_2 = \hspace{14cm}  \\ & \int \left[ \frac{2}{n} \add[in] \wght[Xiu] \left \lbrace \left( \hf[XXiu]-\f[XXiu] \right) - \left( \hf[YXiu] -\f[YXiu] \right)\right \rbrace \left \lbrace \f[XXiu]-\f[YXiu] \right \rbrace \right. \\ + & \left. \frac{2}{m} \add[km] \wght[Yku] \left \lbrace \left( \hf[YYku]-\f[YYku] \right) - \left( \hf[XYku] -\f[XYku] \right)\right \rbrace \left \lbrace \f[YYku]-\f[XYku] \right \rbrace \right] \df{u}
\end{align*}
and
\begin{align*}
U_3 = & \int \left[ \denom[n]\add[in] \wght[Xiu] \left \lbrace \f[XXiu] -\f[YXiu] \right \rbrace^2 + \denom[m]\add[km] \wght[Yku] \left \lbrace \f[YYku] -\f[XYku] \right \rbrace^2 \right] \df{u}.
\end{align*}

\noindent Observe that under $H_{n,m}$, generally $P_1 = P_2$ does not hold.  By following the proof of Theorem \ref{thm: null_dist}, $\frac{nm}{n+m} U_{1}$ has the same asymptotic rate as $\frac{nm}{n+m} \int \left\lbrace T_{1}(u)+T_2(u) \right\rbrace \df{u}$ with $T_1(u)$ and $T_2(u)$ as in the proof of Theorem \ref{thm: null_dist}. Since the arguments in the derivation of this rate do not require the assumption that $P_1=P_2$, it turns out that $\frac{nm}{n+m} \left| \int \left\lbrace T_{1}(u)+T_2(u) \right\rbrace \df{u}\right|=O_{\prob}(1)$, which implies that $\frac{nm}{n+m}U_1 =O_{\prob}(1)$.

\noindent By  Assumption \ref{ass:assumption_weights}, $\sup_{x \in \Omega} \sup_{u \in \mcM} |\hat{w}_x(u)-w_x(u)|=o_{\prob}(1)$, which leads to $$\sup_{x \in \Omega} \sup_{u \in \mcM} \left \lvert \hat{w}_x(u) \left\lbrace F^X_x(u)-F^Y_x(u)\right \rbrace^2 - {w}_x(u) \left\lbrace F^X_x(u)-F^Y_x(u)\right \rbrace^2 \right \rvert=o_{\prob}(1).$$ With the continuous mapping theorem this implies  $\lvert U_3 - \tilde{U}_3 \rvert =o_{\prob}(1)$, where 
\begin{align*}
\tilde{U}_3 = & \int \left[ \denom[n]\add[in] w_{X_i}(u) \left \lbrace \f[XXiu] -\f[YXiu] \right \rbrace^2 + \denom[m]\add[km] w_{Y_k}(u) \left \lbrace \f[YYku] -\f[XYku] \right \rbrace^2 \right] \df{u}.
\end{align*}
Observe that $\E\left(\tilde{U}_3\right) = D^w_{XY}$ and therefore by the weak law of large numbers $\lvert \tilde{U}_3 -D^w_{XY}\rvert = o_{\prob}(1)$. Combining the above arguments yields $\lvert U_3 - D^w_{XY}\rvert = o_{\prob}(1)$. 
Defining events $A^{(1)}_{nm} = \left\{ \frac{D^w_{XY}}{2}\le U_3\le 2D^w_{XY} \right\}$, one has $\prob(A^{(1)}_{nm})\rightarrow 1$ as $n,m\rightarrow\infty$.

\noindent By Assumption \ref{ass:assumption_weights}, $\sup_{x \in \Omega} \sup_{u \in \mcM} \lvert \hat{w}_x(u) \rvert \leq 2 C_w$ almost surely. Define events $$A_{nm,\eta}=\left \lbrace \sqrt{\frac{nm}{n+m}} \sup_{x \in \Omega} \sup_{u \in \mcM} \left( \lvert \hat{F}^X_x(u)-F^X_x(u)\rvert + \lvert \hat{F}^Y_x(u)-F^Y_x(u)\rvert \right) \leq \eta\right \rbrace.$$
Using Theorem~\ref{thm:fhat} and Assumption \ref{ass:samp_ratio} leads to  $\sqrt{\frac{nm}{n+m}} \sup_{x \in \Omega} \sup_{u \in \mcM} \lvert \hat{F}^X_x(u)-F^X_x(u)\rvert = O_{\prob}(1)$ and $\sqrt{\frac{nm}{n+m}} \sup_{x \in \Omega} \sup_{u \in \mcM} \lvert \hat{F}^Y_x(u)-F^Y_x(u)\rvert = O_{\prob}(1)$ whence 
$\prob \left( A_{nm,\eta}\right) \rightarrow 1$ as $m,n \rightarrow \infty$ and $\eta\rightarrow\infty$.  
Next 
defining events $A^{(2)}_{nm} = \left\{ \sup_{x\in\Omega} \sup_{u\in\mcM}\sqrt{\hat{w}_x(u)} \le 2\sqrt{C_w} \right\}$, observe that under Assumption~\ref{ass:assumption_weights}, $\prob(A^{(2)}_{nm}) \rightarrow 1$ as $n,m\rightarrow\infty$. Furthermore, note that when $A^{(2)}_{nm}$ and $A_{nm,\eta}$ are true, 
\begin{align*}
\lvert U_2 \rvert \leq & \Delta_{nm}\left\{4 \sqrt{C_w} \eta\right\}/\sqrt{\frac{nm}{n+m}}, 
\end{align*}
where 
\begin{align*}
\Delta_{nm} = & \int \left[ \frac{1}{n} \add[in] \sqrt{\wght[Xiu]} \left \lvert \f[XXiu]-\f[YXiu] \right \rvert + \frac{1}{m} \add[km] \sqrt{\wght[Yku]} \left \lvert \f[XYku]-\f[YYku] \right \rvert \right] \df{u}.
\end{align*}
Moreover
\begin{align*}
& \left( \frac{1}{n} \add[in] \sqrt{\wght[Xiu]} \left \lvert \f[XXiu]-\f[YXiu] \right \rvert + \frac{1}{m} \add[km] \sqrt{\wght[Yku]} \left \lvert \f[XYku]-\f[YYku] \right \rvert \right)^2 \\ & \leq 2 \left[ \left( \frac{1}{n} \add[in] \sqrt{\wght[Xiu]} \left \lvert \f[XXiu]-\f[YXiu] \right \rvert \right)^2 + \left(\frac{1}{m} \add[km] \sqrt{\wght[Yku]} \left \lvert \f[XYku]-\f[YYku] \right \rvert \right)^2\right] \\ & \leq 2 \left[ \denom[n]\add[in] \wght[Xiu] \left \lbrace \f[XXiu] -\f[YXiu] \right \rbrace^2 + \denom[m]\add[km] \wght[Yku] \left \lbrace \f[YYku] -\f[XYku] \right \rbrace^2\right],
\end{align*}
where the last step follows by applying the Cauchy-Schwartz inequality. This implies that
\begin{align}
\label{eq:U2_to_U3}
\int \left( \frac{1}{n} \add[in] \sqrt{\wght[Xiu]} \left \lvert \f[XXiu]-\f[YXiu] \right \rvert + \frac{1}{m} \add[km] \sqrt{\wght[Yku]} \left \lvert \f[XYku]-\f[YYku] \right \rvert \right)^2 \df{u} \leq 2\ U_3. 
\end{align}
With $\mathcal{M}=\mathrm{diam}(\O)$ and by using the Cauchy-Schwartz inequality in conjunction with the inequality in \eqref{eq:U2_to_U3} one has
\begin{align*}
\Delta_{nm} \leq \sqrt{2\mathcal{M}U_3}. 
\end{align*}
Hence when $A_{nm,\eta}$ and $A^{(1)}_{nm}$ are true, one has $\Delta_{nm} \leq \sqrt{4\mathcal{M} D^w_{XY}}$. This leads to 
\begin{align*}
\frac{nm}{n+m} \lvert U_2 \rvert \leq 8\eta\sqrt{C_w \mathcal{M}\frac{nm}{n+m}D^w_{XY}}
\end{align*}
when $A_{nm,\eta}$, $A^{(1)}_{nm}$ and $A^{(2)}_{nm}$ are true. When $D^w_{XY}=a_{nm}$ and $\frac{nm}{n+m}a_{nm} \rightarrow \infty$ as $m,n \rightarrow \infty$, $\frac{nm}{n+m} a_{nm}$ dominates $\sqrt{\frac{nm}{n+m} a_{nm}}$. Therefore under $H_{n,m}$, when $A_{nm,\eta}$, $A^{(1)}_{nm}$ and $A^{(2)}_{nm}$ are true, one has for sufficiently large $n,m$ that $\frac{nm}{n+m}\lvert U_2 \rvert \leq \frac{nm}{n+m} \frac{a_{nm}}{4}$, which leads to
\begin{align}
\label{eq:pow}
\frac{nm}{n+m} \left( U_2 + U_3 \right) \geq \frac{nm}{n+m} \frac{a_{nm}}{4}.
\end{align}
Finally, observe that 
\begin{align*}
\beta^w_{n,m} = & \prob_{H_{n,m}} \left( T^w_{n,m} > q_\alpha \right) \\ \geq & \prob_{H_{n,m}} \left( T^w_{n,m} > q_\alpha,\, A_{nm,\eta} \cap A^{(1)}_{nm}\cap A^{(2)}_{nm}\right) \\ \geq & \prob_{H_{n,m}} \left( \frac{nm}{n+m} U_1 > q_\alpha - \frac{nm}{n+m} \frac{a_{nm}}{4},\, A_{nm,\eta} \cap A^{(1)}_{nm}\cap A^{(2)}_{nm}\right),
\end{align*}
where the last step follows using \eqref{eq:pow}. Since $\frac{nm}{n+m} U_1 = O_{\prob}(1)$ and $\frac{nm}{n+m}a_{nm} \rightarrow \infty$ as $m,n \rightarrow \infty$, and the fact that $\prob(A_{nm,\eta} \cap A^{(1)}_{nm}\cap A^{(2)}_{nm})\rightarrow 1$ as $n,m \rightarrow \infty$ and $\eta\rightarrow\infty$, 
one has $\beta^w_{n,m} \rightarrow 1$ as $n,m \rightarrow \infty$. 

\vspace{0.2 em}
\noindent
\subsection{Proof of Theorem \ref{thm:perm}} For any permutation $p \in \Pi_{n,m}$, denote the pooled sample permuted according to $p$ by 
$V_p= \left(V_{p(1)}, \dots, V_{p(n+m)}\right)$  and  let $X_p= \left(V_{p(1)}, \dots, V_{p(n)}\right)$ and $Y_p=\left(V_{p(n+1)}, \dots, V_{p(n+m)}\right)$ denote the splitting of the permuted pooled observations into two samples of sizes $n$ and $m$ respectively. Let $\tilde{\Gamma}^w_{n,m}(\cdot)$ represent the randomization distribution of $T^w_{n,m}$ as defined by
\begin{equation*}
\tilde{\Gamma}^w_{n,m}(t) = \frac{1}{(n+m)!} \sum_{\pi \in \Pi_{n,m}} \ind[T^w_{n,m}(X_{\pi},Y_{\pi}) \leq t].
\end{equation*}
Let $\pi$ and $\pi'$ be independent and uniformly distributed on $\Pi_{n,m}$ and independent of $V_1, V_2, \dots, V_{n+m}$. Suppose that under the probability measure on $\Omega^{n+m}$ induced by $V_1,V_2,\dots, V_{n+m}$, \begin{equation}
\label{eq:condition_perm}
\left( T^{w}_{n,m}(X_{\pi},Y_{\pi}), T^{w}_{n,m}(X_{\pi'},Y_{\pi'}) \right) \xrightarrow{D} (T,T'),
\end{equation}
as $n,m \rightarrow \infty$ such that $T$ and $T'$ are independent, each with a common c.d.f. $\Gamma(\cdot)$. Then by Theorem 5.1 in \cite{chun:13}, for all continuity points $t$ of $\Gamma(\cdot)$, $| \tilde{\Gamma}^w_{n,m}(t) - \Gamma(t)|=o_{\prob}(1)$ as $n,m \rightarrow \infty$. Observe that conditional on the data $V_1, V_2, \dots, V_{n+m}$, $\sup_t \lvert \hat{\Gamma}^w_{n,m}(t) - \tilde{\Gamma}^w_{n,m}(t) \rvert = o_{\prob}(1) $ for $\hat{\Gamma}^w_{n,m}(t)$ as in equation \eqref{eq:perm_cdf} as $K \rightarrow \infty$ using Theorem 11.2.18 in \cite{lehm:86}, regardless of the data distribution. Specifically, $\prob(\sup_t \lvert \hat{\Gamma}^w_{n,m}(t) - \tilde{\Gamma}^w_{n,m}(t) \rvert > C \mid V_1,\dots,V_{n+m}) \le 2\exp(-2KC^2)$ a.s.
Hence unconditionally as $n,m, K \rightarrow \infty$, one has for all continuity points $t$ of $\Gamma(\cdot)$, $| \hat{\Gamma}^w_{n,m}(t) - \Gamma(t)|=o_{\prob}(1)$ provided the condition in \eqref{eq:condition_perm} holds. In what follows we first outline the steps of the proof and then establish the details in each step. 

\begin{enumerate}[label=\Roman*.]
	\item Under $H_0$, we will show that condition \eqref{eq:condition_perm} holds with $\Gamma(\cdot) \equiv \Gamma_L(\cdot)$. Then for all $t$ such that  $\Gamma_L(t)$ is continuous, $| \hat{\Gamma}^w_{n,m}(t) - \Gamma_L(t)|=o_{\prob}(1)$ as $n,m,K \rightarrow \infty$. If in addition $\Gamma_L(t)$ is continuous and strictly increasing at $q_\alpha$ as per the statement of the theorem, then $\lvert \hat{q}_\alpha-q_\alpha \rvert = o_{\prob}(1)$ as $n,m,K \rightarrow \infty$.
	\item Let $\bar{V}=(\bar{V}_1, \dots , \bar{V}_{n+m})$ be an i.i.d. sample from $\bar{P}$ generated through the coupling construction described in Section~5.3 of \cite{chun:13} by using the pooled observations $V=(V_1, \dots, V_{n+m})$ such that for some permutation $\pi_0$, $V_i$ and $\bar{V}_{\pi_0(i)}$ agree for all $i=1, \dots, n+m$ except for $D$ entries, where $\E(D/(n+m)) \leq (n+m)^{-1/2}$. For a permutation $p \in \Pi_{n,m}$, let $\bar{V}_p= \left(\bar{V}_{p(1)}, \dots, \bar{V}_{p(n+m)}\right)$, and $\bar{X}_p= \left(\bar{V}_{p(1)}, \dots, \bar{V}_{p(n)}\right)$ and $\bar{Y}_p=\left(\bar{V}_{p(n+1)}, \dots, \bar{V}_{p(n+m)}\right)$ be the division of $\bar{V}$ into two samples of sizes $n$ and $m$ respectively. 
	Due to condition \eqref{eq:condition_perm} to be established in Step $\mathrm{I}$ under $H_0$, 
	condition \eqref{eq:condition_perm} also holds for $\left(T^w_{n,m}(\bar{X}_\pi,\bar{Y}_\pi), T^w_{n,m}(\bar{X}_{\pi'},\bar{Y}_{\pi'}) \right)$ under the data distribution of $\bar{V}$ with $\Gamma(\cdot)\equiv \bar{\Gamma}_L(\cdot)$, where $\pi$ and $\pi'$ are independent permutations uniformly distributed on $\Pi_{n,m}$ and independent of $V$ and $\bar{V}$. This establishes condition (5.9) in \cite{chun:13}. 
	Next we will show that $\lvert T^w_{n,m}(\bar{X}_{\pi \pi_0},\bar{Y}_{\pi\pi_0})-T^w_{n,m}(X_{\pi},Y_{\pi})\rvert = o_{\prob}(1)$, where $\pi\pi_0$ is the composition of the permutations $\pi$ and $\pi_0$ with $\pi_0$ applied first. 
	This will establish condition (5.10) of \cite{chun:13}. By Lemma~5.1 in \cite{chun:13} for all continuity points $t$ of $\bar{\Gamma}_L(t)$, $\lvert \tilde{\Gamma}^w_{n,m}(t)-\bar{\Gamma}_L(t)\rvert=o_{\prob}(1)$, which in conjunction with previous arguments implies that $\lvert \hat{\Gamma}^w_{n,m}(t)-\bar{\Gamma}_L(t)\rvert=o_{\prob}(1)$ as $n,m,K \rightarrow \infty$,  whenever $t$ is a continuity point of $\bar{\Gamma}_L(t)$. Since $\bar{\Gamma}_L(\cdot)$ is continuous and strictly increasing at $\bar{q}_\alpha$, one obtains $\lvert \hat{q}_\alpha - \bar{q}_\alpha \rvert = o_{\prob}(1)$. 
	\item Let $B_{nm}$ denote the event $B_{nm}=\{\hat{q}_\alpha \leq 2\bar{q}_\alpha\}$. Assuming that we can establish the missing details in Step $\mathrm{II}$, one has $\lvert \hat{q}_\alpha - \bar{q}_\alpha \rvert = o_{\prob}(1)$, which implies that $\prob(B_{nm}) \rightarrow 1$ as $n \rightarrow \infty$. With $A_{nm,\eta}$, $A^{(1)}_{nm}$ and $A^{(2)}_{nm}$ as in 
	the proof of Theorem \ref{thm: power},  observe 
	\begin{align*}
	\tilde{\beta}^w_{n,m} = & \prob_{H_{n,m}} \left( T^w_{n,m} > \hat{q}_\alpha \right) \\ \geq & \prob_{H_{n,m}} \left( T^w_{n,m} > \hat{q}_\alpha, A_{nm,\eta} \cap A^{(1)}_{nm}\cap A^{(2)}_{nm}, B_{nm} \right) \\ \geq & \prob_{H_{n,m}} \left( \frac{nm}{n+m} U_1 > 2 \bar{q}_\alpha - \frac{nm}{n+m} \frac{a_{nm}}{4}, A_{nm,\eta} \cap A^{(1)}_{nm}\cap A^{(2)}_{nm}, B_{nm}\right),
	\end{align*}
	where the last step follows using \eqref{eq:pow} and since $\hat{q}_\alpha \leq 2\bar{q}_\alpha$ under $B_{nm}$. Since $\frac{nm}{n+m} U_1 = O_{\prob}(1)$ and $\frac{nm}{n+m}a_{nm} \rightarrow \infty$ as $m,n \rightarrow \infty$, and due to the fact that $\prob(A_{nm,\eta} \cap A^{(1)}_{nm}\cap A^{(2)}_{nm})$ and $\prob(B_{nm})$ converges to $1$ as $n,m \rightarrow \infty$
	and $\eta\rightarrow\infty$, 
	one has $\tilde{\beta}^w_{n,m} \rightarrow 1$ as $n,m \rightarrow \infty$. 
\end{enumerate}
\noindent Hence it remains to establish the open ends in Steps $\mathrm{I}$ and $\mathrm{II}$ to complete the proof. 

\vspace{1em}

\noindent \underline{Step $\mathrm{I}$}: 
What remains to be shown is that condition \eqref{eq:condition_perm} holds with $\Gamma(\cdot) \equiv \Gamma_L(\cdot)$ under $H_0$.
For any permutation $\pi$ and $V \sim P_1$, where $P_1=P_2$ under $H_0$, define $\tilde{T}^w_{n,m,\pi}$ as
\begin{equation*}
\tilde{T}^w_{n,m,\pi} = 2\, \E_V\left( \int \left\lbrace W^{n,m}_{V,\pi} \right \rbrace^2 \df{u} \right),
\end{equation*}
where 
\begin{align*}
W^{n,m}_{x,\pi}(u) & = \sqrt{\frac{nm w_x(u)}{n+m}} \left\lbrace \denom[n]\add[in]\ind[d(x,V_{\pi(i)})\leq u]-F_{x}^V(u) \right. \\ 
& \quad -\left.\denom[m]\add[km]\ind[d(x,V_{\pi(k+n)})\leq u]+F_{x}^V(u) \right \rbrace
\end{align*}
and $F_x^V(u)=\prob\left( d(x,V) \leq u \right)$. Likewise for permutation $\pi'$, one obtains $\tilde{T}^w_{n,m,\pi'}$ from $W^{n,m}_{x,\pi'}(u)$ analogously. Using arguments in the proof of Theorem \ref{thm: null_dist} conditional on $\pi$ one has that under $H_0$,
\begin{align}
\left \lvert T^{w}_{n,m}(X_{\pi},Y_{\pi}) - \tilde{T}^w_{n,m,\pi} \right \rvert \xrightarrow{P|\pi} 0
\end{align}
as $n,m \rightarrow \infty$, where $\xrightarrow{P|\pi}$ is the notation for convergence in probability conditional on $\pi$. Hence $\left \lvert T^{w}_{n,m}(X_{\pi},Y_{\pi}) - \tilde{T}^w_{n,m,\pi} \right \rvert=o_{\prob}(1)$ and $\left \lvert T^{w}_{n,m}(X_{\pi'},Y_{\pi'}) - \tilde{T}^w_{n,m,\pi'} \right \rvert=o_{\prob}(1)$ as $n,m \rightarrow \infty$ unconditionally as well. By Slutsky's Theorem, the asymptotic distribution of $ \left( T^{w}_{n,m}(X_{\pi},Y_{\pi}), T^{w}_{n,m}(X_{\pi'},Y_{\pi'}) \right)$ is determined by that of $ \left( \tilde{T}^w_{n,m,\pi}, \tilde{T}^w_{n,m,\pi'} \right)$ and hence it is sufficient to show that under $H_0$, as $n,m \rightarrow \infty$,
\begin{equation}
\label{eq:condition_perm_relaxed}
\left( \tilde{T}^w_{n,m,\pi}, \tilde{T}^w_{n,m,\pi'} \right) \xrightarrow{D} (T,T')
\end{equation}
such that $T$ and $T'$ are independent, each with a common c.d.f. $\Gamma_L(\cdot)$. Observe that by the continuous mapping theorem and the arguments in the proof of Theorem \ref{thm: null_dist}, the condition in \eqref{eq:condition_perm_relaxed} holds if one can establish the following fact:  
The two-dimensional stochastic process $(x,x',u)\mapsto \left( W^{n,m}_{x,\pi}(u), W^{n,m}_{x',\pi'}(u) \right) $ with $x,x'\in\Omega$ and $u\in[0,\mathcal{M}]$,  $\mathcal{M}=\mathrm{diam}(\Omega)$, converges weakly to $(x,x',u)\mapsto\left(\mathcal{G}_x(u),\mathcal{G}'_{x'}(u)\right)$, 
where $(x,u) \mapsto \mathcal{G}_x(u)$ and $(x',u) \mapsto \mathcal{G}'_{x'}(u)$ are independent zero mean Gaussian processes on $\Omega\times[0,\mathcal{M}]$ with covariances  $C_{x_1,x_2}(u_1,u_2)$ and   $C_{x'_1,x'_2}(u'_1,u'_2)$, respectively, where $$C_{x_1,x_2}(u_1,u_2)=\sqrt{w_{x_1}(u_1)w_{x_2}(u_2)} \ \mathrm{Cov}(\ind[d(x_1,V)\leq u_1],\ind[d(x_2,V)\leq u_2])$$ and $$C_{x'_1,x'_2}(u'_1,u'_2)=\sqrt{w_{x_1'}(u'_1)w_{x_2'}(u'_2)} \ \mathrm{Cov}(\ind[d(x_1',V)\leq u'_1],\ind[d(x_2',V)\leq u'_2]).$$ By Lemma 1.5.4 in \cite{well:96}, it is sufficient to show that 
\\ \subitem i) the two-dimensional stochastic process $(x,x',u)\mapsto \left( W^{n,m}_{x,\pi}(u), W^{n,m}_{x',\pi'}(u) \right) $ on $\Omega\times\Omega\times[0,\mathcal{M}]$ is asymptotically tight. By Lemma 1.4.3 in \cite{well:96} this is implied if  the stochastic processes $(x,u)\mapsto W^{n,m}_{x,\pi}(u)$ and $(x',u)\mapsto W^{n,m}_{x',\pi'}(u)$ are each asymptotically tight.
\\ \subitem ii) for any choice of $(x_1,u_1),(x_2,u_2), (x'_1,u'_1), (x'_2,u'_2) \in \Omega \times [0,\mathcal{M}]$ one has that $\left( W^{n,m}_{x_1,\pi}(u_1), W^{n,m}_{x_2,\pi}(u_2), W^{n,m}_{x_1',\pi'}(u'_1) , W^{n,m}_{x'_2,\pi'}(u'_2) \right) $ converges in distribution as $n,m \rightarrow \infty$ to a multivariate Gaussian distribution  with zero mean and covariance   \medskip  

\begin{center} $\begin{pmatrix}
	C_{x_1,x_1}(u_1,u_1) & C_{x_1,x_2}(u_1,u_2) & 0 & 0\\
	C_{x_1,x_2}(u_1,u_2) & C_{x_2,x_2}(u_2,u_2) & 0 & 0\\
	0 & 0 & C_{x'_1,x'_1}(u'_1,u'_1) & C_{x'_1,x'_2}(u'_1,u'_2) \\
	0 & 0 & C_{x'_1,x'_2}(u'_1,u'_2) & C_{x'_2,x'_2}(u'_2,u'_2)
	\end{pmatrix}.$
\end{center} 
\medskip

\noindent Observe that by using arguments in the proof of Theorem \ref{thm: null_dist} conditionally on $\pi$ one has that $(x,u) \mapsto W^{n,m}_{x,\pi}(u)$ converges weakly to $(x,u) \mapsto \mathcal{G}_x(u)$ as $n,m \rightarrow \infty$, where the stochastic process $(x,u) \mapsto \mathcal{G}_x(u) $ is independent of $\pi$. Therefore,  unconditionally $(x,u) \mapsto W^{n,m}_{x,\pi}(u)$ converges weakly to the process $(x,u) \mapsto \mathcal{G}_x(u)$ and $(x',u) \mapsto W^{n,m}_{x',\pi'}(u)$ converges weakly to the process $(x',u) \mapsto \mathcal{G}'_{x'}(u)$ as $n,m \rightarrow \infty$. By Theorem 1.5.4 in \cite{well:96} this implies that the processes $(x,u)\mapsto W^{n,m}_{x,\pi}(u)$ and $(x',u)\mapsto W^{n,m}_{x',\pi'}(u)$ are asymptotically tight, thereby implying condition i) and condition ii), provided that $\left( W^{n,m}_{x_1,\pi}(u_1), W^{n,m}_{x_2,\pi}(u_2), W^{n,m}_{x_1',\pi'}(u'_1) , W^{n,m}_{x'_2,\pi'}(u'_2) \right)$ jointly converge to a multivariate Gaussian distribution and $\mathrm{Cov}\left(W^{n,m}_{x,\pi}(u),W^{n,m}_{x',\pi'}(u')\right)=0$ for any $x,x' \in \Omega$ and any $u,u' \in [0,\mathcal{M}]$. 
In what follows we show that $\mathrm{Cov}\left(W^{n,m}_{x,\pi}(u),W^{n,m}_{x',\pi'}(u')\right)=0$ for any $x,x' \in \Omega$ and any $u,u' \in [0,\mathcal{M}]$
and then  use the Cram\'er--Wold theorem to show that for any $l=(l_1,l_2,l_3,l_4)\tps \in \mathbb{R}^4$, $l_1 W^{n,m}_{x_1,\pi}(u_1)+ l_2 W^{n,m}_{x_2,\pi}(u_2)+ l_3 W^{n,m}_{x'_1,\pi'}(u'_1) + l_4 W^{n,m}_{x'_2,\pi'}(u'_2)$ converges to a Gaussian  distribution, excluding the trivial case when $l=(0,0,0,0)\tps$, which then concludes the proof for  Step $\mathrm{I}$. \\

\noindent For $x \in \Omega$ and $u \in [0,\mathcal{M}]$, $W^{n,m}_{x,\pi}(u)$ can be expressed as
\begin{equation*}
W^{n,m}_{x,\pi}(u) = \sqrt{\frac{nm w_x(u)}{n+m}} \left[\sum_{i=1}^{n+m} \xi_{\pi^{-1}(i)} \left \lbrace \ind[d(x,V_i)\leq u] - F_x^V(u) \right \rbrace \right],
\end{equation*}
where $\xi_{i} = \frac{1}{n}$ if $i \leq n$ and $\xi_{i} = -\frac{1}{m}$ if $i \geq n+1$. For a random $\pi$ uniformly distributed in $\Pi_{n,m}$, $\E\left( \xi_{\pi^{-1}(i)} \right) = 0$. By the independence of $\pi$, $\pi'$ and the data $V_1, \dots, V_n$, one has
\begin{equation*}
\E \left( \xi_{\pi^{-1}(i)} \left \lbrace \ind[d(x,V_i)\leq u] - F_x^V(u) \right \rbrace \right) \\ =  \E \left( \xi_{\pi^{-1}(i)} \right) \E \left \lbrace \ind[d(x,V_i)\leq u] - F_x^V(u) \right \rbrace  =  0 
\end{equation*}
and similarly for $x' \in \Omega$ and $u' \in [0,\mathcal{M}]$, $\E \left( \xi_{\pi'^{-1}(j)} \left \lbrace \ind[d(x',V_j)\leq u'] - F_{x'}^V(u') \right \rbrace \right) =0$. Moreover,
\begin{align*}
&\ \E \left( \xi_{\pi^{-1}(i)} \left \lbrace \ind[d(x,V_i)\leq u] - F_x^V(u) \right \rbrace \xi_{\pi'^{-1}(j)} \left \lbrace \ind[d(x',V_j)\leq u'] - F_{x'}^V(u') \right \rbrace \right) \\ 
= &\ \E \left( \xi_{\pi^{-1}(i)} \right) \E \left( \xi_{\pi'^{-1}(j)} \right) \E \left \lbrace \left(\ind[d(x,V_i)\leq u] - F_x^V(u)\right) 
\left(\ind[d(x',V_j)\leq u'] - F_{x'}^V(u')\right)\right \rbrace \\ = &\ 0.
\end{align*}
Hence for any $i,j \in \{1,2,\dots, n+m\}$, 
\begin{equation*}
\mathrm{Cov}\left(\xi_{\pi^{-1}(i)} \left \lbrace \ind[d(x,V_i)\leq u] - F_x^V(u) \right \rbrace , \xi_{\pi'^{-1}(j)} \left \lbrace \ind[d(x',V_j)\leq u'] - F_{x'}^V(u') \right \rbrace \right) = 0,
\end{equation*}
which leads to  $\mathrm{Cov}\left(W^{n,m}_{x,\pi}(u),W^{n,m}_{x',\pi'}(u')\right)=0$ for any $x,x' \in \Omega$ and any $u,u' \in [0,\mathcal{M}]$.

Finally observe that
\begin{align*}
& l_1 W^{n,m}_{x_1,\pi}(u_1)+ l_2 W^{n,m}_{x_2,\pi}(u_2)+ l_3 W^{n,m}_{x'_1,\pi'}(u'_1) + l_4 W^{n,m}_{x'_2,\pi'}(u'_2) = \sum_{i=1}^{n+m} \Xi_{i}^{\pi,\pi',l},
\end{align*}
where we define $ \Xi_i^{\pi,\pi',l} = \sqrt{\frac{nm}{n+m}} [ l_1 \xi_{\pi^{-1}(i)} \sqrt{w_{x_1}(u_1)} \lbrace \ind[d(x_1,V_i)\leq u_1] - F_{x_1}^V(u_1) \rbrace + l_2 \xi_{\pi^{-1}(i)} \sqrt{w_{x_2}(u_2)} \lbrace \ind[d(x_2,V_i)\leq u_2] - F_{x_2}^V(u_2) \rbrace + l_3 \xi_{\pi'^{-1}(i)} \sqrt{w_{x'_1}(u'_1)} \lbrace \ind[d(x'_1,V_i)\leq u'_1] - F_{x'_1}^V(u'_1) \rbrace + l_4 \xi_{\pi'^{-1}(i)} \sqrt{w_{x'_2}(u'_2)} \lbrace \ind[d(x'_2,V_i)\leq u'_2] - F_{x'_2}^V(u'_2) \rbrace ] $. 
To simplify notations, let 
\begin{align*}
&c^{x,x}_{ij} =\\ &\Cov \left(\sqrt{w_{x_i}(u_i)} \lbrace \ind[d(x_i,V_i)\leq u_i] - F_{x_i}^V(u_i) \rbrace, \sqrt{w_{x_j}(u_j)} \lbrace \ind[d(x_j,V_i)\leq u_j] - F_{x_j}^V(u_j) \rbrace\right),\\
&c^{x',x'}_{ij} =\\ &\Cov \left(\sqrt{w_{x'_i}(u'_i)} \lbrace \ind[d({x'_i},V_i)\leq u'_i] - F_{x'_i}^V(u'_i) \rbrace, \sqrt{w_{x'_j}(u'_j)} \lbrace \ind[d({x'_j},V_i)\leq u'_j] - F_{x'_j}^V(u'_j) \rbrace \right),\\
&c^{x,x'}_{ij}=\\ &\Cov\left(\sqrt{w_{x_i}(u_i)} \lbrace \ind[d({x_i},V_i)\leq u_i] - F_{x_i}^V(u_i) \rbrace, \sqrt{w_{x'_j}(u'_j)} \lbrace \ind[d({x'_j},V_i)\leq u'_j] - F_{x'_j}^V(u'_j) \rbrace\right).
\end{align*}
By symmetry, one has $c^{x,x}_{ij}=c^{x,x}_{ji}, c^{x',x'}_{ij}=c^{x',x'}_{ji}$ and $c^{x,x'}_{ij}=c^{x',x}_{ji}$. Conditioning on $\pi$ and $\pi'$, $\{\Xi_i^{\pi,\pi',l}\}_{i=1}^{n+m}$ are independent with conditional moments $\E( \Xi_i^{\pi,\pi',l} \mid \pi, \pi')=0$ and $\Var( \Xi_i^{\pi,\pi',l} \mid \pi, \pi') = \sigma^2_{\pi,\pi',i}$,  where
\begin{align*}
\sigma^2_{\pi,\pi',i} = \frac{nm}{n+m} [ & l_1^2 \xi^2_{\pi^{-1}(i)} c^{x,x}_{11}+l_2^2 \xi^2_{\pi^{-1}(i)} c^{x,x}_{22} + l_3^2 \xi^2_{\pi'^{-1}(i)} c^{x',x'}_{33}+ l_4^2 \xi^2_{\pi'^{-1}(i)} c^{x',x'}_{44} \\ + & 2l_1l_2\xi^2_{\pi^{-1}(i)}c^{x,x}_{12}+2l_1l_3\xi_{\pi^{-1}(i)}\xi_{\pi'^{-1}(i)}c^{x,x'}_{13}+2l_1l_4\xi_{\pi^{-1}(i)}\xi_{\pi'^{-1}(i)}c^{x,x'}_{14} \\ + & 2l_2l_3\xi_{\pi^{-1}(i)}\xi_{\pi'^{-1}(i)}c^{x,x'}_{23} + 2l_2l_4\xi_{\pi^{-1}(i)}\xi_{\pi'^{-1}(i)}c^{x,x'}_{24} + 2l_3l_4\xi^2_{\pi'^{-1}(i)}c^{x',x'}_{34}] \\ = \frac{nm}{n+m} [ & \xi^2_{\pi^{-1}(i)} \lbrace l_1^2 c^{x,x}_{11}+l_2^2 c^{x,x}_{22}+2l_1l_2 c^{x,x}_{12}\rbrace + \xi^2_{\pi'^{-1}(i)} \lbrace l_3^2 c^{x',x'}_{33}+ l_4^2 c^{x',x'}_{44}+ 2l_3l_4 c^{x',x'}_{34}\rbrace \\ + & \xi_{\pi^{-1}(i)}\xi_{\pi'^{-1}(i)} \lbrace 2l_1l_3 c^{x,x'}_{13}+2l_1l_4 c^{x,x'}_{14}+2l_2l_3 c^{x,x'}_{23} + 2l_2l_4 c^{x,x'}_{24} \rbrace ].
\end{align*}
Therefore one has
\begin{align*}
\sum_{i=1}^{n+m} \sigma^2_{\pi,\pi',i} &= \frac{nm}{n+m} \left[ \sum_{i=1}^{n+m}\xi^2_{\pi^{-1}(i)} \lbrace l_1^2 c^{x,x}_{11}+l_2^2 c^{x,x}_{22}+2l_1l_2 c^{x,x}_{12}\rbrace\right. \\ 
&\quad + \sum_{i=1}^{n+m} \xi^2_{\pi'^{-1}(i)} \lbrace l_3^2 c^{x',x'}_{33}+ l_4^2 c^{x',x'}_{44}+ 2l_3l_4 c^{x',x'}_{34}\rbrace \\ 
&\quad + \left.\sum_{i=1}^{n+m} \xi_{\pi^{-1}(i)}\xi_{\pi'^{-1}(i)} \lbrace 2l_1l_3 c^{x,x'}_{13}+2l_1l_4 c^{x,x'}_{14}+2l_2l_3 c^{x,x'}_{23} + 2l_2l_4 c^{x,x'}_{24} \rbrace \right]. 
\end{align*}
Before proceeding note that $\sum_{i=1}^{n+m}\xi^2_{\pi^{-1}(i)} = \sum_{i=1}^{n+m} \xi^2_{\pi'^{-1}(i)} = \frac{n+m}{nm}$. 
Second, $l_1^2 c^{x,x}_{11}+l_2^2 c^{x,x}_{22}+2l_1l_2 c^{x,x}_{12} = l_{12}\tps \begin{pmatrix}
c^{x,x}_{11} & c^{x,x}_{12} \\ c^{x,x}_{12} & c^{x,x}_{22}
\end{pmatrix}l_{12}$, where $l_{12}=(l_1,l_2)\tps$ and $l_3^2 c^{x',x'}_{33}+ l_4^2 c^{x',x'}_{44}+ 2l_3l_4 c^{x',x'}_{34} = l_{34}\tps \begin{pmatrix}
c^{x',x'}_{33} & c^{x',x'}_{34} \\ c^{x',x'}_{34} & c^{x',x'}_{44}
\end{pmatrix}l_{34}$, where $l_{34}=(l_3,l_4)\tps$. Observe that $\begin{pmatrix}
c^{x,x}_{11} & c^{x,x}_{12} \\ c^{x,x}_{12} & c^{x,x}_{22}
\end{pmatrix}$ and $\begin{pmatrix}
c^{x',x'}_{33} & c^{x',x'}_{34} \\ c^{x',x'}_{34} & c^{x',x'}_{44}
\end{pmatrix}$ are two-dimensional snapshots of the covariance surfaces of the Gaussian processes $(x,u) \mapsto \mathcal{G}_x(u)$ and $(x',u) \mapsto \mathcal{G}_{x'}(u)$, respectively, therefore  the matrices $\begin{pmatrix}
c^x_{11} & c^x_{12} \\ c^x_{12} & c^x_{22}
\end{pmatrix}$ and $\begin{pmatrix}
c^{x'}_{33} & c^{x'}_{34} \\ c^{x'}_{34} & c^{x'}_{44}
\end{pmatrix}$ are symmetric positive definite. Hence for $l \neq 0$ one has $ l_1^2 c^{x,x}_{11}+l_2^2 c^{x,x}_{22}+2l_1l_2 c^{x,x}_{12} > 0$ and $l_3^2 c^{x',x'}_{33}+ l_4^2 c^{x',x'}_{44}+ 2l_3l_4 c^{x',x'}_{34} > 0$, implying 
\begin{equation}
\label{eq:sigma_limit}
C^{x,x'}_{l,1}=l_1^2 c^{x,x}_{11}+l_2^2 c^{x,x}_{22}+2l_1l_2 c^{x,x}_{12}+l_3^2 c^{x',x'}_{33}+ l_4^2 c^{x',x'}_{44}+ 2l_3l_4 c^{x',x'}_{34} > 0.
\end{equation}

Further,  with 
$C^{x,x'}_{l,1}$ as in \eqref{eq:sigma_limit} and $C^{x,x'}_{l,2}=  2l_1l_3 c^{x,x'}_{13}+2l_1l_4 c^{x,x'}_{14}+2l_2l_3 c^{x,x'}_{23} + 2l_2l_4 c^{x,x'}_{24} $ one has
\begin{equation*}
\sum_{i=1}^{n+m} \sigma^2_{\pi,\pi',i} = C^{x,x'}_{l,1} + C^{x,x'}_{l,2} \frac{nm}{n+m} \sum_{i=1}^{n+m} \xi_{\pi^{-1}(i)}\xi_{\pi'^{-1}(i)}.
\end{equation*}
By the independence of $\pi$ and $\pi'$, and since $\E\left( \xi_{\pi^{-1}(i)} \right) = 0$ and $\E\left( \xi_{\pi'^{-1}(i)} \right) = 0$, we have $\E( \sum_{i=1}^{n+m} \sigma^2_{\pi,\pi',i})=C^{x,x'}_{l,1}$ and $\mathrm{Var}(\sum_{i=1}^{n+m} \xi_{\pi^{-1}(i)}\xi_{\pi'^{-1}(i)})=\E\left[\left(\sum_{i=1}^{n+m} \xi_{\pi^{-1}(i)}\xi_{\pi'^{-1}(i)}\right)^2\right]$. Hence $\mathrm{Var}(\sum_{i=1}^{n+m} \sigma^2_{\pi,\pi',i})= (C^{x,x'}_{l,2})^2 \left( \frac{nm}{n+m} \right)^2 \E\left[\left(\sum_{i=1}^{n+m} \xi_{\pi^{-1}(i)}\xi_{\pi'^{-1}(i)}\right)^2\right]$, where by the independence of $\pi$ and $\pi'$,
\begin{align*}
& \E\left[\left(\sum_{i=1}^{n+m} \xi_{\pi^{-1}(i)}\xi_{\pi'^{-1}(i)}\right)^2\right] \\ = & \sum_{i=1}^{n+m} \E\left(\xi^2_{\pi^{-1}(i)}\right) \E\left(\xi^2_{\pi'^{-1}(i)}\right) + \sum_{i=1}^{n+m}\sum_{j \neq i} \E\left(\xi_{\pi^{-1}(i)}\xi_{\pi^{-1}(j)}\right) \E\left(\xi_{\pi'^{-1}(i)} \xi_{\pi'^{-1}(j)}\right).
\end{align*}
When $\pi$ is uniform on $\Pi_{nm}$, $\E\left(\xi^2_{\pi^{-1}(i)}\right) = \frac{1}{nm}$ and \begin{align*}
&\ \E\left(\xi_{\pi^{-1}(i)}\xi_{\pi^{-1}(j)}\right) \\ 
= &\ \frac{1}{n^2} \frac{n(n-1)}{(n+m)(n+m-1)} + \frac{1}{m^2} \frac{m(m-1)}{(n+m)(n+m-1)} - \frac{1}{nm} \frac{2mn}{(n+m)(n+m-1)} \\ 
= &\ -\frac{1}{nm(n+m-1)},
\end{align*}
whence 
\begin{align*}
\Var\left(\sum_{i=1}^{n+m} \sigma^2_{\pi,\pi',i}\right)
&= (C^{x,x'}_{l,2})^2 \left( \frac{nm}{n+m} \right)^2 \left\lbrace \frac{n+m}{n^2 m^2} + \frac{(n+m)(n+m-1)}{n^2 m^2 (n+m-1)^2}\right \rbrace \\ 
&= (C^{x,x'}_{l,2})^2 \frac{1}{n+m-1}.
\end{align*}
By  Chebychev's inequality, 
\begin{equation}
\label{eq:sigma_consistency}
\sum_{i=1}^{n+m} \sigma^2_{\pi,\pi',i} \xrightarrow{P} C^{x,x'}_{l,1} 
\end{equation}
as $n+m \rightarrow \infty$   
and we observe that under Assumption~\ref{ass:samp_ratio} there exists a constant $C > 0$ such that for each $i=1,\dots,n$,
\begin{align*}
\left \lvert \Xi_i^{\pi,\pi',l} \right \rvert \leq C \frac{1}{\sqrt{n+m}}.
\end{align*}
Hence for any $\epsilon > 0$,
\begin{equation*}
\frac{1}{ \sum_{i=1}^{n+m} \sigma^2_{\pi,\pi',i} } \sum_{i=1}^{n+m} \E \left[ \left\lvert \Xi_i^{\pi,\pi',l}\right\rvert^2 \ind[\left \lvert \Xi_i^{\pi,\pi',l} \right \rvert > \epsilon \sqrt{\sum_{i=1}^{n+m} \sigma^2_{\pi,\pi',i}} ] \right]\rightarrow 0 
\end{equation*}
as $n+m \rightarrow \infty$. By the Lindeberg's CLT, 
$$\frac{l_1 W^{n,m}_{x_1,\pi}(u_1)+ l_2 W^{n,m}_{x_2,\pi}(u_2)+ l_3 W^{n,m}_{x'_1,\pi'}(u'_1) + l_4 W^{n,m}_{x'_2,\pi'}(u'_2)}{\sqrt{\sum_{i=1}^{n+m} \sigma^2_{\pi,\pi',i}} } \xrightarrow{D} N(0,1)$$ 
as $n+m \rightarrow \infty$ conditionally on $\pi$ and $\pi'$. Since the asymptotic distribution does not depend on $\pi$ or $\pi'$,  it follows that 
$$\frac{l_1 W^{n,m}_{x_1,\pi}(u_1)+ l_2 W^{n,m}_{x_2,\pi}(u_2)+ l_3 W^{n,m}_{x'_1,\pi'}(u'_1) + l_4 W^{n,m}_{x'_2,\pi'}(u'_2)}{\sqrt{\sum_{i=1}^{n+m} \sigma^2_{\pi,\pi',i}} } \xrightarrow{D} N(0,1)$$ unconditionally as well. Slutsky's Theorem,  \eqref{eq:sigma_limit} and \eqref{eq:sigma_consistency}
then imply $$\frac{l_1 W^{n,m}_{x_1,\pi}(u_1)+ l_2 W^{n,m}_{x_1,\pi}(u_2)+ l_3 W^{n,m}_{x'_1,\pi'}(u'_1) + l_4 W^{n,m}_{x'_2,\pi'}(u'_2)}{\sqrt{C^{x,x'}_{l,1}} } \xrightarrow{D} N(0,1),$$ which concludes the proof for Step $\mathrm{I}$.

\vspace{0.2in}

\noindent \underline{Step $\mathrm{II}$}: 
What remains to be shown is that $\lvert T^w_{n,m}(\bar{X}_{\pi \pi_0},\bar{Y}_{\pi\pi_0})-T^w_{n,m}(X_{\pi},Y_{\pi})\rvert = o_{\prob}(1)$, where $\pi\pi_0$ is the composition of the permutations $\pi$ and $\pi_0$ with $\pi_0$ applied first. 
Observe that
$T^w_{n,m}(\bar{X}_{\pi \pi_0},\bar{Y}_{\pi\pi_0})= \int \bar{S}_{n,m}(u) \df{u}$ and $T^w_{n,m}({X}_{\pi},{Y}_{\pi})= \int {S}_{n,m}(u) \df{u}$, where with $r_{i,n,m}$ defined as $r_{i,n,m} =\frac{1}{n}$ for $i \leq n$ and $r_{i,n,m} = -\frac{1}{m} $ for $n+1 \leq i \leq n+m$, we have
\begin{equation*}
\bar{S}_{n,m}(u)=\frac{nm}{n+m} \left[ \sum_{i=1}^{n+m} r_{i,n,m} \hat{w}_{\bar{V}_{\pi\pi_0(i)}}(u) \left\lbrace \hat{F}^{\bar{X}_{\pi\pi_0}}_{\bar{V}_{\pi\pi_0(i)}}(u)-\hat{F}^{\bar{Y}_{\pi\pi_0}}_{\bar{V}_{\pi\pi_0(i)}}(u)\right \rbrace^2\right],
\end{equation*}
where $\hat{F}^{\bar{X}_{\pi\pi_0}}_{x}(u)=\frac{1}{n} \sum_{i=1}^n \ind[d(x,\bar{V}_{\pi\pi_0(i)}) \leq u]$ and \newline $\hat{F}^{\bar{Y}_{\pi\pi_0}}_{x}(u)=\frac{1}{m} \sum_{j=1}^m \ind[d(x,\bar{V}_{\pi\pi_0(n+j)}) \leq u]$. Furthermore, 
\begin{equation*}
{S}_{n,m}(u)=\frac{nm}{n+m} \left[ \sum_{i=1}^{n+m} r_{i,n,m} \hat{w}_{{V}_{\pi(i)}}(u) \left\lbrace \hat{F}^{{X}_\pi}_{{V}_{\pi(i)}}(u)-\hat{F}^{{Y}_\pi}_{{V}_{\pi(i)}}(u)\right \rbrace^2\right],
\end{equation*}
where $\hat{F}^{{X}_\pi}_{x}(u)=\frac{1}{n} \sum_{i=1}^n \ind[d(x,{V}_{\pi(i)}) \leq u]$ and $\hat{F}^{{Y}_\pi}_{x}(u)=\frac{1}{m} \sum_{j=1}^m \ind[d(x,{V}_{\pi(n+j)}) \leq u]$. If we can show that $\sup_u \lvert \bar{S}_{n,m}(u) - {S}_{n,m}(u)\rvert = o_{\prob}(1)$, then by the continuous mapping theorem we have $\lvert T^w_{n,m}(\bar{X}_{\pi \pi_0},\bar{Y}_{\pi\pi_0})-T^w_{n,m}(X_{\pi},Y_{\pi})\rvert = o_{\prob}(1)$, which completes the proof. \\

\noindent Let  $\tilde{I}=\{i : \bar{V}_{\pi_0(i)} \neq V_{i}\} \subset \{1,2,\dots, n+m\}$ be the collection of (random) indices, where $\bar{V}_{\pi_0}$ does not match ${V}$ such that the cardinality of $\tilde{I}$ is  $\lvert \tilde{I}\rvert = D$. 
Let {$\tilde{I}^C = \{1,\dots,n+m\}\backslash \tilde{I}$}. 
Observe that one may write $\bar{S}_{n,m}(u)$ and ${S}_{n,m}(u)$ as 
\begin{equation*}
\bar{S}_{n,m}(u) = \frac{nm}{n+m} \left[ \sum_{i=1}^{n+m} r_{\pi^{-1}(i),n,m} \hat{w}_{\bar{V}_{\pi_0(i)}}(u) \left\lbrace \hat{F}^{\bar{X}_{\pi\pi_0}}_{\bar{V}_{\pi_0(i)}}(u)-\hat{F}^{\bar{Y}_{\pi\pi_0}}_{\bar{V}_{\pi_0(i)}}(u)\right \rbrace^2 \right]
\end{equation*}
and 
\begin{equation*}
{S}_{n,m}(u)=\frac{nm}{n+m} \left[ \sum_{i=1}^{n+m} r_{\pi^{-1}(i),n,m} \hat{w}_{{V}_{i}}(u) \left\lbrace \hat{F}^{{X}_\pi}_{{V}_{i}}(u)-\hat{F}^{{Y}_\pi}_{{V}_{i}}(u)\right \rbrace^2\right].
\end{equation*}
Setting  $r^{\pi}_i = r_{\pi^{-1}(i),n,m}$, we can write 
\begin{equation*}
\bar{S}_{n,m}(u) - {S}_{n,m}(u) = (\mathrm{S.I})+(\mathrm{S.II}),
\end{equation*}
where, using that for $i \in \tilde{I}^C$, $\bar{V}_{\pi_0(i)}=V_i$,
\begin{align*}
(\mathrm{S.I}) =\\ & \hspace{-1cm} \frac{nm}{n+m} \left[ \sum_{i\in \tilde{I}^C} r^{\pi}_i\hat{w}_{\bar{V}_{\pi_0(i)}}(u) \left\lbrace \left(\hat{F}^{\bar{X}_{\pi\pi_0}}_{\bar{V}_{\pi_0(i)}}(u)-\hat{F}^{\bar{Y}_{\pi\pi_0}}_{\bar{V}_{\pi_0(i)}}(u)\right)^2-\left( \hat{F}^{{X}_\pi}_{\bar{V}_{\pi_0(i)}}(u)-\hat{F}^{{Y}_{\pi}}_{\bar{V}_{\pi_0(i)}}(u)\right)^2\right \rbrace \right], \\ 
(\mathrm{S.II}) = \\ & \hspace{-1.2cm} \frac{nm}{n+m} \left[ \sum_{i\in \tilde{I}} r^{\pi}_i \left\lbrace \hat{w}_{\bar{V}_{\pi_0(i)}}(u) \left(\hat{F}^{\bar{X}_{\pi\pi_0}}_{\bar{V}_{\pi_0(i)}}(u)-\hat{F}^{\bar{Y}_{\pi\pi_0}}_{\bar{V}_{\pi_0(i)}}(u)\right)^2-\hat{w}_{{V}_{i}}(u) \left( \hat{F}^{{X}_\pi}_{{V}_{i}}(u)-\hat{F}^{{Y}_\pi}_{{V}_{i}}(u)\right)^2\right \rbrace \right].
\end{align*}
Hence it is sufficient to establish that $\sup_u \lvert (\mathrm{S.I}) \rvert = o_{\prob}(1)$ and $\sup_u \lvert (\mathrm{S.II}) \rvert = o_{\prob}(1)$. \\

\noindent Observe that for any $x \in \Omega$,
\begin{align*}
& \left \lvert \left( \hat{F}^{\bar{X}_{\pi\pi_0}}_x(u)-\hat{F}^{\bar{Y}_{\pi\pi_0}}_x(u) \right) - \left( \hat{F}^{{X}_{\pi}}_x(u)-\hat{F}^{{Y}_{\pi}}_x(u) \right) \right \rvert \\ 
=\ & \left \lvert \frac{1}{n} \sum_{i=1}^n \left\lbrace \ind[d(x,\bar{V}_{\pi\pi_0(i)}) \leq u] - \ind[d(x,{V}_{\pi(i)}) \leq u] \right\rbrace \right. \\ 
& + \left. \frac{1}{m} \sum_{j=1}^m \left\lbrace \ind[d(x,\bar{V}_{\pi\pi_0(n+j)}) \leq u] - \ind[d(x,{V}_{\pi(n+j)}) \leq u] \right\rbrace \right\rvert \\ 
=\ & \left \lvert \frac{1}{n} \sum_{i \in \tilde{I},\, i\in [1,n]} \left\lbrace \ind[d(x,\bar{V}_{\pi\pi_0(i)}) \leq u] - \ind[d(x,{V}_{\pi(i)}) \leq u] \right\rbrace \right. \\ 
& + \left. \frac{1}{m} \sum_{j \in \tilde{I},\, j \in [n+1,n+m]} \left\lbrace \ind[d(x,\bar{V}_{\pi\pi_0(n+j)}) \leq u] - \ind[d(x,{V}_{\pi(n+j)}) \leq u] \right\rbrace \right\rvert \\ 
\leq\ & \frac{D}{n} \left \lvert \frac{1}{D} \sum_{i \in \tilde{I},\, i\in[1,n]} \left\lbrace \ind[d(x,\bar{V}_{\pi\pi_0(i)}) \leq u] - \ind[d(x,{V}_{\pi(i)}) \leq u] \right\rbrace \right\rvert \\ & + \frac{D}{m} \left\lvert \frac{1}{D} \sum_{j \in \tilde{I},\, j \in [n+1,n+m]} \left\lbrace \ind[d(x,\bar{V}_{\pi\pi_0(n+j)}) \leq u] - \ind[d(x,{V}_{\pi(n+j)}) \leq u] \right\rbrace \right\rvert.
\end{align*}
We will establish that $$\sqrt{n+m} \sup_{x,u}\left \lvert \left( \hat{F}^{\bar{X}_{\pi\pi_0}}_x(u)-\hat{F}^{\bar{Y}_{\pi\pi_0}}_x(u) \right) - \left( \hat{F}^{{X}_{\pi}}_x(u)-\hat{F}^{{Y}_{\pi}}_x(u) \right) \right \rvert = o_{\prob}(1)$$ as $n,m \rightarrow \infty$. For $\gamma < 1/2$, under Assumption~\ref{ass:samp_ratio}, writing  $\vartheta(n,m)=\mathbb{I} \left\lbrace D > (n+m)^\gamma \right \rbrace,$
\begin{equation}
\label{eq:part1}
\sup_{x,u} \left \lvert \hat{F}^{\bar{X}_{\pi\pi_0}}_x(u)-\hat{F}^{\bar{Y}_{\pi\pi_0}}_x(u) - \hat{F}^{{X}_{\pi}}_x(u)+\hat{F}^{{Y}_{\pi}}_x(u) \right \rvert \vartheta(n,m)  = O \left( \frac{1}{(n+m)^{1-\gamma}} \right).
\end{equation}
Let $\tilde{V}_1,\dots,\tilde{V}_{n+m}$ be i.i.d. observations from the mixture distribution $\bar{P}$, which are independent of $\bar{V}_1, \dots, \bar{V}_{n+m}$. Then conditional on $D$ and $\pi$, using Theorem~\ref{thm:fhat}, \begin{align*}
\sup_{x,u} \left \lvert \frac{1}{D} \sum_{i \in \tilde{I}, \leq n} \left\lbrace \ind[d(x,\bar{V}_{\pi\pi_0(i)}) \leq u] - \ind[d(x,\tilde{V}_{\pi(i)}) \leq u] \right\rbrace \right\rvert \vartheta(n,m)  =o_{\prob}(1)
\end{align*}
and 
\begin{align*}
& \hspace{-1cm}  \sup_{x,u} \left\lvert \frac{1}{D} \sum_{j \in \tilde{I}, j \in [n+1,n+m]} \left\lbrace \ind[d(x,\bar{V}_{\pi\pi_0(n+j)}) \leq u] - \ind[d(x,\tilde{V}_{\pi(n+j)}) \leq u] \right\rbrace \right\rvert \vartheta(n,m)\\ & = o_{\prob}(1)
\end{align*}
as $n,m \rightarrow \infty$,  which implies the above results hold unconditionally as well. This in conjunction with Lemma 5.3 in \cite{chun:13} gives 
\begin{equation}
\label{eq:part2}
\sup_{x,u} \left \lvert \frac{1}{D} \sum_{i \in \tilde{I}, \leq n} \left\lbrace \ind[d(x,\bar{V}_{\pi\pi_0(i)}) \leq u] - \ind[d(x,{V}_{\pi(i)}) \leq u] \right\rbrace \right\rvert \vartheta(n,m) =o_{\prob}(1)
\end{equation}
and
\begin{equation}
\label{eq:part3}
\sup_{x,u} \left\lvert \frac{1}{D} \sum_{j \in \tilde{I}, j \in [n+1,n+m]} \left\lbrace \ind[d(x,\bar{V}_{\pi\pi_0(n+j)}) \leq u] - \ind[d(x,{V}_{\pi(n+j)}) \leq u] \right\rbrace \right\rvert \vartheta(n,m) = o_{\prob}(1)
\end{equation}
as $n,m \rightarrow \infty$. 
Next observe that by inequality (5.8) in \cite{chun:13} and Assumption~\ref{ass:samp_ratio}, we have $\sqrt{n+m} \frac{D}{n}= O_{\prob}(1)$ and $\sqrt{n+m} \frac{D}{m}= O_{\prob}(1)$ as $n,m \rightarrow \infty$, which together with \eqref{eq:part2} and \eqref{eq:part3} implies
\begin{equation}
\label{eq:part4}
\sqrt{n+m} \sup_{x,u} \left \lvert \hat{F}^{\bar{X}_{\pi\pi_0}}_x(u)-\hat{F}^{\bar{Y}_{\pi\pi_0}}_x(u) - \hat{F}^{{X}_{\pi}}_x(u)+\hat{F}^{{Y}_{\pi}}_x(u) \right \rvert \vartheta(n,m) = o_{\prob}(1).
\end{equation}
Combining \eqref{eq:part4} and \eqref{eq:part1} as $n,m \rightarrow \infty$,
\begin{equation}
\label{eq:part5}
\sqrt{n+m} \sup_{x,u} \left \lvert \hat{F}^{\bar{X}_{\pi\pi_0}}_x(u)-\hat{F}^{\bar{Y}_{\pi\pi_0}}_x(u) - \hat{F}^{{X}_{\pi}}_x(u)+\hat{F}^{{Y}_{\pi}}_x(u) \right \rvert = o_{\prob}(1).
\end{equation}
Furthermore, observe that
\begin{align*}
& \sqrt{n+m} \sup_{x,u} \left \lvert \hat{F}^{\bar{X}_{\pi\pi_0}}_x(u)-\hat{F}^{\bar{Y}_{\pi\pi_0}}_x(u) + \hat{F}^{{X}_{\pi}}_x(u)- \hat{F}^{{Y}_{\pi}}_x(u) \right \rvert \\ \leq\ & 2 \sqrt{n+m} \sup_{x,u} \left \lvert \hat{F}^{\bar{X}_{\pi\pi_0}}_x(u)-\hat{F}^{\bar{Y}_{\pi\pi_0}}_x(u)  \right \rvert + o_{\prob}(1),
\end{align*}
using \eqref{eq:part5}. Conditional on $\pi$,  using Theorem~\ref{thm:fhat} and the continuous mapping theorem,  $\sqrt{n+m} \sup_{x,u} \left \lvert \hat{F}^{\bar{X}_{\pi\pi_0}}_x(u)-\hat{F}^{\bar{Y}_{\pi\pi_0}}_x(u)  \right \rvert$ converges in distribution to a random variable that does not depend on $\pi$. Hence as $n,m \rightarrow \infty$, $\sqrt{n+m} \sup_{x,u} \left \lvert \hat{F}^{\bar{X}_{\pi\pi_0}}_x(u)-\hat{F}^{\bar{Y}_{\pi\pi_0}}_x(u)  \right \rvert = O_{\prob}(1)$, implying
\begin{equation}
\label{eq:part6}
\sqrt{n+m} \sup_{x,u} \left \lvert \hat{F}^{\bar{X}_{\pi\pi_0}}_x(u)-\hat{F}^{\bar{Y}_{\pi\pi_0}}_x(u) + \hat{F}^{{X}_{\pi}}_x(u)- \hat{F}^{{Y}_{\pi}}_x(u) \right \rvert = O_{\prob}(1)
\end{equation}
as $n,m \rightarrow \infty$. Combining \eqref{eq:part5} and \eqref{eq:part6},
\begin{equation}
\label{eq:part7}
(n+m) \sup_{x,u} \left\lvert \left( \hat{F}^{\bar{X}_{\pi\pi_0}}_x(u)-\hat{F}^{\bar{Y}_{\pi\pi_0}}_x(u) \right)^2 - \left( \hat{F}^{{X}_{\pi}}_x(u)- \hat{F}^{{Y}_{\pi}}_x(u) \right)^2 \right \rvert = o_{\prob}(1)
\end{equation}
as $n,m \rightarrow \infty$. By Assumption \ref{ass:samp_ratio}, $r^{\pi}_i \leq \frac{C_r}{n+m}$ for some constant $C_r > 0$ almost surely for each $i=1, 2, \dots, n+m$ and by Assumption \ref{ass:assumption_weights}, $\sup_{x \in \Omega} \sup_{u \in \mcM} \lvert \hat{w}_x(u) \rvert \leq 2 C_w$ almost surely. This implies that
\begin{align*}
& \sup_u \lvert (\mathrm{S.I}) \rvert \\ \leq\ & 2 C_r C_w \frac{nm}{n+m} \frac{n+m-D}{n+m} \sup_{x,u} \left\lvert \left( \hat{F}^{\bar{X}_{\pi\pi_0}}_x(u)-\hat{F}^{\bar{Y}_{\pi\pi_0}}_x(u) \right)^2 - \left( \hat{F}^{{X}_{\pi}}_x(u)- \hat{F}^{{Y}_{\pi}}_x(u) \right)^2 \right \rvert.
\end{align*}
By \eqref{eq:part7}, 
one concludes $\sup_u \lvert (\mathrm{S.I}) \rvert = o_{\prob}(1)$ as $n,m \rightarrow \infty$. For term $(\mathrm{S.II})$ observe that
\begin{align*}
& \sup_u \lvert (\mathrm{S.II}) \rvert \\ \leq\ & 2 C_r C_w \frac{nm}{n+m} \frac{D}{n+m} \left\lbrace \sup_{x,u} \lvert \hat{F}^{\bar{X}_{\pi\pi_0}}_x(u)-\hat{F}^{\bar{Y}_{\pi\pi_0}}_x(u) \rvert^2 + \sup_{x,u} \lvert \hat{F}^{{X}_{\pi}}_x(u)- \hat{F}^{{Y}_{\pi}}_x(u) \rvert^2 \right \rbrace \\ \leq\ & 2 C_r C_w \frac{nm}{n+m} \frac{D}{n+m} \left\lbrace 2 \sup_{x,u} \lvert \hat{F}^{\bar{X}_{\pi\pi_0}}_x(u)-\hat{F}^{\bar{Y}_{\pi\pi_0}}_x(u) \rvert^2 + o_{\prob}\left(\frac{1}{n+m}\right) \right \rbrace,
\end{align*}
where the last line follows using \eqref{eq:part7}. As shown earlier,  $\sup_{x,u} \lvert \hat{F}^{\bar{X}_{\pi\pi_0}}_x(u)-\hat{F}^{\bar{Y}_{\pi\pi_0}}_x(u) \rvert^2 = O_{\prob} \left( \frac{1}{n+m} \right)$ as $n,m \rightarrow \infty$. {Together with (5.8) in \cite{chun:13}} 
this implies that $\sup_u \lvert (\mathrm{S.II}) \rvert = o_{\prob}(1)$ as $n,m \rightarrow \infty$, which completes the proof.

\section{Additional Simulation Results for Two-Sample Tests}\label{sec:simu_twosam_supp}

For the first scenario of multivariate data with location shifts, we generated multivariate Gaussian data with another shift direction, with the population covariance matrix $\Sigma = I_{\rdim}$ and the dimension $\rdim\in\{3,10,30\}$. 
The population mean vector is $\mu = \mathbf{0}_{\rdim}= (0,0,\dots,0)\tps$ for the first samples $\{X_{i}\}_{i=1}^{n}$, and $\mu =\pone' \mathbf{1}_{\rdim}/\sqrt{\rdim}$ for the second samples $\{Y_{i}\}_{i=1}^{m}$, where $\pone'$ ranges from $0$ to $3$ and $\mathbf{1}_{\rdim}$ is a vector of length $\rdim$ with all entries being $1$. The results are shown in Figure~\ref{fig:mvnorm-mean-shift2}, where it can be seen that Hotelling's $T^2$ test and the energy test outperform all the other tests, the former in particular in lower dimensional cases.

\begin{figure}[!hbt]
	\centering
	\includegraphics[width=\textwidth]{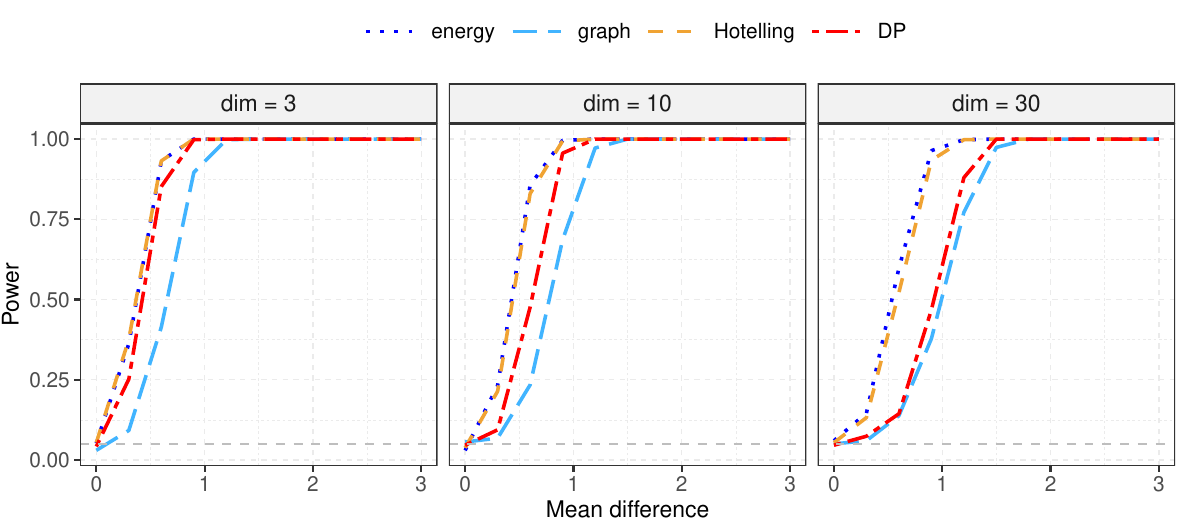}
	\caption{Power comparison for increasing values of mean difference $\pone'$ for two samples of $\rdim$-dimensional random vectors sampled from $N(\mu,\Sigma)$. Here, $\mu =\mathbf{0}_{\rdim}= (0,0,\dots,0)\tps$ for the first samples and $\mu =\pone' \mathbf{1}_{\rdim}/\sqrt{\rdim}$ for the second samples; $\Sigma = I_{\rdim}$ for both samples. The dashed grey line denotes the significance level $0.05$.}
	\label{fig:mvnorm-mean-shift2}
\end{figure}

For the last simulation scenario for comparing two samples of random networks in Section~\ref{sec:test_simu}, in addition
to the previous results we conducted simulations with larger sample sizes $n=m=200$. The results are shown in Figure~\ref{fig:network200}.

\begin{figure}[hbt!]
	\centering
	\includegraphics[width=.5\textwidth]{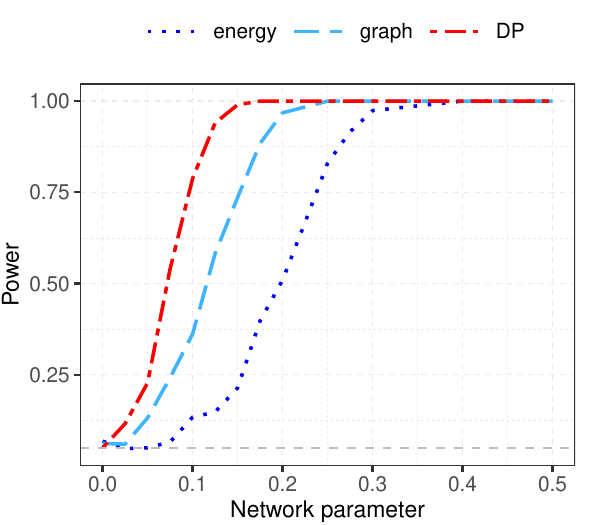}
	\caption{Power comparison for increasing values of $\psev$ for two samples of random networks with 200 nodes from the preferential attachment model, with larger sample sizes $n=m=200$. The attachment function is  proportional to $k^{\psev}$ with  $\psev=0$ for the first samples, 
		and $\psev$ increasing from $0$ to $0.5$ for the second samples. The dashed grey line denotes the significance level $0.05$.}
	\label{fig:network200}
\end{figure}

\section{Additional Simulations for Multimodal Multivariate Data} \label{sec:simu_mv}  

We performed additional simulations illustrating how the proposed transport ranks behave for multivariate data that follow multi-modal distributions. 
Specifically, samples $\{\obj_{\subidx}\}_{\subidx=1}^n$ of $n=500$ independent observations were drawn from $2$-dimensional Gaussian mixture distributions: $\obj_{\subidx}\sim 0.2 N(\bm\mu_1,\bm\Sigma_1) + 0.3 N(\bm\mu_2,\bm\Sigma_2) + 0.5 N(\bm\mu_3,\bm\Sigma_3)$. 
We considered three scenarios, with mean vectors 
\begin{gather}\label{eq:2dGaussMix-mean}
\bm\mu_1 = (-8,6)^\top, \bm\mu_2 = (-3,-2)^\top, \bm\mu_3 = (4,0)^\top
\end{gather} 
and covariance matrices with increasing eigenvalues, 
\begin{equation}\label{eq:2dGaussMix-cov}
\aligned
\bm\Sigma_1 &= \bm{R}(\pi/6) \diag(a^2, 0.3 a^2) \bm{R}(\pi/6)^\top,\\
\bm\Sigma_2 &= \bm{R}(-\pi/6) \diag(a^2, 0.3 a^2) \bm{R}(-\pi/6)^\top\\ 
\text{ and } \bm\Sigma_3 &= \diag(0.8 a^2, 0.8 a^2)
\endaligned
\end{equation} 
for $a=1,2,3$ in the three scenarios, respectively, where $\bm{R}(\theta) = \begin{pmatrix} \cos\theta & -\sin\theta\\ \sin\theta & \cos\theta \end{pmatrix}$. 
The distance profiles $\hfi$ \eqref{eq:FhatO} and transport ranks $\hrank_{\obj_{\subidx}}$ \eqref{eq:hrank} were computed for each observation for  the Euclidean metric $d$.

Results for the three scenarios are displayed in the top (a=1), middle (a=2) and bottom (a=3) panels in Figure~\ref{fig:2dGaussMix}, respectively. 
In the first scenario ($a=1$), points from the three Gaussian distributions are well separated and the transport ranks are highest  in the subset drawn from the Gaussian distribution with the highest weight 0.5 in the Gaussian mixture distribution. 
In all three scenarios, points drawn from the Gaussian distribution with the lowest weight 0.2 always have lower transport ranks,  indicating that they are somewhat outlying. 
As the eigenvalues get larger, the points from the three Gaussian distributions become closer, and the  most central points with highest transport ranks move towards a  mixture of points drawn from the two Gaussian distributions with higher weights 0.3 and 0.5. 

\begin{figure}[!htb]
	\centering
	\begin{subfigure}[b]{0.32\linewidth}
		\centering
		\includegraphics[width=\linewidth]{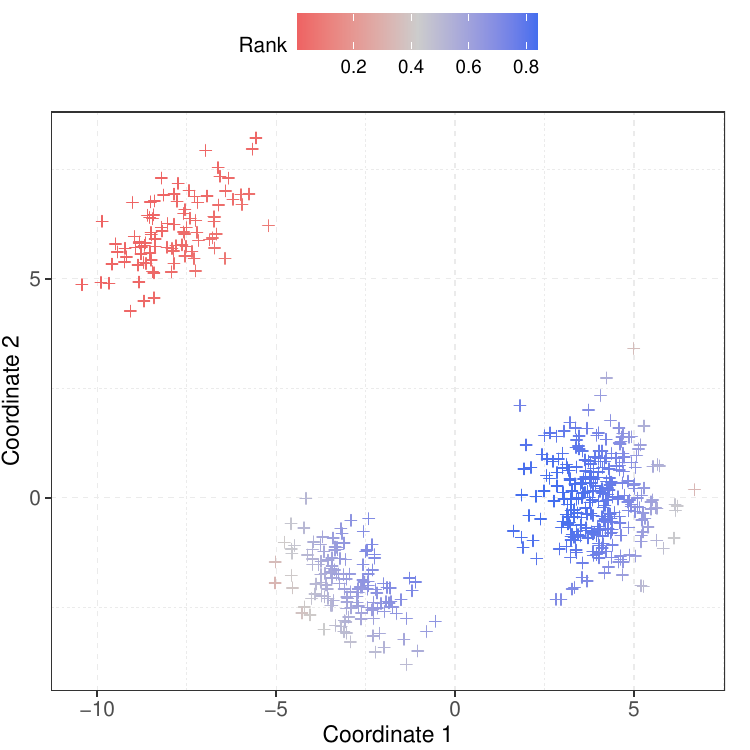}
	\end{subfigure}
	\begin{subfigure}[b]{0.32\linewidth}
		\centering
		\includegraphics[width=\linewidth]{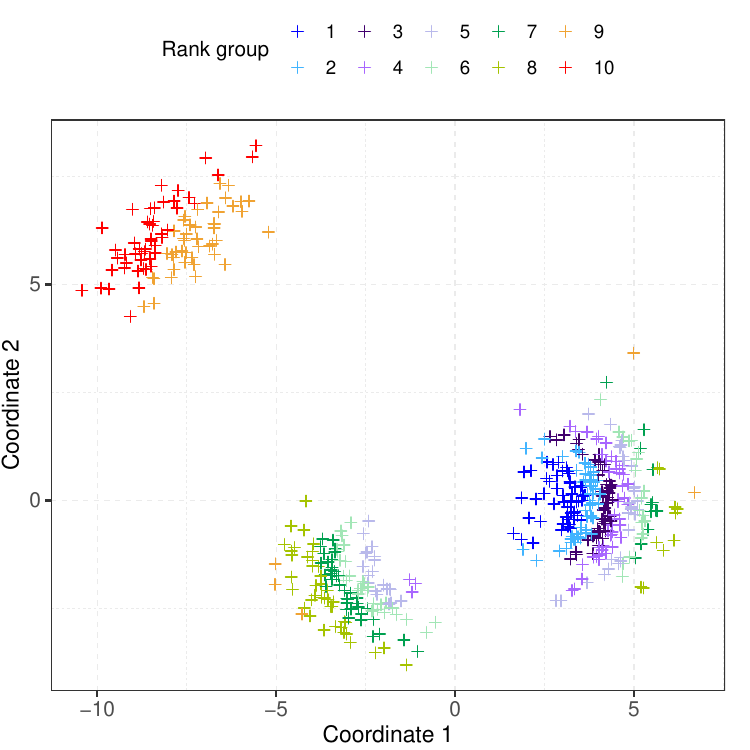}
	\end{subfigure}
	\begin{subfigure}[b]{0.32\linewidth}
		\centering
		\includegraphics[width=\linewidth]{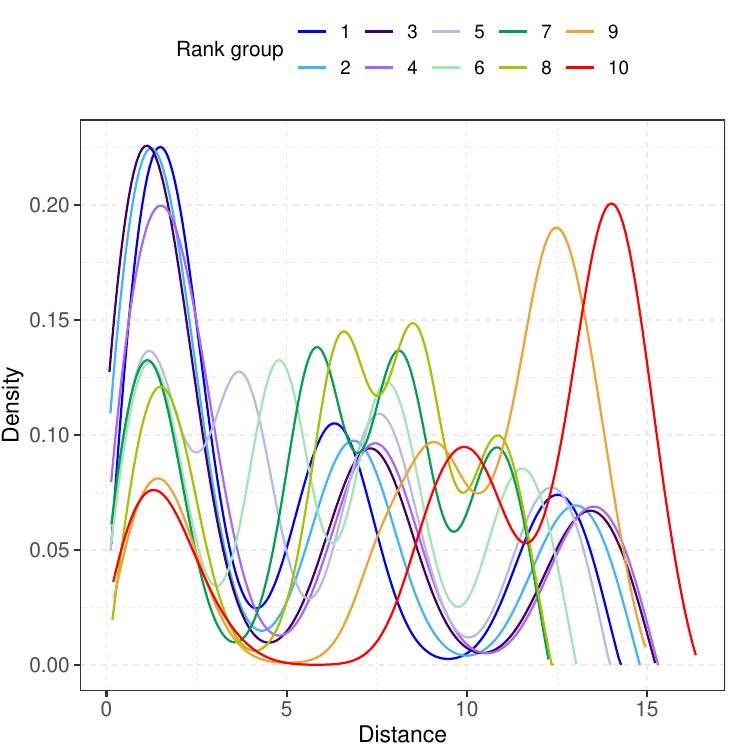}
	\end{subfigure}
	\begin{subfigure}[b]{0.32\linewidth}
		\centering
		\includegraphics[width=\linewidth]{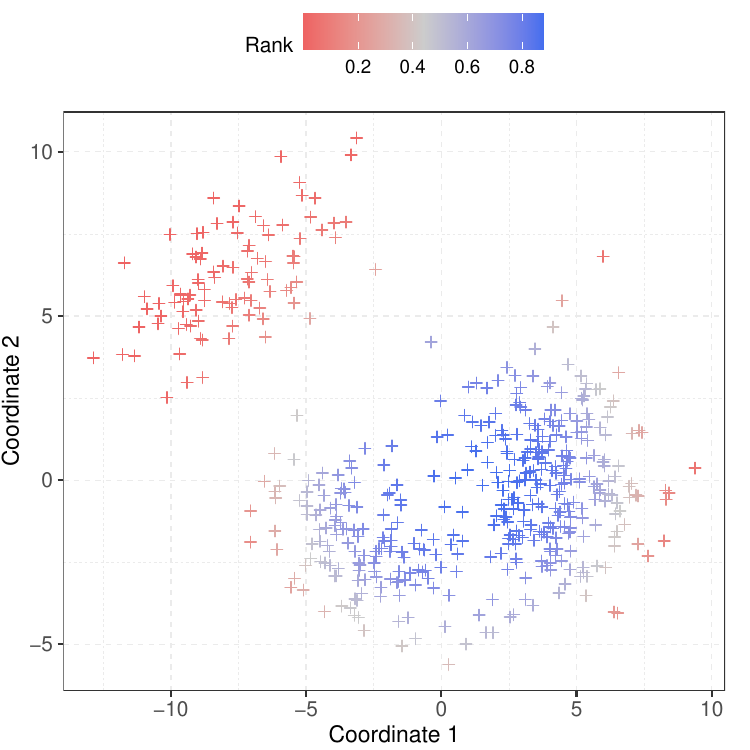}
	\end{subfigure}
	\begin{subfigure}[b]{0.32\linewidth}
		\centering
		\includegraphics[width=\linewidth]{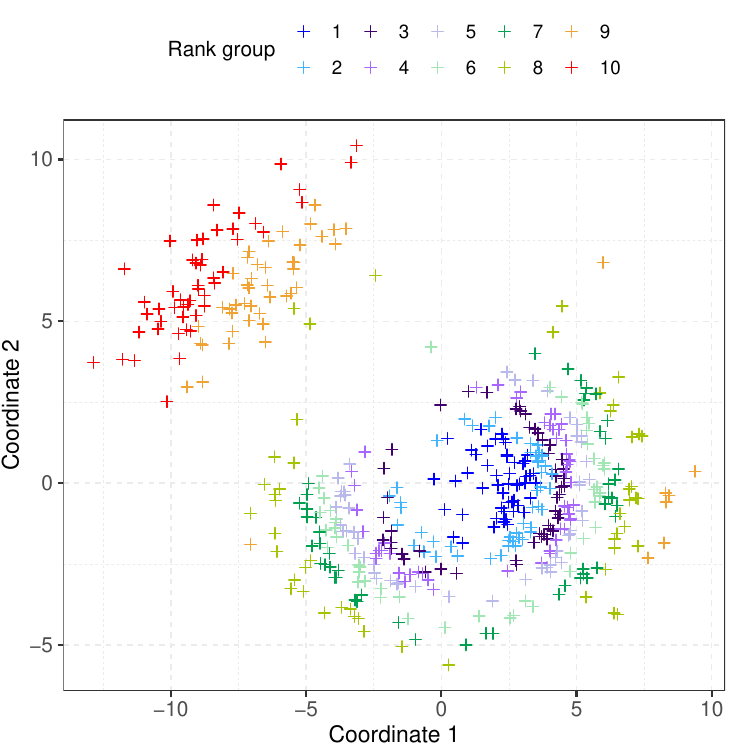}
	\end{subfigure}
	\begin{subfigure}[b]{0.32\linewidth}
		\centering
		\includegraphics[width=\linewidth]{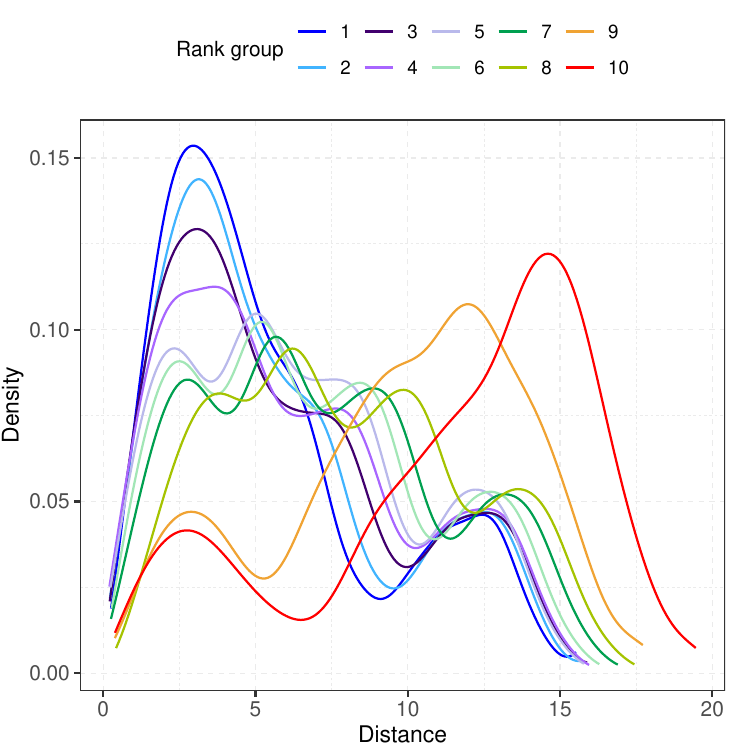}
	\end{subfigure}
	\begin{subfigure}[b]{0.32\linewidth}
		\centering
		\includegraphics[width=\linewidth]{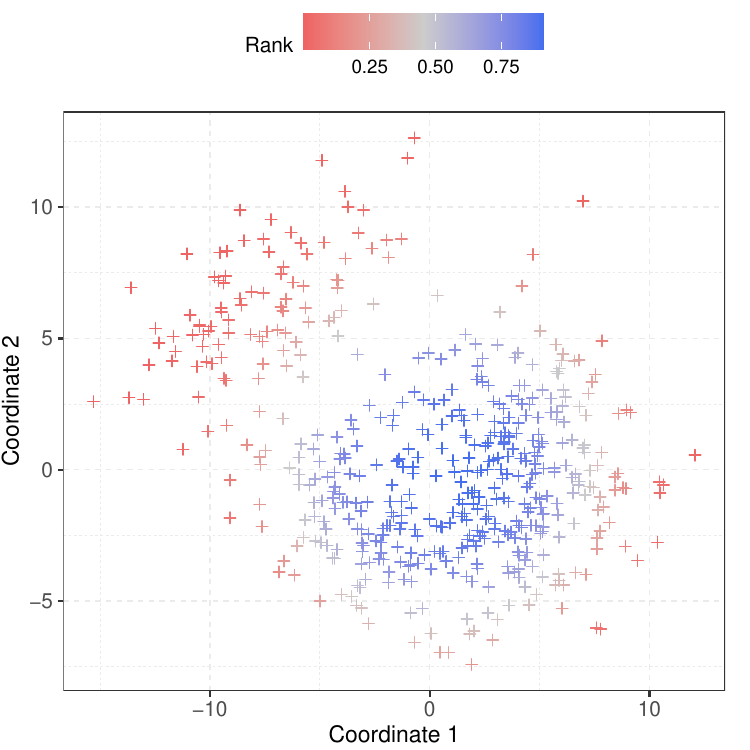}
	\end{subfigure}
	\begin{subfigure}[b]{0.32\linewidth}
		\centering
		\includegraphics[width=\linewidth]{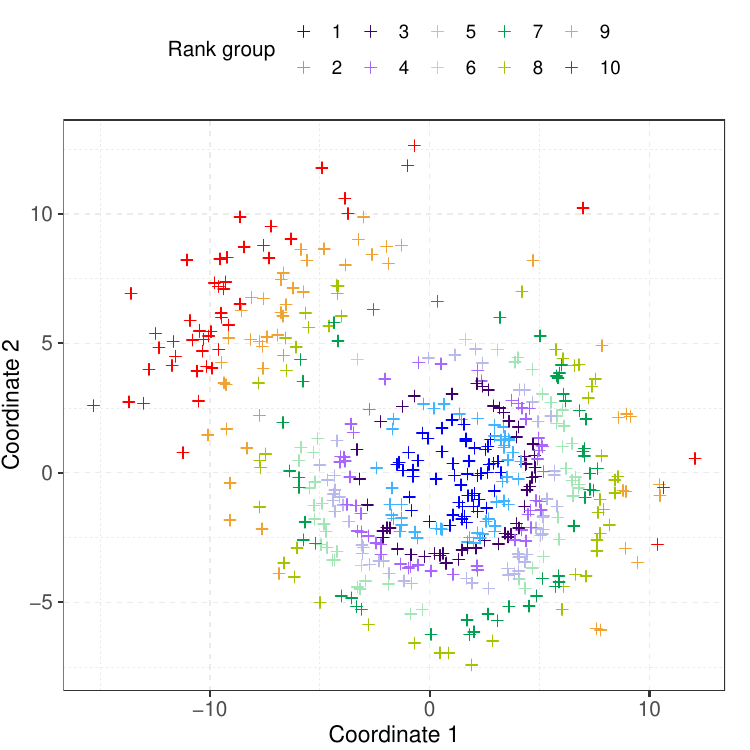}
	\end{subfigure}
	\begin{subfigure}[b]{0.32\linewidth}
		\centering
		\includegraphics[width=\linewidth]{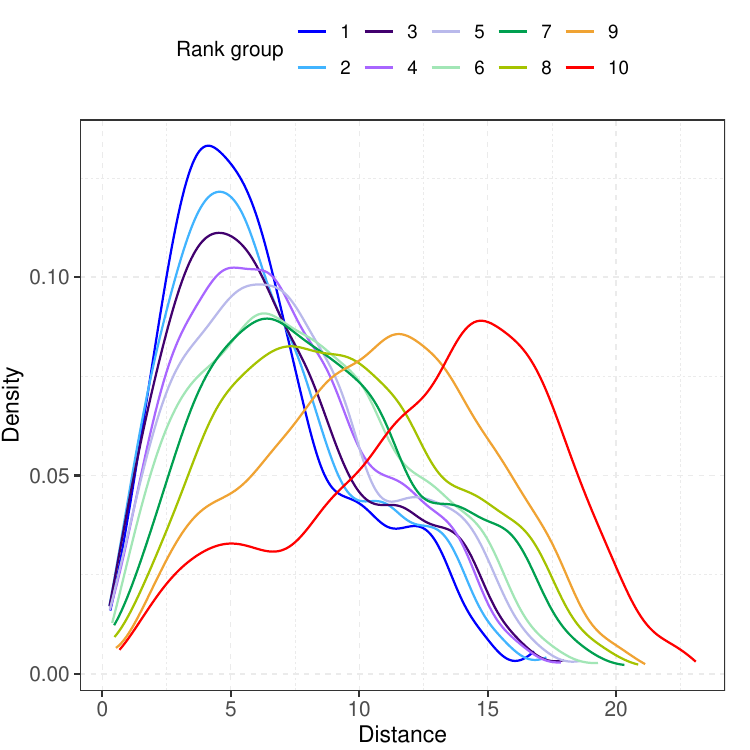}
	\end{subfigure}
	\caption{Analysis of samples of $n=500$ observations generated from $2$-dimensional Gaussian mixture distributions $0.2 N(\bm\mu_1,\bm\Sigma_1) + 0.3 N(\bm\mu_2,\bm\Sigma_2) + 0.5 N(\bm\mu_3,\bm\Sigma_3)$, where the mean vectors $\{\bm\mu_j\}_{j=1}^{3}$ are the same across the three rows and the covariance matrices have increasing eigenvalues from top to bottom. 
		Columns: Scatterplots of simulated samples, where the points are colored according to their transport ranks \eqref{eq:hrank} (left); grouped according to quantiles of transport ranks as described below \eqref{eq:dwass} (middle); 
		Wasserstein barycenters of the distance profiles within each group represented by density functions (right). } \label{fig:2dGaussMix}
\end{figure}

\section{Additional Simulations for Distributional Data}\label{sec:simu_distn}

For these additional simulations, we considered a sample of $n=500$ one-dimensional Gaussian distributions $\{N(\mu_{\subidx},\sigma_{\subidx}^2)\}_{\subidx=1}^{n}$ that were generated from a mixture of two distributions of distributions, i.e., there are two groups of distributions. 
Specifically, we first generated $Z_{\subidx}\sim\text{Bernoulli}(p)$ and then sampled $\mu_{\subidx}\sim N(-2,0.5^2)$ if $Z_{\subidx}=1$ and $\mu_{\subidx}\sim N(2,0.5^2)$ if $Z_{\subidx}=0$, choosing $\sigma_{\subidx}\sim\mathrm{Gamma}(2,4)$. 
Here, $\mathrm{Gamma}(\alpha,\beta)$ denotes a gamma distribution with shape $\alpha$ and rate $\beta$. In addition, we considered balanced and unbalanced designs with $p=0.5$ and $0.2$, respectively. 

Our findings were as follows: When the design is balanced, as shown in the bottom three panels of Figure~\ref{fig:distnSimu}, distributions sampled from the two groups have similar transport ranks, and distributions lying closer to the empirical barycenter of all the $500$ distributions have the highest  transport ranks and thus are classified as more central.  
In contrast, when $p=0.2$ and the design is unbalanced, as shown in the top three panels of Figure~\ref{fig:distnSimu}, the  observations in the sample with the highest transport ranks  all lie in the larger subsample (represented by triangles in Figure~\ref{fig:distnSimu}) and the distributions in the smaller subsample (represented by circles in Figure~\ref{fig:distnSimu}) all have lower transport ranks as compared to those in the larger subsample and are thus considered to be somewhat outlying.

\begin{figure}[htbp]
	\centering
	\includegraphics[width=0.325\linewidth]{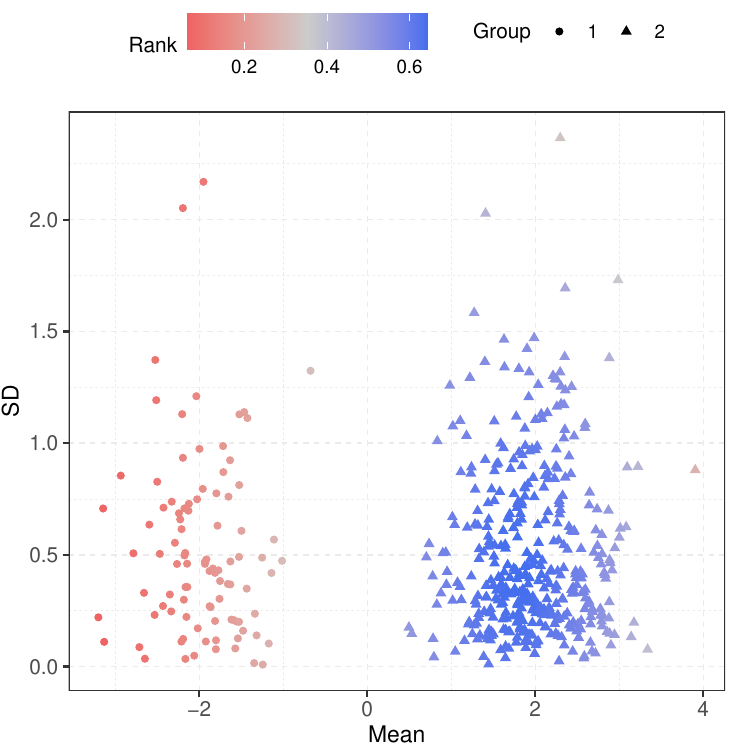}
	\includegraphics[width=0.325\linewidth]{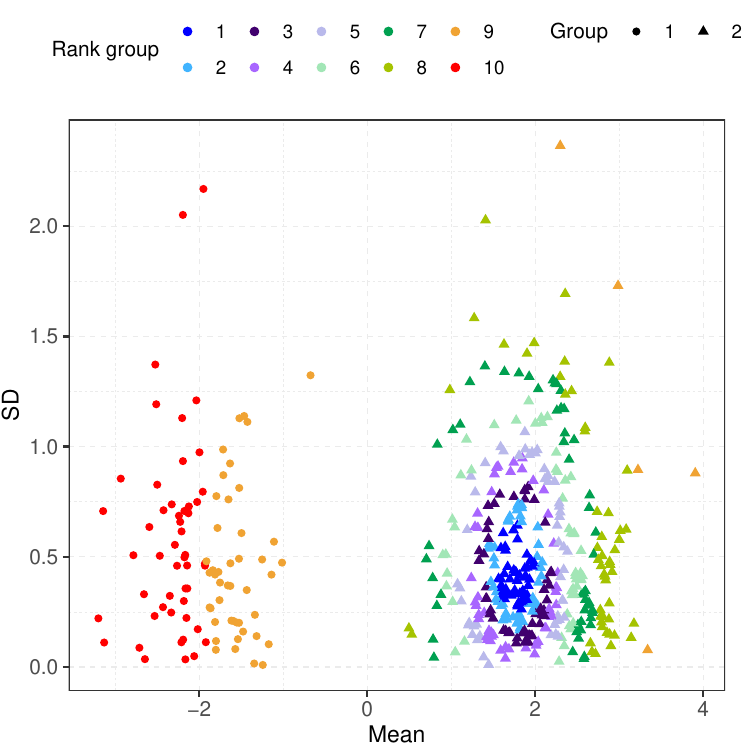}
	\includegraphics[width = 0.325\linewidth]{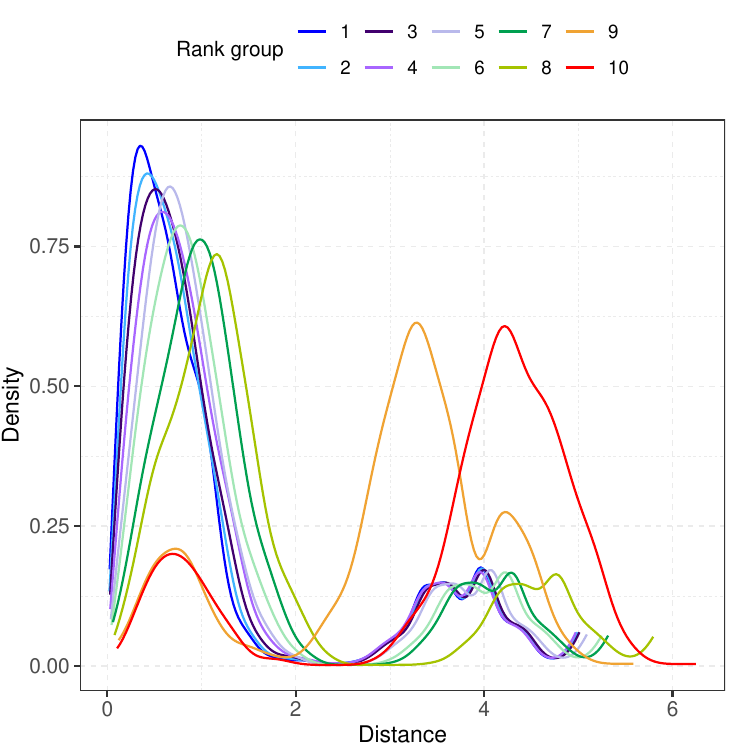}\\
	\includegraphics[width = 0.325\linewidth]{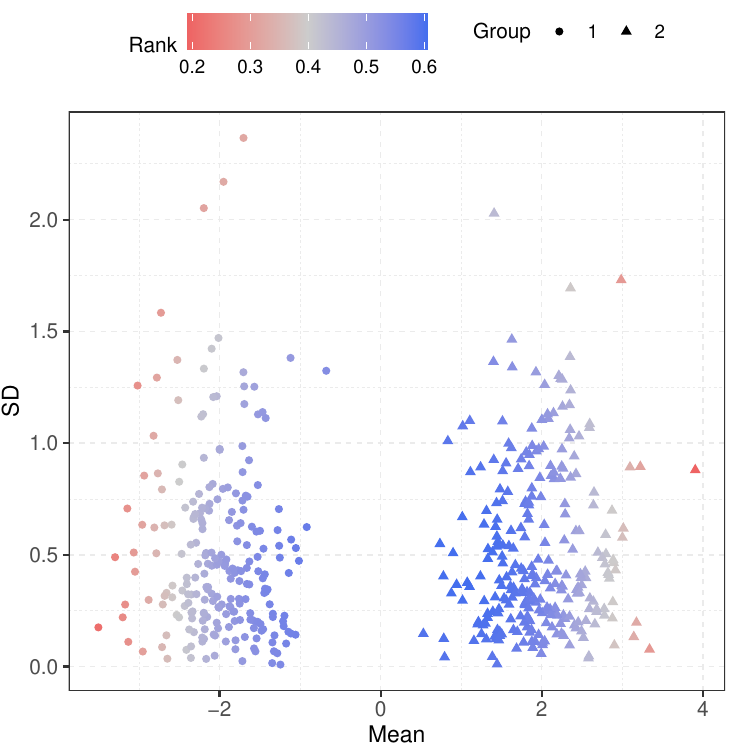}
	\includegraphics[width = 0.325\linewidth]{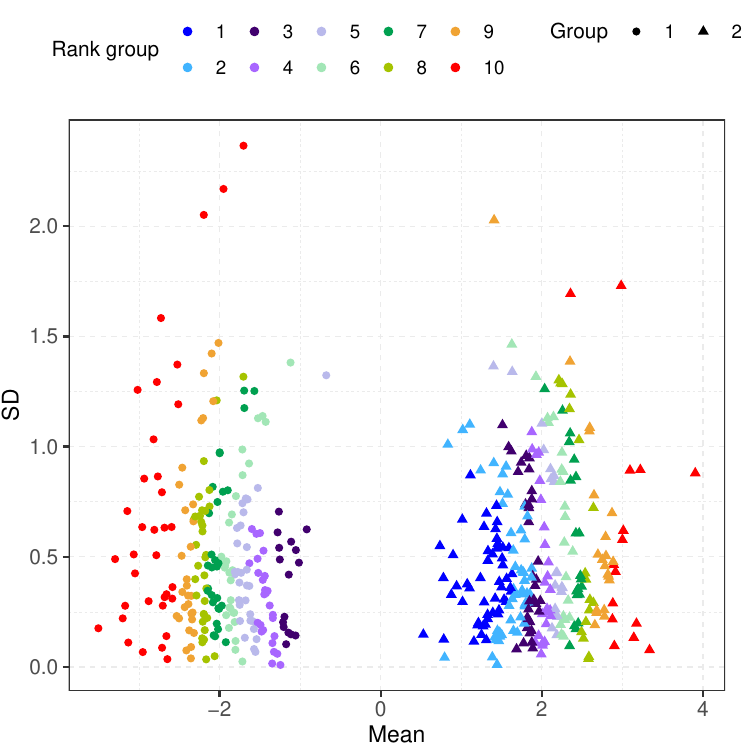}
	\includegraphics[width = 0.325\linewidth]{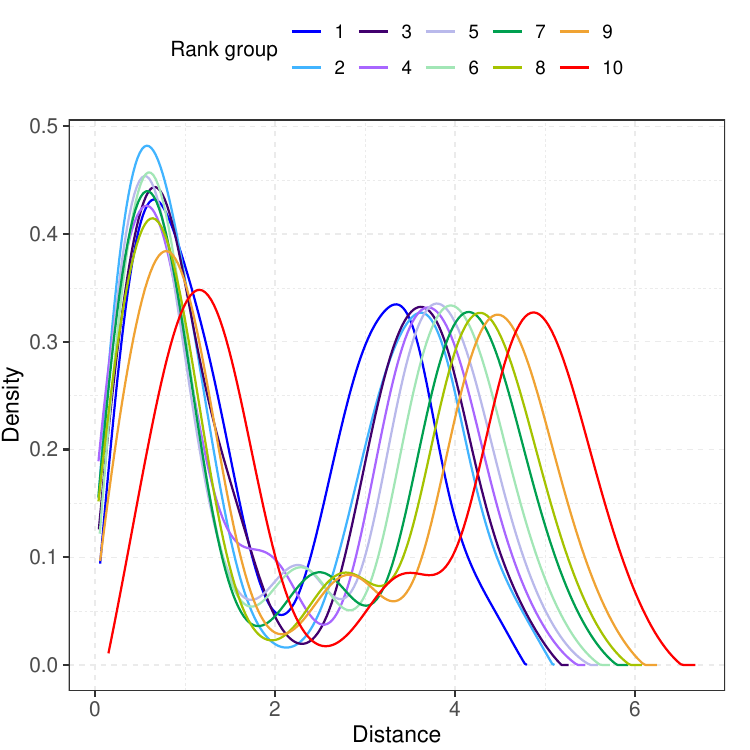} 
	\caption{
		Analysis of samples of $n=500$ one-dimensional Gaussian distributions $\{N(\mu_{\subidx},\sigma_{\subidx}^2)\}_{\subidx=1}^{n}$ where $\mu_{\subidx}\sim N(-2,0.5^2)$ if $Z_{\subidx}=1$ (Group 1) and $\mu_{\subidx}\sim N(2,0.5^2)$ if $Z_{\subidx}=0$ (Group 2) and $Z_{\subidx}\sim\text{Bernoulli}(p)$ with $p=0.2$ (top) and $0.5$ (bottom), respectively. 
		Columns: Scatterplots of mean $\mu_{\subidx}$ and standard deviation (SD) $\sigma_{\subidx}$, where the points are colored according to the transport ranks \eqref{eq:hrank} (left) and grouped according to quantiles of transport ranks as per \eqref{eq:dwass} (middle); Wasserstein barycenters of the distance profiles $\hfi$ for the distributions within each rank group (right).
	}
	\label{fig:distnSimu}
\end{figure}

\section{Additional Details for Human Mortality Data}\label{sec:mort_supp}

Age-at-death distributions in 2000 for males and females are shown in Figure~\ref{fig:mort2000_data}, represented by density functions.

\section{Distance Profiles for Networks: Manhattan Yellow Taxi Data}\label{sec:taxi}

Yellow taxi trip records in New York City (NYC) including pick-up and drop-off dates/times, pick-up and drop-off locations, and driver-reported passenger counts are available at 
\url{http://www1.nyc.gov/site/tlc/about/tlc-trip-record-data.page}. 
We focus on the data pertaining to Manhattan and, excluding Governor's Island, Ellis Island, and Liberty Island, divide the remaining 66 zones of Manhattan into 13 regions (Table~\ref{tab:taxiRgns}). 

Of interest are networks that represent how many people traveled between these areas during a day. 
To this end, we constructed networks for yellow taxi trips between the 13 regions for each day in the year 2019, obtaining a 13-dimensional graph adjacency matrix indexed by day, where each entry corresponds to the edge weight that reflects  the total number of passengers traveling between the two corresponding regions within the given day. The edge weights are then normalized by the maximum edge weight for each day so that they lie in $[0,1]$. We choose the Frobenius metric $\dfrob$ as metric between the resulting weighted graph adjacency matrices,
\bal\label{eq:dfrob}
\dfrob(\mathbf{R}_1,\mathbf{R}_2) = \left\{\trace\left[(\mathbf{R}_1-\mathbf{R}_2)(\mathbf{R}_1-\mathbf{R}_2)^\top\right]\right\}\half, \text{ for } \mathbf{R}_1, \mathbf{R}_2 \in\real^{13\times 13}. \eal

Weekdays are found to have higher transport ranks and are more central, while weekends have lower transport ranks and are more outlying (Figure~\ref{fig:taxi_dpMds}). 
Some ``abnormal'' days are highlighted, which are weekdays yet with distance profiles more similar to weekends and higher transport ranks than average weekdays.  Among these, Memorial Day, Independence Day, July 5, Veterans Day, and New Year's Eve and New Year's Day are weekdays but also holidays or Fridays after a holiday.
Every weekday between September 23 and September 30, including September 23--26, was designated as a ``gridlock alert day'' by the NYC Department of Transportation, due to the UN General Assembly meetings held from September 24 through 30, i.e., these are the days likely to feature the heaviest traffic of the year. 
Hence, it is not surprising that these days are closer to weekends in terms of both distance profiles and transport ranks.

Interestingly, the two-dimensional profile MDS plot of the networks in Figure~\ref{fig:taxi_dpMds} exhibits a similar horseshoe shape as seen before 
for the human mortality distance profiles in Figure~\ref{fig:mort2000_mds}, emphasizing a nearly one-dimensional strict ordering of the distance profiles, indicating that distance profiles lie on a lower-dimensional manifold.

\begin{figure}[H]
	\centering
	\includegraphics[width=.49\linewidth]{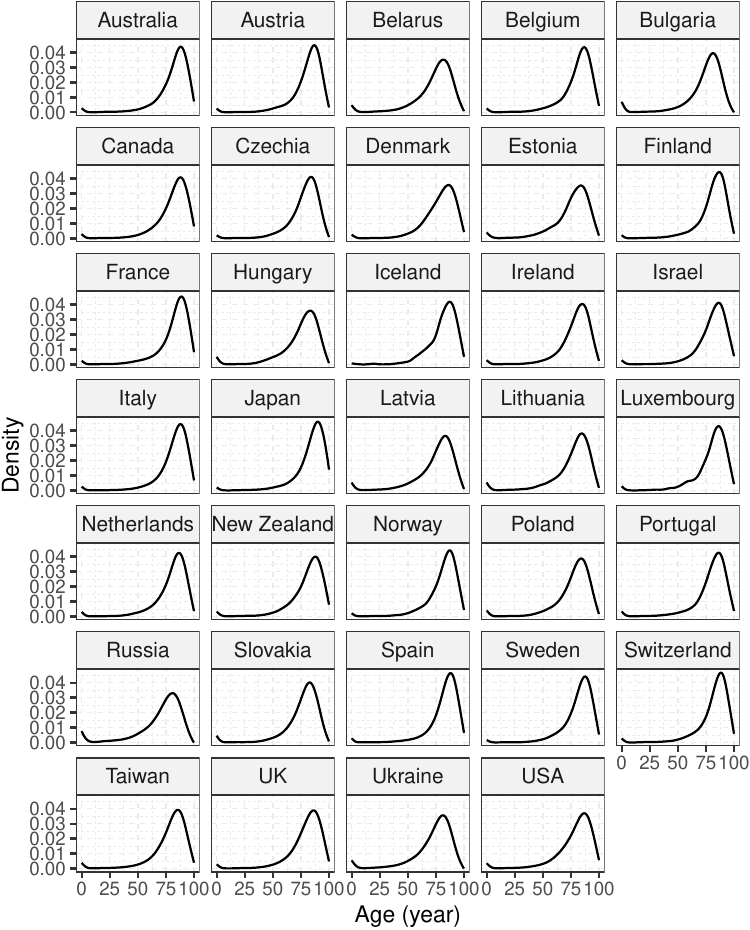}
	\includegraphics[width=.49\linewidth]{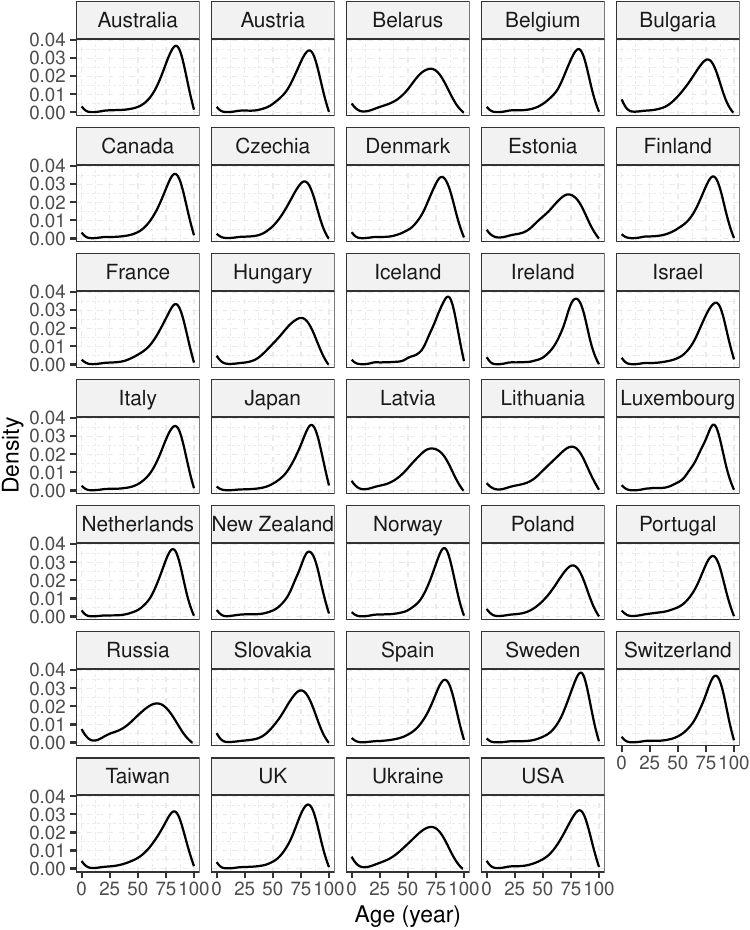}
	\caption{Age-at-death distributions in 2000 for females (left) and for males (right), represented by density functions.}
	\label{fig:mort2000_data}
\end{figure}

\begin{table}[H]
	\caption{\label{tab:taxiRgns}Thirteen regions of Manhattan Borough.} 
	\centering
	\begin{tabular}{c p{0.85\linewidth}}
		\toprule
		Region & Zones included \\
		\midrule
		1 & Battery Park, Battery Park City, Financial District North, Financial District South, Seaport, TriBeCa/Civic Center, World Trade Center \\ 
		2 & Greenwich Village North, Greenwich Village South, Hudson Sq, Little Italy/NoLiTa, Meatpacking/West Village West, SoHo, West Village \\ 
		3 & Alphabet City, Chinatown, East Village, Lower East Side, Two Bridges/Seward Park \\ 
		4 & Clinton East, Clinton West, East Chelsea, Flatiron, West Chelsea/Hudson Yards \\ 
		5 & Garment District, Midtown Center, Midtown North, Midtown South, Penn Station/Madison Sq West, Times Sq/Theatre District, Union Sq \\ 
		6 & Gramercy, Kips Bay, Midtown East, Murray Hill, Stuy Town/Peter Cooper Village, Sutton Place/Turtle Bay North, UN/Turtle Bay South \\ 
		7 & Bloomingdale, Lincoln Square East, Lincoln Square West, Manhattan Valley, Upper West Side North, Upper West Side South \\ 
		8 & Lenox Hill East, Lenox Hill West, Roosevelt Island, Upper East Side North, Upper East Side South, Yorkville East, Yorkville West \\ 
		9 & Hamilton Heights, Manhattanville, Morningside Heights \\ 
		10 & Central Harlem, Central Harlem North \\ 
		11 & East Harlem North, East Harlem South, Randalls Island \\ 
		12 & Highbridge Park, Inwood, Inwood Hill Park, Marble Hill, Washington Heights North, Washington Heights South \\ 
		13 & Central Park \\ 
		\bottomrule
	\end{tabular}
\end{table}

\begin{figure}[H]
	\centering
	\begin{subfigure}[t]{0.325\linewidth}
		\centering
		\includegraphics[width=\linewidth]{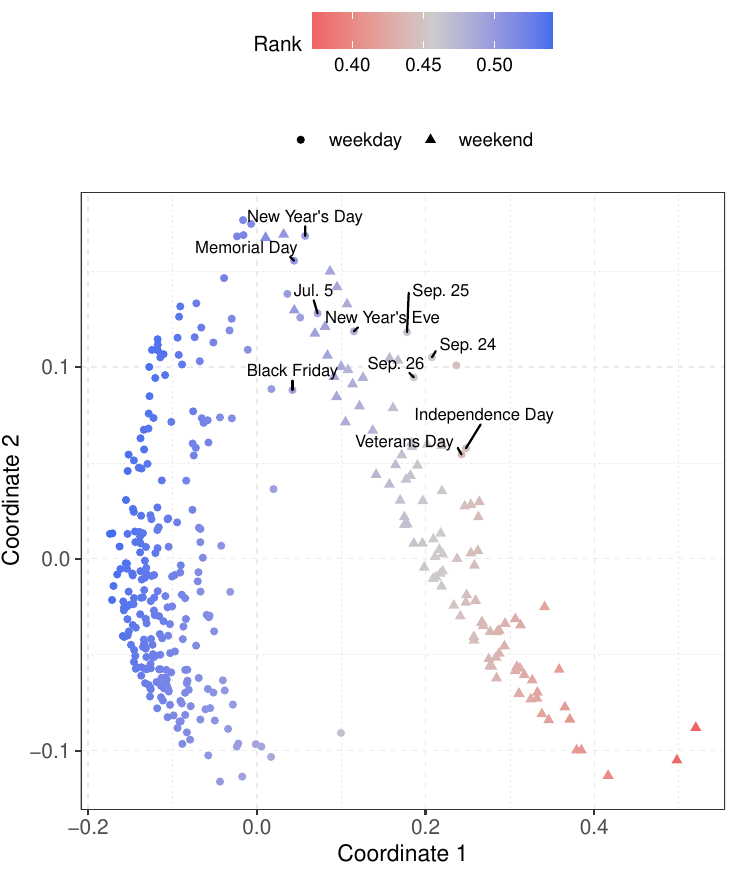}
	\end{subfigure}
	\begin{subfigure}[t]{0.325\linewidth}
		\centering
		\includegraphics[width=\linewidth]{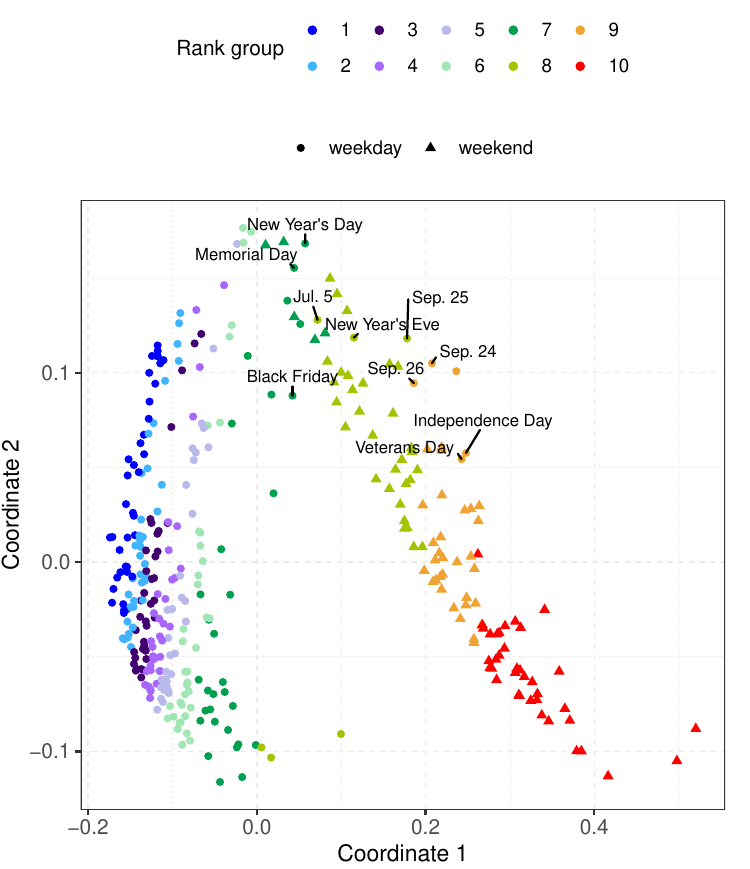}
	\end{subfigure}
	\begin{subfigure}[t]{0.325\linewidth}
		\centering
		\includegraphics[width=\linewidth]{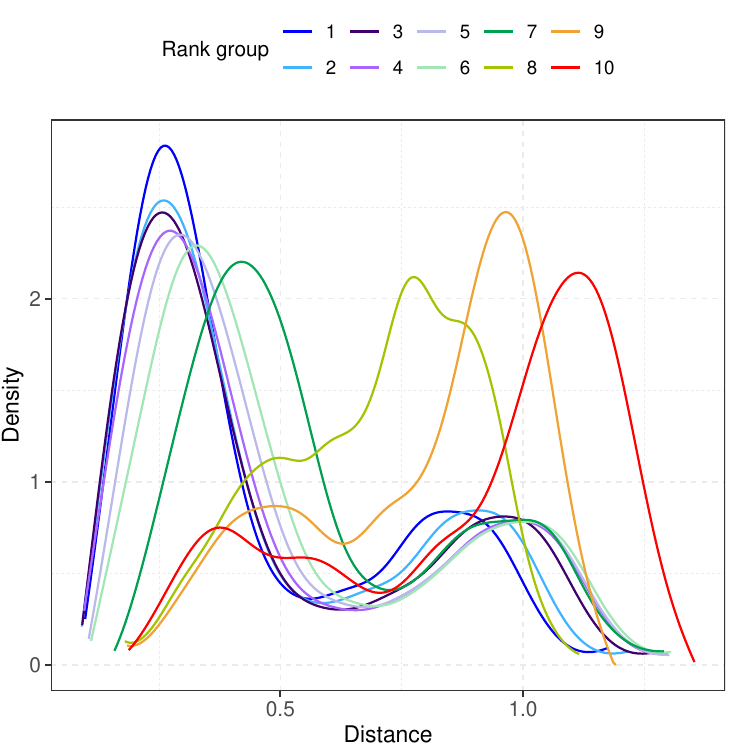}
	\end{subfigure}
	\caption{Two-dimensional MDS of the distance profiles of the daily Manhattan Yellow Taxi transport networks in 2019 with normalized edge weights, where the points are colored according to their transport ranks \eqref{eq:hrank} (left) and grouped according to the quantiles of transport ranks as described in \eqref{eq:dwass} (middle); Wasserstein barycenters of the distance profiles within each group (right).} \label{fig:taxi_dpMds}
\end{figure}

\clearpage
\references

\end{document}